\title[Poisson spacings between sums of two squares]
      {Poisson distribution for gaps between sums of two
       squares and level spacings for toral point scatterers}
\author[T.\ Freiberg]{Tristan Freiberg}
\address{Department of Pure Mathematics, 
         University of Waterloo, 
         Waterloo ON, CANADA.}
\email{tfreiberg@uwaterloo.ca}
\author[P.\ Kurlberg]{P\"ar Kurlberg}
\address{Department of Mathematics, 
         KTH Royal Institute of Technology, 
         Stockholm, SWEDEN.}
\email{kurlberg@kth.se}
\author[L.\ Rosenzweig]{Lior Rosenzweig}
\address{Department of Mathematics, 
         ORT Braude College, 
         Karmiel, ISRAEL.}
\email{liorr@braude.ac.il}
\date{\today}
\begin{document}


\begin{abstract} 
We investigate the level spacing distribution for the quantum
spectrum of the square billiard.  
Extending work of Connors--Keating, and Smilansky, we formulate 
an analog of the Hardy--Littlewood prime $k$-tuple conjecture 
for sums of two squares, and show that it implies that the 
spectral gaps, after removing degeneracies and rescaling, are 
Poisson distributed.  
Consequently, by work of Rudnick and Uebersch\"ar, the level 
spacings of arithmetic toral point scatterers, in the weak 
coupling limit, are also Poisson distributed.
We also give numerical evidence for the conjecture and its 
implications.
\end{abstract}

\maketitle


\section{Introduction}
 \label{sec:intro}

According to the Berry--Tabor conjecture \cite{BT:77}, the energy 
levels for generic integrable systems should be Poisson 
distributed in the semiclassical limit.  
As noted by Connors and Keating \cite{CK:97}, the square billiard, 
though integrable, is not generic: due to spectral degeneracies, 
the level spacing distribution tends to a $\delta$-function at 
zero.
However, if we remove the degeneracies and rescale so that the 
mean spacing is unity, numerics indicate Poisson spacings.

\begin{figure}[ht]
  \centering
\includegraphics[width=7.5cm]{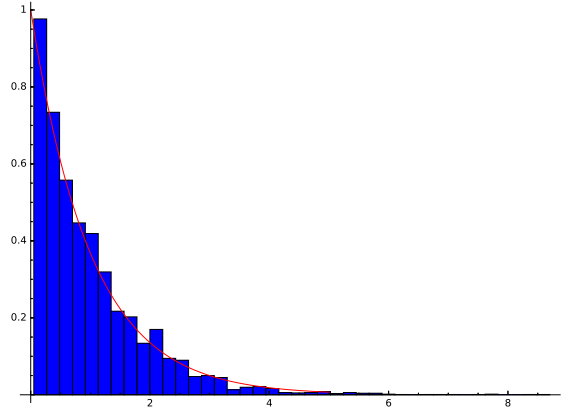}

\caption{%
  Rescaled gaps between consecutive energy levels in
  $[10^{99}, 10^{99} + 110000]$, after removing degeneracies. 
  %
  The rescaled gaps have mean one; without rescaling
  the mean gap is $19.42\cdots$.
  Number of gaps: $5663$.  
  We also plot the density function (red in color
  printout) $P(x) = \e^{-x}$, consistent with Poisson spacings.
  }
\end{figure}

The energy levels of the square billiard, say with side length 
$2\pi$, are number theoretical in nature, and given by 
$a^2 + b^2$ for $a,b \in \ZZ$.
After removing degeneracies and rescaling, we are led to study 
the nearest neighbor spacing distribution 

\begin{equation}
 \label{eq:nnsd}
 \frac{1}{\speccount(x)}
  \#
   \bigg\{
    \sts_{n} \le x : \frac{\sts_{n + 1} - \sts_{n}}{x/\speccount(x)} < \lambda 
   \bigg\} 
\end{equation}
(as $x \to \infty$), where $\sts_n$ denotes the $n$th smallest 
element of the set  
\begin{equation}
 \label{eq:defSSN}
 \SS \defeq \{a^2 + b^2 : a,b \in \ZZ\}, 
  \quad 
   \text{and}
    \quad 
     \speccount(x)
      \defeq 
       \#\{\sts_{n} \le x : \sts_n \in \SS\}.
\end{equation}
(In our setting the leading order of the density of states is
asymptotically equal to $C/\sqrt{\log x}$ as $x \to \infty$ [cf.\   
\eqref{eq:sotsnt}], and hence the spacing distribution of the 
unfolded levels $\big(C\sts_n/\sqrt{\log \sts_n}\big)_{n \ge 1}$ 
has the same asymptotic distribution as the gaps in 
\eqref{eq:nnsd}.)

\begin{nixnix}
This is the probability that, given $\sts_n \le x$, there is some 
integer $b > \lambda x/\speccount(x)$ such that 
\[
\ind{\SS}(\sts_n + b)
 \prod_{j = 1}^{b - 1} 
  \big(1 - \ind{\SS}(\sts_n + j)\big)
   =
    1,
\]
where $\ind{\SS}$ is the indicator function of $\SS$.

If the positions of the levels $\sts_n$, $n \le x$, were 
uncorrelated, this probability would be 
\[
\sum_{b > \lambda x/\speccount(x)}
 \frac{\speccount(x)}{x}
  \bigg(1 - \frac{\speccount(x)}{x}\bigg)^{b - 1},
\]
which, viewed as a Riemann sum approximation to the integral 
$\int_{\lambda}^{\infty} \e^{-t} \dd{t}$, would suggest that 
the nearest neighbor spacing distribution does indeed follow 
a Poisson law, i.e.\ that 
\[
\frac{1}{\speccount(x)}
 \#\bigg\{\sts_{n} \le x : \frac{\sts_{n + 1} - \sts_{n}}{x/\speccount(x)} > \lambda \bigg\}
  \sim 
   \e^{-\lambda}
    \quad 
     (x \to \infty).
\]

If the positions of the levels $\sts_n$, $\sts_n \le x$, were 
uncorrelated, this would indeed follow a Poisson law.

However, being sums of two squares, the levels are not 
uncorrelated.

For instance, four consecutive integers cannot all be in $\SS$, 
since one of them is congruent to $3$ modulo $4$.

Nevertheless, taking into account correlations modulo all prime 
powers, we still (conjecturally) expect \eqref{eq:nnsd} to follow 
a Poisson law, consistent with numerics.
\end{nixnix}

Rather than studying the spacing distribution directly, we shall
proceed by investigating {\em unordered} $k$-tuples of elements in 
$\SS$.
Thus, given $k \ge 1$ and $\bh = \{h_1,\ldots,h_k\} \subseteq \ZZ$ 
with $\card \bh = k$, consider the correlation function
\begin{equation}
 \label{eq:defklevcor}
 R_k(\bh;x)
  \defeq 
   \frac{1}{x}
    \sum_{n \le x}
     \ind{\SS}(n + h_1)
      \cdots 
       \ind{\SS}(n + h_k),
\end{equation}
where $\ind{\SS}$ denotes the indicator function of $\SS$.
If $\bh = \{0\}$, this is the level density 
\begin{equation}
 \label{eq:deflevden}
 R_1(x) 
  \defeq 
   \frac{\speccount(x)}{x}.
\end{equation}
By a classical result of Landau \cite{LAN:08},
\begin{equation}
 \label{eq:sotsnt}
   R_1(x) 
    \sim 
     \frac{C}{\sqrt{\log x}}
      \quad 
    (x \to \infty),
\end{equation}
where $C > 0$ is an explicitly given constant (see 
\eqref{eq:defLanRamconst}).
To formulate an analog of (\ref{eq:sotsnt}) for $k > 1$ we need 
some further notation.  
Given a prime $p \not \equiv 1 \bmod 4$, define%
\begin{equation}
 \label{eq:delthp}
  \delta_{\bh}(p)
   \defeq 
    \lim_{\alpha \to \infty}
     \frac{
      \#\{0 \le a < p^{\alpha} : 
             \forall h \in \bh, 
              a + h \equiv \sots \bmod p^{\alpha}\}
          }
          {p^{\alpha}}.
\end{equation}
(That the limit exists is shown in Section \ref{sec:prelims}, cf.\  
Propositions \ref{prop:Sp3h} and \ref{prop:S2h}.)
Further, for $k \ge 1$ and a set 
$\bh = \{h_1,\ldots,h_k\} \subseteq \ZZ$ with $\card \bh = k$, we 
define the {\em singular series} for $\bh$ by  
\begin{equation}
 \label{eq:defsssP}
  \mathfrak{S}_{\bh}
   \defeq 
    \prod_{p \not\equiv 1 \bmod 4}
     \frac{\delta_{\bh}(p)}{\big(\delta_{\bz}(p)\big)^k},
\end{equation}
with $\delta_{\{0\}}(p)$ and $\delta_{\bh}(p)$ as in
\eqref{eq:delthp}.
We note that $\delta_{\{0\}}(p) > 0$ for all 
$p \not\equiv 1 \bmod 4$, and that the  product converges to a 
nonzero limit if $\delta_{\bh}(p) > 0$ for all 
$p \not \equiv 1 \bmod 4$ (cf.\ Proposition \ref{prop:sssc}).
If $\delta_{\bh}(p) = 0$ for some $p \not\equiv 1 \bmod 4$, we 
define $\mathfrak{S}_{\bh}$ to be zero;
it is easy to see that
$R_k(\bh;x) = 0$ for all $x$ if $\mathfrak{S}_{\bh} = 0$.

We can now formulate an analog of the Hardy--Littlewood prime 
$k$-tuple conjecture.
\begin{conjecture}
 \label{con:sotsktups}
Fix $k \ge 1$, and a set $\bh = \{h_1,\ldots,h_k\} \subseteq \ZZ$ 
with $\card \bh = k$. 
If $\mathfrak{S}_{\bh} > 0$, then
\begin{equation}
 \label{eq:sotsktups}
  R_k(\bh;x)
    \sim 
     \mathfrak{S}_{\bh}
      \big(R_1(x)\big)^k 
       \quad      
     (x \to \infty).
\end{equation}
\end{conjecture}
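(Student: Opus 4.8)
The plan is to turn \eqref{eq:sotsktups} into a sieve-type main term plus an error term, using that $\ind{\SS}$ is \emph{multiplicative}. Since $n \in \SS$ exactly when $v_p(n)$ is even for every prime $p \equiv 3 \bmod 4$, one has the Dirichlet convolution identity $\ind{\SS} = \mathbf{1} * g$, where $g$ is the multiplicative function with $g(1) = 1$, $g(p^a) = (-1)^a$ when $p \equiv 3 \bmod 4$ and $a \ge 1$, and $g(p^a) = 0$ when $p \not\equiv 3 \bmod 4$ and $a \ge 1$. Its generating series $\sum_{d} g(d) d^{-s} = \prod_{p \equiv 3 \bmod 4}(1 + p^{-s})^{-1}$ behaves like $(s - 1)^{1/2}$ times a nonvanishing analytic factor near $s = 1$, which is precisely the origin of the $(\log x)^{-1/2}$ saving in Landau's theorem \eqref{eq:sotsnt}. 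Expanding each factor in \eqref{eq:defklevcor},
\[
x \, R_k(\bh; x) = \sum_{d_1, \dots, d_k} g(d_1) \cdots g(d_k) \, \#\{n \le x : d_i \mid n + h_i \ \text{for all } i\},
\]
and one wants to replace the inner count by its expected value $x \, \rho(\mathbf{d})$, where $\rho(\mathbf{d})$ is the density of $n$ solving the simultaneous congruences --- so $\rho(\mathbf{d}) = 0$ unless $\gcd(d_i, d_j) \mid h_i - h_j$ for all $i, j$, and $\rho(\mathbf{d}) = 1/[d_1, \dots, d_k]$ otherwise. This produces a candidate main term $\sum_{\mathbf{d}} g(d_1) \cdots g(d_k) \rho(\mathbf{d})$, effectively a sum over $[d_1, \dots, d_k] \le x$ since the count vanishes beyond that range.

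The second step is to evaluate the main term and identify it with $\mathfrak{S}_{\bh}(R_1(x))^k$. The sum $\sum_{\mathbf{d}} g(\mathbf{d}) \rho(\mathbf{d})$ is multiplicative, with trivial $p$-local factor at every $p \not\equiv 3 \bmod 4$ (there $g$ forces all exponents to be zero), hence an Euler product over $p \equiv 3 \bmod 4$; using the convolution structure again, its $p$-th factor is exactly the local density $\delta_{\bh}(p)$ of \eqref{eq:delthp}. Since $\prod_{p \equiv 3 \bmod 4} \delta_{\bh}(p)$ diverges to $0$, one must renormalise: writing $\delta_{\bh}(p) = \bigl(\delta_{\bh}(p)/\delta_{\bz}(p)^k\bigr) \delta_{\bz}(p)^k$, the product of the first factors converges to $\mathfrak{S}_{\bh}$ (this is Proposition~\ref{prop:sssc}, using that $\delta_{\bh}(p) = 1 - k/(p + 1)$ for $p \equiv 3 \bmod 4$ with $p \nmid \prod_{i < j}(h_i - h_j)$, so that $\delta_{\bh}(p)/\delta_{\bz}(p)^k = 1 + O(p^{-2})$), while the residual factor behaves like $k$ independent copies of the $p \equiv 3 \bmod 4$ part of Landau's Euler product. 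A multidimensional Selberg--Delange / contour-shift analysis of the associated Dirichlet series --- whose branch point at $s = 1$ supplies the factor $(\log x)^{-k/2}$ --- then yields $\mathfrak{S}_{\bh}(C/\sqrt{\log x})^k \sim \mathfrak{S}_{\bh}(R_1(x))^k$, with $C$ as in \eqref{eq:sotsnt}. For $k = 1$ the whole argument collapses to Landau's theorem (and $\mathfrak{S}_{\{0\}} = 1$), which serves as the base case and a check on the normalisation.

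The real content --- and the main obstacle --- is the error term $\sum_{\mathbf{d}} g(\mathbf{d}) \bigl(\#\{n \le x : d_i \mid n + h_i \ \forall i\} - x \rho(\mathbf{d})\bigr)$. Truncating the $d_i$ to a small fixed power of $x$, the truncated error is controllable by a level-of-distribution input (a several-variable Bombieri--Vinogradov estimate, or the large sieve), but one must then also show that the tail of the main term over $\max_i d_i$ beyond the cutoff is negligible, which demands genuine cancellation from the oscillating, \emph{divisor-like} function $g$ in long ranges. For $k = 2$ this is essentially a version of the binary additive divisor problem twisted by $g$, and is in principle within reach of the $\delta$-method or the spectral theory of automorphic forms, with a power-saving error. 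For $k \ge 3$, however, it degenerates into a ternary (or higher) additive divisor problem with a multiplicative twist, which lies beyond present technology; in this respect Conjecture~\ref{con:sotsktups} is exactly as hard as the Hardy--Littlewood $k$-tuple conjecture on which it is modelled, and unconditionally one can hope only for an upper bound of the conjectured order of magnitude, via the Selberg sieve. The same dichotomy is visible through the circle (or $\delta$-) method applied directly to $\sum_{n \le x} \prod_i \ind{\SS}(n + h_i)$: the major arcs contribute $\mathfrak{S}_{\bh}(R_1(x))^k$, and the minor-arc estimates required are available for $k \le 2$ but not for $k \ge 3$.
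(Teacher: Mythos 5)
You are being asked about Conjecture~\ref{con:sotsktups}. The paper does not prove this statement --- it is stated as a conjecture, an analogue of the Hardy--Littlewood prime $k$-tuple conjecture --- and your proposal does not prove it either, nor does it really claim to. As a heuristic derivation your first two steps are sound and essentially match the paper's own motivation: the convolution identity $\ind{\SS}=\mathbf{1}*g$ with $g$ multiplicative and $g(p^a)=(-1)^a$ for $p\equiv 3\bmod 4$ is correct, the identification of the $p$-local Euler factor of $\sum_{\mathbf d}g(d_1)\cdots g(d_k)\rho(\mathbf d)$ with $\delta_{\bh}(p)$ is correct (for $k=1$ this is essentially Landau's argument for \eqref{eq:sotsnt}), and your renormalisation $\delta_{\bh}(p)=\bigl(\delta_{\bh}(p)/\delta_{\bz}(p)^k\bigr)\delta_{\bz}(p)^k$, with the first factors converging to $\mathfrak{S}_{\bh}$ via $\delta_{\bh}(p)=1-k/(p+1)$ for $p\nmid\det(\bh)$, agrees with Propositions~\ref{prop:Sp3h} and~\ref{prop:sssc}. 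The paper itself goes no further than this: it justifies the conjecture by the heuristic independence of the local conditions across primes, proves that $\mathfrak{S}_{\bh}>0$ exactly when every local density is positive, and then cites numerics and the partial results of Hooley, Rieger, Cochrane--Dressler, Nowak, Indlekofer and Iwaniec.

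The genuine gap is the one you name yourself and cannot fill: the error term $\sum_{\mathbf d}g(\mathbf d)\bigl(\#\{n\le x: d_i\mid n+h_i\ \forall i\}-x\rho(\mathbf d)\bigr)$ is not controllable by any known unconditional method for any $k\ge 2$, because $g$ is divisor-like and supported on moduli up to $x$ rather than sieve-truncated, so one needs both a strong multi-variable level-of-distribution input and cancellation in the long tail of the main term. Unconditionally only upper bounds of the right order (Selberg's sieve), lower bounds for $k=2$, and Hooley's qualitative result for $k\le 3$ are known; the asymptotic \eqref{eq:sotsktups} is open even for $\bh=\{0,1\}$. A further caution that even the main-term bookkeeping is delicate: Iwaniec's conditional asymptotic for $\bh=\{0,1\}$ carries the constant $3/8$, whereas the singular series \eqref{eq:sss01} gives $1/2$ (supported by Figure~\ref{fig:table1}), so truncating the $\mathbf d$-sum and matching it to the full Euler product is not the routine step your outline suggests. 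This is precisely why the paper leaves \eqref{eq:sotsktups} as a conjecture and instead conditions Theorem~\ref{thm:main} on an averaged form of it, Hypothesis~($k,\sC,\bo$), the paper's actual theorems being Proposition~\ref{prop:sssa} and the deduction of Poisson statistics from that hypothesis.
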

\noindent 
Our main result, Theorem \ref{thm:main} below, is conditional on 
the hypothesis that \eqref{eq:sotsktups} holds on average.
To be precise, let $\cE_{\bh}(x)$ be defined by the relation 
\begin{equation}
 \label{eq:defEterm}
   R_k(\bh;x)
    \eqdef
     \big(\mathfrak{S}_{\bh} + \cE_{\bh}(x)\big)
      \big(R_1(x)\big)^k.
\end{equation}
Further, let $\Delta^k$ be the region in $\RR^k$ defined by  
\begin{equation}
 \label{eq:defsimplex}
  \Delta^k
   \defeq 
    \{(x_1,\ldots,x_k) \in \RR^k : 0 < x_1 < \cdots < x_k\},
\end{equation}
and, given $\sC \subseteq \Delta^k$ and $y \in \RR$, let $y\sC$ 
be the dilation of $\sC$ defined by  
\[
 y\sC \defeq \{(yx_1,\ldots,yx_k) : (x_1,\ldots,x_k) \in \sC\}.
\]
%
%
%
Our hypothesis is that the error term $\cE_{\bh}(x)$ is small when 
averaged over dilates of certain bounded convex subsets.

\begin{hypothesis}[$k,\sC,\bo$]
Fix an integer $k \ge 1$ and a bounded convex set 
$\sC \subseteq \Delta^k$.
Set $\bo \defeq \emptyset$ or set $\bo \defeq \{0\}$.
Let $x$ and $y$ be real parameters tending to infinity in such a 
way that $yR_1(x) \sim 1$.
There exists a function $\varepsilon(x)$, with 
$\varepsilon(x) \to 0$ as $x \to \infty$, such that for $x$ 
sufficiently large in terms of $k$ and $\sC$,
\begin{equation}
 \label{eq:hyp}
   \bigg|
    \sum_{(h_1,\ldots,h_k) \in y\sC \cap \, \ZZ^k}
     \cE_{\bo \cup \bh}(x)
   \bigg|
 \le 
  \varepsilon(x)
   \sum_{(h_1,\ldots,h_k) \in y\sC \cap \, \ZZ^k}
    \mathfrak{S}_{\bo \cup \bh},
\end{equation}
where $\bh = \{h_1,\ldots,h_k\}$ in both summands. 
\end{hypothesis}

Under the above hypothesis we find that the spacing distribution 
\eqref{eq:nnsd} is indeed Poissonian.
Moreover, the distribution of the number of points in intervals of 
size comparable to the mean spacing is consistent with that of a 
Poisson process.
(We remark that our hypothesis can be weakened slightly --- see 
Section \ref{sec:main}.)

\begin{theorem}
 \label{thm:main}
Let $x$ and $y$ be real parameters tending to infinity in such a 
way that $yR_1(x) \sim 1$.
Fix integers $m \ge 0$ and $r \ge 1$, and fix 
$\lambda,\lambda_1,\ldots,\lambda_r \in \RR^{+}$.
Assume that Hypothesis
\textup{(}$k,\sC,\{0\}$\textup{)}\textup{)}
\textup{(}respectively, 
Hypothesis
\textup{(}$k,\sC,\emptyset$\textup{)}
holds 
for all $k \ge 1$, and all bounded, convex sets 
$\sC \subseteq \Delta^k$.
Then \textup{(}a\textup{)} \textup{(}respectively, 
\textup{(}b\textup{)}\textup{)} holds.

\textup{(}a\textup{)} We have  
\begin{equation}
 \label{eq:thm:mainc}
  \frac{1}{\speccount(x)}
   \#\{\sts_n \le x : \forall j \le r, \sts_{n + j} - \sts_{n + j - 1} \le \lambda_j y\}
    \sim 
     \prod_{j = 1}^r 
      \int_0^{\lambda_j} \e^{-t} \dd{t} 
       \quad 
     (x \to \infty).
\end{equation}

\textup{(}b\textup{)} We have 
\begin{equation}
 \label{eq:thm:main}
  \frac{1}{x}
   \#\{n \le x : \speccount(n + \lambda y) - \speccount(n) = m\}
    \sim 
     \e^{-\lambda}\frac{\lambda^m}{m!}
       \quad     
     (x \to \infty).
\end{equation}

\end{theorem}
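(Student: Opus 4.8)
The plan is to expand the events in \eqref{eq:thm:mainc} and \eqref{eq:thm:main} by inclusion--exclusion, reduce everything to averages of the correlation functions $R_k(\bh;x)$ over $k$-tuples $\bh$ lying in $y$-dilates of bounded convex subsets of $\Delta^k$, feed in Hypothesis $(k,\sC,\bo)$ to replace each such average by the corresponding average of $\mathfrak{S}_{\bo\cup\bh}\big(R_1(x)\big)^k$ up to a factor $1+O(\varepsilon(x))$, and finally use that the singular series has average one, so that $\sum_{\bh\in y\sC\cap\ZZ^k}\mathfrak{S}_{\bo\cup\bh}\sim y^k\,\mathrm{vol}(\sC)$ as $y\to\infty$. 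Since $yR_1(x)\sim1$, each such block then contributes $\sim(\speccount(x)/x)\,\mathrm{vol}(\sC)=R_1(x)\,\mathrm{vol}(\sC)$, and the combinatorial sums that remain will be recognised as iterated integrals of products of $\e^{-t}$, which are exactly the right-hand sides of \eqref{eq:thm:mainc} and \eqref{eq:thm:main}.

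For part (b) I would argue by moments. Put $N(n)\defeq\speccount(n+\lambda y)-\speccount(n)=\sum_{1\le h\le\lambda y}\ind{\SS}(n+h)$; then
\[
\frac1x\sum_{n\le x}\binom{N(n)}{k}=\sum_{1\le h_1<\cdots<h_k\le\lambda y}R_k(\{h_1,\ldots,h_k\};x),
\]
and the index set agrees, up to a negligible boundary set, with $y\sC_k\cap\ZZ^k$ for $\sC_k\defeq\{0<t_1<\cdots<t_k<\lambda\}\subseteq\Delta^k$, which has volume $\lambda^k/k!$. Applying Hypothesis $(k,\sC_k,\emptyset)$ and the singular series average gives $\frac1x\sum_{n\le x}\binom{N(n)}{k}\sim\big(yR_1(x)\big)^k\lambda^k/k!\sim\lambda^k/k!$ for every $k\ge1$, i.e.\ the $k$-th factorial moment of $N(n)$, averaged over $n\le x$, converges to $\lambda^k$. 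These are the factorial moments of a Poisson$(\lambda)$ variable; since the Poisson law is determined by its moments, the method of moments gives $\frac1x\#\{n\le x:N(n)=m\}\to\e^{-\lambda}\lambda^m/m!$, which is \eqref{eq:thm:main}. (One could equally expand $\ind{N(n)=m}$ directly by inclusion--exclusion; the moment formulation just makes the finiteness of all moments, which here is trivial because $0\le N(n)\le\lambda y$, transparent.)

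For part (a), fix $\sts_n\le x$ and set $m=\sts_n$. The event that the next $r$ gaps satisfy $\sts_{n+j}-\sts_{n+j-1}\le\lambda_j y$ holds precisely when there exist integers $0=d_0<d_1<\cdots<d_r$ with $d_j-d_{j-1}\le\lambda_j y$ such that
\[
\ind{\SS}(m)\prod_{j=1}^r\ind{\SS}(m+d_j)\prod_{j=1}^r\ \prod_{d_{j-1}<h<d_j}\big(1-\ind{\SS}(m+h)\big)=1,
\]
and for a given $m$ at most one admissible tuple $(d_1,\ldots,d_r)$ contributes, namely the one recording the positions of the next $r$ elements of $\SS$ above $m$. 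Summing over $m\le x$ and the admissible $d$'s, expanding each inner product over $h$, and writing each resulting term as $x\,R_l(\{0\}\cup\bh;x)$ for the appropriate tuple $\bh$ (so we are in the $\bo=\{0\}$ case), one groups the terms by combinatorial type; for a fixed type the shifts range over the $y$-dilate of a bounded convex polytope in some $\Delta^s$, so Hypothesis $(s,\sC,\{0\})$ and the singular series average apply. After the substitution $d_j=y t_j$, $u_j=t_j-t_{j-1}$ and using $yR_1(x)\sim1$, the main term becomes $\speccount(x)$ times $\displaystyle\int_{0<u_j\le\lambda_j}\prod_{j=1}^r\e^{-u_j}\,\dd{u_1}\cdots\dd{u_r}=\prod_{j=1}^r\int_0^{\lambda_j}\e^{-t}\,\dd{t}$, since summing with signs over how many of the $h$'s lie in the $j$-th gap produces the factor $\e^{-u_j}=\sum_{a\ge0}(-1)^a u_j^{\,a}/a!$; dividing by $\speccount(x)$ gives \eqref{eq:thm:mainc}. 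Edge effects from $\sts_n$ within $\lambda y$ of $x$ are $O\big(y/\speccount(x)\big)=o(1)$.

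The main difficulty is uniformity in the depth of the inclusion--exclusion: in both parts the expansions involve $\asymp y\to\infty$ nested terms, whereas Hypothesis $(k,\sC,\bo)$ supplies a single error function $\varepsilon(x)\to0$ that is only guaranteed effective once $x$ is large in terms of $k$. I would handle this with Bonferroni's inequalities: truncating each product $\prod_{d_{j-1}<h<d_j}(1-\ind{\SS}(m+h))$ at a fixed even (resp.\ odd) depth sandwiches it between two finite inclusion--exclusion sums, each involving only boundedly many correlation sums and so amenable to the Hypothesis; one first lets $x\to\infty$ (legitimate at fixed depth) and then the truncation depths $\to\infty$, using that the truncated alternating sums $\sum_{a=0}^{2M}(-1)^a u^a/a!$ converge to $\e^{-u}$, and a squeeze closes the argument. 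The analogous (simpler) truncation handles (b) if one prefers not to invoke the moment method. Everything else is bookkeeping: verifying that the index sets are dilates of bounded convex subsets of $\Delta^k$ so the Hypothesis applies, estimating the discrepancy between lattice-point counts and volumes, and checking the singular series average.
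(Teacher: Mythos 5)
Your proposal is correct and follows essentially the same route as the paper: part (a) is the same inclusion--exclusion/Bonferroni sandwich over compositions recording how many elements of $\SS$ fall in each gap, truncated at fixed depth so that Hypothesis $(k,\Theta_{\vbi,\vbl},\{0\})$ and Proposition \ref{prop:sssa} apply before the depth is sent to infinity; part (b) is the same moment method, except that you work with factorial moments $\binom{N(n)}{k}$ where the paper, following Gallagher, computes raw moments via the surjection counts $\varrho(\ell,k)$ --- an equivalent bookkeeping choice related by Stirling numbers of the second kind.
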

In \cite{RU:14}, Rudnick and Uebersch\"ar considered the spectrum 
of ``toral point scatterers'', namely the Laplace operator, 
perturbed by a delta potential, on two dimensional tori.  
They showed that the level spacings of the perturbed eigenvalues, 
in the weak coupling limit, have the same distribution as the 
level spacings of the unperturbed eigenvalues (after removing 
multiplicities).
An interesting consequence of Conjecture \ref{con:sotsktups} (or 
\eqref{eq:hyp}) is thus that the Berry--Tabor conjecture holds for 
toral point scatterers, in the weak coupling limit, for arithmetic 
tori of the form $\RR^2/\ZZ^2$.

We remark that Gallagher \cite{GAL:76} proved the analog of 
Theorem \ref{thm:main} (b) for primes.
Just as in his proof, a key technical result is that the singular 
series is of average order one, over certain geometric regions.

\begin{proposition}
 \label{prop:sssa}
Fix an integer $k \ge 1$, and a bounded convex set 
$\sC \subseteq \Delta^k$.
Set $\bo \defeq \emptyset$ or set $\bo \defeq \{0\}$.
As $y \to \infty$, we have  
\begin{align}
 \label{eq:sssa}
  \sum_{(h_1,\ldots,h_k) \in y\sC \cap \, \ZZ^k} 
   \mathfrak{S}_{\bo \cup \bh} 
   & =
    y^k \Big( \vol(\sC) + O\big(y^{-2/3 + o(1)}\big)\Big), 
\end{align}
where $\bh = \{h_1,\ldots,h_k\}$ in the summand, and $\vol$ 
stands for volume in $\RR^k$.
\end{proposition}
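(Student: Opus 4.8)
The plan is to prove Proposition~\ref{prop:sssa} by following the template of Gallagher's argument for the prime $k$-tuple singular series, suitably adapted to the sums-of-two-squares setting. The first step is to expand the singular series $\mathfrak{S}_{\bo \cup \bh}$ as a Dirichlet series / sum over squarefree moduli. Concretely, one writes $\delta_{\bh}(p)/(\delta_{\bz}(p))^{k} = 1 + w_{\bh}(p)$ for an appropriate local weight $w_{\bh}(p)$ (using the explicit formulas for $\delta_{\bh}(p)$ from Propositions~\ref{prop:Sp3h} and~\ref{prop:S2h}), so that $\mathfrak{S}_{\bo \cup \bh} = \prod_{p \not\equiv 1 \bmod 4}(1 + w_{\bh}(p)) = \sum_{d} \mu^{2}(d) w_{\bh}(d)$, where the sum is over $d$ composed only of primes $p \not\equiv 1 \bmod 4$ and $w_{\bh}$ is the (completely additive-over-$d$, multiplicative) extension. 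One must first verify the crucial size estimate $|w_{\bh}(p)| \ll 1/p$ uniformly (with the implied constant depending only on $k$), plus a structural fact that $w_{\bh}(p) = 0$ unless $p$ divides one of the differences $h_i - h_j$ (or, when $\bo = \{0\}$, one of the $h_i$); this localization is what makes the ensuing sum over $d$ manageable.

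Next, I would interchange the order of summation:
\[
  \sum_{\bh \in y\sC \cap \ZZ^{k}} \mathfrak{S}_{\bo \cup \bh}
   = \sum_{d} \mu^{2}(d)\, w_{\bh}(d)\text{-averaged}
   = \sum_{d} \mu^{2}(d) \sum_{\bh \in y\sC \cap \ZZ^{k}} w_{\bh}(d).
\]
For a fixed squarefree $d$, the inner sum is a sum of the local weight over lattice points in a dilated convex body, and it splits by CRT into a product over $p \mid d$ of sums over residue classes modulo $p$. For each such prime $p \mid d$, one counts lattice points $\bh$ in $y\sC$ whose reduction mod $p$ lies in a prescribed set, weighted by $w$; the main term is $(y^{k}/p^{k}) \cdot \vol(\sC) \cdot (\text{sum of } w \text{ over residues mod } p)$, and the sum of $w_{\bh}(p)$ over all residue classes mod $p$ is exactly $-1$ (this is the identity that makes the singular series have average one — the $d=1$ term contributes $y^{k}\vol(\sC)$ and all $d>1$ terms are designed to cancel). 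The lattice-point-count error for a convex body of diameter $O(y)$ reduced mod $p$ is $O((y/p + 1)^{k-1})$ in each coordinate direction, so the total error from truncating and estimating is a sum over $d$ of terms of size roughly $(\prod_{p \mid d}|w(p)|)\cdot(y/d + 1)^{k-1}$, i.e.\ $\ll \sum_{d} \frac{\mu^{2}(d)}{d}(y/d+1)^{k-1}$ times divisor-type factors.

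The third step is the bookkeeping that produces the $O(y^{-2/3+o(1)})$ savings. One truncates the $d$-sum at $d \le y^{\theta}$ for a parameter $\theta$ to be optimized; the tail $d > y^{\theta}$ is controlled by the bound $|w_{\bh}(d)| \ll_{k} \tau_{k}(d)/d$ summed against the number of $\bh \in y\sC$ for which $d$ divides the relevant product of differences — a standard divisor-function moment estimate shows this tail is $\ll y^{k} \cdot y^{-\theta + o(1)}$. For the main range $d \le y^{\theta}$, one replaces each inner lattice-point count by its volume main term plus the $O((y/d+1)^{k-1})$ error; summing the errors gives $\ll \sum_{d \le y^{\theta}} \tau_{k}(d) d^{-1}(y/d + 1)^{k-1} \ll y^{k-1+o(1)}$ from the $d$ small part plus $y^{\theta \cdot (\text{something})}$ contributions, which after optimization ($\theta = 2/3$, roughly balancing $y^{-\theta}$ against a power coming from the error terms) yields the claimed exponent. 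I would also need to handle separately the primes $p = 2$ and $p = 3$ (the ``small bad primes'' $p \not\equiv 1 \bmod 4$ with special local behavior coming from Propositions~\ref{prop:Sp3h} and~\ref{prop:S2h}), but these contribute only a bounded Euler factor and do not affect the error analysis.

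\textbf{The main obstacle} I anticipate is not the lattice-point geometry (which is routine for convex bodies) but establishing the uniform local estimate $|w_{\bh}(p)| \ll 1/p$ together with the exact cancellation identity $\sum_{a \bmod p} w_{\{a\}\cup\bh'}(p) = -1$, since here — unlike the prime case where $\delta_{\bh}(p)$ is simply $(p - \nu_{\bh}(p))/p \cdot$(normalization) with $\nu_{\bh}(p)$ a count of residues to avoid — the quantity $\delta_{\bh}(p)$ is itself defined as a $p$-adic limit, and one must use the stabilization results (Propositions~\ref{prop:Sp3h}, \ref{prop:S2h}) to get a closed-form expression uniform in $\bh$ and $p$. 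Once that local input is in hand, the rest is a matter of carefully tracking divisor functions and optimizing the truncation parameter.
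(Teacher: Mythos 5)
Your overall architecture (expand the singular series over moduli, exchange summation, exploit cancellation over complete residue systems, control boundary cubes and the tail) is the right one and is essentially the paper's. But there is a genuine gap at exactly the step you flag as the main obstacle, and it cannot be repaired in the form you propose. The local factor $\delta_{\bh}(p)$ for $p \equiv 3 \bmod 4$ is \emph{not} a function of $\bh \bmod p$: by Proposition \ref{prop:Sp3h} (b) it depends on the valuations $\nu_p(h_i - h_j)$ --- for instance $\delta_{\{0,h\}}(p) = (1 + 1/p)^{-1}(1 - p^{-\nu_p(h) - 1})$, which distinguishes $h = p$ from $h = p^2$ even though both are $\equiv 0 \bmod p$. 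Hence your weight $w_{\bh}(p)$ is not defined on residue classes modulo $p$, the proposed identity $\sum_{a \bmod p} w_{\{a\} \cup \bh'}(p) = -1$ has no meaning, and the CRT splitting of the inner sum into ``sums over residue classes modulo $p$'' fails. A closed-form expression for $\delta_{\bh}(p)$ does exist, but it does not rescue the argument, precisely because that closed form involves $\nu_p(h_i - h_j)$ to unbounded height.

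The paper's resolution is to approximate the local weight by truncations $\epsilon_{\bh}(p^{\alpha})$ that \emph{are} determined by $\bh \bmod p^{\alpha}$ and satisfy $\epsilon_{\bh}(p) - \epsilon_{\bh}(p^{\alpha}) \ll p^{-(\alpha - 1)}$; the exact cancellation (Lemma \ref{lem:cancel}) then holds over complete residue systems modulo $p^{\alpha}$, and the expansion of $\mathfrak{S}_{\bh}$ runs over the set $\cD$ of moduli of the shape $2^{\alpha}p_1^{2\alpha_1 - 1}\cdots p_r^{2\alpha_r - 1}$ rather than over squarefree $d$. The price is that $|\epsilon_{\bh}(d)|$ decays only like $A_k^{\omega(d)}(\det(\bh),\rad(d))/(d\,\sqfr(d))$, and the tail $\sum_{d \in \cD,\, d > y} 1/(d\,\sqfr(d))$ is dominated by moduli of the form $b^2\rad(b) \approx b^3$, giving $y^{-2/3 + o(1)}$ (Lemma \ref{lem:omegabnd}). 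This also exposes an incoherence in your final optimization: in your own accounting the tail is $\ll y^{-\theta + o(1)}$ while the main-range boundary error is $\ll y^{-1 + o(1)}$, so you would take $\theta \to 1$ and obtain $y^{-1 + o(1)}$, not $y^{-2/3 + o(1)}$. The exponent $2/3$ cannot be produced by a squarefree-modulus scheme; as the Discussion notes, $y^{-1 + o(1)}$ would indeed be the error if $\epsilon_{\bh}(p)$ depended only on $\bh \bmod p$, and the loss down to $2/3$ is forced by the prime-power structure your proposal omits.
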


{\bfseries Acknowledgements.} 
We thank Z.\ Rudnick for stimulating discussions on the subject 
matter, and D.\ Koukoulopoulos for his comments on an early 
version of the paper.
T.\ F.\ was partially supported by a grant from the G\"oran 
Gustafsson Foundation for Research in Natural Sciences and 
Medicine.
P.\ K.\ and L.\ R.\  were partially supported by grants from the 
G\"oran Gustafsson Foundation for Research in Natural Sciences and 
Medicine, and the Swedish Research Council (621-2011-5498).
L.\ R. wishes to thank and acknowledge the Mathematics department at KTH, being his home institute during the period where most of the work on this paper was done.


\section{Discussion}
\label{sec:discussion}

Connors and Keating \cite{CK:97} determined the singular series 
for shifted pairs of sums of two squares and gave a probabilistic
derivation of Conjecture \ref{con:sotsktups} for $k = 2$, and 
found that it matched numerics quite well (to within $2\%$).
Smilansky \cite{SMI:13} then expressed the singular series for 
pairs as products of $p$-adic densities, and showed that its mean 
value (over short intervals of shifts) is consistent with a 
Poisson distribution, and that the same is true for sums of two 
squares, on assuming a uniform version of 
Conjecture \ref{con:sotsktups} for $k = 2$.  
He also determined the singular series for triples corresponding 
to the shifts $\bh = \{0,1,2\}$.

As already mentioned, the analog of Theorem \ref{thm:main} (b) for 
primes is due to Gallagher; in \cite{GAL:76} he showed that an 
appropriate form of the Hardy--Littlewood prime $k$-tuples 
conjecture implies the prime analog of \eqref{eq:thm:mainc}.
(That it implies the prime analog of \eqref{eq:thm:main} is 
mentioned in Hooley's survey article \cite[p.\ 137]{HOO:72}.)
To show that the singular series is one on average (i.e., the 
prime analog of Proposition \ref{prop:sssa}), Gallagher uses 
combinatorial identities for Stirling numbers of the second kind.
In \cite{KOW:11}, Kowalski developed an elegant probabilistic
framework for evaluating averages of singular series.  
Rather than using combinatorial identities, he showed that a 
certain duality between $k$-th moments of $m$-tuples and 
$m$-th moments of $k$-tuples holds 
(cf.\ \cite[Theorem 1]{KOW:11}).  
That the $k$-th moment of $1$-tuples equals one is essentially 
trivial; by duality he obtains the non-trivial consequence that 
first moments of $k$-tuples also equals one.
(Note that \eqref{eq:sssa} can be viewed as a first moment of 
$k$-tuples when $\bo = \emptyset$.)  

Our approach originates with techniques developed in
\cite{KR:99, KUR:00}, and further refined in \cite{GK:08, KUR:09}.
Loosely speaking, the singular series $\mathfrak{S}_{\bh}$ is 
expanded into local factors of the form $1 + \epsilon_{\bh}(p)$,
and thus
\[
 \mathfrak{S}_{\bh} = \prod_{p} (1 + \epsilon_{\bh}(p))
 =
  \sums[d \ge 1][\text{squarefree}] \epsilon_{\bh}(d),
\]
where $\epsilon_{\bh}(1) = 1$ and 
$\epsilon_{\bh}(d) \defeq \prod_{p|d} \epsilon_{\bh}(p)$.  
Hence
\[
 \sum_{\bh}
  \mathfrak{S}_{\bh} 
   =
    \sums[d \ge 1][\text{squarefree}]
     \sum_{\bh}
      \epsilon_{\bh}(d),
\]
and the main term is given by $d = 1$.  
For $d$ large, $|\epsilon_{\bh}(d)|$ can be shown to be small on 
average.
For $d$ small, we use that $\epsilon_{\bh}(d)$ (approximately) 
only depends on $\bh \bmod d$, together with complete cancellation 
when summing over the {\em full} set of residues modulo $d$, i.e.,
$\sum_{\bh \bmod d} \epsilon_{\bh}(d) = 0$.  
This follows, via the Chinese remainder theorem, from local
cancellations
$ 
\sum_{\bh \bmod p} \epsilon_{\bh}(p) = 0
$, 
which in turn can be deduced from the following easily verifiable
identity: given {\em any} subset $X_p \subseteq \ZZ/p\ZZ$, we have
(cf.\ Lemma \ref{lem:cancel} (b) and its proof for more details):
\[
 \sum_{(h_1, h_2, \ldots, h_k) \in (\ZZ/p\ZZ)^k} 
  \#\{ m \in \ZZ/p\ZZ : m + h_1, m + h_2, \ldots, m + h_k \in X_p\}
   =
    \big(\#X_p\big)^{k}.
\]

However, unlike the setup in \cite{KR:99,GK:08,KUR:09}, where the
local error terms $\epsilon_{\bh}(p)$ are determined by
$\bh \bmod p$, in the current setting the image of 
$\bh \bmod p^{\alpha}$, for any fixed $\alpha$, is not sufficient 
to determine $\epsilon_{\bh}(p)$.  
On the other hand, the function $\bh \to \epsilon_{\bh}(p)$ has 
nice $p$-adic regularity properties, allowing us to approximate
$\epsilon_{\bh}(p)$ by truncations 
$\epsilon_{\bh}(p^{\alpha})$ such that 
$\epsilon_{\bh}(p^{\alpha})$ only depends on 
$\bh \bmod p^{\alpha}$, and 
$
 \epsilon_{\bh}(p) - \epsilon_{\bh}(p^{\alpha}) 
  \ll 
   1/p^{\alpha - 1}
$ 
for all $\bh$.  
Apart from making the arguments more complicated, we also get
a weaker error term: if $\epsilon_{\bh}(p)$ only depended on
$\bh \bmod p$, we would get a relative error of size 
$y^{-1 + o(1)}$, rather than $y^{-2/3 + o(1)}$.
We also note that David, Koukoulopoulos and Smith \cite{DKS:15}, 
in studying statistics of elliptic curves, have developed quite 
general methods for finding asymptotics of weighted sums
$\sum_{\bh} w_{\bh} \mathfrak{S}_{\bh}$, provided that the local
factors have $p$-adic regularity properties similar to the ones
above. 
In fact, Proposition \ref{prop:sssa}, though with a weaker error 
term, can be deduced from \cite[Theorem 4.2]{DKS:15}.

We finally remark that the corresponding question in the function 
field setting is better understood --- Bary--Soroker and Fehm 
\cite{BS-F} recently showed that the sums of squares analog of the 
$k$-tuple conjecture holds in the large $q$-limit for the 
function field setting (e.g., replacing $\mathbb{Z}$ by 
$\mathbb{F}_{q}[T]$ and $\mathbb{Z}[i]$ by 
$\mathbb{F}_q[\sqrt{-T}]$).

\subsection{Evidence towards Conjecture \ref{con:sotsktups}.} 
 \label{sec:sotsHLktups}

We begin by formulating a qualitative version of 
Conjecture \ref{con:sotsktups}.
\begin{conjecture} 
 \label{con:qualktups}
Fix $k \ge 1$, and a set 
$\bh = \{h_1,\ldots,h_k\} \subseteq \ZZ$ with $\card \bh = k$. 
If $\mathfrak{S}_{\bh} > 0$, then there exist infinitely many 
integers $n$ such that $n + \bh \subseteq \SS$.
\end{conjecture}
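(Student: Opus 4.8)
The plan is to reduce Conjecture~\ref{con:qualktups} to a qualitative sieve-theoretic statement about representations of integers as sums of two squares, and to observe that it follows from Conjecture~\ref{con:sotsktups} together with a positivity argument. Concretely, suppose $\mathfrak{S}_{\bh} > 0$. By definition \eqref{eq:defsssP} this means $\delta_{\bh}(p) > 0$ for every prime $p \not\equiv 1 \bmod 4$; equivalently, for each such $p$ there is a residue class modulo a sufficiently high power of $p$ on which all of $a + h_1, \ldots, a + h_k$ are simultaneously sums of two squares (this is precisely the content of Propositions~\ref{prop:Sp3h} and~\ref{prop:S2h}, which guarantee the limit defining $\delta_{\bh}(p)$ stabilizes). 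So there is no local obstruction. The first step is therefore to note that, under Conjecture~\ref{con:sotsktups}, $R_k(\bh;x) \sim \mathfrak{S}_{\bh}(R_1(x))^k$, and since $\mathfrak{S}_{\bh} > 0$ and $R_1(x) \sim C/\sqrt{\log x} > 0$ by \eqref{eq:sotsnt}, the right-hand side is bounded below by a positive quantity times $(\log x)^{-k/2}$, which tends to $0$ but is \emph{strictly positive} for all large $x$. Hence $R_k(\bh;x) > 0$ for all sufficiently large $x$.

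The second step is to unwind what $R_k(\bh;x) > 0$ means: by \eqref{eq:defklevcor}, $R_k(\bh;x) = \frac{1}{x}\sum_{n \le x}\ind{\SS}(n+h_1)\cdots\ind{\SS}(n+h_k)$, so $R_k(\bh;x) > 0$ forces the existence of at least one $n \le x$ with $n + h_j \in \SS$ for all $j$, i.e.\ $n + \bh \subseteq \SS$. Running this over $x \to \infty$, for each threshold we obtain such an $n$, and since the count $x\,R_k(\bh;x) \to \infty$ (it grows like $x(\log x)^{-k/2}$), these $n$ cannot all lie in a bounded set; so there are infinitely many such $n$. This gives the conclusion of Conjecture~\ref{con:qualktups}.

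The only genuine subtlety — and the step I would flag as the ``main obstacle'' in writing this cleanly — is the case $k = 1$ with $\bh = \{h_1\}$ a single shift, or more generally guaranteeing that $\mathfrak{S}_{\bh} > 0$ is correctly equated with the absence of local obstructions at \emph{every} bad prime and at $2$. One must check that the product in \eqref{eq:defsssP} really does converge to a positive limit precisely when each local factor is positive (this is asserted via Proposition~\ref{prop:sssc}, so it can be cited), and that the primes $p \equiv 1 \bmod 4$ — which are excluded from the product — genuinely impose no constraint, since every residue class modulo such $p$, and indeed modulo any prime not $\equiv 3 \bmod 4$ other than ramification at $2$, contains sums of two squares in abundance. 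Granting Propositions~\ref{prop:Sp3h}, \ref{prop:S2h}, and~\ref{prop:sssc} as stated, there is no real obstacle: the qualitative conjecture is an immediate positivity corollary of the quantitative one. I would also remark, for context, that Conjecture~\ref{con:qualktups} is strictly weaker than Conjecture~\ref{con:sotsktups} and could conceivably be attacked by sieve methods (an upper-bound sieve already gives $R_k(\bh;x) \ll \mathfrak{S}_{\bh}(R_1(x))^k$; the difficulty, as always, is the lower bound, which is exactly what the asymptotic in \eqref{eq:sotsktups} supplies).
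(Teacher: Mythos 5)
The statement you were given is a \emph{conjecture}: the paper does not prove it, and explicitly records that it is open for $k \ge 4$ (the cases $k \le 3$ are due to Hooley, unconditionally, via the theory of ternary quadratic forms --- a method entirely absent from your argument). Your proposal does not prove it either. Every step after the observation that $\mathfrak{S}_{\bh} > 0$ forces $\delta_{\bh}(p) > 0$ for all $p \not\equiv 1 \bmod 4$ is conditional on Conjecture \ref{con:sotsktups}, the quantitative asymptotic \eqref{eq:sotsktups}, which is itself unproven and strictly stronger than the qualitative statement you are trying to establish. Deriving one open conjecture from a stronger open conjecture is not a proof, and you should not present it as one; your own closing remark that the quantitative asymptotic ``supplies'' the needed lower bound concedes exactly the point at issue, since no such lower bound is known for $k \ge 3$.

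For the record, the implication you do establish is correct, and it is precisely why the paper calls Conjecture \ref{con:qualktups} a qualitative version of Conjecture \ref{con:sotsktups}: if $\mathfrak{S}_{\bh} > 0$ and \eqref{eq:sotsktups} holds, then
$x R_k(\bh;x) \sim \mathfrak{S}_{\bh} C^k x (\log x)^{-k/2} \to \infty$,
so the set of $n$ with $n + \bh \subseteq \SS$ is unbounded. Your surrounding remarks are also consistent with the paper: the positivity of $\mathfrak{S}_{\bh}$ is equivalent to $\delta_{\bh}(p) > 0$ at every $p \not\equiv 1 \bmod 4$ (Proposition \ref{prop:sssc}), this is decidable by a finite computation (Propositions \ref{prop:S2h} and \ref{prop:Sp3h}), and the primes $p \equiv 1 \bmod 4$ impose no condition since $S_p = \ZZ$ there. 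But none of this touches the actual content of the conjecture. Unconditionally, only the Selberg-sieve upper bound $R_k(\bh;x) \ll_k \mathfrak{S}_{\bh}(R_1(x))^k$ and, for $k = 2$, a matching-order lower bound (Hooley, Indlekofer) are known; the sieve goes in the wrong direction for what you need, and for $k \ge 4$ no route to the required lower bound is currently available.
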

\noindent
We  remark that whether or not $\mathfrak{S}_{\bh} > 0$ can be 
determined by a finite computation: this follows from 
Propositions \ref{prop:S2h} and \ref{prop:Sp3h}. 
%
%
\begin{nixnix}
For recall that $\bh$ is $\SS$-admissible if and only if 
$\delta_{\bh}(p) > 0$ for all $p$.
(See Propositions \ref{prop:S2h} and \ref{prop:Sp3h} et seq.)
For $p \nmid \det(\bh)$ we have $\delta_{\bh}(p) > 0$ by 
Propositions \ref{prop:S2h} (c) and \ref{prop:Sp3h} (c).
For the finitely many $p$ dividing $\det(\bh)$, we have 
$\delta_{\bh}(p) > 0$ if and only if either 
$\card \bh_p \ne \emptyset$ or, in the case $p \equiv 3 \bmod 4$, 
$\V_{\bh}(p^{\alpha}) \ne \emptyset$ for 
$\alpha = 1 + \max_{i \ne j} \nu_p(h_i - h_j)$, in the case 
$p = 2$, $\T_{\bh}(2^{\alpha + 1}) \ne \emptyset$ for 
$\alpha = 2 + \max_{i \ne j} \nu_2(h_i - h_j)$.
(See Propositions \ref{prop:S2h} (b) and \ref{prop:Sp3h} (b).)
\end{nixnix}
%
%
%
Examples of sets $\bh$ for which $\mathfrak{S}_{\bh} = 0$ are
$\{0,1,2,3\}$ and $\{0,1,2,4,5,8,16,21\}$: any translate of 
$\{0,1,2,3\}$ contains an integer congruent to $3$ modulo $4$, and 
hence $\delta_{\bh}(2) = 0$; any translate of 
$\{0,1,2,4,5,8,16,21\}$ contains an integer congruent to $3$ or 
$6$ modulo $9$, and hence $\delta_{\bh}(3) = 0$.

It is possible to show that $\mathfrak{S}_{\bh} > 0$ for {\em any} 
set $\bh$ containing at most three integers.
The question of whether, for any $h_1,h_2,h_3 \in \ZZ$, we have 
$n + \{h_1,h_2,h_3\} \subseteq \SS$ for infinitely many $n$, was 
apparently raised by Littlewood: Hooley \cite{HOO:73} showed, 
using the theory of ternary quadratic forms, that 
Conjecture \ref{con:qualktups} indeed holds for $k \le 3$.
The conjecture remains open for $k \ge 4$.

For fixed $k \ge 1$ and $\bh = \{h_1,\ldots,h_k\}$ with 
$\card \bh = k$, the upper bound 
\[
 \sum_{n \le x}
  \ind{\SS}(n + h_1)\cdots \ind{\SS}(n + h_k)
   \ll_k
    \frac{x}{(\log x)^{k/2}}
     \prods[p \equiv 3 \bmod 4][p \mid h_j - h_j][\text{some $i < j$}]
      \bigg(1 + \frac{k}{p}\bigg),
\]
can be deduced from Selberg's sieve (see \cite{SEL:77}), which is 
of the correct order of magnitude, according to 
Conjecture \ref{con:sotsktups}.
The special case $\bh = \{0,1\}$ is due to Rieger \cite{RIE:65}; 
the special case $\bh = \{0,1,2\}$ is due to Cochrane and Dressler 
\cite{CD:87}; the general case is due to Nowak \cite{NOW:05}.

Lower bounds are more subtle.  
For $k = 2$, Hooley \cite{HOO:74} and Indlekofer \cite{IND:74} 
showed that, for any nonzero integer $h$, 
\[
  \sum_{n \le x} \ind{\SS}(n)\ind{\SS}(n + h)
   \gg
    \frac{x}{\log x}
     \prods[p \mid h][p \equiv 3 \bmod 4]
      \bigg(1 + \frac{1}{p}\bigg),
\]
but we are not aware of any such bounds for $k \ge 3$.

We remark that Iwaniec deduced the asymptotic
$
 \sum_{n \le x} \ind{\SS}(n)\ind{\SS}(n + 1) \sim 3x/(8\log x)
$, 
as $x \to \infty$, from an analog of the Elliott--Halberstam 
conjecture for sums of two squares 
(cf.\ \cite[Corollary 2, (2.3)]{IWA:76}).
However, note that the leading term constant $3/8$ disagrees with 
the one due to Connors and Keating \cite{CK:97}, namely $1/2$.  
(We also obtain the constant $1/2$; see Figure \ref{fig:table1} 
below for a numerical comparison.)

\subsection{Numerical evidence}
\label{sec:numerical-evidence}
Using Propositions \ref{prop:S2h} (b), (c) and 
\ref{prop:Sp3h} (b), (c), we can give $\mathfrak{S}_{\bh}$ 
explicitly, as in the following examples.
Let us first record that the constant $C$ in \eqref{eq:sotsnt} is 
the Landau--Ramanujan constant, given by 
\begin{equation}
 \label{eq:defLanRamconst}
  C
   \defeq 
    \frac{1}{\sqrt{2}}
     \prod_{p \equiv 3 \bmod 4}
      \bigg(
       1 - \frac{1}{p^2}
      \bigg)^{-1/2}
       =
        0.764223\ldots.
\end{equation}
It is straightforward to verify that   
\begin{equation}
 \label{eq:sss01}
  \mathfrak{S}_{\{0,1\}}
   =
    \frac{1}{2C^2} 
     =
      0.856108\ldots.
\end{equation}
If \eqref{eq:sotsktups} holds with $\bh = \{0,1\}$ then, by 
\eqref{eq:sotsnt} and \eqref{eq:sss01}, 
\[
 \speccount(\{0,1\}; x)
  \defeq 
   \sum_{n \le x} \ind{\SS}(n)\ind{\SS}(n + 1)
    \sim 
     \frac{x}{2C^2}
      \big(R_1(x)\big)^2
       \sim 
        \frac{x}{2\log x}
         \quad 
       (x \to \infty).
\]
The agreement with numerics is quite good (to within $1 \%$).
\begin{figure}[ht]
\begin{tabular}{|r|r|r|r|}
\hline 
$x$ & {\small $\speccount(\{0,1\}; x)$ } & {\small $x\mathfrak{S}_{\{0,1\}}(R_1(x))^2$ } & Ratio \\
\hline
1000000000  &  25927011 &   25690391.1 & 1.00921  \\
2000000000  &  50042411 &   49603435.5 & 1.00885  \\
3000000000  &  73560246 &   72930222.0 & 1.00864  \\
4000000000  &  96705170 &   95891759.7 & 1.00848  \\
5000000000  & 119584162 &  118589346.3 & 1.00839  \\
6000000000  & 142253331 &  141080935.2 & 1.00831  \\
7000000000  & 164749254 &  163403937.1 & 1.00823  \\
8000000000  & 187100631 &  185584673.5 & 1.00817  \\
9000000000  & 209327440 &  207642640.3 & 1.00811  \\
\hline
\end{tabular}   
\caption{Observed data vs prediction for $\bh = \{0,1\}$. }
\label{fig:table1}
\end{figure}

As the simplest example with $k = 3$, we verify that 
\[
  \mathfrak{S}_{\{0,1,2\}}
   =
    \frac{A}{4C^2}, 
     \quad 
      A 
      \defeq \prod_{p \equiv 3 \bmod 4}
       \bigg(1 - \frac{2}{p(p - 1)}\bigg),
\]
so Conjecture \ref{con:sotsktups} implies that 
\[
  \speccount(\{0,1,2\}; x)
   \defeq 
    \sum_{n \le x} \ind{\SS}(n)\ind{\SS}(n + 1)\ind{\SS}(n + 2)
     \sim 
      \frac{Ax}{4C^2}
       \big(R_1(x)\big)^3
        \sim 
         \frac{ACx}{4(\log x)^{3/2}}
\]
as $x \to \infty$.
Here, the agreement between numerics and model is only 
to within $10 \%$.

\begin{figure}[ht]
\begin{tabular}{|r|r|r|r|}
\hline
$x$ & {\small $\speccount(\{0,1,2\}; x)$ } & {\small $x\mathfrak{S}_{\{0,1,2\}}(R_1(x))^3$ } & Ratio \\
\hline
1000000000  &  1490691 &   1362419.3 & 1.09415  \\
2000000000  &  2818128 &   2584683.5 & 1.09032  \\
3000000000  &  4093602 &   3762317.2 & 1.08805  \\
4000000000  &  5338091 &   4912433.3 & 1.08665  \\
5000000000  &  6560430 &   6042800.3 & 1.08566  \\
6000000000  &  7764604 &   7157833.6 & 1.08477  \\
7000000000  &  8954282 &   8260369.7 & 1.08400  \\
8000000000  & 10132295 &   9352396.2 & 1.08339  \\
9000000000  & 11299877 &  10435380.5 & 1.08284  \\
\hline
\end{tabular}  
\caption{Observed data vs prediction for $\bh=\{0,1,2\}$. }
\label{fig:table2}
\end{figure}

\begin{nixnix}
\section{Motivating the probabilistic model}
It is elementary to show (see Section \ref{sec:prelims}) that 
\begin{equation}
 \label{eq:defSp}
 \SS = \bigcap_{p \not\equiv 1 \bmod 4} S_p,
  \quad 
   \text{where}
    \quad  
 S_p 
  \defeq 
   \bigcap_{\alpha = 1}^{\infty}
    \{n \in \ZZ : n \equiv \sots \bmod p^{\alpha}\},
\end{equation}
the first intersection being over all primes  
$p \not\equiv 1 \bmod 4$, and $n \equiv \sots \bmod p^{\alpha}$ 
denoting that $n \equiv \sts \bmod p^{\alpha}$ for some 
$\sts \in \SS$. 
It is also elementary to show that, for all 
$p \not\equiv 1 \bmod 4$, the asymptotic density 
\[
 \lim_{x \to \infty}
  \frac{1}{x}
   \#\{n \le x : n + \bh \subseteq S_p\}, 
    \quad 
     n + \bh \defeq \{n + h : h \in \bh\},
\] 
exists, and is equal to 
\begin{equation}
  \delta_{\bh}(p)
   \defeq 
    \lim_{\alpha \to \infty}
     \frac{
      \#\{0 \le a < p^{\alpha} : 
             \forall h \in \bh, 
              a + h \equiv \sots \bmod p^{\alpha}\}
          }
          {p^{\alpha}},
\end{equation}
which exists, and is nonzero if $\bh = \{0\}$, by 
Propositions \ref{prop:S2h} (a) and \ref{prop:Sp3h} (a).
Thus, $R_k(\bh;x)$ is the probability that 
$n + \bh \subseteq S_p$ for all $p \not\equiv 1 \bmod 4$, and, 
regarding the probabilities of $n + \bh$ being contained in $S_p$ 
and $S_q$ as more or less independent for $p \ne q$, we therefore 
expect that 
\[
 \frac{R_k(\bh;x)}{\big(R_1(x)\big)^k}
  =
  \frac{R_k(\bh;x)}{\big(R_1(\bz;x)\big)^k}
   \sim 
    \prod_{p \not\equiv 1 \bmod 4}
     \frac{\delta_{\bh}(p)}{\big(\delta_{\bz}(p)\big)^k}
      \quad 
       (x \to \infty),
\]
provided that the product converges to a nonzero number.

\begin{definition}
 \label{def:Sss}
Fix $k \ge 1$, and a set 
$\bh = \{h_1,\ldots,h_k\} \subseteq \ZZ$ with $\card \bh = k$.
The {\em singular series} for $\bh$ is given by  
\begin{equation}
 \label{eq:defsssP}
  \mathfrak{S}_{\bh}
   \defeq 
    \prod_{p \not\equiv 1 \bmod 4}
     \frac{\delta_{\bh}(p)}{\big(\delta_{\bz}(p)\big)^k},
\end{equation}
with $\delta_{\{0\}}(p)$ and $\delta_{\bh}(p)$ as in 
\eqref{eq:delthp}.
\end{definition}

Proposition \ref{prop:Sp3h} (c) shows that, for sufficiently 
large primes $p \equiv 3 \bmod 4$, 
$\delta_{\bz}(p)^{-k}\delta_{\bh}(p) = 1 + O_k(1/p^2)$, and so 
the product in \eqref{eq:defsssP} converges to a nonzero number 
unless $\delta_{\bh}(p) = 0$ for some $p \not\equiv 1 \bmod 4$.
Proposition \ref{prop:sssc} shows that $\mathfrak{S}_{\bh} \ne 0$ 
if and only if $\bh$ is $\SS$-admissible.

\begin{definition}
 \label{def:Sadm}
Fix $k \ge 1$, and a set $\bh = \{h_1,\ldots,h_k\} \subseteq \ZZ$ 
with $\card \bh = k$.
We say that $\bh$ is {\em $\SS$-admissible} if, and only if, for 
all primes $p \not\equiv 1 \bmod 4$, there exists $n \in \ZZ$ such 
that $n + \bh \subseteq S_p$, where $S_p$ is as in 
\eqref{eq:defSp}.
\end{definition}

\noindent
In view of \eqref{eq:sotsnt}, for $\SS$-admissible $\bh$, 
\eqref{eq:sotsktups} is equivalent to 
\begin{equation}
 \label{eq:equivsotsktups}
 \sum_{n \le x}
  \ind{\SS}(n + h_1)\cdots \ind{\SS}(n + h_k)
       \sim 
        C^k\mathfrak{S}_{\bh}
        x(\log x)^{-k/2}
         \quad 
       (x \to \infty),
\end{equation}
and so we have the following qualitative version of 
Conjecture \ref{con:sotsktups}.

\begin{conjecture} 
 \label{con:qualktups-hide}
Fix $k \ge 1$, and a set 
$\bh = \{h_1,\ldots,h_k\} \subseteq \ZZ$ with $\card \bh = k$. 
If $\bh$ is $\SS$-admissible, then there exist infinitely many 
integers $n$ for which $n + \bh \subseteq \SS$.
\end{conjecture}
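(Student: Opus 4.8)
The plan is to separate the conditional and the unconditional content. First, note that the statement is an immediate consequence of the quantitative Conjecture \ref{con:sotsktups}: by Proposition \ref{prop:sssc}, $\bh$ being $\SS$-admissible is the same as $\mathfrak{S}_{\bh} > 0$, so \eqref{eq:sotsktups} together with \eqref{eq:sotsnt} would give
\[
 \sum_{n \le x} \ind{\SS}(n + h_1)\cdots\ind{\SS}(n + h_k)
  = x\,R_k(\bh;x)
   \sim C^k \mathfrak{S}_{\bh}\, x(\log x)^{-k/2},
\]
which tends to infinity; hence $n + \bh \subseteq \SS$ for $\gg_{k,\bh} x(\log x)^{-k/2}$ integers $n \le x$, and in particular for infinitely many $n$. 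So the real task is to argue unconditionally, and here I would split on the size of $k$.

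For $k \le 3$ this is Hooley's theorem \cite{HOO:73}, and I would follow his route through the theory of ternary quadratic forms. Fixing $\bh = \{h_1,h_2,h_3\}$ (the cases $k \le 2$ being easier and subsumed), one seeks $n$ with $n + h_j$ a sum of two squares for each $j$; writing $n + h_1 = u^2 + v^2$, the shifts force $u^2 + v^2 + (h_2 - h_1)$ and $u^2 + v^2 + (h_3 - h_1)$ also to be sums of two squares, which one recasts as a question about simultaneous representation of the fixed integers $h_i - h_j$ by suitable ternary (or quaternary) forms. The key input is a local-to-global statement: a positive ternary form represents every sufficiently large integer lying in the admissible congruence classes at the finitely many ramified primes — and ``admissible'' here is precisely the $\SS$-admissibility hypothesis, with the local obstructions at $p = 2$ and $p \equiv 3 \bmod 4$ encoded by $\delta_{\bh}(p) > 0$ (Propositions \ref{prop:S2h} and \ref{prop:Sp3h}). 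A lower-bound count for the number of such representations, uniform in the congruence data, then yields $\gg x(\log x)^{-3/2}$ admissible $n \le x$, matching Conjecture \ref{con:sotsktups} up to the value of the constant.

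For $k \ge 4$ I do not expect an unconditional proof, and this is the main obstacle: the argument above is special to small $k$ because it exploits the low rank of ternary forms, while the general case runs into the parity problem. Sieve methods (Selberg's sieve, as quoted above) deliver only an upper bound of order $x(\log x)^{-k/2}$ (up to a bounded Euler-type factor) and no matching lower bound; lower bounds of the conjectured order are available only for $k = 2$, via the multiplicative structure of sums of two squares and the hyperbola/divisor method (Hooley \cite{HOO:74}, Indlekofer \cite{IND:74}), and the corresponding shifted-convolution or character-sum inputs are not known for $k \ge 3$. Thus, beyond Hooley's $k \le 3$ result and the conditional deduction from Conjecture \ref{con:sotsktups}, producing any positive lower bound for $k \ge 4$ — the crux of the problem — appears to lie outside the reach of present technology, which is why the statement is recorded only as a conjecture.
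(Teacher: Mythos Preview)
The statement is a \emph{conjecture}, not a theorem, and the paper does not prove it; your proposal correctly recognizes this and mirrors the paper's own treatment. The paper likewise (i) records that the qualitative statement is an immediate consequence of the quantitative Conjecture~\ref{con:sotsktups} via \eqref{eq:sotsnt} and Proposition~\ref{prop:sssc}, (ii) cites Hooley~\cite{HOO:73} for the unconditional case $k\le 3$, and (iii) states explicitly that the conjecture remains open for $k\ge 4$.

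One small caution: your sketch of Hooley's argument is a bit loose---writing $n+h_1=u^2+v^2$ and asking for $u^2+v^2+(h_j-h_1)$ to be a sum of two squares is not quite how the ternary-form reduction goes, and the paper does not reproduce Hooley's proof either, merely citing it. Since neither you nor the paper claims to give that proof, this is not a gap in your proposal relative to the paper, but if you were to expand the sketch it would need reworking.
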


\noindent
It is plain that, if $\bh$ is not $\SS$-admissible, then there 
are no integers $n$ for which $n + \bh \subseteq \SS$.
\end{nixnix}


\section{Notation}
 \label{sec:notation}

We define the set of natural numbers as 
$\NN \defeq \{1,2,\ldots\}$.
The letter $p$ stands for a prime, $n$ for an integer.
We let $\sots\ $ stand for a generic element of $\SS$, possibly a 
different element each time.
Thus, for instance, $a + h\equiv \sots \, \bmod p^{\alpha}$ denotes 
that $a + h \equiv \sts \bmod p^{\alpha}$ for some $\sts \in \SS$.
We view $k$ as a fixed natural number, and $\bh$ as a nonempty, 
finite set of integers, with $\card \bh = k$ unless otherwise 
indicated.
We let $n + \bh \defeq \{n + h : h \in \bh\}$.
For $n \in \NN$, $\omega(n)$ denotes the number of distinct prime 
divisors of $n$, $\nu_p(n)$ the $p$-adic valuation of $n$.
(We also define $\nu_p(0) \defeq \infty$.) 
That $\nu_p(n) = \alpha$ may also be denoted by 
$p^{\alpha} \emid n$.
The radical of $n$ is $\rad(n) \defeq \prod_{p \mid n} p$, not to 
be confused with the squarefree part of $n$, viz.\  
$\sqfr(n) \defeq \prod_{p \emid n} p$.
By the least residue of an integer $a$ modulo $n$ we mean the 
integer $r$ such that $a \equiv r \bmod n$ and $0 \le r < n$.
When written in an exponent, $\alpha \bmod 2$ is to be interpreted 
as the least residue of $\alpha$ modulo $2$: for instance, 
$p^{\alpha \bmod 2} = 1$ if $\alpha$ is even.

We view $x$ as a real parameter tending to infinity.
Expressions of the form $A \sim B$ denote that $A/B \to 1$ as 
$x \to \infty$.
We also view $y$ as real parameter tending to infinity, 
typically in such a way that $y \sim x/\speccount(x)$.
We may assume that $x$ and $y$ are sufficiently large in terms of 
any fixed quantity.
Expressions of the form $A = O(B)$, $A \ll B$ and $B \gg A$ all 
denote that $|A| \le c|B|$, where $c$ is some positive constant, 
throughout the domain of the quantity $A$.
The constant $c$ is to be regarded as independent of any parameter 
unless indicated otherwise by subscripts, as in 
$A = O_{k}(B)$ ($c$ depends on $k$ only), $A \ll_{k,\lambda} B$ 
($c$ depends on $k$ and $\lambda$ only), etc.
By $o(1)$ we mean a quantity that tends to zero as $y \to \infty$.


\section{Deducing Theorem \ref{thm:main} from Proposition \ref{prop:sssa}}
 \label{sec:main}

Given $\vbi = (i_1,\ldots,i_r) \in \NN^r$ such that 
$i_1 + \cdots + i_r = k$, and 
$\vbl = (\lambda_1,\ldots,\lambda_r) \in \RR^r$, let 
\begin{equation}
 \label{eq:defThet}
 \Theta_{\vbi,\vbl}
  \defeq 
   \{(x_1,\ldots,x_k) \in \Delta^k : x_{i_1 + \cdots + i_j} - x_{i_1 + \cdots + i_{j - 1}} \le \lambda_j, j = 1,\ldots,r\},
\end{equation}
where for $j = 1$ we let $x_{i_1 + i_{j - 1}} = x_0 \defeq 0$. 
In the case where $r = 1$ and $\vbl = (\lambda)$, 
\begin{equation}
 \label{eq:defThetkl}
 \Theta_{\vbi,\vbl}
  = 
   \Theta_{k,\lambda}
    \defeq 
     \{(x_1,\ldots,x_k) \in \RR^k : 0 < x_1 < \cdots < x_k \le \lambda\}. 
\end{equation}
The following proof shows that Theorem \ref{thm:main} (a) and (b) 
hold under slightly weaker hypotheses than the ones stated:
for (a), it is enough to assume that 
Hypothesis \textup{(}$k,\Theta_{\vbi,\vbl},\emptyset$\textup{)}, 
where $\vbi = (i_1,\ldots,i_r)$ and 
$\vbl = (\lambda_1,\ldots,\lambda_r)$, holds for all $k \ge r$, 
and all $\vbi \in \NN^r$ satisfying $i_1 + \cdots + i_r = k$;
for (b), it is enough to assume that 
Hypothesis \textup{(}$k,\Theta_{k,\lambda},\emptyset$\textup{)} 
holds for all $k \ge 1$.

\begin{proof}[Deduction of Theorem \ref{thm:main}]
As this argument has appeared many times in the literature, we 
merely give an outline of it and provide references.
(a) 
To ease notation, we let $\vbi = (i_1,\ldots,i_r)$, 
$\vbh = (h_1,\ldots,h_k)$, $\bh = \{h_1,\ldots,h_k\}$, and 
\[
 \speccount(\{0\} \cup \bh; x)
  \defeq 
   \sum_{n \le x} 
    \ind{\SS}(n)\ind{\SS}(n + h_1)\cdots \ind{\SS}(n + h_k).
\]
%
Let $\ell \ge 0$ be an integer, arbitrarily large but fixed.
An inclusion-exclusion argument (see \cite{HOO:65iii}, 
\cite[Appendix A]{KR:99} or \cite[Key Lemma 2.4.12]{KS:99}) shows 
that 
\begin{align}
 \begin{split}
  \label{eq:bded1}
 & 
   \sum_{k = r}^{r + 2\ell + 1}
    (-1)^{k - r}
     \sum_{i_1 + \cdots + i_r = k}
      \hspace{5pt}
       \sum_{\vbh \, \in \, y\Theta_{\vbi,\vbl} \cap \, \ZZ^k}
          \speccount(\{0\} \cup \bh; x)
 \\
 & \hspace{30pt} 
  \le 
     \sums[\sts_n \le x]
          [\sts_{n + j} - \sts_{n + j - 1} \le \lambda_j y] 
          [j = 1,\ldots,r] 1 
   \le 
    \sum_{k = r}^{r + 2\ell}
     (-1)^{k - r}
      \sum_{i_1 + \cdots + i_r = k}
       \hspace{5pt}
        \sum_{\vbh \, \in \, y\Theta_{\vbi,\vbl} \cap \, \ZZ^k}
           \speccount(\{0\} \cup \bh; x),
  \end{split}            
\end{align}
the sums over $i_1 + \cdots + i_r = k$, here and below, being over 
all $\vbi \in \NN^r$ for which $i_1 + \cdots + i_r = k$.
We make the substitution \eqref{eq:defEterm}, with 
$\{0\} \cup \bh$ and $k + 1$ in place of $\bh$ and $k$; 
we apply Hypothesis ($k,\Theta_{\vbi,\vbl},\{0\}$) for all 
$k$ and $\vbi$ satisfying $r \le k \le r + 2\ell + 1$ and  
$i_1 + \cdots + i_r = k$;
we use Proposition \ref{prop:sssa}, and our 
assumption that $yR_1(x) \sim 1$, i.e.\ $y \sim x/\speccount(x)$, 
as $x \to \infty$.
Thus, we deduce from \eqref{eq:bded1} that 
\begin{equation}
  \label{eq:bded2}
   \sum_{k = r}^{r + 2\ell + 1}
     (-1)^{k - r}
      \sum_{i_1 + \cdots + i_r = k}
       \vol(\Theta_{\vbi,\vbl})
        \le 
         \liminf_{x \to \infty}
          \frac{1}{\speccount(x)}
           \sums[\sts_n \le x]
                [\sts_{n + j} - \sts_{n + j - 1} \le \lambda_j y] 
                [j = 1,\ldots,r] 1
\end{equation}
and 
\begin{equation}
 \label{eq:bded3}
          \limsup_{x \to \infty}
           \frac{1}{\speccount(x)}
            \sums[\sts_n \le x]
                 [\sts_{n + j} - \sts_{n + j - 1} \le \lambda_j y] 
                 [j = 1,\ldots,r] 1
           \le 
            \sum_{k = r}^{r + 2\ell}
            (-1)^{k - r}
             \sum_{i_1 + \cdots + i_r = k}
              \vol(\Theta_{\vbi,\vbl}).            
\end{equation}
Since 
$
 \vol(\Theta_{\vbi,\vbl}) 
  = \lambda_1^{i_1}\cdots \lambda_r^{i_r}/(i_1!\cdots i_r!)
$, 
the sums on the left and right of \eqref{eq:bded2} and 
\eqref{eq:bded3} are truncations of the Taylor 
series for $(1 - \e^{-\lambda_1})\cdots (1 - \e^{-\lambda_r})$.
We have chosen $\ell$ arbitrarily large, so we may conclude that 
\eqref{eq:thm:mainc} holds, provided 
Hypothesis ($k,\Theta_{\vbi,\vbl},\{0\}$) does whenever 
$k \ge r$ and $i_1 + \cdots + i_r = k$.

(b)
We use an argument of Gallagher \cite{GAL:76}, who proved an 
analogous result for primes.
Let $\ell \ge 1$ be an integer, arbitrarily large but fixed.
We have   
\begin{align*}
  \sum_{n \le x}
   \big(\speccount(n + \lambda y) - \speccount(n) \big)^{\ell}
  & =
     \sum_{n \le x}
      \bigg(\sum_{0 < h \le \lambda y} \ind{\SS}(n + h)\bigg)^{\ell}
 \\
  & = 
     \sum_{n \le x}
      \sum_{0 < h_1,\ldots,h_{\ell} \le \lambda y} \ind{\SS}(n + h_1)\cdots \ind{\SS}(n + h_{\ell})
 \\
  & =
       \sum_{k = 1}^{\ell}
        \varrho(\ell,k)
         \hspace{-5.8pt}
         \sum_{0 < h_1 < \cdots < h_k \le \lambda y}
          \hspace{2pt}
           \sum_{n \le x}
            \ind{\SS}(n + h_1)\cdots \ind{\SS}(n + h_k),
\end{align*}
where $\varrho(\ell,k)$ denotes the number of maps from 
$\{1,\ldots,\ell\}$ onto $\{1,\ldots,k\}$.
Thus,  
\[
 \frac{1}{x}
  \sum_{n \le x}
   \big(\speccount(n + \lambda y) - \speccount(n) \big)^{\ell}
  =
    \sum_{k = 1}^{\ell}
     \bigg(\frac{\speccount(x)}{x}\bigg)^k
      \varrho(\ell,k)
       \sum_{0 < h_1 < \cdots < h_k \le \lambda y} 
        \big(\mathfrak{S}_{\bh} + \cE_{\bh}(x)\big),
\]
with $\bh = \{h_1,\ldots,h_k\}$ in the last summand.
To sum over $0 < h_1 < \cdots < h_k \le \lambda y$ is to sum over 
$(h_1,\ldots,h_k) \in y\Theta_{k,\lambda} \cap \ZZ^k$ (see 
\eqref{eq:defThetkl}).
If Hypothesis ($k,\Theta_{k,\lambda},\emptyset$) holds then for 
some function $\varepsilon(x)$ with $\varepsilon(x) \to 0$ 
($x \to \infty$), we have  
\[
 \sum_{0 < h_1 < \cdots < h_k \le \lambda y} 
  \big(\mathfrak{S}_{\bh} + \cE_{\bh}(x)\big)
   =
    \big(1 + O_{\lambda,k}(\varepsilon(x))\big)
     \sum_{0 < h_1 < \cdots < h_k \le \lambda y} \mathfrak{S}_{\bh}.
\]
Applying Proposition \ref{prop:sssa} (noting that 
$\vol(\Theta_{k,\lambda}) = \lambda^k/k!$), and our assumption 
that $yR_1(x) \sim 1$, i.e.\ $y \sim x/\speccount(x)$, as 
$x \to \infty$, we see that if 
Hypothesis ($k,\Theta_{k,\lambda},\emptyset$) holds for 
$1 \le k \le \ell$, then 
\begin{equation}
 \label{eq:ellthmoment}
 \frac{1}{x}
  \sum_{n \le x}
   \big(\speccount(n + \lambda y) - \speccount(n) \big)^{\ell}
    \sim 
     \sum_{k = 1}^{\ell}
      \varrho(\ell,k)
       \frac{\lambda^k}{k!} 
        \quad 
      (x \to \infty).
\end{equation}
Gallagher's calculation in \cite[Section 3]{GAL:76} shows that 
$
 \sum_{k = 1}^{\ell} \varrho(\ell,k)\lambda^k/k!
$
is the $\ell$th moment of the Poisson distribution with parameter 
$\lambda$, and that the corresponding moment generating function 
is entire.
Since a Poisson distribution is determined by its moments, it 
follows (see \cite[Section 30]{BIL:95}) that for any given 
$m \ge 0$, \eqref{eq:thm:main} holds as $x \to \infty$, provided 
Hypothesis ($k,\Theta_{k,\lambda},\emptyset$) holds for all 
$k \ge 1$.
\end{proof}

 
\section{Preliminaries}
 \label{sec:prelims}
 
A positive integer $n$ is a sum of two squares if and only if
\[
 n =
   2^{\beta_2}
    \prod_{p \equiv 1 \bmod 4}p^{\beta_p}
     \prod_{p \equiv 3 \bmod 4}p^{2\beta_p},
\]
where $\beta_2,\beta_p$ denote nonnegative integers.
(See \cite[Theorem 366]{HW:38}.)
In view of this and the next proposition, whose proof, being  
routine and elementary, is omitted, we have 
$\SS = \bcap_p S_p$, where 
$
 S_p 
  = 
   \bcap_{\alpha \ge 1} \{n \in \ZZ : n \equiv \sots \bmod p^{\alpha}\}.
$
%
Further, as $S_p = \ZZ$ for primes $p \equiv 1 \bmod 4$, we may 
write $\SS = \bcap_{p \not\equiv 1 \bmod 4} S_p$.

\begin{proposition}
 \label{prop:S2S3S1}
Let $n \in \ZZ$.
We have $n \in S_2$ if and only if either $n = 0$ or 
$n = 2^{\beta}m$ for some $\beta \ge 0$ and 
$m \equiv 1 \bmod 4$.
For $p \equiv 3 \bmod 4$, we have $n \in S_p$ if and only if 
either $n = 0$ or $n = p^{2\beta}m$ for some $\beta \ge 0$ and 
$m \not\equiv 0 \bmod p$.
For $p \equiv 1 \bmod 4$, we have $S_p = \ZZ$.
\end{proposition}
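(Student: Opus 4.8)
The plan is to deduce everything from the classical factorization criterion --- a positive integer is a sum of two squares if and only if every prime $\equiv 3 \bmod 4$ divides it to an even power, while the power of $2$ and of primes $\equiv 1 \bmod 4$ are unconstrained (cf.\ \cite[Theorem 366]{HW:38}) --- together with the cyclicity of $(\ZZ/p^{\alpha}\ZZ)^{\times}$ and Hensel lifting. Throughout one uses that $n \in S_p$ means exactly that for every $\alpha \ge 1$ there is $\sts \in \SS$ with $\sts \equiv n \bmod p^{\alpha}$, and that $0 = 0^2 + 0^2 \in \SS$, so $0 \in S_p$ for every $p$. For $p \equiv 1 \bmod 4$ one has $4 \mid p-1$, so for every $\alpha$ the cyclic group $(\ZZ/p^{\alpha}\ZZ)^{\times}$ contains an element $i$ with $i^2 \equiv -1$; the map $(a,b) \mapsto (a - ib, a + ib)$ is then a bijection of $(\ZZ/p^{\alpha}\ZZ)^2$ (its inverse is $a = (u+v)/2$, $b = (v-u)/(2i)$, which makes sense since $2$ and $i$ are units mod $p^{\alpha}$), and it carries $a^2 + b^2$ to the product $uv$. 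Hence every residue class mod $p^{\alpha}$ is a value of $a^2+b^2$ (take $u=1$, $v=n$), i.e.\ is hit by an element of $\SS$; as this holds for all $\alpha$, $S_p = \ZZ$.

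For $p \equiv 3 \bmod 4$, suppose first $n \in S_p$ with $n \neq 0$ and set $e \defeq \nu_p(n)$. Taking $\alpha = e+1$ gives $\sts \in \SS$ with $\sts \equiv n \bmod p^{e+1}$; necessarily $\sts \neq 0$ (else $p^{e+1}\mid n$), and $\nu_p(\sts - n) \ge e+1 > e$ forces $\nu_p(\sts) = e$. Since $\sts$ is a nonzero sum of two squares and $p \equiv 3 \bmod 4$, the criterion makes $\nu_p(\sts)$ even, so $e = 2\beta$ and $n = p^{2\beta}m$ with $p \nmid m$. Conversely, if $n = p^{2\beta}m$ with $p \nmid m$ and $\alpha$ is given, then for $\alpha \le 2\beta$ we use $\sts = 0$, while for $\alpha > 2\beta$ it suffices to solve $a^2 + b^2 \equiv m \bmod p^{\alpha - 2\beta}$ in integers, since then $p^{2\beta}(a^2+b^2) = (p^{\beta}a)^2 + (p^{\beta}b)^2 \in \SS$ is $\equiv n \bmod p^{\alpha}$. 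Modulo $p$ the sets $\{a^2 : a \in \ZZ/p\ZZ\}$ and $\{m - b^2 : b \in \ZZ/p\ZZ\}$ each have $(p+1)/2$ elements and hence intersect, yielding $(a_0,b_0)$ with $a_0^2 + b_0^2 \equiv m \bmod p$; as $p \nmid m$ we have $(a_0,b_0) \not\equiv (0,0)$, so one of $2a_0, 2b_0$ is a unit mod $p$ and Hensel's lemma lifts the solution to one modulo $p^{\alpha-2\beta}$.

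For $p = 2$, suppose first $n \in S_2$ with $n \neq 0$, and write $n = 2^{\beta}m$ with $m$ odd. Taking $\alpha = \beta + 2$ gives $\sts \in \SS$, necessarily $\neq 0$, with $\sts \equiv n \bmod 2^{\beta+2}$; as before $\nu_2(\sts) = \beta$, so $\sts = 2^{\beta}s$ with $s$ odd and $s \equiv m \bmod 4$. By the criterion the odd part $s$ of $\sts$ is a product of prime powers $q^{a}$ with $q \equiv 1 \bmod 4$, or $q \equiv 3 \bmod 4$ and $a$ even; each such factor is $\equiv 1 \bmod 4$, so $s \equiv 1 \bmod 4$ and therefore $m \equiv 1 \bmod 4$. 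Conversely, let $n = 2^{\beta}m$ with $m \equiv 1 \bmod 4$ and fix $\alpha$. For $\alpha \le \beta$ we use $\sts = 0$; for $\alpha > \beta$, with $\gamma \defeq \alpha - \beta \ge 1$ it suffices to find a sum of two squares $t \equiv m \bmod 2^{\gamma}$ (then $t$ is odd and $2^{\beta}t \in \SS$, since $2^{\beta}t$ has the same primes $\equiv 3 \bmod 4$ to the same powers as $t$, and $2^{\beta}t \equiv n \bmod 2^{\alpha}$). For $\gamma \le 2$ take $t = 1$. For $\gamma \ge 3$, the squares of odd integers fill the coset $\{x \equiv 1 \bmod 8\}$ modulo $2^{\gamma}$ (the squares in $(\ZZ/2^{\gamma}\ZZ)^{\times}$ form exactly this set): taking $b = 0$ represents every $x \equiv 1 \bmod 8$, and for a fixed odd $c$ the numbers $a^2 + (2c)^2$ with $a$ odd run over $4c^2 + \{x \equiv 1 \bmod 8\} = \{x \equiv 5 \bmod 8\}$ mod $2^{\gamma}$ (since $4c^2 \equiv 4 \bmod 8$). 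Thus $a^2 + b^2$ represents, modulo $2^{\gamma}$, every odd residue $\equiv 1 \bmod 4$, in particular $m$.

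The main obstacle is the ``if'' direction of the $p = 2$ case for $\gamma \ge 3$: one must identify precisely which residues modulo $2^{\gamma}$ are sums of two squares, which is where the structure of $(\ZZ/2^{\gamma}\ZZ)^{\times}$ is genuinely used. (One could instead invoke Dirichlet's theorem to take $t$ a prime $\equiv m \bmod 2^{\gamma}$, which is automatically $\equiv 1 \bmod 4$ and hence a sum of two squares; the congruence count above avoids this.) The remaining parts are routine: cyclicity of $(\ZZ/p^{\alpha}\ZZ)^{\times}$, Hensel's lemma, and the factorization criterion.
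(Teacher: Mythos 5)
Your proof is correct, but it takes a somewhat different route from the authors' (omitted) argument. The paper's own proof is purely local: for sufficiency it reduces to units via Brahmagupta's identity and then lifts congruences $m \equiv a^2 + b^2 \bmod p^{\nu}$ to modulus $p^{\nu+1}$ by an explicit Hensel-type step (for $p = 2$, replacing $(a,b)$ by $(a + 2^{\nu-1}, b + 2^{\nu-1})$ when needed; for $p \equiv 3 \bmod 4$, the same quadratic-residue counting you use followed by a lift); for necessity it runs a descent, e.g.\ from $2^{\beta+1}m \equiv a^2 + b^2 \bmod 2^{\beta+3}$ it passes to $c = (a+b)/2$, $d = (a-b)/2$ to strip off a factor of $2$, and analogously divides by $p^2$ at $p \equiv 3 \bmod 4$. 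You instead obtain the necessity directions globally, from the factorization criterion of \cite[Theorem 366]{HW:38}: a nonzero element of $\SS$ congruent to $n$ modulo $p^{\nu_p(n)+1}$ (resp.\ $2^{\nu_2(n)+2}$) must share its $p$-adic valuation (resp.\ valuation and odd part mod $4$) with $n$, forcing the stated shape. For sufficiency at $p = 2$ you identify the image of $a^2 + b^2$ in $(\ZZ/2^{\gamma}\ZZ)$ exactly (squares of odd integers fill the class $1 \bmod 8$), and at $p \equiv 1 \bmod 4$ you use $\sqrt{-1} \bmod p^{\alpha}$ and the norm-form factorization $a^2 + b^2 = (a+ib)(a-ib)$ to hit every residue. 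Both arguments are sound; yours leans on the global classification (which the paper quotes anyway, so nothing extra is assumed), is more structural at $p \equiv 1 \bmod 4$ and $p = 2$, and avoids the descent, while the paper's version is entirely self-contained congruence manipulation and its $2$-adic lifting step is slightly slicker than computing the full image of the squares modulo $2^{\gamma}$.
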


Let us introduce some notation in order to state further results.
Given a nonempty, finite set $\bh \subseteq \ZZ$, let
\begin{equation}
 \label{eq:defdethDh}
 \det(\bh)
  \defeq 
   \prods[h,h' \in \bh][h > h'](h - h') > 0.
\end{equation}
Note that if $p \le k - 1$, where $k = \card \bh$, then two 
elements of $\bh$ occupy the same congruence class modulo $p$, so 
$p \mid \det(\bh)$.
In other words, if $p \nmid \det(\bh)$ then $k \le p$.

Let 
\begin{equation}
 \label{eq:defhp}
  \bh_p \defeq \{h' \in \bh : -h' + \bh \subseteq S_p\}.
\end{equation}
Note that $\bh_2$ contains at most one element, for if 
$h,h' \in \bh_2$ then $\pm(h - h') \in S_2$, which by 
Proposition \ref{prop:S2S3S1} holds only if $h - h' = 0$.
Similarly, if $k = 1$ or $k = 2$, then $\card \bh_2 = 1$.
By Proposition \ref{prop:S2S3S1}, $\bh_p$ for $p \equiv 3 \bmod 4$ 
consists precisely of those elements $h'$ of $\bh$ for which 
$2 \mid \nu_p(h - h')$ for every $h \in \bh$ with $h \ne h'$.
(Recall that $\nu_p(n)$ denotes the $p$-adic valuation of $n$.)
For instance, if $p \nmid \det(\bh)$ then $\bh_p = \bh$.

Given $\alpha \ge 1$, let 
\begin{equation}
 \label{eq:defTh}
  \T_{\bh}(2^{\alpha + 1})
   \defeq 
    \{0 \le a < 2^{\alpha + 1} : a + \bh \subseteq S_2 
      \,\, \hbox{and} \,\, 
       {\textstyle \max_{h \in \bh}} \nu_2(a + h) < \alpha\}.
\end{equation}
By Proposition \ref{prop:S2S3S1}, this is the (possibly empty) set 
of least residues $a$ modulo $2^{\alpha + 1}$ such that, for each  
$h \in \bh$, there is some $\beta \le \alpha - 1$ and 
$m \equiv 1 \bmod 4$ such that $a + h = 2^{\beta}m$.
Finally, for $p \equiv 3 \bmod 4$, let 
\begin{equation}
 \label{eq:defVh}
  \V_{\bh}(p^{\alpha})
   \defeq 
    \{0 \le a < p^{\alpha} : a + \bh \subseteq S_p 
      \,\, \hbox{and} \,\, 
       {\textstyle \max_{h \in \bh}} \nu_p(a + h) < \alpha\}.
\end{equation}
This is the (possibly empty) set of least residues $a$ modulo 
$p^{\alpha}$ such that, for each $h \in \bh$, there exists 
$\beta \le (\alpha - 1)/2$ for which $p^{2\beta} \emid a + h$. 
Note that, for $\alpha \ge 2$ and odd $p$, the difference between 
$\T_{\bh}(2^{\alpha})$ and $\V_{\bh}(p^{\alpha})$ is that 
$\T_{\bh}(2^{\alpha})$ contains only integers $a$ for which 
$\max_{h \in \bh} \nu_2(a + h) \le \alpha - 2$, whereas 
$\V_{\bh}(p^{\alpha})$ contains $a$ for which 
$\max_{h \in \bh} \nu_p(a + h) \le \alpha - 1$.
As may be expected in view of Proposition \ref{prop:S2S3S1}, we 
will need to treat $p = 2$ as a special case throughout.

Recall from \eqref{eq:delthp} that  
$
 \delta_{\bh}(p) 
  \defeq 
   \lim_{\alpha \to \infty} \card S_{\bh}(p^{\alpha})/p^{\alpha}
$, where 
\[
 S_{\bh}(p^{\alpha}) 
  \defeq 
   \{0 \le a < p^{\alpha} : \forall h \in \bh, a + h \equiv \sots \bmod p^{\alpha}\}.
\]
We have introduced $\V_{\bh}(p^{\alpha})$ because it is more 
convenient than $S_{\bh}(p^{\alpha})$ to work with.
It is not difficult to see that, for $p \not\equiv 1 \bmod 4$, 
$
 0 
  \le 
   \card S_{\bh}(p^{\alpha}) - \card T_{\bh}(p^{\alpha}) 
    \le 1
$ 
once $\alpha$ is sufficiently large.
(One may verify Proposition \ref{prop:S2S3S1} by showing 
that $n \equiv \sots \bmod 2^{\alpha}$ if and only if 
$n \equiv 2^{\beta}m \bmod 2^{\alpha}$ for some $\beta \ge 0$ and 
odd $m$, and, for $p \equiv 3 \bmod 4$, that 
$n \equiv \sots \bmod p^{\alpha}$ if and only if
$n \equiv p^{2\beta}m \bmod p^{\alpha}$ for some $\beta \ge 0$ 
and $m \not\equiv 0 \bmod p$.) 
Thus, the limit $\delta_{\bh}(p)$ exists if and only if 
$\lim_{\alpha \to \infty} \card \V_{\bh}(p^{\alpha})/p^{\alpha}$ 
exists, in which case the two are equal.

In the next two propositions, and throughout, we allow for the 
possibility that $k = 1$.
In case $\bh = \{h_1\}$, we define 
$\max_{i \ne j} \nu_p(h_i - h_j)$ to be zero (and 
$\det(\bh) \defeq 1$). 

\begin{proposition}
 \label{prop:S2h}
Let $\bh = \{h_1,\ldots,h_k\}$ be a set of $k \ge 1$ distinct 
integers.

\textup{(}a\textup{)}
The limits $\delta_{\bh}(2)$ 
\textup{(}see \eqref{eq:delthp}\textup{)} and 
$
 \lim_{\alpha \to \infty} 
  \card \T_{\bh}(2^{\alpha + 1})/2^{\alpha + 1}
$ exist, and are equal:
\begin{equation}
 \label{eq:defdelth2}
  \delta_{\bh}(2)
   =
    \lim_{\alpha \to \infty}
     \frac{\card \T_{\bh}(2^{\alpha + 1})}{2^{\alpha + 1}}.
\end{equation}
Moreover, for all $\alpha \ge 1$, we have  
\begin{equation}
 \label{eq:Thdelth2bnd}
  \bigg|
   \frac{\card \T_{\bh}(2^{\alpha + 1})}{2^{\alpha + 1}}
     -
      \delta_{\bh}(2)
  \bigg|
   \le 
    \frac{k}{2^{\alpha}}. 
\end{equation}

\textup{(}b\textup{)}
For any $\alpha \ge 2 + \max_{i \ne j} \nu_2(h_i - h_j)$, we have 
\begin{equation}
 \label{eq:delth2}
  \delta_{\bh}(2)
   =
    \frac{\card \T_{\bh}(2^{\alpha + 1}) + \card \bh_2}{2^{\alpha + 1}}, 
\end{equation}
the right-hand side being constant for $\alpha$ in this range.

\textup{(}c\textup{)}
If $2 \nmid \det(\bh)$ 
\textup{(}in which case $k \le 2$\textup{)}, then 
$\delta_{\bh}(2) = (1/2)^k$.
As a special case, we record here that $\delta_{\bz}(2) = 1/2$.
\end{proposition}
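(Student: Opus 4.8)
The plan is to establish the three parts of Proposition \ref{prop:S2h} in order, using Proposition \ref{prop:S2S3S1} to reduce everything to a concrete description of membership in $S_2$. Recall from that proposition that $a + h \in S_2$ exactly when $a + h = 0$ or $a+h = 2^\beta m$ with $m \equiv 1 \bmod 4$, equivalently (for nonzero values) when $\nu_2(a+h)$ is such that $(a+h)/2^{\nu_2(a+h)} \equiv 1 \bmod 4$. The key observation driving part (a) is a \emph{stabilization} phenomenon: whether the least residue $a$ modulo $2^{\alpha+1}$ lies in $\T_{\bh}(2^{\alpha+1})$ is governed by the $2$-adic behavior of each $a + h$ only up to precision roughly $\alpha$, so passing from $2^{\alpha+1}$ to $2^{\alpha+2}$ either splits each good residue into two good residues (when $\max_h \nu_2(a+h)$ is comfortably below $\alpha$) or does something controlled near the boundary.

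For part (a), I would count $\card \T_{\bh}(2^{\alpha+1})$ by lifting. Each $a \in \T_{\bh}(2^{\alpha+1})$ has exactly two lifts mod $2^{\alpha+2}$; I claim both lifts lie in $\T_{\bh}(2^{\alpha+2})$ except possibly those $a$ for which some $\nu_2(a+h)$ equals $\alpha - 1$ (the borderline case where raising the modulus can destroy the condition $\nu_2(a+h) < \alpha$ or reveal a residue that is $\equiv 3 \bmod 4$ at the newly exposed digit). A counting argument shows the number of such borderline residues is at most $k$ (one for each $h \in \bh$, since $\nu_2(a+h) = \alpha-1$ pins down $a$ modulo $2^{\alpha}$ up to the relevant ambiguity). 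Hence $|\card \T_{\bh}(2^{\alpha+2}) - 2\card \T_{\bh}(2^{\alpha+1})| \le k$, which rearranges to $|\card \T_{\bh}(2^{\alpha+2})/2^{\alpha+2} - \card \T_{\bh}(2^{\alpha+1})/2^{\alpha+1}| \le k/2^{\alpha+2}$; summing the geometric tail gives that the sequence $\card \T_{\bh}(2^{\alpha+1})/2^{\alpha+1}$ is Cauchy, so the limit exists, and the telescoped bound yields \eqref{eq:Thdelth2bnd}. The equality with $\delta_{\bh}(2)$ follows from the remark in the preliminaries that $0 \le \card S_{\bh}(2^\alpha) - \card \T_{\bh}(2^\alpha) \le 1$ for large $\alpha$, which forces the two limits to coincide.

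For part (b), once $\alpha \ge 2 + \max_{i\ne j}\nu_2(h_i - h_j)$, the differences $h_i - h_j$ are all divisible by $2$ to order strictly less than $\alpha - 1$, so for any residue $a$ at most one element $h \in \bh$ can have $\nu_2(a+h) \ge \alpha - 1$; I would show that in this stable range the count $\card \T_{\bh}(2^{\alpha+1})$ grows exactly by doubling plus a fixed correction term equal to $\card \bh_2$ — the correction accounting precisely for the residues where some $a + h$ has very high $2$-adic valuation (morally "$a + h = 0$"), which by definition of $\bh_2$ (the $h'$ with $-h' + \bh \subseteq S_2$) contribute one stable extra residue each. Passing to the limit using part (a) then gives \eqref{eq:delth2}, and constancy over the stated range of $\alpha$ is built into the argument.

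Part (c) is the easy special case: if $2 \nmid \det(\bh)$ then, as noted after \eqref{eq:defdethDh}, $k \le 2$ and the elements of $\bh$ occupy distinct residues mod $2$ when $k=2$ (and trivially when $k=1$); a direct check that for each $h$ the density of $a$ with $a+h \in S_2$ is $1/2$, together with independence of the two conditions when $\bh = \{h_1,h_2\}$ with $h_1 \not\equiv h_2 \bmod 2$ (since then the constraints modulo $2^{\alpha+1}$ on $a + h_1$ and $a + h_2$ decouple), gives $\delta_{\bh}(2) = (1/2)^k$; specializing to $\bh = \{0\}$ gives $\delta_{\bz}(2) = 1/2$. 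I expect the main obstacle to be part (a): pinning down exactly which residues are "borderline" when the modulus is raised, and verifying rigorously that there are at most $k$ of them, requires a careful case analysis of the two-adic digit that gets exposed — in particular handling the interaction between the condition $\max_h \nu_2(a+h) < \alpha$ and the condition that the odd part of each $a+h$ is $\equiv 1 \bmod 4$. Everything else is bookkeeping on top of Proposition \ref{prop:S2S3S1}.
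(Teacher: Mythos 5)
Your overall strategy for part (a) --- compare $\card \T_{\bh}(2^{\alpha+2})$ with $2\card \T_{\bh}(2^{\alpha+1})$, bound the discrepancy by $O(k)$, and telescope --- is exactly the paper's, but you have misidentified where the discrepancy comes from, and your counting argument therefore targets the wrong set. If $a \in \T_{\bh}(2^{\alpha+1})$, so that $a + h = 2^{\beta}m$ with $\beta \le \alpha - 1$ and $m \equiv 1 \bmod 4$ for every $h \in \bh$, then \emph{both} lifts $a$ and $a + 2^{\alpha+1}$ always lie in $\T_{\bh}(2^{\alpha+2})$: the second lift gives $a + 2^{\alpha+1} + h = 2^{\beta}(m + 2^{\alpha+1-\beta})$ with $\alpha + 1 - \beta \ge 2$, so the odd part stays $\equiv 1 \bmod 4$ even in your ``borderline'' case $\beta = \alpha - 1$. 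No elements are ever lost. What your sketch misses is that $\T_{\bh}(2^{\alpha+2})$ can contain \emph{new} elements that are not lifts of anything in $\T_{\bh}(2^{\alpha+1})$, namely those $b$ with $\nu_2(b + h_j) = \alpha$ exactly for some $h_j$ (their reductions mod $2^{\alpha+1}$ violate $\max_{h}\nu_2(a+h) < \alpha$). The paper packages this as the partition $\T_{\bh}(2^{\alpha+2}) = \{a,\, a + 2^{\alpha+1} : a \in \T_{\bh}(2^{\alpha+1})\} \cup \U_{\bh}(2^{\alpha+2})$, with $0 \le \card\U_{\bh}(2^{\alpha+2}) \le 2k$ since each such $b$ is a least residue of $\pm 2^{\alpha} - h_j$; in particular the normalized counts are nondecreasing, and telescoping gives \eqref{eq:Thdelth2bnd}. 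Your inequality $|\card\T_{\bh}(2^{\alpha+2}) - 2\card\T_{\bh}(2^{\alpha+1})| \le k$ happens to be true, but the mechanism you describe (losses among lifts) proves only the vacuous half of it; you must count the gains. These same new elements are what drive part (b): for $\alpha \ge 2 + \max_{i \ne j}\nu_2(h_i - h_j)$ one shows $\card\U_{\bh}(2^{\alpha+2}) = \card\bh_2$ exactly, whence $\card\T_{\bh}(2^{\alpha+2}) = 2\card\T_{\bh}(2^{\alpha+1}) + \card\bh_2$ and \eqref{eq:delth2}.

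In part (c) the appeal to ``independence'' of the conditions $a + h_1 \in S_2$ and $a + h_2 \in S_2$ when $h_2 - h_1$ is odd is not a proof: the intersection of two sets of density $1/2$ need not have density $1/4$ (the even numbers and their translate by $1$ intersect trivially), and nothing in your sketch explains why the $2$-adic conditions defining $S_2$ decouple. The paper instead computes $\T_{\bh}(8)$ and $\bh_2$ explicitly (reducing WLOG to $h_2 - h_1 \equiv 1 \bmod 4$, which gives $\card\T_{\bh}(8) = \card\bh_2 = 1$) and applies part (b).
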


\begin{proof}
In essence, we use a Hensel-type argument: for $\alpha \ge 1$, the 
condition that $n \equiv \sots \bmod 2^{\alpha}$ can be lifted
to $n \equiv \sots \bmod 2^{\alpha + 1}$, unless $n = 2^{\alpha}m$
for some $m \equiv 3 \bmod 4$.

(a)
As already noted, to show that $\delta_{\bh}(2)$ and the 
right-hand side of \eqref{eq:defdelth2} exist and are equal, it 
suffices to show that the right-hand side exists.
Let $\alpha \ge 1$ and let $0 \le b < 2^{\alpha + 2}$, so 
$b = a + 2^{\alpha + 1}q$, where 
$0 \le a < 2^{\alpha + 1}$ and either $q = 0$ or $q = 1$.
Suppose that, for each $i$, there exists $\beta_i \le \alpha - 1$ 
and $m_i \equiv \pm 1 \bmod 4$ such that 
$b + h_i = 2^{\beta_i}m_i$.
Then, for each $i$, $a + h_i = 2^{\beta_i}m'_i$ 
and $a + 2^{\alpha + 1} + h_i = 2^{\beta_i}m''_i$, where 
$m'_i \equiv m''_i \equiv m_i \bmod 4$.
Recalling Proposition \ref{prop:S2S3S1} and definition 
\eqref{eq:defTh}, we see that the following statements are 
equivalent: 
(i) $b \in \T_{\bh}(2^{\alpha + 2})$; 
(ii) both $a$ and $a + 2^{\alpha + 1}$ are in 
$\T_{\bh}(2^{\alpha + 2})$;
(iii) $a \in \T_{\bh}(2^{\alpha + 1})$.

We have shown that we have a partition 
\[
 \T_{\bh}(2^{\alpha + 2})
  =
   \{a,a + 2^{\alpha + 1} : a \in \T_{\bh}(2^{\alpha + 1})\}
    \cup
     \U_{\bh}(2^{\alpha + 2}),
\]
where 
\[
 \U_{\bh}(2^{\alpha + 2}) 
  \defeq 
   \{0 \le b < 2^{\alpha + 2} : b + \bh \subseteq S_2 
        \,\, \hbox{\textup{and}} \,\, 
         {\textstyle \max_{h \in\bh} } \nu_2(b + h) = \alpha\}  
\]
is the set of elements $b$ of $\T_{\bh}(2^{\alpha + 2})$ for which 
$\nu_2(b + h_j) = \alpha$ for some $h_j \in \bh$.
Any element of $\U_{\bh}(2^{\alpha + 2})$ is a least
residue of 
$\pm 2^{\alpha} - h_j$ for some $h_j \in \bh$, of which there are 
at most $2k$.
We see that 
\[
   \frac{\card \T_{\bh}(2^{\alpha + 2})}{2^{\alpha + 2}}
  -
    \frac{\card \T_{\bh}(2^{\alpha + 1})}{2^{\alpha + 1}}
   =
      \frac{\card \U_{\bh}(2^{\alpha + 2})}{2^{\alpha + 2}}
       \le 
        \frac{k}{2^{\alpha + 1}}.
\]
Consequently, for any $\beta$ with $\beta \ge \alpha$, we have 
\[
 0
  \le 
   \frac{\card \T_{\bh}(2^{\beta + 1})}{2^{\beta + 1}}
  -
    \frac{\card \T_{\bh}(2^{\alpha + 1})}{2^{\alpha + 1}}
     =
      \sum_{r = 1}^{\beta - \alpha}
       \frac{\card \U_{\bh}(2^{\alpha + r + 1})}{2^{\alpha + r + 1}}
        <
         \frac{k}{2^{\alpha}}.
\]
It follows that the limit on the right-hand side of 
\eqref{eq:defdelth2} exists, and that \eqref{eq:Thdelth2bnd} holds 
for all $\alpha \ge 1$.

(b)
Assume that $\alpha \ge 2 + \max_{i \ne j} \nu_2(h_i - h_j)$.
Suppose that, for some $j$, there exists $q$ such that 
$b + h_j = 2^{\alpha}(1 + 2q)$.
We have $b + h_j \in S_2$ if and only if 
$2 \mid q$, equivalently, 
$b + h_j \equiv 2^{\alpha} \bmod 2^{\alpha + 2}$.
For $i \ne j$ 
we may write $h_i - h_j = 2^{\beta_{ij}}m_{ij}$ with 
$\beta_{ij} \le \alpha - 2$ and $m_{ij} \equiv \pm 1 \bmod 4$.
Thus,  
\[
 b + h_i 
  = 2^{\beta_{ij}}(m_{ij} + 2^{\alpha - \beta_{ij}}(1 + 2q))
\]
is in $S_2$ if and only if $m_{ij} \equiv 1 \bmod 4$, 
equivalently, $h_i - h_j \in S_2$.
By definition of $\bh_2$, this holds for each $i \ne j$ if and 
only if $h_j \in \bh_2$.
We have shown that $b \in \T_{\bh}(2^{\alpha + 2})$ and 
$\nu_2(b + h_j) = \alpha$ for some $h_j \in \bh$ if and only if  
$\bh_2$ is nonempty, $h_j$ is the (necessarily unique) element of 
$\bh_2$, and $b + h_j \equiv 2^{\alpha} \bmod 2^{\alpha + 2}$. 
Thus,
\[
  \U_{\bh}(2^{\alpha + 2})
   =
    \{0 \le b < 2^{\alpha + 2} : \exists h' \in \bh_2, b \equiv 2^{\alpha} - h' \bmod 2^{\alpha + 2}\},
\]
and $\card \U_{\bh}(2^{\alpha + 2}) = \card \bh_2$.
Also, 
$
 \card \T_{\bh}(2^{\alpha + 2}) 
 = 2\card \T_{\bh}(2^{\alpha + 1}) + \card \bh_2
$.
Hence 
\[
 \frac{\card \T_{\bh}(2^{\alpha + 2}) + \card \bh_2}{2^{\alpha + 2}}
  =
   \frac{\card \T_{\bh}(2^{\alpha + 1}) + \card \bh_2}{2^{\alpha + 1}}.
\]

(c) 
Suppose $2 \nmid \det(\bh)$.
If $k = 1$, i.e.\ if $\bh = \{h_1\}$, then the elements of 
$\T_{\bh}(8)$ are precisely the least residues of 
$1 - h_1, 2 - h_1$ and $5 - h_1$ modulo $8$.
Also, $\bh_2 = \bh$.
If $k = 2$, i.e.\ if $\bh = \{h_1,h_2\}$, then either 
$h_2 - h_1 \equiv 1 \bmod 4$ or $h_1 - h_2 \equiv 1 \bmod 4$.
Without loss of generality, suppose $h_2 - h_1 \equiv 1 \bmod 4$.
Then the sole element of $\T_{\bh}(8)$ is the least residue of 
$h_2 - 2h_1$ modulo $8$.
Also, $\bh_2 = \{h_1\}$.
Therefore, by (b), $\delta_{\bh}(2) = (1/2)^k$.
\end{proof}

For the next proposition, recall that $\alpha \bmod 2$, when 
written in an exponent, denotes the least residue of $\alpha$ 
modulo $2$.
For instance, $p^{\alpha \bmod 2} = 1$ if $\alpha$ is even.

\begin{proposition}
 \label{prop:Sp3h}
Let $\bh = \{h_1,\ldots,h_k\}$ be a set of $k \ge 1$ distinct 
integers, and let $p$ be a prime with $p \equiv 3 \bmod 4$.

\textup{(}a\textup{)}
The limits $\delta_{\bh}(p)$ 
\textup{(}see \eqref{eq:delthp}\textup{)} and 
$\lim_{\alpha \to \infty} \card \V_{\bh}(p^{\alpha})/p^{\alpha}$ 
exist, and are equal:
\begin{equation}
 \label{eq:defdelthp3}
  \delta_{\bh}(p)
   = 
    \lim_{\alpha \to \infty}
     \frac{\card \V_{\bh}(p^{\alpha})}{p^{\alpha}}.
\end{equation}
Moreover, for all $\alpha \ge 1$, we have 
\begin{equation}
 \label{eq:Vhdeltp3bnd}
  \bigg|
   \frac{\card \V_{\bh}(p^{\alpha})}{p^{\alpha}}
     -
      \delta_{\bh}(p)
  \bigg|
   \le 
    \frac{k}{p^{\alpha }}
     \bigg(1 + \frac{1}{p}\bigg)^{-1} 
      \frac{1}{p^{\alpha \bmod 2}}.
\end{equation}

\textup{(}b\textup{)}
For any $\alpha \ge 1 + \max_{i \ne j} \nu_p(h_i - h_j)$, we have 
\begin{equation}
 \label{eq:delthp3}
  \delta_{\bh}(p)
   =
    \frac{1}{p^{\alpha}}
     \bigg(
      \card \V_{\bh}(p^{\alpha}) + \card \bh_p \bigg(1 + \frac{1}{p}\bigg)^{-1} \frac{1}{p^{\alpha \bmod 2}}
     \bigg),
\end{equation}
the right-hand side being constant for $\alpha$ in this range.

\textup{(}c\textup{)}
We have  
\begin{equation}
 \label{eq:delthpropsp3}
  \delta_{\bh}(p)
   \ge 
    \bigg(1 + \frac{1}{p}\bigg)^{-1}
     \bigg(1 - \frac{\min\{k - 1,p\}}{p}\bigg),
\end{equation}
with {\bfseries equality} attained if $p \nmid \det(\bh)$ 
\textup{(}in which case $k \le p$\textup{)}.
As a special case, we record here that 
$\delta_{\bz}(p) = (1 + 1/p)^{-1}$.
\end{proposition}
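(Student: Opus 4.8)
The plan is to mimic the Hensel-type lifting used in the proof of Proposition~\ref{prop:S2h}, tracking how the set $\V_{\bh}(p^{\alpha})$ grows with $\alpha$. The one genuinely new feature, compared with $p=2$, is that membership in $S_p$ for $p\equiv 3\bmod 4$ requires an \emph{even} $p$-adic valuation (Proposition~\ref{prop:S2S3S1}), so new residues can appear only when $\alpha$ increases by an even amount; this is the source of the exponent $p^{-(\alpha\bmod 2)}$ in \eqref{eq:Vhdeltp3bnd} and \eqref{eq:delthp3}.

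I would first establish the recursion. If $a\in\V_{\bh}(p^{\alpha})$, then $\nu_p(a+h)<\alpha$ for every $h\in\bh$, so adjoining a multiple of $p^{\alpha}$ changes neither these valuations nor the associated units, whence all $p$ least residues congruent to $a$ modulo $p^{\alpha}$ lie in $\V_{\bh}(p^{\alpha+1})$; conversely any $b\in\V_{\bh}(p^{\alpha+1})$ with $\max_{h}\nu_p(b+h)<\alpha$ reduces modulo $p^{\alpha}$ into $\V_{\bh}(p^{\alpha})$. Letting $\U_{\bh}(p^{\alpha+1})$ be the set of $b$ with $b+\bh\subseteq S_p$ and $\max_{h}\nu_p(b+h)=\alpha$, one gets
\[
 \card \V_{\bh}(p^{\alpha+1}) = p\card \V_{\bh}(p^{\alpha}) + \card \U_{\bh}(p^{\alpha+1}).
\]
Now $\U_{\bh}(p^{\alpha+1})=\emptyset$ if $\alpha$ is odd (an element would have $\nu_p(b+h_j)=\alpha$ odd with $b+h_j\ne 0$, contradicting $b+h_j\in S_p$), while if $\alpha$ is even then each such $b$ satisfies $b\equiv p^{\alpha}m-h_j\bmod p^{\alpha+1}$ for some $j\le k$ and $m\in\{1,\dots,p-1\}$, so $\card \U_{\bh}(p^{\alpha+1})\le k(p-1)$. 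Part~(a) follows: $\bigl(\card \V_{\bh}(p^{\alpha})/p^{\alpha}\bigr)_{\alpha\ge 1}$ is nondecreasing and, by telescoping the recursion and using this bound, Cauchy; hence it converges to $\delta_{\bh}(p)$ (the limit being identified via the comparison of $\V_{\bh}$ with $S_{\bh}$ recorded before the proposition), and its tail beyond level $\alpha$ is the geometric series $k(p-1)\sum_{\beta\ge\alpha,\ \beta\text{ even}}p^{-\beta-1}$, which sums to exactly $\dfrac{k}{p^{\alpha}}\bigl(1+\tfrac1p\bigr)^{-1}p^{-(\alpha\bmod 2)}$, giving \eqref{eq:Vhdeltp3bnd}.

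For part~(b) I would instead compute $\card \U_{\bh}(p^{\beta+1})$ \emph{exactly} for $\beta\ge\alpha\ge 1+\max_{i\ne j}\nu_p(h_i-h_j)$. It is $0$ for $\beta$ odd; for $\beta$ even, if $b\in\U_{\bh}(p^{\beta+1})$ has $\nu_p(b+h_j)=\beta$ then this $j$ is unique (as $\beta$ exceeds every $\nu_p(h_i-h_j)$), and for $i\ne j$ one has $\nu_p(b+h_i)=\nu_p(h_i-h_j)$ with unit part congruent mod $p$ to that of $h_i-h_j$; hence $b+h_i\in S_p$ for all $i\ne j$ precisely when every $\nu_p(h_i-h_j)$ is even, i.e.\ when $h_j\in\bh_p$. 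Conversely each $h_j\in\bh_p$ together with an $m\in\{1,\dots,p-1\}$ produces exactly one such $b$, so $\card \U_{\bh}(p^{\beta+1})=(p-1)\card \bh_p$ for even $\beta\ge\alpha$. Summing the geometric tail as before turns the partition identity into \eqref{eq:delthp3}; the right-hand side is independent of $\alpha$ because it equals $\delta_{\bh}(p)$.

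Part~(c) I would split according to whether $p\mid\det(\bh)$. If $p\nmid\det(\bh)$, then $k\le p$, $\bh_p=\bh$, and $\max_{i\ne j}\nu_p(h_i-h_j)=0$, so (b) applies with $\alpha=1$; since $\V_{\bh}(p)=\{0\le a<p:p\nmid a+h\ \forall h\in\bh\}$ has $p-k$ elements, \eqref{eq:delthp3} collapses to $\delta_{\bh}(p)=\tfrac{p+1-k}{p+1}=\bigl(1+\tfrac1p\bigr)^{-1}\bigl(1-\tfrac{k-1}{p}\bigr)$, which is the claimed equality (and, for $k=1$, gives $\delta_{\bz}(p)=\bigl(1+\tfrac1p\bigr)^{-1}$). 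If $p\mid\det(\bh)$, the residues $h_i\bmod p$ take some $s\le\min\{k-1,p\}$ distinct values; since $p\nmid a+h$ for all $h$ already forces $a+\bh\subseteq S_p$, we get $\card S_{\bh}(p^{\alpha})\ge(p-s)p^{\alpha-1}$ for all $\alpha$, hence $\delta_{\bh}(p)\ge 1-s/p\ge 1-\min\{k-1,p\}/p\ge\bigl(1+\tfrac1p\bigr)^{-1}\bigl(1-\min\{k-1,p\}/p\bigr)$. I expect the only real difficulty to be bookkeeping --- getting the geometric sums over even levels to collapse to exactly $\bigl(1+\tfrac1p\bigr)^{-1}p^{-(\alpha\bmod 2)}$ and checking the boundary cases $k=1$ and $k>p$ in (c); the structural inputs ($\U_{\bh}$ empty at odd levels, of size $(p-1)\card \bh_p$ at large even levels) come directly from Proposition~\ref{prop:S2S3S1} and the definition of $\bh_p$.
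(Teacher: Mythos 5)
Your proposal is correct and follows essentially the same route as the paper: the partition $\card\V_{\bh}(p^{\alpha+1}) = p\,\card\V_{\bh}(p^{\alpha}) + \card\W_{\bh}(p^{\alpha+1})$ with the new-residue set empty at odd levels and of size exactly $(p-1)\card\bh_p$ at even levels once $\alpha$ exceeds all $\nu_p(h_i-h_j)$, followed by summing the geometric tail over even levels to produce the factor $(1+1/p)^{-1}p^{-(\alpha\bmod 2)}$, and part (c) via the $\alpha=1$ case of (b). The only (harmless) deviations are that you obtain \eqref{eq:delthp3} by summing the tail rather than by the paper's two-step constancy check, and your lower bound in (c) for $p\mid\det(\bh)$ handles the range $p\le k-1$ uniformly instead of treating it as a trivial separate case.
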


\begin{proof}
(a)
As noted above the statement of Proposition \ref{prop:S2h}, to 
show that $\delta_{\bh}(p)$ and the right-hand side of 
\eqref{eq:defdelthp3} exist and are equal, it suffices to show 
that the right-hand side exists.
Let $\alpha \ge 1$ and let $0 \le b < p^{\alpha + 1}$.
Thus, $b = a + p^{\alpha}q$, where $0 \le a < p^{\alpha}$ and 
$0 \le q < p$.
Suppose that, for each $i$, there exists $\beta_i \le \alpha - 1$ 
and $m_i \not\equiv 0 \bmod p$ such that 
$b + h_i = p^{\beta_i}m_i$.
Then, for each $i$ and each $q'$, $0 \le q' < p$, we have 
$a + p^{\alpha}q' + h_i = p^{\beta_i}m_i'$, where 
$m_i' \equiv m_i \not\equiv 0 \bmod p$.
Recalling Proposition \ref{prop:S2S3S1} and definition 
\eqref{eq:defVh}, we see that the following are equivalent: 
(i) $b \in \V_{\bh}(p^{\alpha + 1})$; 
(ii) $a + p^{\alpha}q' + h_i \in \V_{\bh}(p^{\alpha + 1})$ for 
$0 \le q' < p$;
(iii) $a \in \V_{\bh}(p^{\alpha})$.

We have shown that we have a partition
\[
  \V_{\bh}(p^{\alpha + 1})
   = 
    \{a + p^{\alpha}q : a \in \V_{\bh}(p^{\alpha}), 0 \le q < p\}
     \cup 
      \W_{\bh}(p^{\alpha + 1}),
\]
where
\[
 \W_{\bh}(p^{\alpha + 1}) 
  \defeq 
   \{0 \le b < p^{\alpha + 1} : b + \bh \subseteq S_p 
        \,\, \hbox{\textup{and}} \,\, 
         {\textstyle \max_{h \in\bh} } \nu_p(b + h) = \alpha\} 
\]
is the set of elements $b$ of $\V_{\bh}(p^{\alpha + 1})$ for which 
$\nu_p(b + h_j) = \alpha$ for some $h_j \in \bh$.
Plainly, $\W_{\bh}(p^{\alpha + 1})$ is empty if $\alpha$ is odd.
(If $b + \bh \subseteq S_p$ then, by 
Proposition \ref{prop:S2S3S1}, $\nu_p(b + h_j)$ is even and hence 
not equal to any odd $\alpha$.)
Also, any element of $\W_{\bh}(p^{\alpha + 1})$ is a least residue 
of $p^{\alpha}q - h_j \bmod p^{\alpha + 1}$, for some 
$0 < q < p$ and $h_j \in \bh$, of which there are at most 
$(p - 1)k$.
We see that 
\begin{equation}
 \label{eq:VW}
 \frac{\card \V_{\bh}(p^{\alpha + 1})}{p^{\alpha + 1}}
  -
   \frac{\card \V_{\bh}(p^{\alpha})}{p^{\alpha}}
    =
     \frac{\card \W_{\bh}(p^{\alpha + 1})}{p^{\alpha + 1}},
\end{equation}
and that 
\begin{equation}
 \label{eq:VWb}
 0 
  \le 
   \frac{\card \W_{\bh}(p^{\alpha + 1})}{p^{\alpha + 1}}
    \le 
     \bigg(1 - \frac{1}{p}\bigg)\frac{k}{p^{\alpha}},
\end{equation}
with {\em equality} on the left if $\alpha$ is {\em odd}.
Consequently, for any $\beta$ with $\beta \ge \alpha$, we have 
\begin{align*}
 0
 \le
  \frac{\card \V_{\bh}(p^{\beta})}{p^{\beta}}
   -
    \frac{\card \V_{\bh}(p^{\alpha})}{p^{\alpha}}
 =
  \sum_{r = 1}^{\beta - \alpha}
   \frac{\card \W_{\bh}(p^{\alpha + r})}{p^{\alpha + r}}
    <
     \bigg(1 - \frac{1}{p}\bigg)\frac{k}{p^{\alpha}} 
      \sums[r - 1 \ge 0][r - 1 \equiv \alpha \bmod 2]
       \frac{1}{p^{r - 1}}.
\end{align*}
Since this last sum is equal to $1/(1 - 1/p^2)$ if $\alpha$ is 
even, and to $1/(p(1 - 1/p^2))$ if $\alpha$ is odd, we have 
\[
  0
   \le
    \frac{\card \V_{\bh}(p^{\beta})}{p^{\beta}} 
   - \frac{\card \V_{\bh}(p^{\alpha})}{p^{\alpha}}
      <
       \frac{k}{p^{\alpha}}
        \bigg(1 + \frac{1}{p}\bigg)^{-1}
         \frac{1}{p^{\alpha \bmod 2}}.
\]
It follows that the limit on the right-hand side of 
\eqref{eq:defdelthp3} exists, and that \eqref{eq:Vhdeltp3bnd} 
holds for all $\alpha \ge 1$.

(b)
Let $0 \le b < p^{\alpha + 1}$, and assume now that 
$\alpha \ge 1 + \max_{i \ne j} \nu_p(h_i - h_j)$.
Suppose that, for some $j$, we have $b + h_j = p^{\alpha}m_j$ 
for some $m_j \not\equiv 0 \bmod p$.
We have $b + h_j \in S_p$ if and only if $\alpha$ is even.
Let $i \ne j$.
We may write $h_i - h_j = p^{\beta_{ij}}m_{ij}$ with 
$\beta_{ij} \le \alpha - 1$ and $m_{ij} \not\equiv 0 \bmod p$.
Thus, 
$b + h_i = p^{\beta_{ij}}(m_{ij} + p^{\alpha - \beta_{ij}}m_j)$ 
is in $S_p$ if and only if $\beta_{ij}$ is even, equivalently, 
$h_i - h_j \in S_p$.
By definition of $\bh_p$, this holds for each $i \ne j$ if and 
only if $h_j \in \bh_{p}$.
In that case, for $0 \le q' < p$ with 
$q' \not\equiv -m_j \bmod p$, we have 
$b + p^{\alpha}q' + h_i \in S_p$ and 
$\nu_p(b + p^{\alpha}q' + h_i) = \beta_{ij} < \alpha$ 
for $i \ne j$;    
$b + p^{\alpha}q' + h_j \in S_p$ if and only if $b + h_j \in S_p$, 
and $\nu_p(b + p^{\alpha}q' + h_j) = \alpha$.
For $q' \equiv -m_j \bmod p$, 
$\nu_p(b + p^{\alpha}q' + h_j) > \alpha$.

Thus, if $\W_{\bh}(p^{\alpha + 1}) \ne \emptyset$, then 
$\alpha$ is even and $\bh_p \ne \emptyset$; and if 
$b \in \W_{\bh}(p^{\alpha + 1})$, then the $h_j$ for which 
$\nu_p(b + h_j) = \alpha$ is uniquely determined by $b$ and must 
lie in $\bh_p$.
If $\alpha$ is even, then, writing $h_j = p^{\alpha}q_j + r_j$, 
with $0 \le r_j < p^{\alpha}$, we see that 
\[
 \W_{\bh}(p^{\alpha + 1}) 
  = 
   \bcup_{h_j \in \bh_p}
    \{p^{\alpha}(q' + 1) - r_j : 0 \le q' < p, q' \not\equiv -q_j \bmod p\}.
\]
Thus, 
$
 \card \V_{\bh}(p^{\alpha + 1})
  = 
   p\card \V_{\bh}(p^{\alpha}) 
$
if $\alpha$ is odd, and 
$
 \card \V_{\bh}(p^{\alpha + 1})
  = 
   p\card \V_{\bh}(p^{\alpha}) + (p - 1)\card \bh_p 
$
if $\alpha$ is even.
Consequently, if $\alpha$ is odd then 
\[
 \frac{1}{p^{\alpha + 1}}
  \bigg(
     \card \V_{\bh}(p^{\alpha + 1}) + \card\bh_p\frac{p}{p + 1}
  \bigg)
   =
 \frac{1}{p^{\alpha}}
  \bigg(
     \card \V_{\bh}(p^{\alpha}) + \card\bh_p\frac{1}{p + 1}
  \bigg),
\]
while if $\alpha$ is even then 
\[
 \frac{1}{p^{\alpha + 1}}
  \bigg(
     \card \V_{\bh}(p^{\alpha + 1}) + \card\bh_p\frac{1}{p + 1}
  \bigg)
   =
 \frac{1}{p^{\alpha}}
  \bigg(
     \card \V_{\bh}(p^{\alpha}) + \card\bh_p\frac{p}{p + 1}
  \bigg).
\]

(c) Note that 
$
 \V_{\bh}(p)
  =
   \{0 \le a < p : \forall i, a \not\equiv -h_i \bmod p\} 
$,
so $\card \V_{\bh}(p) = p - \kappa$ where $\kappa$ is the 
number of distinct congruence classes in  
$\{h_i \bmod p : h_i \in \bh\}$.
Thus, $\kappa = k$ if and only if $p \nmid \det(\bh)$.
First, consider the case $p \mid \det(\bh)$, 
i.e.\ $\kappa \le k - 1$.
As $\delta_{\bh}(p) \ge 0$, \eqref{eq:delthpropsp3} is trivial for 
$p \le k - 1$, so let us assume that $k \le p$.
The relation \eqref{eq:VW} shows that 
$
 \card \V_{\bh}(p^{\alpha + 1})/p^{\alpha + 1} 
  \ge 
   \card \V_{\bh}(p^{\alpha})/p^{\alpha}
$ 
for $\alpha \ge 1$, and hence 
\[
 \delta_{\bh}(p)
  \ge 
   \frac{\card \V_{\bh}(p)}{p}
    \ge 
     \frac{p - (k - 1)}{p}
       > 
        1 - \frac{k}{p + 1}.
\]
The right-hand side of \eqref{eq:delthpropsp3} is equal to 
$1 - k/(p + 1)$ when $\min\{k - 1,p\} = k - 1$, as we are 
currently assuming. 
Next, consider the case $p \nmid \det(\bh)$, i.e.\ $\kappa = k$.
In this case, we have $\card \bh = \card \bh_p = k$ and, by 
\eqref{eq:delthp3},
\[
 \delta_{\bh}(p)
  = 
   \frac{1}{p}
    \bigg(\card \V_{\bh}(p) + \card\bh_p\bigg(1 + \frac{1}{p}\bigg)^{-1}\frac{1}{p} \bigg)
     =
      \bigg(1 + \frac{1}{p}\bigg)^{-1}
       \bigg(1 - \frac{k - 1}{p}\bigg),
\]
which is equal to the right-hand side of \eqref{eq:delthpropsp3} 
(since $p \ge \kappa = k$).
\end{proof}

Notice that, for all $p \not\equiv 1 \bmod 4$, we have 
$0 \le \delta_{\bh}(p) \le 1$, by definition.
%
%
By the following proposition,
the nonvanishing of its singular series
$
  \mathfrak{S}_{\bh}
   \defeq 
    \prod_{p \not\equiv 1 \bmod 4}
      \delta_{\bz}(p)^{-k}
       \delta_{\bh}(p),
$
is equivalent to $\delta_{\bh}(p) > 0$ for all $p$.

%
\begin{nixnix}
Consider $p \equiv 3 \bmod 4$.
Suppose $n + \bh \subseteq S_p$ for some $n$.
By Proposition \ref{prop:S2S3S1}, for each $h_i \in \bh$ we have 
$n + h_i = p^{2\beta_i}m_i$ for some $\beta_i \ge 0$ and 
$m_i \not\equiv 0 \bmod p$.
Let $\alpha > \max_{h_i \in \bh} 2\beta_i$, and let $a$ be the 
least residue of $n$ modulo $p^{\alpha}$. 
Say $n = a + p^{\alpha}q$.
Then, for each $h_i \in \bh$, 
$
 a + h_i 
  = p^{2\beta_i}(m_i + p^{\alpha - 2\beta_i}q)
$, 
and 
$
 m_i + p^{\alpha - 2\beta_i}q \equiv m_i \not\equiv 0 \bmod p
$.
Thus, $a + \bh \subseteq S_p$ by Proposition \ref{prop:S2S3S1}, 
and $\max_{h \in \bh} \nu_p(a + h) < \alpha$.
Hence $a \in \V_{\bh}(p^{\alpha})$. 
Since $\V_{\bh}(p^{\alpha}) \ne \emptyset$, 
$\delta_{\bh}(p^{\alpha}) \ge 1/p^{\alpha} > 0$ by 
Proposition \ref{prop:Sp3h} (b).

Conversely, suppose $\delta_{\bh}(p) > 0$.
Then by Proposition \ref{prop:Sp3h} (b), either 
$\V_{\bh}(p^{\alpha}) \ne \emptyset$ or $\bh_p \ne \emptyset$, 
where $\alpha = 1 + \max_{i \ne j} \nu_p(h_i - h_j)$.
If $\V_{\bh}(p^{\alpha}) \ne \emptyset$ then there is some $a$, 
$0 \le a < p^{\alpha}$, such that $a + \bh \subseteq S_p$ and 
$\max_{h \in \bh} \nu_p(a + h) < \alpha$.
If $\bh_p \ne \emptyset$, we have $h_j \in \bh$ such that 
$-h_j + h_i \in S_p$ for each $h_i \in \bh$.
In the first case, take $n = a$; in the second case take 
$n = -h_j$.
In either case, we have some $n$ such that 
$n + \bh \subseteq S_p$.

Similarly, $n + \bh \subseteq S_2$ for some $n$ if and only if 
$\delta_{\bh}(2) > 0$.
\end{nixnix}
%

\begin{proposition}
 \label{prop:sssc}
Let $\bh = \{h_1,\ldots,h_k\}$ be a set of $k \ge 1$ distinct 
integers.
We have 
\begin{equation}
 \label{eq:sssc1}
 \e^{-(k - 1)}
  \le 
   \prods[p \not\equiv 1 \bmod 4][p \nmid \det(\bh)] 
    \delta_{\bz}(p)^{-k}\delta_{\bh}(p)
     \le 
      1,
\end{equation}
and the product converges.
Consequently, 
{\small 
\begin{equation}
 \label{eq:sssc2}
  \frac{2^{k} \delta_{\bh}(2) }{\e^{k - 1} }
   \prods[p \equiv 3 \bmod 4][p \mid \det(\bh)]
    \bigg(   
     \bigg( 1 + \frac{1}{p} \bigg)^k 
      \delta_{\bh}(p) 
    \bigg)
  \le 
   \mathfrak{S}_{\bh}
    \le 
     2^k\delta_{\bh}(2)
      \prods[p \equiv 3 \bmod 4][p \mid \det(\bh)]
       \bigg(    
        \bigg( 1 + \frac{1}{p} \bigg)^k
         \delta_{\bh}(p) 
       \bigg).
\end{equation}
}
\end{proposition}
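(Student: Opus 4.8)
The plan is to split $\mathfrak{S}_{\bh}$ into three parts: the factor at $p = 2$; the finitely many factors at primes $p \equiv 3 \bmod 4$ with $p \mid \det(\bh)$; and the infinite product $P$ over primes $p \equiv 3 \bmod 4$ with $p \nmid \det(\bh)$. Since $\delta_{\bz}(2) = 1/2$ and $\delta_{\bz}(p) = (1 + 1/p)^{-1}$ for $p \equiv 3 \bmod 4$ (Propositions \ref{prop:S2h}(c) and \ref{prop:Sp3h}(c)), once \eqref{eq:sssc1} is established, \eqref{eq:sssc2} follows by multiplying $\e^{-(k - 1)} \le P \le 1$ through by the nonnegative quantity $2^k \delta_{\bh}(2) \prod_{p \equiv 3 \bmod 4,\, p \mid \det(\bh)} (1 + 1/p)^k \delta_{\bh}(p)$. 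If $\mathfrak{S}_{\bh} = 0$, then by definition $\delta_{\bh}(p) = 0$ for some $p \not\equiv 1 \bmod 4$; since $\delta_{\bh}(p) > 0$ whenever $p \nmid \det(\bh)$ (by Propositions \ref{prop:S2h}(c) and \ref{prop:Sp3h}(c)), such a $p$ must divide $\det(\bh)$, so both sides of \eqref{eq:sssc2} vanish and the statement is trivial. Hence the heart of the matter is \eqref{eq:sssc1}.

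Write $m \defeq k - 1$. By Proposition \ref{prop:Sp3h}(c), if $p \equiv 3 \bmod 4$ and $p \nmid \det(\bh)$ then $k \le p$ and $\delta_{\bh}(p) = (1 + 1/p)^{-1}(1 - m/p)$, so the corresponding factor of $P$ is $g(p) \defeq \delta_{\bz}(p)^{-k}\delta_{\bh}(p) = (1 + 1/p)^m(1 - m/p)$. Moreover, if the $p = 2$ term occurs in the product in \eqref{eq:sssc1} --- which forces $2 \nmid \det(\bh)$, hence $k \le 2$ --- then by Proposition \ref{prop:S2h}(c) it equals $2^k(1/2)^k = 1$, so the product in \eqref{eq:sssc1} equals $P = \prod_{p \equiv 3 \bmod 4,\, p \nmid \det(\bh)} g(p)$. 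For real $t \ge 0$ the function $t \mapsto (1 + t)^m(1 - mt)$ has derivative $-m(m + 1)t(1 + t)^{m - 1} \le 0$ and value $1$ at $t = 0$; since each $p$ occurring satisfies $1/p < 1/m$, this gives $0 < g(p) \le 1$. In particular $P \le 1$, and since $g(p) = 1 - \tfrac{m(m + 1)}{2 p^2} + O_m(p^{-3})$ the product $P$ converges to a limit in $(0, 1]$.

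For the lower bound I would invoke the telescoping identity, valid for every integer $N \ge m + 1$,
\[
 \prod_{n = m + 1}^{N} (1 + 1/n)^m(1 - m/n)
  = \Big( \frac{N + 1}{m + 1} \Big)^{m} \frac{m!\,(N - m)!}{N!},
\]
whose right-hand side tends to $m!/(m + 1)^m$ as $N \to \infty$, so $\prod_{n \ge m + 1}(1 + 1/n)^m(1 - m/n) = m!/(m + 1)^m$. The set of primes $p \equiv 3 \bmod 4$ with $p \nmid \det(\bh)$ is contained in $\{n \in \ZZ : n \ge m + 1\}$, and $(1 + 1/n)^m(1 - m/n) \in (0, 1]$ for every integer $n \ge m + 1$ by the monotonicity above; hence restricting the product $\prod_{n \ge m + 1}$ to the indices that are primes $\equiv 3 \bmod 4$ coprime to $\det(\bh)$ only increases it, so $P \ge m!/(m + 1)^m$. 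It remains to note that $m!/(m + 1)^m \ge \e^{-m}$, which is trivial for $m = 0$ and, for $m \ge 1$, follows from $\log m! \ge \int_1^m \log t \, \dd{t} = m\log m - m + 1$ together with $m \log(1 + 1/m) < 1$: these give $\log m! > m\log(m + 1) - m$, i.e.\ $m! > ((m + 1)/\e)^m$. Combining, $P \in [\e^{-(k - 1)}, 1]$.

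The one genuinely non-routine step is spotting the telescoping identity together with the observation that the product over the relevant primes dominates the full product over integers $\ge m + 1$; the remaining ingredients --- the monotonicity of $(1 + t)^m(1 - mt)$, the elementary estimate $m! \ge ((m + 1)/\e)^m$, the $p$-adic density formulas imported from Propositions \ref{prop:S2h} and \ref{prop:Sp3h}, and the degenerate cases $k = 1$ (where $m = 0$ and every factor is $1$) --- are routine.
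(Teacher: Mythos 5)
Your proof is correct. The reduction of \eqref{eq:sssc2} to \eqref{eq:sssc1}, the handling of the (possible) $p=2$ factor via Proposition \ref{prop:S2h}(c), and the identification of each remaining factor as $(1+1/p)^{k-1}(1-(k-1)/p)$ via Proposition \ref{prop:Sp3h}(c) all coincide with the paper's argument; the divergence is in how the two bounds in \eqref{eq:sssc1} are proved. For the upper bound the paper expands $(1+1/p)^{k-1}(1-(k-1)/p)$ binomially and checks that every correction term is nonpositive; your monotonicity computation for $t \mapsto (1+t)^m(1-mt)$ (with $m = k-1$) is equivalent and cleaner. For the lower bound the paper applies $\log(1-x) \ge -x/(1-x)$ to the minorant $1-(k-1)^2/p^2$ and bounds the tail $\sum_{n \ge k} n^{-2} \le (k-1)^{-2}$; this forces a separate treatment of $k=2$ (via the exact evaluation $\prod_{p \equiv 3 \bmod 4}(1-p^{-2}) = 1/(2C^2) > \e^{-1}$), because the resulting constant $k^2/(2k-1)$ only beats $k-1$ when $k \ge 3$. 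Your route --- the telescoping identity giving $\prod_{n \ge m+1}(1+1/n)^m(1-m/n) = m!/(m+1)^m$, together with the observation that discarding factors lying in $(0,1]$ can only increase a convergent product of positive terms, and then $m! \ge ((m+1)/\e)^m$ --- is genuinely different. It buys a single uniform argument for all $k$ and an explicit, sharper closed-form lower bound $m!/(m+1)^m$ for the product, at essentially no extra cost.
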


\begin{proof}
If $2 \nmid \det(\bh)$ then $k \le 2$ and 
$\delta_{\bz}(2)^{-k}\delta_{\bh}(2) = 1$ by 
Proposition \ref{prop:S2h} (c), so only the primes 
$p \equiv 3 \bmod 4$ have any bearing on the product in 
\eqref{eq:sssc1}.
Let $p \equiv 3 \bmod 4$, and suppose $p \nmid \det(\bh)$.
By Proposition \ref{prop:Sp3h} (c), $k \le p$ and 
\begin{equation}
 \label{eq:pnmiddeth}
  \delta_{\bz}(p)^{-k}\delta_{\bh}(p)
   =
    \bigg(1 + \frac{1}{p}\bigg)^{k - 1}
     \bigg(1 - \frac{k - 1}{p}\bigg).
\end{equation}
Thus, $\delta_{\bz}(p)^{-k}\delta_{\bh}(p) = 1 + O_k(1/p^2)$, and 
consequently the product in \eqref{eq:sssc1} converges.

More precisely, from \eqref{eq:pnmiddeth} we have, on the one 
hand,  
\[
  \delta_{\bz}(p)^{-k}\delta_{\bh}(p)
   =
    1 
    - 
     \sum_{j = 2}^k 
      \bigg\{
       (k - 1)\binom{k - 1}{j - 1} - \binom{k - 1}{j}
      \bigg\}
      p^{-j}
       \le 
        1,
\]
with equality attained if $k = 1$, which gives the upper bound in 
\eqref{eq:sssc1}, and also the lower bound for $k = 1$.
On the other hand we have
\[
 \delta_{\bz}(p)^{-k}\delta_{\bh}(p)
  \ge 
   1 - \frac{(k - 1)^2}{p^2}.
\]
For $k = 2$ we see that the product in \eqref{eq:sssc1} is at 
least $\prod_{p \equiv 3 \bmod 4}(1 - 1/p^2)$, which is equal to 
$1/(2C^2) = 0.856108\ldots$ (with $C$ being the Landau--Ramanujan 
constant; see \eqref{eq:sotsnt}), and is 
greater than $\e^{-1}$.
For $k \ge 3$ we apply the basic inequality 
$\log(1 - x) \ge -x/(1 - x)$ ($0 \le x < 1$) to the above, 
obtaining 
\[
 \textstyle 
 \log \delta_{\bz}(p)^{-k}\delta_{\bh}(p)
  \ge 
  -\frac{(k - 1)^2}{p^2}\Big(1 - \frac{(k - 1)^2}{p^2}\Big)^{-1}
    \ge
   - \frac{(k - 1)^2}{p^2}\Big(1 - \frac{(k - 1)^2}{k^2}\Big)^{-1}
\]
(since $k \le p$).
Noting that  
$
 -\sum_{p \nmid \det(\bh)} 1/p^2
  \ge 
   -\sum_{n \ge k} 1/n^2
    \ge 
     -1/(k - 1)^2
$,
and that  
$-(1 - (k - 1)^2/k^2)^{-1} = -k^2/(2k - 1) > -(k - 1)$, then 
exponentiating, we see that product in \eqref{eq:sssc1} is greater 
than $\e^{-(k - 1)}$.
The inequalities in \eqref{eq:sssc2} follow upon recalling that 
$\delta_{\bz}(p) = (1 + 1/p)^k$ for $p \equiv 3 \bmod 4$ (see 
Proposition \ref{prop:Sp3h} (c)), and again that 
$\delta_{\bz}(2)^{-k}\delta_{\bh}(2) = 1$ if $2 \nmid \det(\bh)$ 
(see Proposition \ref{prop:S2h} (c)).
\end{proof}


\section{Proof of Proposition \ref{prop:sssa}}
 \label{sec:keyprop}

We will make use of the following elementary bounds.
Recall that, for $n \in \NN$, 
$\omega(n) \defeq \#\{p : p \mid n\}$, 
$\rad(n) \defeq \prod_{p \mid n} p$, and 
$\sqfr(n) \defeq \prod_{p \emid n} p$.
 
\begin{lemma} 
 \label{lem:omegabnd}
Let 
\begin{equation}
 \label{eq:defcN}
  \cN 
   \defeq 
    \{ab^2\rad(b) : a,b \in \NN, (a,b) = 1, a \text{ squarefree}\}.
\end{equation}
Fix any number $A \ge 1$.
For $y \ge 1$ and integers $D \ge 1$, we have 
\begin{equation}
 \label{eq:realbnd}
  \sums[n \in \cN][n > y]
   A^{\omega(n)}
    \frac{(D,\rad(n))}{n\sqfr(n)}
     \ll_A
      (1 + A)^{2\omega(D)}
       \frac{y^{O(1/\log\log 3y)}}{y^{2/3}},  
\end{equation}
and 
\begin{equation}
 \label{eq:realbnd2}
  \sums[n \in \cN][n \le y]
   \frac{A^{\omega(n)}}{\sqfr(n)}
    \ll_A
     y^{1/3 + O(1/\log\log 3y)}.
\end{equation}
\end{lemma}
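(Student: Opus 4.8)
The plan is to estimate both sums by Rankin's trick, exploiting the fact that in each case the summand is a multiplicative function of $n$ supported on $\cN$. Concretely, for $n\in\cN$ write $n=ab^2\rad(b)$ with $a$ squarefree and $(a,b)=1$ (equivalently: $\cN$ is the set of $n$ all of whose prime exponents are odd); then $\sqfr(n)=a$, $\rad(n)=a\rad(b)$, $\omega(n)=\omega(a)+\omega(b)$ and $n\sqfr(n)=a^2b^2\rad(b)$, so the \emph{unrestricted} sums attached to \eqref{eq:realbnd} and \eqref{eq:realbnd2} are Euler products whose local factor at $p$ comes only from the terms with $\nu_p(n)\in\{0,1,3,5,\dots\}$. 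Throughout I would use the elementary estimate $\prod_p\bigl(1+O(Ap^{-1-\delta})\bigr)\ll_A\delta^{-O_A(1)}$ as $\delta\to0^{+}$, which follows from $\sum_pp^{-1-\delta}=\log(1/\delta)+O(1)$.

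For \eqref{eq:realbnd2}, insert $(y/n)^{1/3+\varepsilon}\ge1$ (with $\varepsilon>0$ to be chosen) to get
\[
 \sum_{\substack{n\in\cN\\ n\le y}}\frac{A^{\omega(n)}}{\sqfr(n)}
 \ \le\ y^{1/3+\varepsilon}\sum_{n\in\cN}\frac{A^{\omega(n)}}{\sqfr(n)\,n^{1/3+\varepsilon}},
\]
and evaluate the right-hand sum as an Euler product: its local factor at $p$ equals
\[
 1+\frac{A}{p^{4/3+\varepsilon}}+A\sum_{e\ge1}\frac{1}{p^{(2e+1)(1/3+\varepsilon)}}
 \ =\ 1+O\!\bigl(Ap^{-1-3\varepsilon}\bigr),
\]
so the sum is $\ll_A\varepsilon^{-O_A(1)}$. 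Choosing $\varepsilon=1/\log 3y$ makes $y^{\varepsilon}\ll1$ and $\varepsilon^{-O_A(1)}=(\log 3y)^{O_A(1)}=y^{O_A(1/\log\log 3y)}$, which gives \eqref{eq:realbnd2}.

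For \eqref{eq:realbnd}, insert $(n/y)^{\theta}\ge1$ with $\theta=\tfrac23-\varepsilon$, reducing the claim to an upper bound for $y^{-2/3+\varepsilon}$ times the Euler product $\sum_{n\in\cN}A^{\omega(n)}(D,\rad(n))\,n^{\theta-1}/\sqfr(n)$, and split the primes dividing $D$ from the rest. For $p\nmid D$ the local factor is $1+Ap^{-(2-\theta)}+A\sum_{e\ge1}p^{-(2e+1)(1-\theta)}=1+O(Ap^{-1-3\varepsilon})$, so the primes not dividing $D$ contribute $\ll_A\varepsilon^{-O_A(1)}$ exactly as above. For $p\mid D$, since $(D,\rad(p^e))=p$ for every $e\ge1$, the local factor equals $1+A\bigl(p^{\theta-1}+p^{3\theta-2}(1-p^{-2(1-\theta)})^{-1}\bigr)$; the bracketed quantity is an absolute $O(1)$ and is at most $2$ for all but finitely many $p$, so — using $1+2A\le(1+A)^2$ and absorbing the exceptional small primes into the implied constant — the product over $p\mid D$ is $\ll(1+A)^{2\omega(D)}$. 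Finally, choosing $\varepsilon\asymp1/\log\log 3y$ balances $y^{\varepsilon}=y^{O(1/\log\log 3y)}$ against $\varepsilon^{-O_A(1)}=(\log\log 3y)^{O_A(1)}=y^{O_A(1/\log\log 3y)}$, which yields \eqref{eq:realbnd}.

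I expect the main obstacle to be precisely this last balancing in \eqref{eq:realbnd}: to recover the full saving $y^{-2/3}$ one must push the Rankin exponent $\theta$ all the way up to $2/3$, but then the tail Euler product over $p\nmid D$ behaves like $\zeta\bigl(3(1-\theta)\bigr)^{O_A(1)}$, which has a pole as $\theta\to\tfrac23$; choosing $\varepsilon$ to optimize the competition between this blow-up and the lost factor $y^{\,\theta-2/3}$ is exactly what forces the error exponent to take the shape $O(1/\log\log 3y)$ rather than, say, a power of $\log y$. A secondary, more bookkeeping-heavy point is extracting the precise shape $(1+A)^{2\omega(D)}$ of the factor coming from the primes dividing $D$ from the local factor computation above.
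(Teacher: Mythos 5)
Your proof is correct, but it takes a genuinely different route from the paper's. You observe that $\cN$ is exactly the set of integers all of whose prime exponents are odd, so that each summand is a multiplicative function of $n$ supported on $\cN$, and you then apply Rankin's trick with exponents just below the abscissae of convergence ($2/3$ and $1/3$) of the associated Euler products; the local computations you sketch check out --- the $\nu_p(n)=3$ term $p^{-3(1-\theta)}$ is indeed the critical one forcing $\theta<2/3$, the factor $(1+A)^2$ per prime dividing $D$ is right (with the few small exceptional primes, $p\le 5$ say, absorbed into an absolute constant), and the conversion $\varepsilon^{-O_A(1)}\ll y^{O(1/\log\log 3y)}$ works for either choice $\varepsilon=1/\log 3y$ or $\varepsilon\asymp 1/\log\log 3y$. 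The paper instead works directly with the parametrization $n=ab^2\rad(b)$, splits according to the size of $a$, and controls the sum over $b$ with $\rad(b)$ fixed by a smooth-number estimate ($\Psi(y^2,4\log y^2)$, quoted from Granville, cf.\ \eqref{eq:smth}), together with an iterated Mertens-type bound for $\sum A^{\omega(n)}/n^2$ over squarefree $n$. Your argument is shorter and more uniform --- both \eqref{eq:realbnd} and \eqref{eq:realbnd2} follow from the same two-line template --- it dispenses with the external smooth-number input, and it makes transparent why $2/3$ and $1/3$ are the natural barriers; the paper's decomposition, on the other hand, localizes the $y^{O(1/\log\log 3y)}$ loss explicitly in the count of integers with a fixed radical. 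The only points to tidy are routine: cap $\varepsilon$ at some small absolute constant (so that the geometric-series comparisons such as $p^{-4/3-\varepsilon}\le p^{-1-3\varepsilon}$ hold), and note that for bounded $y$ both estimates are trivial.
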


\begin{proof}
Let $y \ge 1$ and let $D \ge 1$.
We claim that the following four bounds hold:
\begin{equation}
  \label{eq:omegasfbnd}
   \sums[n > y][\text{squarefree}] 
    A^{\omega(n)}
     \frac{(D,n)}{n^2}
      \ll_A
       (1 + A)^{\omega(D)} \frac{y^{O(1/\log\log 3y)}}{y};
\end{equation}
\begin{equation}
 \label{eq:omegasfbnda}
  \sums[n \le y][\text{squarefree}] 
   A^{\omega(n)}
    \frac{(D,n)}{n}
     \ll_A
      (1 + A)^{\omega(D)} y^{O(1/\log\log 3y)};
\end{equation}
\begin{equation}
   \label{eq:omegaradbnd}
    \sum_{n^2\rad(n) > y}
     \frac{A^{\omega(n)}(D,\rad(n))}{n^2\rad(n)}
      \ll_A
       (1 + A)^{\omega(D)}\frac{y^{O(1/\log\log 3y)}}{y^{2/3}};
\end{equation}
and 
\begin{equation}
   \label{eq:smth13}
    \sum_{n^2\rad(n) \le y}
     A^{\omega(n)} 
      \ll_A
       y^{1/3 + O(1/\log\log 3y)}.
\end{equation}
Let us deduce \eqref{eq:realbnd} and \eqref{eq:realbnd2}.
The left-hand side of \eqref{eq:realbnd} is at most 
\[
   \sums[a \le y^{2/3}][\text{squarefree}]  
    A^{\omega(a)}\frac{(D,a)}{a^2}
    \sum_{b^2\rad(b) > y/a}
     \frac{A^{\omega(b)}(D,\rad(b))}{b^2\rad(b)}
      +
       \sums[a > y^{2/3}][\text{squarefree}] 
        A^{\omega(a)}\frac{(D,a)}{a^2}
        \sum_{b \ge 1}
         \frac{A^{\omega(b)}}{b^2}.
\]
By \eqref{eq:omegasfbnda} and \eqref{eq:omegaradbnd}, the first 
double sum is 
\[
    \ll_A
     (1 + A)^{\omega(D)} y^{-2/3 + o(1)}  
      \sums[a \le y^{2/3}] [\text{squarefree}] 
       A^{\omega(a)}\frac{(D,a)}{a^{4/3}}
       \ll_A
        (1 + A)^{2\omega(D)}\frac{y^{O(1/\log\log 3y)}}{y^{2/3}}.
\]
By \eqref{eq:omegasfbnd}, and since 
$\sum_{b \ge 1} (A^{\omega(b)}/b^2) \ll_A 1$, 
\[
 \sums[a > y^{2/3}][\text{squarefree}] 
  A^{\omega(a)}\frac{(D,a)}{a^2}
  \sum_{b \ge 1}
   \frac{A^{\omega(b)}}{b^2}
    \ll_A 
     (1 + A)^{\omega(D)}\frac{y^{O(1/\log\log 3y)}}{y^{2/3}}.
\]
Combining gives \eqref{eq:realbnd}.
The left-hand side of \eqref{eq:realbnd2} is at most
\[
 \sums[a \le y][\text{squarefree}] 
  \frac{A^{\omega(a)}}{a}
   \sum_{b^2\rad(b) \le y} A^{\omega(b)};
\]
applying \eqref{eq:omegasfbnda} and \eqref{eq:smth13} gives 
\eqref{eq:realbnd2}.

We now prove our claim.
For \eqref{eq:omegasfbnd}, we first consider the case $D = 1$.
Note that   
\begin{equation}
 \label{eq:mert}
 \sums[n_1 \le y][\text{squarefree}] 
  \frac{(A - 1)^{\omega(n_1)}}{n_1}
   \le 
    \prod_{p \le y}
     \bigg(1 + \frac{A - 1}{p}\bigg)
      \le 
       \prod_{p \le y} 
        \bigg(1 + \frac{1}{p}\bigg)^{A - 1}
         \ll_A
          (\log 3y)^{A - 1},
\end{equation}
because $1 + 1/p < \e^{1/p}$ and 
$\sum_{p \le y} 1/p = \log\log 3y + O(1)$ Mertens' theorem.
Now, 
\[
 \sums[n > y][\text{squarefree}] 
  \frac{A^{\omega(n)}}{n^2}
  =
   \sums[n > y][\text{squarefree}] 
    \frac{1}{n^2}
     \sum_{n_1 \mid n} (A - 1)^{\omega(n_1)}
      \le 
       \sums[n_1 \ge 1][\text{squarefree}] 
        \frac{(A - 1)^{\omega(n_1)}}{n_1^2}
         \sums[m > y/n_1][\text{squarefree}] 
          \frac{1}{m^2},
\]
the inner sum being $O(n_1/y)$ for $n_1 \le y$ and $O(1)$ for 
$n_1 > y$. 
Thus, 
\[
 \sums[n > y][\text{squarefree}]
  \frac{A^{\omega(n)}}{n^2}
  \ll_A
   \frac{(\log 3y)^{A - 1}}{y}
    +
     \sums[n_1 > y][\text{squarefree}] 
      \frac{(A - 1)^{\omega(n_1)}}{n_1^2}.
\]
If $A \le 2$ then this last sum is $O(1/y)$; otherwise, 
repeating the argument as many times as necessary gives
\[
 \sums[n > y] [\text{squarefree}] 
  \frac{A^{\omega(n)}}{n^2}
  \ll_A
   \frac{(\log 3y)^{A - 1}}{y}.
\]
It follows that, for any integer $d \ge 1$, 
\[
  \sums[n > y, \, d \mid n][\text{squarefree}] 
   \frac{A^{\omega(n)}}{n^2}
    \ll_A
     \frac{A^{\omega(d)}}{d}
      \cdot 
       \frac{(\log 3y)^{A - 1}}{y}.
\]
For any integer $D \ge 1$, we trivially have 
$(D,n) \le \sum_{d \mid D, \, d \mid n} d$, and hence
\[
 \sums[n > y][\text{squarefree}] 
  A^{\omega(n)} \frac{(D,n)}{n^2}
   \le 
    \sums[d \mid D][\text{squarefree}] 
     \sums[n > y, \, d \mid n][\text{squarefree}]
      \frac{A^{\omega(n)}}{n^2} 
       \ll_A 
        \frac{(\log 3y)^{A - 1}}{y}
         \sums[d \mid D][\text{squarefree}] \frac{A^{\omega(d)}}{d}. 
\]
Since 
$
 \sums[d \mid D, \, \text{squarefree}] A^{\omega(d)} 
  = 
   (1 + A)^{\omega(D)}
$
and 
$(\log 3y)^{A - 1} \ll_A y^{O(1/\log\log 3y)}$, 
this gives \eqref{eq:omegasfbnd}.
The bound \eqref{eq:omegasfbnda} follows from \eqref{eq:mert} and 
$(D,n) \le \sum_{d \mid D, \, d \mid n} d$.

For \eqref{eq:omegaradbnd}, we use the following ancillary bound.
We have 
\begin{equation}
 \label{eq:omegabndanc2}
  \sums[n > y][\rad(n) = m] \frac{1}{n} 
   \ll
    \frac{y^{O(1/\log\log 3y)}}{y}, 
\end{equation}
uniformly for integers squarefree integers $m \ge 1$.
To establish \eqref{eq:omegabndanc2}, we use an estimate involving 
smooth numbers: for $y \ge z \ge 2$, let 
\[
 \Psi(y,z) \defeq \#\{n \le y : p \mid n \implies p \le z\} 
\]
denote the number of $z$-smooth positive integers $n \le y$.
The following can be found in \cite[(1.19)]{GRA:08a}: for 
$y \ge z \ge 2$,    
\begin{equation}
 \label{eq:smth}
  \log \Psi(y,z)
   =
    \bigg(\frac{\log y}{\log z}\bigg)
    g\bigg(\frac{z}{\log y}\bigg)
      \bigg(1 + O\bigg(\frac{1}{\log z} + \frac{1}{\log\log x}\bigg)\bigg),
\end{equation}
where $g(w) = \log(1 + w) + w\log(1 + 1/w) \le w + 1$ ($w > 0$).
Noting that 
\[
 \sums[n \ge 1][\rad(n) = m] 
  \frac{1}{n^{1/2}}
   =
    \frac{1}{m^{1/2}}
     \sums[n \ge 1][\rad(n) \mid m] \frac{1}{n^{1/2}}
      =
       \frac{1}{m^{1/2}}
        \prod_{p \mid m}
         \bigg(\sum_{a \ge 0} \frac{1}{p^{a/2}}\bigg)
         =
          \prod_{p \mid m} \bigg(\frac{1}{p^{1/2} - 1}\bigg),
\]
we see that 
\begin{equation}
 \label{eq:ngeqy2}
  \sums[n > y^2][\rad(n) = m] \frac{1}{n}
   \le 
    \sums[n > y^2][\rad(n) = m] 
     \frac{1}{n}\bigg(\frac{n}{y^{2}}\bigg)^{1/2}
      \le 
       \frac{1}{y}
        \sums[n \ge 1][\rad(n) = m] 
         \frac{1}{n^{1/2}}
          \ll
           \frac{1}{y}.
\end{equation}
If $m > y^2$ then 
$
 \sum_{n > y, \, \rad(n) = m} 1/n 
  = 
   \sum_{n > y^2, \, \rad(n) = m} 1/n
$, and we are done.
Let us assume, then, that $y^2 \ge m$.
Let $\ell_1,\ldots,\ell_r$ denote the prime divisors 
of $m$, and let $p_1 = 2 < p_2 = 3 < \cdots < p_r$ 
denote the $r$ smallest primes.
Note that 
$
 \#\{(\alpha_1,\ldots,\alpha_r) \in \NN^r :
       \ell_1^{\alpha_1}\cdots \ell_r^{\alpha_r} \le y^2\}
   \le 
    \#\{(\alpha_1,\ldots,\alpha_r) \in \NN^r :
       p_1^{\alpha_1}\cdots p_r^{\alpha_r} \le y^2\}
$, i.e.\ note that 
$
 \#\{n \le y^2 : \rad(n) = m\}
  \le 
   \#\{n \le y^2 : \rad(n) = p_1\cdots p_r\}
$.
Since $y^2 \ge m \ge p_1\cdots p_r$,  
we have $4\log y^2 \ge 4\log m \ge 4\log (p_1\cdots p_r) > p_r$ by 
one of Chebyshev's bounds for primes, so if 
$\rad(n) = p_1\cdots p_r$, then $n$ is $y$-smooth, where 
$y = 4\log y^2$.
Therefore, 
\begin{equation}
 \label{eq:ylenley2}
 \sums[y < n \le y^2][\rad(n) = m] \frac{1}{n}
  <
   \frac{1}{y}
    \sums[n \le y^2][\rad(n) = m] 1
     \le 
      \frac{1}{y}
       \sums[n \le y^2][\rad(n) = p_1\cdots p_r] 1
        \le 
         \frac{\Psi(y^2,4\log y^2)}{y}
          \ll 
           \frac{y^{O(1/\log\log 3y)}}{y},
\end{equation}
where the last bound follows, upon exponentiating, from 
\eqref{eq:smth}.
Combining \eqref{eq:ngeqy2} and \eqref{eq:ylenley2} gives 
\eqref{eq:omegabndanc2}.

The left-hand side of \eqref{eq:omegaradbnd} is at most
\[
     \sums[m \le y^{1/3}][\text{squarefree}] 
      \frac{A^{\omega(m)}(D,m)}{m}
       \sums[n^2 > y^{2/3}][\rad(n) = m]
        \frac{1}{n^2}
       +
         \sums[m > y^{1/3}][\text{squarefree}]
          \frac{A^{\omega(m)}(D,m)}{m}
           \sums[n \ge 1][\rad(n) = m] \frac{1}{n^2}.
\]
By \eqref{eq:omegasfbnda} and \eqref{eq:omegabndanc2} (note that 
$1/n^2 < 1/(y^{1/3}n)$ when $n^2 > y^{2/3}$), we have 
\[
  \sums[m \le y^{1/3}][\text{squarefree}] 
   \frac{A^{\omega(m)}(D,m)}{m}
    \sums[n^2 > y^{2/3}][\rad(n) = m]
     \frac{1}{n^2}
      \ll_A
       (1 + A)^{\omega(m)}
        \frac{y^{O(1/\log\log 3y)}}{y^{2/3}};
\]
by \eqref{eq:omegasfbnd} (note that $1/m^3 < 1/(y^{1/3}m^2)$ when 
$m > y^{1/3}$), and since 
\[
  \sums[n \ge 1][\rad(n) = m] \frac{1}{n^2} 
   =
    \frac{1}{m^2}
     \sums[n \ge 1][\rad(n) \mid m] \frac{1}{n^2}
      =
       \frac{1}{m^2}
        \prod_{p \mid m}
         \bigg(\sum_{a \ge 0} \frac{1}{p^{2a}}\bigg)
          \ll
           \frac{1}{m^2},
\]
we have           
\[
  \sums[m > y^{1/3}][\text{squarefree}] 
   \frac{A^{\omega(m)}(D,m)}{m}
    \sums[n \ge 1][\rad(n) = m] \frac{1}{n^2}
     \ll
      \sums[m > y^{1/3}][\text{squarefree}] 
       \frac{A^{\omega(m)}(D,m)}{m^3}
        \ll_A
         \frac{(1 + A)^{\omega(D)}}{y^{2/3}}.
\]
Combining gives \eqref{eq:omegaradbnd}.

For \eqref{eq:smth13}, we note that since 
$\rad(n)^3 \le n^2\rad(n)$ and 
$A^{\omega(n)} = A^{\omega(\rad(n))}$, 
\[
 \sum_{n^2\rad(n) \le y} 
  A^{\omega(n)} 
   \le 
    \sums[a \le y^{1/3}][\text{squarefree}] 
     A^{\omega(a)}
     \sums[b^2 \le y][\rad(b) = a] 1.
\]
An argument similar to the one leading up to \eqref{eq:ylenley2} 
shows that, uniformly for $a \le y^{1/3}$, we have 
$\sum_{b^2 \le y, \, \rad(b) = a} 1 \ll y^{O(1/\log\log 3y)}$, and 
\[
 \sums[a \le y^{1/3}][\text{squarefree}] 
  A^{\omega(a)} 
  \le 
   y^{1/3}
    \sums[a \le y^{1/3}][\text{squarefree}] 
     \frac{A^{\omega(a)}}{a}
     \ll_A
      y^{1/3 + O(1/\log\log 3y)}
\]
by \eqref{eq:omegasfbnda}.
Combining gives \eqref{eq:smth13}.
\end{proof}

To prove Proposition \ref{prop:sssa}, we express 
$\mathfrak{S}_{\bh}$ as a series.
To this end, let us introduce some notation and establish some 
basic inequalities.
Let a nonempty, finite set $\bh \subseteq \ZZ$ be given, and let 
$k \defeq \card \bh$.
Recall that $\T_{\bh}(2^{\alpha})$ is defined (and nonempty when 
$\bh = \{0\}$) for $\alpha \ge 2$, and for $p \equiv 3 \bmod 4$, 
$\V_{\bh}(p^{\alpha})$ is defined (and nonempty when 
$\bh = \{0\}$) for $\alpha \ge 1$.
Let us set $\T_{\bh}(1) \defeq \{1\}$ and 
$\T_{\bh}(2) \defeq \{1,2\}$ for completeness.
For $p \not\equiv 1 \bmod 4$ and $\alpha \ge 1$, we may then 
define 
\begin{equation}
 \label{eq:defiota}
 \epsilon_{\bh}(p^{\alpha})
  \defeq 
   \bigg(\frac{\card \T_{\bz}(p^{\alpha})}{p^{\alpha}}\bigg)^{-k}
    \bigg(\frac{\card \T_{\bh}(p^{\alpha})}{p^{\alpha}}\bigg)
  -
     \bigg(\frac{\card \T_{\bz}(p^{\alpha - 1})}{p^{\alpha - 1}}\bigg)^{-k}
      \bigg(\frac{\card \T_{\bh}(p^{\alpha - 1})}{p^{\alpha - 1}}\bigg). 
\end{equation}
Note that $\epsilon_{\bh}(2^2) = 0$ by definition.

\begin{lemma}
 \label{lem:iotah}
Let $\bh$ be a nonempty, finite set of integers, and let 
$k \defeq \card \bh$.

\textup{(}a\textup{)}
For $p \equiv 3 \bmod 4$ and {\bfseries even} $\alpha \ge 2$, we 
have $\epsilon_{\bh}(p^{\alpha}) = 0$.

\textup{(}b\textup{)}
For $p \not\equiv 1 \bmod 4$, we have
\begin{equation}
 \label{eq:iotlem1}
  \epsilon_{\bh}(p)
   \ll_k
    \frac{(\det(\bh),p)}{p^2}.
\end{equation}

\textup{(}c\textup{)}
For $p \not\equiv 1 \bmod 4$ and $\alpha \ge 1$, we have 
\begin{equation}
 \label{eq:iotlem2}
  \epsilon_{\bh}(p^{\alpha})
   \ll_k 
    \frac{(\det(\bh),p)}{p^\alpha}.
\end{equation}

\textup{(}d\textup{)}
For $\beta \ge 1$, we have  
\begin{equation}
 \label{eq:iotlem3}
   \delta_{\bz}(2)^{-k}\delta_{\bh}(2)  
   =
    1 + \sum_{\alpha = 2}^{\beta} \epsilon_{\bh}(2^{\alpha})
      + O_k\bigg(\frac{1}{2^{\beta}}\bigg).
\end{equation}
For $p \equiv 3 \bmod 4$ and $\beta \ge 1$, we have 
\begin{equation}
 \label{eq:iotlem4}
   \delta_{\bz}(p)^{-k}\delta_{\bh}(p)  
   =
    1 + \sum_{\alpha = 1}^{\beta} \epsilon_{\bh}(p^{2\alpha - 1})
      + O_k\bigg(\frac{1}{p^{2\beta}}\bigg).
\end{equation}
\end{lemma}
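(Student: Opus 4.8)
The plan is to handle the four parts separately; each reduces to bookkeeping on top of Propositions~\ref{prop:S2h} and~\ref{prop:Sp3h}, the only exception being part (c) when $p \nmid \det(\bh)$. Throughout I would abbreviate $u_{\alpha} \defeq \card \T_{\bz}(p^{\alpha})/p^{\alpha}$, $v_{\alpha} \defeq \card \T_{\bh}(p^{\alpha})/p^{\alpha}$ and $f(\alpha) \defeq u_{\alpha}^{-k} v_{\alpha}$ (reading $\T_{\bh}(p^{\alpha}) = \V_{\bh}(p^{\alpha})$ for $p \equiv 3 \bmod 4$), so that $\epsilon_{\bh}(p^{\alpha}) = f(\alpha) - f(\alpha - 1)$. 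Two facts get used repeatedly. First, $u_{\alpha} \ge 1/4$ for all $\alpha \ge 0$ and all $p \not\equiv 1 \bmod 4$ (indeed $u_{\alpha} \ge 2/3$ when $p \equiv 3 \bmod 4$); this comes from the monotonicity of $u_{\alpha}$ in the $\V/\W$ and $\T/\U$ partitions of the proofs of Propositions~\ref{prop:Sp3h}(a) and~\ref{prop:S2h}(a), from Propositions~\ref{prop:Sp3h}(c), \ref{prop:S2h}(c), and from the conventions for small $\alpha$. Consequently $u_{\alpha}^{-k} \ll_{k} 1$ and $w \mapsto w^{-k}$ is Lipschitz with constant $\ll_{k} 1$ on the range of the $u_{\alpha}$. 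Second, $0 \le v_{\alpha} \le 1$. Part (a) is then immediate: $\W_{\bh}(p^{\alpha})$ is empty for even $\alpha$ (proof of Proposition~\ref{prop:Sp3h}(a)), so $u_{\alpha} = u_{\alpha - 1}$ and $v_{\alpha} = v_{\alpha - 1}$ for even $\alpha \ge 2$, whence $\epsilon_{\bh}(p^{\alpha}) = 0$.

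For (b) the target exponent is $p^{-2}$, and the two cases behave differently. If $p \mid \det(\bh)$ then $(\det(\bh),p) = p$ and it suffices to show $\epsilon_{\bh}(p) \ll_{k} 1/p$; for $p \equiv 3 \bmod 4$ this follows at once from the explicit values $u_{1} = (p-1)/p$ and $v_{1} = (p - \kappa)/p$, where $\kappa \le k - 1$ is the number of residue classes occupied by $\bh \bmod p$ (split into $p \le 2k$, trivial, and $p > 2k$, where one expands $u_{1}^{-k} = (1 - 1/p)^{-k}$), while for $p = 2$ the conventions $\T_{\bh}(1) = \{1\}$, $\T_{\bh}(2) = \{1,2\}$ force $f(0) = f(1) = 1$, so $\epsilon_{\bh}(2) = 0$. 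If $p \nmid \det(\bh)$ then $\kappa = k \le p$, and $\epsilon_{\bh}(p) = ((p-1)/p)^{-k}(p-k)/p - 1 = \bigl(p^{k-1}(p-k) - (p-1)^{k}\bigr)/(p-1)^{k}$; expanding $(p-1)^{k}$ by the binomial theorem, its two top-degree terms cancel those of $p^{k-1}(p-k)$, so the numerator is a polynomial in $p$ of degree $\le k-2$ with coefficients $\ll_{k} 1$, while $(p-1)^{k} \ge 2^{-k} p^{k}$; hence $\epsilon_{\bh}(p) \ll_{k} 1/p^{2}$. (Again $\epsilon_{\bh}(2) = 0$ for $p = 2$.)

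For (c) it suffices to treat $\alpha \ge 2$ ($\alpha = 1$ is covered by the stronger (b)), and in each sub-case below to treat the exponents $\alpha$ above a fixed threshold, since for the finitely many smaller $\alpha$ one has $\epsilon_{\bh}(p^{\alpha}) \ll_{k} 1$ while $p^{-\alpha}$ is bounded below. If $p \equiv 3 \bmod 4$ and $\alpha$ is even, part (a) applies. If $p \mid \det(\bh)$ (so $(\det(\bh),p) = p$), take $\alpha$ large, write $\epsilon_{\bh}(p^{\alpha}) = (u_{\alpha}^{-k} - u_{\alpha-1}^{-k}) v_{\alpha} + u_{\alpha-1}^{-k}(v_{\alpha} - v_{\alpha-1})$, and bound the one-step increments $|u_{\alpha} - u_{\alpha-1}|$ and $|v_{\alpha} - v_{\alpha-1}|$ by $\ll_{k} p^{1-\alpha}$ via \eqref{eq:VW}--\eqref{eq:VWb} (and the analogous $p = 2$ step bound from the proof of Proposition~\ref{prop:S2h}(a)); with the Lipschitz fact this gives $\epsilon_{\bh}(p^{\alpha}) \ll_{k} p^{1-\alpha} = (\det(\bh),p) p^{-\alpha}$. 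The remaining case --- $p \nmid \det(\bh)$ with $\alpha \ge 3$ odd ($p \equiv 3 \bmod 4$) or $\alpha$ large ($p = 2$) --- is the heart of the matter. Here $\bh_{p} = \bh$ (resp.\ $\card \bh_{2} = 1$), so \eqref{eq:delthp3} (resp.\ \eqref{eq:delth2}) holds for every relevant $\alpha$ and yields the closed forms $u_{\alpha} = \delta_{\bz}(p)(1 - t_{\alpha})$ and $v_{\alpha} = \delta_{\bh}(p) - k \delta_{\bz}(p) t_{\alpha}$, where $t_{\alpha} = p^{-\alpha - (\alpha \bmod 2)}$ for $p \equiv 3 \bmod 4$ and $t_{\alpha} = 2^{-\alpha}$ for $p = 2$; thus $\epsilon_{\bh}(p^{\alpha}) = \delta_{\bz}(p)^{-k}\bigl(g(t_{\alpha}) - g(t_{\alpha-1})\bigr)$ with $g(t) = (1-t)^{-k}\bigl(\delta_{\bh}(p) - k \delta_{\bz}(p) t\bigr)$. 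A second-order Taylor expansion at $t = 0$, using $g(0) = \delta_{\bh}(p)$ and $g'(0) = k\bigl(\delta_{\bh}(p) - \delta_{\bz}(p)\bigr)$, yields $\epsilon_{\bh}(p^{\alpha}) \ll_{k} |\delta_{\bh}(p) - \delta_{\bz}(p)|\,|t_{\alpha} - t_{\alpha-1}| + t_{\alpha-1}^{2}$, in which $t_{\alpha-1}^{2} \le p^{-\alpha}$, $|t_{\alpha} - t_{\alpha-1}| \ll p^{1-\alpha}$, and the decisive point is $\delta_{\bh}(p) - \delta_{\bz}(p) = -(k-1)/(p+1) \ll_{k} 1/p$ for $p \equiv 3 \bmod 4$ (Proposition~\ref{prop:Sp3h}(c)), which supplies exactly the extra factor $p$ over a naive estimate; for $p = 2$ there is no such jump, $|t_{\alpha} - t_{\alpha-1}| \ll 2^{-\alpha}$ and $|g'(0)| \ll_{k} 1$, and the bound is automatic. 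I expect recognizing and exploiting this cancellation $\delta_{\bh}(p) - \delta_{\bz}(p) = O_{k}(1/p)$ to be the only real obstacle in the lemma.

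Finally (d) is a telescoping identity. Since $\epsilon_{\bh}(p^{\alpha}) = f(\alpha) - f(\alpha - 1)$, we get $\sum_{\alpha = 2}^{\beta} \epsilon_{\bh}(2^{\alpha}) = f(\beta) - f(1)$ and, discarding the even-exponent terms (all zero by (a)), $\sum_{\alpha = 1}^{\beta} \epsilon_{\bh}(p^{2\alpha - 1}) = \sum_{j = 1}^{2\beta - 1}\epsilon_{\bh}(p^{j}) = f(2\beta - 1) - f(0)$; the conventions on $\T_{\bh}(1)$ and $\T_{\bh}(2)$ are precisely what make $f(0) = f(1) = 1$. It then remains to verify $f(\beta) = \delta_{\bz}(2)^{-k}\delta_{\bh}(2) + O_{k}(2^{-\beta})$ and $f(2\beta - 1) = \delta_{\bz}(p)^{-k}\delta_{\bh}(p) + O_{k}(p^{-2\beta})$; these follow from \eqref{eq:Thdelth2bnd} and \eqref{eq:Vhdeltp3bnd} applied to both $\bz$ and $\bh$ (for the second, use $f(2\beta - 1) = f(2\beta)$ by (a), so that \eqref{eq:Vhdeltp3bnd} may be invoked at the even exponent $2\beta$, where it is sharper by a factor $p$), together with the Lipschitz fact from the first paragraph.
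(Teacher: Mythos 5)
Your proof is correct and follows essentially the same route as the paper's: (a) via the vanishing of $\W_{\bh}(p^{\alpha})$ at even levels, (b) from the explicit level-one counts, (c) split according to whether $p \mid \det(\bh)$ --- crude one-step increment bounds from \eqref{eq:VW}--\eqref{eq:VWb} in the first case, the closed forms coming from \eqref{eq:delthp3} plus the cancellation of the $1/p$ terms in the second --- and (d) by telescoping together with \eqref{eq:Thdelth2bnd} and \eqref{eq:Vhdeltp3bnd}. Your packaging of the decisive cancellation in (c) as $g'(0)=k\big(\delta_{\bh}(p)-\delta_{\bz}(p)\big)=O_k(1/p)$ is a tidy reformulation of the paper's direct expansion, in which the common leading part $1-(k-1)/p$ of the two terms cancels.
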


\begin{proof}
(a)
Let $p \equiv 3 \bmod 4$ and let $\alpha \ge 1$.
As can be seen from Proposition \ref{prop:Sp3h}, 
\eqref{eq:delthp3} and part (c), we have 
\begin{equation}
 \label{eq:iotalempf1}
 \frac{\card \V_{\bz}(p^{\alpha})}{p^{\alpha}}
  =
   \bigg(1 + \frac{1}{p}\bigg)^{-1}
    \bigg(1 - \frac{1}{p^{\alpha + \alpha \bmod 2}}\bigg).
\end{equation}
For even $\alpha$ we therefore have  
\[
 \epsilon_{\bh}(p^{\alpha})
  =
   \bigg(1 + \frac{1}{p}\bigg)^k
    \bigg(1 - \frac{1}{p^{\alpha}}\bigg)^k
     \bigg(
      \frac{\card \V_{\bh}(p^{\alpha})}{p^{\alpha}}
     -
        \frac{\card \V_{\bh}(p^{\alpha - 1})}{p^{\alpha - 1}}
     \bigg),
\]
and as we noted following \eqref{eq:VW} and \eqref{eq:VWb}, 
$
 \card \V_{\bh}(p^{\alpha})/p^{\alpha} 
  - 
   \card \V_{\bh}(p^{\alpha - 1})/p^{\alpha - 1}
    = 
     0
$. 

(b)
Consider $p \equiv 3 \bmod 4$ (the case $p = 2$ is similar).
Let $\alpha \ge 1$.
Define $\eta_{\bh}(p^{\alpha})$ and $\kappa_{\bh}(p)$ as the 
numbers given by the relations
\begin{equation}
 \label{eq:defetakappa}
 \frac{\card \V_{\bh}(p^{\alpha})}{p^{\alpha}}
  \eqdef 
   \delta_{\bh}(p) + \eta_{\bh}(p^{\alpha}) 
    \quad 
     \text{and}
      \quad 
 \delta_{\bh}(p) 
  \eqdef
   \bigg(1 + \frac{1}{p}\bigg)^{-1}
    \bigg(1 - \frac{\kappa_{\bh}(p)}{p}\bigg).
\end{equation}
Note that by Proposition \ref{prop:Sp3h}, \eqref{eq:Vhdeltp3bnd} 
and part (c),  
$
 |\eta_{\bh}(p^{\alpha})| 
  < 
   k/p^{\alpha + (\alpha \bmod 2)}
$ and 
$\kappa_{\bh}(p) \le \min\{k - 1,p\}$, with 
$\kappa_{\bh}(p) = k - 1$ if $p \nmid \det(\bh)$.
Also, $\kappa_{\bh}(p) \ge -1$ (because $\delta_{\bh}(p) \le 1$).
Since $\alpha + (\alpha \bmod 2) \ge 2$, we have 
\[
 \frac{\card \V_{\bh}(p^{\alpha})}{p^{\alpha}}
  =
   \bigg(1 + \frac{1}{p}\bigg)^{-1}
    \bigg(1 - \frac{\kappa_{\bh}(p)}{p} + O\bigg(\frac{k}{p^2}\bigg)\bigg).
\]
In the special case $\bh = \{0\}$ we can take 
$\kappa_{\bh}(p) = 0$.
We therefore have  
\begin{align*}
 \bigg(\frac{\card \V_{\bz}(p^{\alpha})}{p^{\alpha}}\bigg)^{-k}
  \frac{\card \V_{\bh}(p^{\alpha})}{p^{\alpha}}
  &
   =
    \bigg(1 + \frac{1}{p}\bigg)^{k - 1}
       \bigg(1 - \frac{\kappa_{\bh}(p)}{p} + O_k\bigg(\frac{1}{p^2}\bigg)\bigg)
   \\
  &
   =
     \bigg(1 + \frac{k - 1}{p} +O_k\bigg(\frac{1}{p^2}\bigg)\bigg)
      \bigg(1 - \frac{\kappa_{\bh}(p)}{p} + O_k\bigg(\frac{1}{p^2}\bigg)\bigg)
 \\
   & 
    = 
      1 + \frac{k - 1 - \kappa_{\bh}(p)}{p} + O_k\bigg(\frac{1}{p^2}\bigg).
\end{align*}

Writing $\xi_{\bh}(p) \defeq k - 1 - \kappa_{\bh}(p)$, we have
\[
  \bigg(\frac{\card \V_{\bz}(p^{\alpha})}{p^{\alpha}}\bigg)^{-k}
   \frac{\card \V_{\bh}(p^{\alpha})}{p^{\alpha}}
  -
    1
  \ll_k 
     \frac{\xi_{\bh}(p)}{p} + \frac{1}{p^2}.
\]
If $p \mid \det(\bh)$ then 
$\xi_{\bh}(p)/p = \xi_{\bh}(p)(\det(\bh),p)/p^2$, and if 
$p \nmid \det(\bh)$ then, as already noted, 
$\kappa_{\bh}(p) = k - 1$, i.e.\ $\xi_{\bh}(p) = 0$, so 
$\xi_{\bh}(p)/p = \xi_{\bh}(p)(\det(\bh),p)/p^2$ in any case.
Since, as already noted, 
$-1 \le \kappa_{\bh}(p) \le k - 1$, 
we have $0 \le \xi_{\bh}(p) \le k$.
Thus, 
\[
  \bigg(\frac{\card \V_{\bz}(p^{\alpha})}{p^{\alpha}}\bigg)^{-k}
   \frac{\card \V_{\bh}(p^{\alpha})}{p^{\alpha}}
  -
    1
  \ll_k 
   \frac{(\det(\bh),p)}{p^2} + \frac{1}{p^2} 
    \ll
     \frac{(\det(\bh),p)}{p^2}.
\]
For $\alpha = 1$, the left-hand side is equal to $\epsilon_{\bh}(p)$ 
(see \eqref{eq:defiota}), so this gives \eqref{eq:iotlem1}.

(c)
Consider $p \equiv 3 \bmod 4$ (the case $p = 2$ is similar).
Let $\alpha \ge 1$.
By (a) and (b), the result holds for $\alpha = 1$ and 
$\alpha \ge 2$ even, so we may assume that $\alpha \ge 3$ is odd.
In that case, using \eqref{eq:iotalempf1} in the definition 
\eqref{eq:defiota} of $\epsilon_{\bh}(p^{\alpha})$, we see that 
\begin{align*}
 \epsilon_{\bh}(p^{\alpha})
 & 
  =
   \bigg(1 + \frac{1}{p}\bigg)^k
    \bigg\{
     \bigg(1 - \frac{1}{p^{\alpha + 1}}\bigg)^{-k}
      \frac{\card \V_{\bh}(p^{\alpha})}{p^{\alpha}}
       -
        \bigg(1 - \frac{1}{p^{\alpha - 1}}\bigg)^{-k}
         \frac{\card \V_{\bh}(p^{\alpha - 1})}{p^{\alpha - 1}}
    \bigg\}
 \\
 & 
  =
   \bigg(1 + \frac{1}{p}\bigg)^k
    \bigg\{
     \frac{\card \V_{\bh}(p^{\alpha})}{p^{\alpha}}
       -
       \frac{\card \V_{\bh}(p^{\alpha - 1})}{p^{\alpha - 1}}
        +
         O_k\bigg(\frac{1}{p^{\alpha - 1}}\bigg) 
    \bigg\},   
\end{align*}
since, for any $\alpha \ge 1$, 
$(1 - 1/p^{\alpha})^{-k} = 1 + O_k(1/p^{\alpha})$ and 
$\card \V_{\bh}(p^{\alpha})/p^{\alpha} = O(1)$.
We deduce, from \eqref{eq:VW} and \eqref{eq:VWb}, that  
$\epsilon_{\bh}(p^{\alpha}) \ll_k 1/p^{\alpha - 1}$, which is   
\eqref{eq:iotlem2} in the case $p \mid \det(\bh)$.

Now consider the case $p \nmid \det(\bh)$.
Note that, by Proposition \ref{prop:Sp3h}, \eqref{eq:Vhdeltp3bnd} 
and part (c), we have, for any $\alpha \ge 1$,   
\[
 \frac{\card \V_{\bh}(p^{\alpha})}{p^{\alpha}}
  =
   \bigg(1 + \frac{1}{p}\bigg)^{-1}
    \bigg(1 - \frac{k - 1}{p} - \frac{k}{p^{\alpha + \alpha \bmod 2}}\bigg). 
\]
In view of this and (the special case) \eqref{eq:iotalempf1}, we 
have, for odd $\alpha \ge 3$,
\begin{align*}
 &  
 \epsilon_{\bh}(p^{\alpha})
  =
   \bigg(1 + \frac{1}{p}\bigg)^{k - 1}
    \bigg\{
     \bigg(1 - \frac{1}{p^{\alpha + 1}}\bigg)^{-k}
      \bigg(1 - \frac{k - 1}{p} - \frac{k}{p^{\alpha + 1}}\bigg)
 \\
 & \hspace{180pt}
       -
        \bigg(1 - \frac{1}{p^{\alpha - 1}}\bigg)^{-k}
         \bigg(1 - \frac{k - 1}{p} - \frac{k}{p^{\alpha - 1}}\bigg)
     \bigg\}.
\end{align*}
Since 
$
 (1 - 1/p^{\alpha + 1})^{-k} 
  = 
   1 + k/p^{\alpha + 1} + O_k(1/p^{\alpha + 2})
$,
we have 
\[
 \bigg(1 - \frac{1}{p^{\alpha + 1}}\bigg)^{-k}
  \bigg(1 - \frac{k - 1}{p} - \frac{k}{p^{\alpha + 1}}\bigg)
   =
    1 - \frac{k - 1}{p} + O_k\bigg(\frac{1}{p^{\alpha + 2}}\bigg);
\]
similarly, 
\[
 \bigg(1 - \frac{1}{p^{\alpha - 1}}\bigg)^{-k}
  \bigg(1 - \frac{k - 1}{p} - \frac{k}{p^{\alpha - 1}}\bigg)
   =
    1 - \frac{k - 1}{p} + O_k\bigg(\frac{1}{p^{\alpha}}\bigg).
\]
Combining gives $\epsilon_{\bh}(p^{\alpha}) \ll_k 1/p^{\alpha}$, 
i.e.\ \eqref{eq:iotlem2}, for odd $\alpha \ge 3$.

(d)
Consider $p \equiv 3 \bmod 4$ (the case $p = 2$ is similar).
Let $\beta \ge 1$.
We have
\[
 1 + \sum_{\alpha = 1}^{\beta} \epsilon_{\bh}(p^{2\alpha - 1})
  =
   1 + \sum_{\alpha = 1}^{2\beta} \epsilon_{\bh}(p^{\alpha})
  =
    \bigg(\frac{\card \V_{\bz}(p^{2\beta})}{p^{2\beta}}\bigg)^{-k}
     \bigg(\frac{\card \V_{\bh}(p^{2\beta})}{p^{2\beta}}\bigg),
\]
because $\epsilon_{\bh}(p^{\alpha}) = 0$ for $\alpha$ even (by 
(a)), and the middle sum telescopes.
Now, Proposition \ref{prop:Sp3h} (c) gives
$\delta_{\bz}(p)^{-k} = (1 + 1/p)^k$, and by definition of 
$\eta_{\bh}(p^{2\beta})$ (see \eqref{eq:defetakappa}), 
$
 \delta_{\bh}(p) 
  =
   \big(\card \V_{\bh}(p^{2\beta})/p^{2\beta}\big) - \eta_{\bh}(p^{2\beta})
$.
With these substitutions, and \eqref{eq:iotalempf1}, we verify 
that  
\begin{align*}
 &   
  \delta_{\bz}(p)^{-k}\delta_{\bh}(p) 
 - 
    \bigg(\frac{\card \V_{\bz}(p^{2\beta})}{p^{2\beta}}\bigg)^{-k}
     \bigg(\frac{\card \V_{\bh}(p^{2\beta})}{p^{2\beta}}\bigg)
  \\
 & \hspace{30pt}
  =   
    \frac{\card \V_{\bh}(p^{2\beta})}{p^{2\beta}}
     \bigg(1 + \frac{1}{p}\bigg)^{k}
      \bigg(1 - \bigg(1 - \frac{1}{p^{2\beta}}\bigg)^{-k} - \eta_{\bh}(p^{2\beta})\bigg).
\end{align*}
Now, $\card \V_{\bh}(p^{2\beta})/p^{2\beta} \le 1$, 
$(1 + 1/p)^{k} \ll_k 1$, 
$(1 - 1/p^{2\beta})^{-k} = 1 + O_k(1/p^{2\beta})$, and as noted in 
(b), Proposition \ref{prop:Sp3h}, \eqref{eq:Vhdeltp3bnd} and 
part (c) show that $|\eta_{\bh}(p^{2\beta})| < k/p^{2\beta}$.
Combining gives \eqref{eq:iotlem4}.
\end{proof}

For $n \in \NN$ such that $p \mid n$ implies 
$p \not\equiv 1 \bmod 4$, we extend \eqref{eq:defiota} by defining 
\[
 \epsilon_{\bh}(n)
  \defeq 
   \prod_{p^{\alpha} \emid \, n \,\,} \epsilon_{\bh}(p^{\alpha}).
\]
%
%
For such $n$, Lemma \ref{lem:iotah} (b) and (c) give
\begin{equation}
 \label{eq:iotad}
  |\epsilon_{\bh}(n)| 
   \le 
    A_k^{\omega(n)}\frac{(\det(\bh),\rad(n))}{n\sqfr(n)},   
\end{equation}
provided $A_k$ is sufficiently large in terms of $k$.
Since $\epsilon_{\bh}(2) = 0$ by definition, and by 
Lemma \ref{lem:iotah} (a), $\epsilon_{\bh}(n) = 0$ if either 
$\nu_2(n) = 1$ or $\nu_p(n)$ is even (and nonzero) for some 
$p \equiv 3 \bmod 4$.
Letting 
$
 \cN_1
  \defeq 
   \{n \in \cN : p \mid n \implies p \not\equiv 1 \bmod 4\}
$, 
where $\cN$ is as in \eqref{eq:defcN}, we define 
\begin{equation}
 \label{eq:defcD}
  \cD \defeq \cN_1 \cup \{2n : n \in \cN_1, 2 \mid n\}. 
\end{equation}
Thus, 
\[
  \cD 
   =
    \big\{2^{\alpha}p_1^{2\alpha_1 - 1}\cdots p_r^{2\alpha_r - 1} : \alpha \ge 0, \alpha \ne 1, r,\alpha_i \ge 1, p_i \equiv 3 \bmod 4 \text{ ($i \le r$)}\big\},
\]
and $\epsilon_{\bh}(n) = 0$ unless $n \in \cD$.
By definition \eqref{eq:defsssP} and Lemma \ref{lem:iotah} (d), 
\begin{equation}
 \label{eq:ssassum}
 \mathfrak{S}_{\bh}
  =
    \bigg(1 + \sum_{\alpha \ge 2} \epsilon_{\bh}(2^{\alpha})\bigg)
     \prod_{p \not\equiv 1 \bmod 4}
      \bigg(1 + \sum_{\alpha \ge 1} \epsilon_{\bh}(p^{2\alpha - 1})\bigg)
   =
        1 + \sum_{d \in \cD} \epsilon_{\bh}(d),
\end{equation}
the last sum being absolutely convergent in view of 
Lemma \ref{lem:omegabnd} and \eqref{eq:iotad}.

For the purposes of stating and proving the next lemma, we define 
{\small 
\[
 \epsilon_{\bh}(p^{\alpha};j)
  \defeq 
   \bigg(\frac{\card \T_{\bz}(p^{\alpha})}{p^{\alpha}}\bigg)^{-j}
    \bigg(\frac{\card \T_{\bh}(p^{\alpha})}{p^{\alpha}}\bigg)
  -
     \bigg(\frac{\card \T_{\bz}(p^{\alpha - 1})}{p^{\alpha - 1}}\bigg)^{-j}
      \bigg(\frac{\card \T_{\bh}(p^{\alpha - 1})}{p^{\alpha - 1}}\bigg), 
\]
}

\noindent 
for $p \not\equiv 1 \bmod 4$, $\alpha \ge 1$, and $j \ge 1$; we 
then set 
$
 \epsilon_{\bh}(n;j) 
  \defeq \prod_{p^{\alpha} \emid n} 
   \epsilon_{\bh}(p^{\alpha};j)
$
for $n$ composed of primes $p \not\equiv 1 \bmod 4$.
Thus, $\epsilon_{\bh}(n) = \epsilon_{\bh}(n;j)$ when $j = \card \bh$.

\begin{lemma}
 \label{lem:cancel}
Set $\bo \defeq \emptyset$, or set $\bo \defeq \{0\}$.
Let $n \ge 2$ be such that $p \mid n$ implies 
$p \not\equiv 1 \bmod 4$, and let $R_1,\ldots,R_k$ be complete 
residue systems modulo $n$.
We have 
\[
 \sum_{h_1 \in R_1} 
  \cdots 
   \sum_{h_k \in R_k}
    \epsilon_{\bo \cup \bh}(n; \ocard \bo + k)
      =
       0,
\]
where $\bh = \{h_1,\ldots,h_k\}$ in the summand. 
\textup{(}Note that we may have $\card \bh < k$ here.\textup{)}
\end{lemma}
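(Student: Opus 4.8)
The plan is to reduce the identity, using the multiplicativity $\epsilon_{\bh}(mn; j) = \epsilon_{\bh}(m; j)\,\epsilon_{\bh}(n; j)$ for coprime $m, n$ together with the Chinese remainder theorem, to the single prime-power case $n = p^{\alpha}$, and there to extract the cancellation from the following structural fact (a routine consequence of Proposition~\ref{prop:S2S3S1}): for every $p \not\equiv 1 \bmod 4$ and every $\beta \ge 0$ there is a subset $X_{\beta} \subseteq \ZZ/p^{\beta}\ZZ$, \emph{independent of $\bh$}, such that
\[
 \card \T_{\bh}(p^{\beta}) = \#\{a \bmod p^{\beta} : a + \bh \subseteq X_{\beta}\},
  \qquad
 \#X_{\beta} = \card \T_{\bz}(p^{\beta}),
\]
where the right-hand count depends on $\bh$ only through $\bh \bmod p^{\beta}$. (For $p \equiv 3 \bmod 4$, $X_{\beta}$ is the image modulo $p^{\beta}$ of $\{n : n \equiv \sots \bmod p^{\beta},\ \nu_p(n) < \beta\}$; for $p = 2$ one uses the analogous set, taking account of the one-step index shift in \eqref{eq:defTh} and the conventions $\T_{\bh}(1) = \{1\}$, $\T_{\bh}(2) = \{1,2\}$.)

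Granting this, fix $p \not\equiv 1 \bmod 4$ and write $j := \ocard \bo + k$ and $D_{\beta} := \card \T_{\bz}(p^{\beta})/p^{\beta}$. Expanding $\card \T_{\bo \cup \bh}(p^{\beta})$ as $\sum_{a \bmod p^{\beta}} \big(\prod_{g \in \bo}\ind{X_{\beta}}(a + g)\big)\prod_{i = 1}^{k}\ind{X_{\beta}}(a + h_i)$ and interchanging the order of summation, the sum over each $h_i$ running through a complete residue system modulo $p^{\beta}$ contributes exactly $\#X_{\beta} = p^{\beta}D_{\beta}$, so that
\[
 \sum_{(h_1, \ldots, h_k) \,\in\, (\ZZ/p^{\beta}\ZZ)^{k}} \card \T_{\bo \cup \bh}(p^{\beta})
  =
   p^{\beta(1 - \ocard\bo)}\,(\#X_{\beta})^{k + \ocard\bo}
  =
   p^{\beta(k + 1)}\,D_{\beta}^{\,j}.
\]
Taking $\beta = \alpha$ gives $\sum_{(h_1,\ldots,h_k) \in (\ZZ/p^{\alpha}\ZZ)^{k}} D_{\alpha}^{-j}\,p^{-\alpha}\,\card \T_{\bo \cup \bh}(p^{\alpha}) = p^{\alpha k}$; taking $\beta = \alpha - 1$ and using that the summand then depends on $(h_1,\ldots,h_k)$ only modulo $p^{\alpha - 1}$ (so the sum over $(\ZZ/p^{\alpha}\ZZ)^{k}$ equals $p^{k}$ times the sum over $(\ZZ/p^{\alpha - 1}\ZZ)^{k}$) gives $\sum_{(h_1,\ldots,h_k) \in (\ZZ/p^{\alpha}\ZZ)^{k}} D_{\alpha - 1}^{-j}\,p^{-(\alpha - 1)}\,\card \T_{\bo \cup \bh}(p^{\alpha - 1}) = p^{\alpha k}$ as well. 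Since $\epsilon_{\bo \cup \bh}(p^{\alpha}; j)$ is by definition the difference of these two summands, summing over $(h_1,\ldots,h_k) \in (\ZZ/p^{\alpha}\ZZ)^{k}$ yields $p^{\alpha k} - p^{\alpha k} = 0$.

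To finish, let $n \ge 2$ have all prime factors $\not\equiv 1 \bmod 4$ and write $n = \prod_{i = 1}^{s} p_i^{\beta_i}$ with the $p_i$ distinct (so $s \ge 1$). By definition $\epsilon_{\bo \cup \bh}(n; j) = \prod_{i} \epsilon_{\bo \cup \bh}(p_i^{\beta_i}; j)$, with the $i$-th factor depending on $\bh$ only modulo $p_i^{\beta_i}$. The Chinese remainder theorem, applied to each $R_{\ell}$, makes $(h_1, \ldots, h_k) \mapsto \big((h_{\ell} \bmod p_i^{\beta_i})_{\ell}\big)_{i}$ a bijection from $R_1 \times \cdots \times R_k$ onto $\prod_{i} (\ZZ/p_i^{\beta_i}\ZZ)^{k}$, under which $\epsilon_{\bo \cup \bh}(n; j)$ becomes a product of functions of the separate blocks; hence the $k$-fold sum factors as $\prod_{i}\big(\sum_{(h_1,\ldots,h_k) \in (\ZZ/p_i^{\beta_i}\ZZ)^{k}} \epsilon_{\bo \cup \bh}(p_i^{\beta_i}; j)\big)$, and every factor vanishes by the prime-power case. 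I expect the step needing the most care to be purely bookkeeping: verifying the uniform description $\card \T_{\bh}(p^{\beta}) = \#\{a : a + \bh \subseteq X_{\beta}\}$ across the special conventions for $\beta \in \{0,1,2\}$ and the $p = 2$ index shift, and keeping track of the fact that the parameter $j = \ocard\bo + k$ stays fixed even when $\bo \cup \bh$, reduced modulo some $p_i^{\beta_i}$, collapses to fewer than $\ocard\bo + k$ residues --- which causes no trouble, since none of the displayed identities ever use $j = \card(\bo \cup \bh)$.
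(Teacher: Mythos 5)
Your proof is correct and follows essentially the same route as the paper's: reduce via the Chinese remainder theorem to a single prime power, then interchange the order of summation so that each $h_i$ ranging over a complete residue system contributes a factor $\card \T_{\bz}(p^{\beta})$, making both terms in the definition of $\epsilon_{\bo\cup\bh}(p^{\alpha};j)$ sum to $p^{\alpha k}$. Your explicit bookkeeping with the set $X_{\beta}$ and the exponent $j$ staying fixed under collisions of the $h_i$ is exactly the point the paper makes with its parenthetical remark and its ``translation of $\V_{\bz}(p^{\alpha})$'' argument.
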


\begin{proof}
Let $p \not\equiv 1 \bmod 4$, $\alpha \ge 1$.
Suppose $\bh = \{h_1,\ldots,h_k\}$ and 
$\bh' = \{h_1',\ldots,h_k'\}$  
satisfy $h_i \equiv h_i' \bmod p^{\alpha}$, and hence 
$h_i \equiv h_i' \bmod p^{\alpha - 1}$ as well, for 
$i = 1,\ldots,k$.
For $p \equiv 3 \bmod 4$, it is clear from \eqref{eq:defVh} that
$
 \card \V_{\bo \cup \bh}(p^{\beta}) 
  = 
   \card \V_{\bo \cup \bh'}(p^{\beta})
$ 
for $\beta = \alpha$, and for $\beta = \alpha - 1$ as well.
Thus, 
$
 \epsilon_{\bo \cup \bh}(p^{\alpha}; \ocard \bo + k)
 = 
  \epsilon_{\bo \cup \bh'}(p^{\alpha}; \ocard \bo + k)
$.
Similarly, we have   
$
 \epsilon_{\bo \cup \bh}(2^{\alpha}; \ocard \bo + k) 
  = 
   \epsilon_{\bo \cup \bh'}(2^{\alpha}; \ocard \bo + k)
$
(see \eqref{eq:defTh}).
Therefore, by the Chinese remainder theorem, 
\[
 \sum_{h_1 \in R_1} 
  \cdots 
   \sum_{h_k \in R_k}
    \epsilon_{\bo \cup \bh}(n;\ocard \bo + k)
      =
       \prod_{p^{\alpha} \emid \, n \,} 
        \bigg(
         \sum_{h_1 \in \ZZ_{p^{\alpha}}}
          \cdots 
           \sum_{h_k \in \ZZ_{p^{\alpha}}}
            \epsilon_{\bo \cup \bh}(p^{\alpha};\ocard \bo + k) 
         \bigg),
\]
where $\bh = \{h_1,\ldots,h_k\}$ in both summands, and 
$\ZZ_{p^{\alpha}} \defeq \{0,\ldots,p^{\alpha} - 1\}$.
It therefore suffices to show that  
\begin{equation}
 \label{eq:cancelp}
  \sum_{h_1 \in \ZZ_{p^{\alpha}}}
   \cdots 
    \sum_{h_k \in \ZZ_{p^{\alpha}}}
     \epsilon_{\bo \cup \bh}(p^{\alpha};\ocard \bo + k) 
      =
       0
\end{equation}
for all $p \not\equiv 1 \bmod 4$ and $\alpha \ge 1$.

Consider the case $\bo = \emptyset$.
For $p \equiv 3 \bmod 4$ and $\alpha \ge 1$, we have 
\[
 \sum_{h_1 \in \ZZ_{p^{\alpha}}}
  \cdots 
   \sum_{h_k \in \ZZ_{p^{\alpha}}}
    \card \V_{\bh}(p^{\alpha})
   =
   \sum_{a \in \ZZ_{p^{\alpha}}}
    \sums[h_1 \in \ZZ_{p^{\alpha}}][a + h_1 \in S_p][\nu_p(a + h_1) < \alpha]
     \cdots 
      \sums[h_k \in \ZZ_{p^{\alpha}}][a + h_k \in S_p][\nu_p(a + h_k) < \alpha] 1,
\]
as can be seen by applying the definition \eqref{eq:defVh} of 
$\V_{\bh}(p^{\alpha})$ and changing the order of summation.
For $i = 1,\ldots,k$, each sum over $h_i$ on the right-hand side 
enumerates a translation of $\V_{\bz}(p^{\alpha})$, so the 
entire sum (i.e.\ the left-hand side) is equal to 
$p^{\alpha}(\card \V_{\bz}(p^{\alpha}))^k$.
Whence 
\[
 \sum_{h_1 \in \ZZ_{p^{\alpha}}}
  \cdots 
   \sum_{h_k \in \ZZ_{p^{\alpha}}}
    \bigg(\frac{\card \V_{\bz}(p^{\alpha})}{p^{\alpha}}\bigg)^{-k}
     \bigg(\frac{\card \V_{\bh}(p^{\alpha})}{p^{\alpha}}\bigg)
      =
       p^{k\alpha}.
\]
%
%
\begin{nixnix}
\begin{align*}
  \sum_{h_1 \in \ZZ_{p^{\alpha}}}
   \cdots 
    \sum_{h_k \in \ZZ_{p^{\alpha}}}
     \card \V_{\bh}(p^{\alpha})
  & = 
      \sum_{h_1 \in \ZZ_{p^{\alpha}}}
       \cdots 
        \sum_{h_k \in \ZZ_{p^{\alpha}}}
         \sums[a \in \ZZ_{p^{\alpha}}][\forall i, a + h_i \in S_p][\forall i, \nu_p(a + h_i) < \alpha] 1
 \\ 
  & = 
   \sum_{a \in \ZZ_{p^{\alpha}}}
    \sums[h_1 \in \ZZ_{p^{\alpha}}][a + h_1 \in S_p][\nu_p(a + h_1) < \alpha]
     \cdots 
      \sums[h_k \in \ZZ_{p^{\alpha}}][a + h_k \in S_p][\nu_p(a + h_k) < \alpha] 1
 \\
 & = 
   \sum_{a \in \ZZ_{p^{\alpha}}}
    \sums[a + h_1 \in \ZZ_{p^{\alpha}}][a + h_1 \in S_p][\nu_p(a + h_1) < \alpha]
     \cdots 
      \sums[a + h_k \in \ZZ_{p^{\alpha}}][a + h_k \in S_p][\nu_p(a + h_k) < \alpha] 1   
 \\
 & = 
    \sum_{a \in \ZZ_{p^{\alpha}}}
     (\card \V_{\bz}(p^{\alpha}))
      \cdots 
       (\card \V_{\bz}(p^{\alpha}))
 \\
 & = 
    p^{\alpha}(\card \V_{\bz}(p^{\alpha}))^k
\end{align*}
\end{nixnix}
%
%
Since 
\[
 \sum_{h_1 \in \ZZ_{p^{\alpha}}}
  \cdots 
   \sum_{h_k \in \ZZ_{p^{\alpha}}}
    \card \V_{\bh}(p^{\alpha - 1})
      =
      p^k
       \sum_{h_1 \in \ZZ_{p^{\alpha - 1}}}
        \cdots 
         \sum_{h_k \in \ZZ_{p^{\alpha - 1}}}
          \card \V_{\bh}(p^{\alpha - 1}),
\]
we similarly have 
\[
 \sum_{h_1 \in \ZZ_{p^{\alpha}}}
  \cdots 
   \sum_{h_k \in \ZZ_{p^{\alpha}}}
    \bigg(\frac{\card \V_{\bz}(p^{\alpha - 1})}{p^{\alpha - 1}}\bigg)^{-k}
     \bigg(\frac{\card \V_{\bh}(p^{\alpha - 1})}{p^{\alpha - 1}}\bigg)
      =
       p^kp^{k(\alpha - 1)} 
        =
         p^{k\alpha}.
\]
Subtracting gives \eqref{eq:cancelp} for $\alpha \ge 1$.
In a similar fashion, we obtain \eqref{eq:cancelp} in the case 
$\bo = \{0\}$.
An analogous argument gives the same results for $p = 2$.
%
%
\begin{nixnix}
\[
 \sum_{h_1 \in \ZZ_{p^{\alpha}}}
  \cdots 
   \sum_{h_k \in \ZZ_{p^{\alpha}}}
    \card \V_{\{0\} \cup \bh}(p^{\alpha})
   =
   \sums[a \in \ZZ_{p^{\alpha}}][a \in S_p][\nu_p(a) < \alpha]
    \sums[h_1 \in \ZZ_{p^{\alpha}}][a + h_1 \in S_p][\nu_p(a + h_1) < \alpha]
     \cdots 
      \sums[h_k \in \ZZ_{p^{\alpha}}][a + h_k \in S_p][\nu_p(a + h_k) < \alpha] 1
       =
        \big(\card \V_{\bz}(p^{\alpha})\big)^{1 + k}.
\]
\end{nixnix}
%
%
%
\begin{nixnix}
Applying the definition \eqref{eq:defTh} of 
$\T_{\bh}(2^{\alpha + 1})$ and changing the order of summation 
yields 
\begin{align*}
  \sum_{h_1 \in \ZZ_{2^{\alpha + 1}}}
   \cdots 
    \sum_{h_k \in \ZZ_{2^{\alpha + 1}}}
     \card \T_{\bh}(2^{\alpha + 1})
  & = 
      \sum_{h_1 \in \ZZ_{2^{\alpha + 1}}}
       \cdots 
        \sum_{h_k \in \ZZ_{2^{\alpha + 1}}}
         \sums[a \in \ZZ_{2^{\alpha + 1}}][\forall i, a + h_i \in S_2][\forall i, \nu_2(a + h_i) < \alpha] 1
 \\ 
  & = 
   \sum_{a \in \ZZ_{2^{\alpha + 1}}}
    \sums[h_1 \in \ZZ_{2^{\alpha + 1}}][a + h_1 \in S_2][\nu_2(a + h_1) < \alpha]
     \cdots 
      \sums[h_k \in \ZZ_{2^{\alpha + 1}}][a + h_k \in S_2][\nu_2(a + h_k) < \alpha] 1
 \\
 & = 
   \sum_{a \in \ZZ_{2^{\alpha + 1}}}
    \sums[a + h_1 \in \ZZ_{2^{\alpha + 1}}][a + h_1 \in S_2][\nu_2(a + h_1) < \alpha]
     \cdots 
      \sums[a + h_k \in \ZZ_{2^{\alpha + 1}}][a + h_k \in S_2][\nu_2(a + h_k) < \alpha] 1   
 \\
 & = 
    \sum_{a \in \ZZ_{2^{\alpha + 1}}}
     (\card \T_{\bz}(2^{\alpha + 1}))
      \cdots 
       (\card \T_{\bz}(2^{\alpha + 1}))
 \\
 & = 
    2^{\alpha + 1}(\card \T_{\bz}(2^{\alpha + 1}))^k.
\end{align*}
\end{nixnix}
%
\end{proof}

In the proof of Proposition \ref{prop:sssa}, we also make use of 
basic lattice point counting arguments, as in the final two 
lemmas below.

\begin{lemma}
 \label{lem:dethap}
Let $\cD$ be as in \eqref{eq:defcD}.
Set $\bo \defeq \emptyset$, or set $\bo \defeq \{0\}$.
Fix an integer $k \ge 1$, and a number $M_k \ge 1$ that depends on 
$k$ only.
Also, fix $B \ge 1$.
For $y \ge 1$, we have
\begin{equation}
 \label{eq:lemdetbnd}
  \sums[d \in \cD][d > y] 
   \frac{M_k^{\omega(d)}}{d\sqfr(d)}
    \sum_{0 < h_1 < \cdots < h_k \le By}
    (\det(\bo \cup \bh),\rad(d))
      \ll_{k,B}
       y^{k - 2/3 + O(1/\log\log 3y)},
\end{equation}
where $\bh = \{h_1,\ldots,h_k\}$ in the summand.
\end{lemma}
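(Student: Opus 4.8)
The plan is to interchange the two summations and reduce the estimate to the bound \eqref{eq:realbnd} of Lemma \ref{lem:omegabnd}, applied with $D = \det(\bo \cup \bh)$. Writing the left-hand side of \eqref{eq:lemdetbnd} as
\[
 \sum_{0 < h_1 < \cdots < h_k \le By}\
  \sum_{d \in \cD,\ d > y}
   \frac{M_k^{\omega(d)}\,(\det(\bo \cup \bh),\rad(d))}{d\,\sqfr(d)},
\]
I would first handle, for each fixed $\bh$, the inner sum over $d$. By \eqref{eq:defcD} every $d \in \cD$ either lies in $\cN_1 \subseteq \cN$ or equals $2n$ with $n \in \cN_1$ and $2 \mid n$; since $\rad(2n) = \rad(n)$, $\omega(2n) = \omega(n)$ and $(2n)\sqfr(2n) \ge n\,\sqfr(n)$ whenever $2 \mid n$, the inner sum is, up to an absolute constant and after replacing $y$ by $y/2$, dominated by a sum of exactly the shape appearing on the left of \eqref{eq:realbnd} with $A = M_k$. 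Hence, uniformly in $\bh$,
\[
 \sum_{d \in \cD,\ d > y}
  \frac{M_k^{\omega(d)}\,(\det(\bo \cup \bh),\rad(d))}{d\,\sqfr(d)}
   \ll_k
    (1 + M_k)^{2\omega(\det(\bo \cup \bh))}\,
     \frac{y^{O(1/\log\log 3y)}}{y^{2/3}}.
\]

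The second step is to absorb the factor $(1 + M_k)^{2\omega(\det(\bo \cup \bh))}$ into the error exponent. Every element of $\bo \cup \bh$ lies in $[0, By]$, so $\det(\bo \cup \bh)$ is a product of at most $\binom{k+1}{2}$ positive integers each at most $By$, whence $\det(\bo \cup \bh) \le (By)^{\binom{k+1}{2}}$; combined with the elementary bound $\omega(m) \ll \log m/\log\log m$ this gives $\omega(\det(\bo \cup \bh)) \ll_{k,B} \log 3y/\log\log 3y$ uniformly in $\bh$ (for $y$ in any bounded range the asserted estimate is trivial, its right-hand side being $\gg 1$, so we may assume $y$ large). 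As $M_k$ depends on $k$ only, it follows that $(1 + M_k)^{2\omega(\det(\bo \cup \bh))} \ll_{k,B} y^{O(1/\log\log 3y)}$. Feeding this back and estimating $\#\{(h_1,\ldots,h_k) : 0 < h_1 < \cdots < h_k \le By\} \ll_{k,B} y^k$ trivially, one gets that the left-hand side of \eqref{eq:lemdetbnd} is
\[
 \ll_{k,B} \frac{y^{O(1/\log\log 3y)}}{y^{2/3}}\cdot y^k
  = y^{k - 2/3 + O(1/\log\log 3y)},
\]
which is \eqref{eq:lemdetbnd}.

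I do not expect a real obstacle here: once the order of summation is reversed the argument is routine, and the lossy steps — replacing each summand by its maximum, and passing from $\cD$ to $\cN$ at the cost of a factor $2$ — cost nothing in view of the $y^{o(1)}$ room in the target. The one point that genuinely needs care is the bound $\omega(\det(\bo\cup\bh)) \ll_{k,B} \log y/\log\log y$, uniform in $\bh$: this is exactly what neutralises the factor $(1 + A)^{2\omega(D)}$, which is the weak feature of \eqref{eq:realbnd}, and a cruder divisor bound such as $\omega(m) \le \log_2 m$ would be too weak to close the estimate.
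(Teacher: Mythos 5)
Your argument is correct, but it reaches \eqref{eq:lemdetbnd} by a genuinely different route than the paper. The paper keeps the sum over $d$ outermost and, for each fixed $d$, estimates $\sum_{\bh}(\det(\bo\cup\bh),\rad(d))$ by expanding the gcd as $\sum_{c\mid\det,\,c\mid\rad(d)}c$ and invoking the Chinese-remainder lattice-point count \eqref{eq:lemdethappf1}: the number of tuples with $c\mid\det(\{0,h_1,\ldots,h_k\})$ is at most $k^{2\omega(c)}\big((By)^k/c+O_k((By)^{k-1})\big)$, so the factor $c$ cancels on average over $\bh$ and the gcd contributes only $A_k^{\omega(d)}$ to the main term; Lemma \ref{lem:omegabnd} is then applied with $D=1$, and a secondary term of size $y^{k-1}(\log 3y)^{A_k}$ is handled separately (this is \eqref{eq:lemdetpf1}--\eqref{eq:lemdetpf3}). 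You instead fix $\bh$, apply \eqref{eq:realbnd} at full strength with $D=\det(\bo\cup\bh)$, and neutralise the factor $(1+M_k)^{2\omega(D)}$ by the pointwise bound $\omega(\det(\bo\cup\bh))\ll_{k,B}\log 3y/\log\log 3y$, valid because $\det(\bo\cup\bh)\le(By)^{\binom{k+1}{2}}$. This is sound: your reduction from $\cD$ to $\cN$ via $\rad(2n)=\rad(n)$, $\omega(2n)=\omega(n)$ and $(2n)\sqfr(2n)\ge n\sqfr(n)$ is exactly right, and the divisor-count step is the genuinely necessary ingredient, as you observe. Your route is shorter and avoids \eqref{eq:lemdethappf1} for this lemma (though the paper needs that count anyway for the $d\le y$ boundary-point analysis in the proof of Proposition \ref{prop:sssa}). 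The one formal loss is that your implied constant in the exponent $O(1/\log\log 3y)$ depends on $k$ and $B$ — the factor $(1+M_k)^{2\omega(\det(\bo\cup\bh))}$ really can be of size $y^{c_k/\log\log y}$ with $c_k\to\infty$ — whereas in the paper's version that constant is absolute, all $k$-dependence sitting in the $\ll_{k,B}$ prefactor. This makes no difference to the application, where the error is only used as $y^{-2/3+o(1)}$ for fixed $k$ and $\sC$.
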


\begin{proof}
Let $y \ge 1$.
Let us first show that, for any squarefree integer $c \ge 1$, 
\begin{equation}
 \label{eq:lemdethappf1}
 \underset{c \mid \det(\{0,h_1,\ldots,h_k\})}
  { 
   \sum_{0 < h_1 < \cdots < h_k \le By}
  } 1
  \le  
   k^{2\omega(c)}
    \bigg(\frac{(By)^k}{c} + O_k\big((By)^{k - 1}\big)\bigg).
\end{equation}
Let $h_0 = 0,h_1,\ldots,h_k$ be pairwise distinct integers and  
suppose that $c$ divides $\prod_{0 \le i < j \le k}(h_i - h_j)$.
Then, since $c$ is squarefree, there exist pairwise coprime 
positive integers $c_{ij}$ such that 
$c = \prod_{0 \le i < j \le k} c_{ij}$ and 
$c_{ij} \mid h_i - h_j$, $0 \le i < j \le k$.
Therefore, 
\[
 \underset{c \mid \det(\{h_0,h_1,\ldots,h_k\})}
  { 
   \sum_{0 < h_1 < \cdots < h_k \le By}
  } 1
   \le 
    \sums[c = c_{01}\cdots c_{(k-1)k}] 
     \hspace{5pt}
     \underset{0 \le i < j \le k - 1 \implies c_{ij} \mid h_i - h_j}
    { 
       \sum_{h_1 \in I_{By}}
        \sum_{h_2 \in I_{By}}
         \cdots 
          \sum_{h_{k - 1} \in I_{By}}
    } 
     \hspace{5pt}
      \sums[h_k \in I_{By}][0 \le i \le k - 1 \implies c_{ik} \mid h_i - h_k] 1,
\]
where on the right-hand side, the outermost sum is over all 
decompositions of $c$ as a product of $\binom{k + 1}{2}$ positive 
integers, and $I_{By} \defeq (0,By]$.

Consider the decomposition $c = c_{01}\cdots c_{(k - 1)k}$.
Let us define $c_{j} \defeq \prod_{i = 0}^{j - 1} c_{ij}$ for 
$j = 1,\ldots,k$.
Notice that $c = \prod_{j = 1}^k c_j$.
By the Chinese remainder theorem, the condition on $h_k$ in the 
innermost sum above is equivalent to $h_k$ being in some 
congruence class modulo $c_k$, uniquely determined by 
$h_0,h_1,\ldots,h_{k - 1}$.
The sum is therefore equal to $By/c_k + O(1)$.
Iterating this argument $k$ times we see that the inner sum over 
$h_1,\ldots,h_k$ is equal to  
\[
  \prod_{j = 1}^k
   \bigg(\frac{By}{c_j} + O(1)\bigg)
    =
     \frac{(By)^k}{c} + O_k((By)^{k - 1}).
\]
The bound \eqref{eq:lemdethappf1} follows by combining and noting 
that, since $c$ is squarefree, the number of ways of writing $c$ 
as a product of $\binom{k + 1}{2}$ positive integers is 
$\binom{k + 1}{2}^{\omega(c)}$, and that 
$\binom{k + 1}{2} \le k^2$.

For $\bh = \{h_1,\ldots,h_k\}$, with $h_1,\ldots,h_k$ pairwise 
distinct, nonzero integers, and any $d \in \NN$, we trivially have 
$
 (\det(\bo \cup \bh),\rad(d)) 
   \le 
    \sum_{c \mid \det(\{0,h_1,\ldots,h_k\}), \, \rad(d)} c
$.
If $h_1,\ldots,h_k \le By$ as well, then $p \mid c$ implies 
$p \le By$.
From this and \eqref{eq:lemdethappf1}, it follows that 
\[
 \sum_{0 < h_1 < \cdots < h_k \le By} 
  (\det(\bo \cup \bh,\rad(d))
   \ll_{k,B}
    y^k
     \sum_{c \mid \rad(d)} k^{2\omega(c)}
      +
        y^{k - 1}
         \sums[c \mid \rad(d)][p \mid c \implies p \le By]
           ck^{2\omega(c)},
\]
where $\bh = \{h_1,\ldots,h_k\}$ in the summand on the left.
Now, for $c \mid \rad(d)$ we have 
$k^{2\omega(c)} \le k^{2\omega(d)}$, and 
$\sum_{c \mid \rad(d)} 1 = 2^{\omega(d)}$.
Applying these bounds to the left-hand side of 
\eqref{eq:lemdetbnd}, we see that it is 
\begin{equation}
 \label{eq:lemdetpf1}
      \ll_{k,B}
       y^k
        \sums[d \in \cD][d > y] 
         \frac{A_k^{\omega(d)}}{d\sqfr(d)}
        +
           y^{k - 1} 
            \sum_{d \in \cD}
             \frac{A_k^{\omega(d)}}{d\sqfr(d)}
              \sums[c \mid \rad(d)][p \mid c \implies p \le By] c, 
\end{equation}
where $A_k$, here and below, denotes a sufficiently large number 
depending on $k$, which may be a different number at each 
occurrence.

By definition \eqref{eq:defcD} of $\cD$, for every $d \in \cD$, we 
have $d = n$ or $d = 2n$ for some $n \in \cN$, where $\cN$ is 
as in \eqref{eq:defcN}. 
Therefore, as a direct consequence of Lemma \ref{lem:omegabnd}, we 
have 
\begin{equation}
 \label{eq:lemdetpf2}
 \sums[d \in \cD][d > y] 
  \frac{A_k^{\omega(d)}}{d\sqfr(d)}
   \ll_k 
    \frac{y^{O(1/\log\log 3y)}}{y^{2/3}}.
\end{equation}
More specifically, for every $d \in \cD$, we have 
$d = ab^2\rad(b)$ or $d = 2ab^2\rad(b)$ for some uniquely 
determined $a,b \in \NN$, where $a$ is squarefree and $(a,b) = 1$.
Furthermore, $d$ is not exactly divisible by $2$, and so we have 
$2 \nmid a$ in the case $d = ab^2\rad(b)$, while $2 \mid ab$ in 
the case $d = 2ab^2\rad(b)$.
In either case, we have the following:
$A_k^{\omega(d)} = A_k^{\omega(a)}A_k^{\omega(b)}$;  
$d\sqfr(d) = a^2b^2\rad(b)$ or 
$d\sqfr(d) = 2a^2b^2\rad(b)$; and $\rad(d) = a\rad(b)$.
Thus, if $c \mid \rad(d)$, then $c = c_1c_2$, where $c_1 \mid a$
and $c_2 \mid \rad(b)$.
Consequently,  
\[
 \sum_{d \in \cD}
  \frac{A_k^{\omega(d)}}{d\sqfr(d)}
   \sums[c \mid \rad(d)][p \mid c \implies p \le By] c
    \ll
     \sums[a \ge 1][\text{squarefree}] 
      \frac{A_k^{\omega(a)}}{a^2}
       \sum_{b \ge 1} \frac{A_k^{\omega(b)}}{b^2\rad(b)}
        \sums[c_1 \mid a][p \mid c_1 \implies p \le By] c_1
         \sums[c_2 \mid \rad(b)][p \mid c_2 \implies p \le By] c_2.
\]

Now,  
\[
 \sums[a \ge 1][\text{squarefree}] 
  \frac{A_k^{\omega(a)}}{a^2}
   \sums[c_1 \mid a][p \mid c_1 \implies p \le By] c_1
    \le 
     \sums[c_1 \ge 1][\text{squarefree}][p \mid c_1 \implies p \le By] 
      \frac{A_k^{\omega(c_1)}}{c_1}
       \sums[a_1 \ge 1][\text{squarefree}] 
        \frac{A_k^{\omega(a_1)}}{a_1^2}
         \ll_k
          \sums[c_1 \ge 1][\text{squarefree}][p \mid c_1 \implies p \le By] 
           \frac{A_k^{\omega(c_1)}}{c_1}; 
\]
as can be seen by writing $a = a_1c_1$ and changing order of 
summation; also 
\[
 \sums[c_1 \ge 1][\text{squarefree}][p \mid c_1 \implies p \le By] 
  \frac{A_k^{\omega(c_1)}}{c_1}
  \le 
   \prod_{p \le By}
    \bigg(1 + \frac{A_k}{p}\bigg)
        \ll_{k,B} (\log 3y)^{A_k}. 
\]
(See \eqref{eq:mert}.)
Next, note that since 
$
 \sum_{c_2 \mid \rad(b)} c_2 
  \le 
   \rad(b) \sum_{c_2 \mid \rad(b)} 1
    \le 
     2^{\omega(b)}\rad(b)
$, 
\[
 \sum_{b \ge 1} \frac{A_k^{\omega(b)}}{b^2\rad(b)}
  \sums[c_2 \mid \rad(b)][p \mid c_2 \implies p \le By] c_2 
   \le 
    \sum_{b \ge 1} \frac{A_k^{\omega(b)}}{b^2}
     \sum_{c_2 \mid \rad(b)} 1
      \le 
       \sum_{b \ge 1} \frac{A_k^{\omega(b)}}{b^2}
        \ll_k 1.
\]
%
Combining all of this gives 
\begin{equation}
 \label{eq:lemdetpf3}
  \sum_{d \in \cD}
   \frac{A_k^{\omega(d)}}{d\sqfr(d)}
    \sums[c \mid \rad(d)][p \mid c \implies p \le By] c
     \ll_{k,B}
      (\log 3y)^{A_k}.
\end{equation}
Finally, we obtain \eqref{eq:lemdetbnd} by combining 
\eqref{eq:lemdetpf1} with \eqref{eq:lemdetpf2} and 
\eqref{eq:lemdetpf3}.
\end{proof}

\begin{lemma}
 \label{lem:lip}
Fix an integer $k \ge 1$ and a bounded convex set 
$\sC \subseteq \RR^k$.
For $y \ge 1$ we have 
$
 \#(y\sC \cap \ZZ^k)
  =
   y^k\vol(\sC) + O_{k,\sC}(y^{k - 1}).
$
\end{lemma}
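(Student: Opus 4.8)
The plan is to prove, by induction on $k$, the more flexible statement that
\[
 \big| \#(K \cap \ZZ^k) - \vol(K) \big|
  \ll_k
   \big(1 + \operatorname{diam}(K)\big)^{k-1}
\]
for every bounded convex set $K \subseteq \RR^k$; the lemma then follows on taking $K = y\sC$, since $\operatorname{diam}(y\sC) = y\operatorname{diam}(\sC) \ll_{\sC} y$ for $y \ge 1$, and $\vol(y\sC) = y^k\vol(\sC)$. The base case $k = 1$ is immediate, since for any bounded interval $I$ the count $\#(I \cap \ZZ)$ differs from its length by at most $1$.

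For the inductive step I would slice $K$ by the hyperplanes perpendicular to the last coordinate axis. For $v' \in \ZZ^{k-1}$ the fibre $K_{v'} \defeq \{t \in \RR : (v',t) \in K\}$ is a bounded interval, by convexity, and $\#(K_{v'} \cap \ZZ)$ differs from $\operatorname{len}(K_{v'})$ by at most $1$ when $K_{v'} \ne \emptyset$, and is $0$ otherwise. Writing $g(x) \defeq \operatorname{len}(K_x)$ for $x \in \RR^{k-1}$, so that $\int_{\RR^{k-1}} g = \vol(K)$ by Fubini, summing over $v'$ gives
\[
 \Big| \#(K \cap \ZZ^k) - \sum_{v' \in \ZZ^{k-1}} g(v') \Big|
  \le \#\{v' \in \ZZ^{k-1} : K_{v'} \ne \emptyset\}
   = \#\big(\pi(K) \cap \ZZ^{k-1}\big),
\]
where $\pi \colon \RR^k \to \RR^{k-1}$ forgets the last coordinate. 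Since $\pi(K)$ is bounded convex of diameter at most $\operatorname{diam}(K)$, it lies in a cube of side $2\operatorname{diam}(K)$, so the right-hand side is $\ll_k (1 + \operatorname{diam}(K))^{k-1}$, and it remains to compare $\sum_{v'} g(v')$ with $\int g$.

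For this I would use that $g$ is \emph{concave} on the convex set $\pi(K)$: if $t_i \in K_{x_i}$ for $i = 0,1$ then the midpoint of $(x_0,t_0)$ and $(x_1,t_1)$ lies in $K$, so $K_{(x_0+x_1)/2} \supseteq \tfrac12(K_{x_0} + K_{x_1})$, and Minkowski sums of intervals add lengths. Also $0 \le g \le \operatorname{diam}(K)$. Hence each superlevel set $\{g > s\}$ is a bounded convex subset of $\RR^{k-1}$ of diameter at most $\operatorname{diam}(K)$, and it is empty once $s \ge \operatorname{diam}(K)$. By the layer-cake formula and Tonelli,
\[
 \sum_{v' \in \ZZ^{k-1}} g(v')
  = \int_0^{\infty} \#\big(\{g > s\} \cap \ZZ^{k-1}\big)\,\dd{s},
   \qquad
 \int_{\RR^{k-1}} g
  = \int_0^{\infty} \vol\big(\{g > s\}\big)\,\dd{s}.
\]
The induction hypothesis, applied to the convex set $\{g > s\} \subseteq \RR^{k-1}$ for each $s$, bounds the two integrands' difference by $O_k\big((1 + \operatorname{diam}(K))^{k-2}\big)$, uniformly for $s \in [0, \operatorname{diam}(K)]$; integrating over this interval of length at most $\operatorname{diam}(K)$ yields $\big|\sum_{v'} g(v') - \int g\big| \ll_k (1 + \operatorname{diam}(K))^{k-1}$, and combining with the previous display completes the induction.

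The one genuinely delicate point is the passage from $\sum_{v'} g(v')$ to its integral. A naive term-by-term estimate only bounds the difference by the total oscillation of $g$ over the unit cells meeting $\pi(K)$, which could a priori be as large as $\operatorname{diam}(K)^{k}$ and would lose a factor $\operatorname{diam}(K)$. Using the convexity of the superlevel sets --- so that the lower-dimensional case of the statement being proved applies to them --- is exactly what restores the correct exponent $k-1$. (This is the usual ``Lipschitz principle'' for counting lattice points in convex bodies; one could alternatively cite a standard reference and omit the argument.)
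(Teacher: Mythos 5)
Your argument is correct, and it is a genuinely different route from the paper's: the paper disposes of this lemma in one line by citing Lang \cite[pp.\ 128--129]{LAN:94}, whose proof runs through Lipschitz parametrizability of the boundary --- one covers $\partial(y\sC)$ by images of $O_{k,\sC}(y^{k-1})$ Lipschitz maps and observes that only $O_{k,\sC}(y^{k-1})$ unit cells can meet the boundary, each contributing $O(1)$ to the discrepancy between count and volume. Your slicing induction avoids any boundary regularity discussion entirely; the key substitute is the concavity of the fibre-length function $g$ (a one-dimensional Brunn--Minkowski observation), which makes the superlevel sets $\{g > s\}$ convex so that the $(k-1)$-dimensional case of your statement applies to them uniformly in $s$, and the layer-cake identity then converts the sum-versus-integral comparison into an integral of lower-dimensional discrepancies over $s \in [0,\operatorname{diam}(K)]$. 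All the steps check out, including the exponent bookkeeping ($\operatorname{diam}(K)\cdot(1+\operatorname{diam}(K))^{k-2}$ plus the projection count $(1+\operatorname{diam}(K))^{k-1}$), and your formulation in terms of $\operatorname{diam}(K)$ is what makes the induction close. One thing worth noting: the paper later invokes not just the statement of this lemma but a by-product of Lang's proof, namely that only $\ll_{k,\sC}(y/d)^{k-1}$ of the side-$d$ cubes $C_{d,\vbt}$ meet $\partial(y\sC)$ (used to control the $d$-boundary points in the proof of Proposition \ref{prop:sssa}); your induction does not directly produce that cube-counting fact, so in the context of the paper the Lipschitz-boundary argument is doing extra work that a pure counting lemma does not capture.
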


\begin{proof}
This is a special case of \cite[pp.\ 128--129]{LAN:94}.
\end{proof}

\begin{proof}[Proof of Proposition \ref{prop:sssa}]
Fix an integer $k \ge 1$ and a bounded convex set 
$\sC \subseteq \Delta^k$, where 
$
 \Delta^k \defeq \{(x_1,\ldots,x_k) \in \RR^k : 0 < x_1 < \cdots < x_k\}
$ 
(see \eqref{eq:defsimplex}). 
Set $\bo \defeq \emptyset$ or set $\bo \defeq \{0\}$.
Let $y \ge 1$.
To ease notation throughout, let $\cH \defeq y\sC \cap \ZZ^k$, 
$\vbh = (h_1,\ldots,h_k)$, and $\bh = \{h_1,\ldots,h_k\}$.
Note that $0 < h_1 < \cdots < h_k \ll_{\sC} y$ for $\vbh \in \cH$.
Also, let $A_k$ stand for a sufficiently large number depending on 
$k$, which may be a different number at each occurrence.

In view of \eqref{eq:ssassum} we see, upon partitioning the sum 
over $d$ and changing order of summation, that  
\begin{equation}
 \label{eq:lemsssapf1}
 \sum_{\vbh \in \cH} \mathfrak{S}_{\bo \cup \bh}
  =
   \sum_{\vbh \in \cH} 1
    +
      \sums[d \in \cD][d \le y]
       \sum_{\vbh \in \cH} \epsilon_{\bo \cup \bh}(d)
      +
       \sums[d \in \cD][d > y]
        \sum_{\vbh \in \cH} \epsilon_{\bo \cup \bh}(d),
\end{equation}
with $\cD$ as defined in \eqref{eq:defcD}.
By Lemma \ref{lem:lip}, we have 
\begin{equation}
 \label{eq:volH}
  \sum_{\vbh \in \cH} 1
   =
    y^k \vol(\sC) + O_{k,\sC}(y^{k - 1}).
\end{equation}
By \eqref{eq:iotad} and Lemma \ref{lem:dethap}, we have 
\begin{equation}
 \label{eq:sum3}
 \sums[d \in \cD][d > y]
  \sum_{\vbh \in \cH} |\epsilon_{\bo \cup \bh}(d)|
   \le 
    \sums[d \in \cD][d > y]
     \sum_{\vbh \in \cH} A_k^{\omega(d)}\frac{(\det(\bo \cup \bh),\rad(d))}{d\sqfr(d)}
      \ll_{k,\sC} 
       y^{k - 1}\frac{y^{O(1/\log\log 3y)}}{y^{2/3}}.
\end{equation}

Consider the middle sum on the right-hand side of 
\eqref{eq:lemsssapf1}.
Let $d$ be any element of $\cD$ with $d \le y$, and partition 
$\RR^k$ into cubes 
\[
 C_{d,\vbt} 
  \defeq 
   \{(x_1,\ldots,x_k) \in \RR^k : t_id \le x_i < (t_i + 1)d, i = 1,\ldots,k\},
\]
with $\vbt \defeq (t_1,\ldots,t_k)$ running over $\ZZ^k$.
Each $\vbh \in \cH$ is a point in a unique cube of this form: we 
call $\vbh$ a {\em $d$-interior} point if this cube is entirely 
contained in $y\sC$, and $\vbh$ a {\em $d$-boundary} point if this 
cube has a nonempty intersection with the boundary of $y\sC$.
We partition $\cH$ into $d$-interior points and $d$-boundary 
points.
As $\vbh$ runs over all $d$-interior points of $\cH$, $h_i$ 
($i = 1,\ldots,k$) runs over a pairwise disjoint union of complete 
residue systems modulo $d$, none of which contain $0$.
By Lemma \ref{lem:cancel} (we have 
$\card(\bo \cup \bh) = \ocard \bo + k$ for each $\vbh \in \cH$), 
it follows that 
\begin{equation}
 \label{eq:2ndlast}
  \sums[d \in \cD][d \le y] 
   \sum_{\vbh \in \cH}
    \epsilon_{\bo \cup \bh}(d)
  =
   \sums[d \in \cD][d \le y]
    \sums[\vbh \in \cH][\text{$d$-boundary}]
     \epsilon_{\bo \cup \bh}(d).
\end{equation}

By \eqref{eq:iotad}, and the aforementioned trivial bound for 
$(\det(\bo \cup \bh),\rad(d))$,  
\begin{align*} 
 \sums[d \in \cD][d \le y]
  \sums[\vbh \in \cH][\text{$d$-boundary}]
  |\epsilon_{\bo \cup \bh}(d)|
 & 
    \le 
     \sums[d \in \cD][d \le y] \frac{A_k^{\omega(d)}}{d\sqfr(d)}
      \sums[\vbh \in \cH][\text{$d$-boundary}] 
      (\det(\bo \cup \bh),\rad(d))
 \\
 & 
         \le 
          \sums[d \in \cD][d \le y] \frac{A_k^{\omega(d)}}{d\sqfr(d)}
           \sum_{c \mid \rad(d)} c
            \sums[\vbh \in \cH][\text{$d$-boundary}][c \mid \det(\{0,h_1,\ldots,h_k\})] 1.
\end{align*}
For each $d \in \cD$ with $y/d \ge 1$, the proof of 
Lemma \ref{lem:lip} (see \cite[pp.\ 128--129]{LAN:94}) shows that 
there are $\ll_{k,\sC} (y/d)^{k - 1}$ cubes $C_{d,\vbt}$ that have 
a nonempty intersection with the boundary of $y\sC$.
For each such boundary cube $C_{d,\vbt}$, the corresponding 
$d$-boundary points are all in $C_{d,\vbt} \cap \ZZ^k$, which is a 
product of complete residue systems modulo $d$, and, given that 
$c \mid \rad(d)$ (and hence $c \mid d$), the condition 
$c \mid \det(\{0,h_1,\ldots,h_k\})$ is equivalent to 
$c \mid \det(\{0,h'_1,\ldots,h'_k\})$ when 
$h_i \equiv h'_i \bmod d$, $i = 1,\ldots,k$.

If follows that, for $d \in \cD$ with $d \le y$, and for 
$c \mid \rad(d)$, we have 
\[
 \sums[\vbh \in \cH][\text{$d$-boundary}][c \mid \det(\{0,h_1,\ldots,h_k\})] 1
  \ll_{k,\sC}
   \frac{y^{k - 1}}{d^{k - 1}}
    \sums[0 < h_1 < \cdots < h_k \le d][c \mid \det(\{0,h_1,\ldots,h_k\})] 1
     \ll_k 
      y^{k - 1} d \bigg(\frac{A_k^{\omega(c)}}{c}\bigg)
\]
by \eqref{eq:lemdethappf1}. 
Whence 
\[
 \sums[d \in \cD][d \le y]
  \sums[\vbh \in \cH][\text{$d$-boundary}]
  |\epsilon_{\bo \cup \bh}(d)|
   \ll_{k,\sC}
    y^{k - 1} 
      \sums[d \in \cD][d \le y] \frac{A_k^{\omega(d)}}{\sqfr(d)}
       \sum_{c \mid \rad(d)} A_k^{\omega(c)}
        \le 
         y^{k - 1} 
          \sums[d \in \cD][d \le y] \frac{A_k^{\omega(d)}}{\sqfr(d)},
\]
since $\sum_{c \mid \rad(d)} A_k^{\omega(c)}$ is at most 
$
  A_k^{\omega(d)} \sum_{c \mid \rad(d)} 1
   = 
    (2A_k)^{\omega(d)}
$.
By \eqref{eq:realbnd2}, this last sum is 
$\ll_k y^{1/3 + O(1/\log\log 3y)}$.
Combining, we obtain 
\begin{equation}
 \label{eq:last}
  \sums[d \in \cD][d \le y] 
   \sum_{\vbh \in \cH}
    \epsilon_{\bo \cup \bh}(d)
     \ll_{k,\sC}
      y^{k - 1}y^{1/3 + O(1/\log\log 3y)}.
\end{equation}
Combining \eqref{eq:lemsssapf1} with \eqref{eq:volH}, 
\eqref{eq:sum3}, and \eqref{eq:last} gives \eqref{eq:sssa}.
\end{proof}


\begin{nix}
\appendix 

\section{Some elementary verifications}
 \label{sec:A1}
\end{nix}
 
%
\begin{nixnix}
\begin{proof}[Deduction of Theorem \ref{thm:main} (b) in detail]
(b) 
To ease notation, we let $\vbi = (i_1,\ldots,i_r)$, 
$\vba = (a_1,\ldots,a_r)$, $\vbh = (h_1,\ldots,h_k)$,  
$\bh = \{h_1,\ldots,h_k\}$, and 
\[
 \speccount(\{0\} \cup \bh; x)
  \defeq 
   \sum_{n \le x} 
    \ind{\SS}(n)\ind{\SS}(n + h_1)\cdots \ind{\SS}(n + h_k).
\]
Given $\vbi,\vba \in \NN^r$, let 
\[
 N_{\vbi,\vba}(x)
  \defeq 
   \sums[0 < h_1 < \cdots < h_{i_1 + \cdots + i_r}]
        [h_{i_1 + \cdots + i_j} = a_j, \, j = 1,\ldots,r]
     \hspace{5pt}
      \sum_{n \le x}
       \ind{\SS}(n)\ind{\SS}(n + h_1)\cdots \ind{\SS}(n + h_{i_1 + \cdots + i_r}).
\]
Let $\ell \ge 0$ be an integer, arbitrarily large but fixed.
We claim that 
\begin{equation}
 \label{eq:inc-exc}
 \sum_{k = r}^{r + 2\ell + 1}
  (-1)^{k - r}
   \sum_{i_1 + \cdots + i_r = k}
    N_{\vbi,\vba}(x)
     \le 
      \sums[\sts_n \le x]
           [\sts_{n + j} - \sts_n = a_j]
           [j = 1,\ldots,r] 1
       \le
        \sum_{k = r}^{r + 2\ell}
         (-1)^{k - r}
          \sum_{i_1 + \cdots + i_r = k}
           N_{\vbi,\vba}(x) 
\end{equation}
for any $\vba \in \NN^r$ with $a_1 < \cdots < a_r$, the inner 
sums (here and below) are over all $\vbi \in \NN^r$ such that 
$i_1 + \cdots + i_r = k$.
Now, 
\begin{equation}
 \label{eq:inc-exc2}
  \sums[\sts_n \le x]
       [\sts_{n + j} - \sts_{n + j - 1} \le \lambda_j y]
       [j = 1,\ldots,r] 1
   =
     \sums[\vba \, \in \, \NN^r]
          [0 < a_j - a_{j - 1} \le \lambda_j y]
          [j = 1,\ldots,r]  
       \hspace{5pt} 
        \sums[\sts_n \le x]
             [\sts_{n + j} - \sts_n = a_j]
             [j = 1,\ldots,r] 1,
\end{equation}
where $a_0 \defeq 0$.
Given $\vbi \in \NN^r$ with $i_1 + \cdots + i_r = k$ we 
have, with $\Theta_{\vbi,\vbl}$ as in \eqref{eq:defThet}, 
\begin{equation}
 \label{eq:inc-exc3}
 \sums[\vba \, \in \, \NN^r]
      [0 < a_j - a_{j - 1} \le \lambda_j y]
      [j = 1,\ldots,r]
  N_{\vbi,\vba}(x)
   =
    \sum_{\vbh \, \in \, y\Theta_{\vbi,\vbl} \cap \, \ZZ^k}
        \speccount(\{0\} \cup \bh; x).
\end{equation}
Combining \eqref{eq:inc-exc}, \eqref{eq:inc-exc2} and 
\eqref{eq:inc-exc3}, then changing order of summation, we obtain 
\begin{align}
 \begin{split}
  \label{eq:inc-exc4}
 & 
   \sum_{k = r}^{r + 2\ell + 1}
    (-1)^{k - r}
     \sum_{i_1 + \cdots + i_r = k}
      \sum_{\vbh \, \in \, y\Theta_{\vbi,\vbl} \cap \, \ZZ^k}
        \speccount(\{0\} \cup \bh; x)
 \\
 & \hspace{30pt} 
  \le 
     \sums[\sts_n \le x]
          [\sts_{n + j} - \sts_{n + j - 1} \le \lambda_j y]
          [j = 1,\ldots,r] 1 
   \le 
    \sum_{k = r}^{r + 2\ell}
     (-1)^{k - r}
      \sum_{i_1 + \cdots + i_r = k}
       \sum_{\vbh \, \in \, y\Theta_{\vbi,\vbl} \cap \, \ZZ^k}
        \speccount(\{0\} \cup \bh; x).
  \end{split}            
\end{align}

The substitution \eqref{eq:defEterm}, with $\{0\} \cup \bh$ and 
$k + 1$ in place of $\bh$ and $k$, yields 
\begin{align}
 \begin{split}
  \label{eq:inc-exc5}
 & 
   \sum_{k = r}^{r + 2\ell + 1}
    (-1)^{k - r}
     \bigg(\frac{\speccount(x)}{x}\bigg)^k
     \sum_{i_1 + \cdots + i_r = k}
      \sum_{\vbh \, \in \, y\Theta_{\vbi,\vbl} \cap \, \ZZ^k}
       \bigg(\mathfrak{S}_{\{0\} \cup \bh} + \cE_{\{0\} \cup \bh}(x)\bigg)
 \\
 & \hspace{15pt} 
  \le 
   \frac{1}{\speccount(x)}
     \sums[\sts_n \le x]
          [\sts_{n + j} - \sts_{n + j - 1} \le \lambda_j y]
          [j = 1,\ldots,r] 1    
 \\
 & \hspace{30pt}
   \le 
    \sum_{k = r}^{r + 2\ell}
     (-1)^{k - r}
      \bigg(\frac{\speccount(x)}{x}\bigg)^k
       \sum_{i_1 + \cdots + i_r = k}
        \sum_{\vbh \, \in \, y\Theta_{\vbi,\vbl} \cap \, \ZZ^k}
         \bigg(\mathfrak{S}_{\{0\} \cup \bh} + \cE_{\{0\} \cup \bh}(x)\bigg).
  \end{split}            
\end{align}
By applying Hypothesis ($k,\Theta_{\vbi,\vbl},\{0\}$) for all 
$k$ and $\vbi$ satisfying $r \le k \le r + 2\ell + 1$ and  
$i_1 + \cdots + i_r = k$, Proposition \ref{prop:sssa}, and our 
assumption that $y \sim x/\speccount(x)$ as $x \to \infty$, it is 
straightforward to deduce from \eqref{eq:inc-exc5} that 
\begin{align*}
 & 
   \Big(1 + O_{r,\ell,\vbl}\big(\varepsilon(x)\big)\Big)
    \sum_{k = r}^{r + 2\ell + 1}
     (-1)^{k - r}
      \sum_{i_1 + \cdots + i_r = k}
       \vol(\Theta_{\vbi,\vbl})
 \\
 &  \hspace{15pt} 
      \le 
       \frac{1}{\speccount(x)}
          \sums[\sts_n \le x]
               [\sts_{n + j} - \sts_{n + j - 1} \le \lambda_j y]
               [j = 1,\ldots,r] 1
         \le 
          \Big(1 + O_{r,\ell,\vbl}\big(\varepsilon(x)\big)\Big) 
           \sum_{k = r}^{r + 2\ell}
           (-1)^{k - r}
            \sum_{i_1 + \cdots + i_r = k}
             \vol(\Theta_{\vbi,\vbl}), 
\end{align*}
where $\varepsilon(x)$ is some function, not necessarily the same 
as in \eqref{eq:hyp}, such that $\varepsilon(x) \to 0$ as 
$x \to \infty$. 
Consequently, 
\begin{equation}
  \label{eq:inc-exc7}
   \sum_{k = r}^{r + 2\ell + 1}
     (-1)^{k - r}
      \sum_{i_1 + \cdots + i_r = k}
       \vol(\Theta_{\vbi,\vbl})
        \le 
         \liminf_{x \to \infty}
          \frac{1}{\speccount(x)}
           \sums[\sts_n \le x]
                [\sts_{n + j} - \sts_{n + j - 1} \le \lambda_j y]
                [j = 1,\ldots,r] 1
\end{equation}
and 
\begin{equation}
 \label{eq:inc-exc8}
          \limsup_{x \to \infty}
           \frac{1}{\speccount(x)}
            \sums[\sts_n \le x]
                 [\sts_{n + j} - \sts_{n + j - 1} \le \lambda_j y]
                 [j = 1,\ldots,r] 1
           \le 
            \sum_{k = r}^{r + 2\ell}
            (-1)^{k - r}
             \sum_{i_1 + \cdots + i_r = k}
              \vol(\Theta_{\vbi,\vbl}).            
\end{equation}
Since 
$
 \vol(\Theta_{\vbi,\vbl}) 
  = \lambda_1^{i_1}\cdots \lambda_r^{i_r}/(i_1!\cdots i_r!)
$, 
the sums on the left and right of \eqref{eq:inc-exc7} and 
\eqref{eq:inc-exc8} are truncations of the Taylor 
series for $(1 - \e^{-\lambda_1})\cdots (1 - \e^{-\lambda_r})$.
We have chosen $\ell$ arbitrarily large, so we may conclude that 
\eqref{eq:thm:mainc} holds, provided 
Hypothesis ($k,\Theta_{\vbi,\vbl},\{0\}$) does whenever 
$k \ge r$ and $i_1 + \cdots + i_r = k$.

It remains only to prove our claim \eqref{eq:inc-exc}.
Let $\vba \in \NN^r$ with $a_1 < \cdots < a_r$ be given.
First of all note that, for any $\vbi \in \NN^r$, 
\begin{equation}
 \label{Aeq:inc-excb1}
 N_{\vbi,\vba}(x)
  =
   \sum_{n \le x} \ind{\SS}(n)
    \sums[0 < h_1 < \cdots < h_{i_1 + \cdots + i_r}]
         [h_{i_1 + \cdots + i_j} = a_j, \, j = 1,\ldots r]
          \ind{\SS}(n + h_1)\cdots \ind{\SS}(n + h_{i_1 + \cdots + i_r}).
\end{equation}
For $n \in \ZZ$, let $M_j(n,a_j)$ be the number of elements of 
$\SS$ in-between $n + a_{j - 1}$ and $n + a_j$, i.e.\ 
\[
 M_j(n)
  =
   M_j(n,a_j)
    \defeq 
      \card \SS \cap (n + a_{j - 1},n + a_j),    
\] 
for $j = 1,\ldots,r$, where $a_0 \defeq 0$.
From \eqref{Aeq:inc-excb1} it is not difficult to see that, for 
any $\vbi \in \NN^r$, 
\begin{equation}
 \label{Aeq:inc-excb2}
 N_{\vbi,\vba}(x)
  =
   \sum_{n \le x}
    \bigg\{
     \ind{\SS}(n)
      \ind{\SS}(n + a_1)\cdots \ind{\SS}(n + a_r)
       \binom{M_1(n)}{i_1 - 1}
        \cdots 
         \binom{M_r(n)}{i_r - 1} 
    \bigg\},
\end{equation}
where, as usual, $\binom{0}{j} = 1$ for $j \ge 0$ and 
$\binom{m}{j} = 0$ for $j > m$.
Note that, given $n,n + a_1,\ldots,n + a_r \in \SS$, these are 
{\em consecutive} elements of $\SS$, i.e.\ 
$n + a_j = \sts_{t + j}$, $j = 0,\ldots,r$, for some $t$, if and 
only if $M_1(n) = \cdots = M_r(n) = 0$.

Next, for any integer $k \ge r$ and any nonnegative integers 
$M_1,\ldots,M_r$, we have 
\begin{equation}
 \label{Aeq:inc-excb3}
 \sum_{i_1 + \cdots + i_r = k}
  \binom{M_1}{i_1 - 1}
   \cdots 
    \binom{M_r}{i_r - 1}
     =
      \binom{M_1 + \cdots + M_r}{k - r}.
\end{equation}
To see this, note that each summand on the left-hand side is the 
number of ways of choosing $k - r$ objects from a set $X$ of size 
$M_1 + \cdots + M_r$, in such a way that $i_j - 1$ objects are 
chosen from a subset $X_j \subseteq X$ of size $M_j$, and where 
$X = X_1 \cup \cdots \cup X_r$ is a partition.
Summing over all partitions of $k$ into $r$ positive integers, we 
end up with the total number of ways to choose $k - r$ objects 
from $X$, viz.\ the right-hand side.
Also, 
\begin{align}
 \label{Aeq:inc-excb4}
 \sum_{k - r \ge 0} 
  (-1)^{k - r}
   \binom{M_1 + \cdots + M_r}{k - r}
    =
     \begin{cases}
      1 & \text{$M_1 + \cdots + M_r = 0$,} \\
      0 & \text{otherwise,}
     \end{cases}
\end{align}
the left-hand side being the binomial expansion of 
$(1 - 1)^{M_1 + \cdots + M_r}$ in the second case.
Furthermore, for any nonnegative integers $M_1,\ldots,M_r$ we have  
\begin{align}
 \begin{split}
  \label{Aeq:inc-excb5}
 & 
 \sum_{k - r = 0}^{2\ell + 1}
  (-1)^{k - r}
   \binom{M_1 + \cdots + M_r}{k - r}
 \\
 & \hspace{30pt}
    \le 
     \sum_{k - r \ge 0} 
      (-1)^{k - r}
       \binom{M_1 + \cdots + M_r}{k - r}
        \le 
         \sum_{k - r = 0}^{2\ell}
          (-1)^{k - r}
           \binom{M_1 + \cdots + M_r}{k - r},
  \end{split}
\end{align}
as can be verified by using the recurrence relation
$
 \binom{m}{i} = \binom{m - 1}{i} + \binom{m - 1}{i - 1}
$.

Combining \eqref{Aeq:inc-excb2}, \eqref{Aeq:inc-excb3} and 
\eqref{Aeq:inc-excb4} we find, after changing order of summation, 
that 
\begin{equation}
 \label{Aeq:inc-excb6}
  \sums[n \le x][\text{consecutive}] 
   \ind{\SS}(n)\ind{\SS}(n + a_1)\cdots \ind{\SS}(n + a_r)
    =
  \sum_{k - r \ge 0} (-1)^{k - r}
   \sum_{i_1 + \cdots + i_r = k} 
    N_{\vbi,\vba}(x),
\end{equation}
where in the summand on the left-hand side, ``consecutive'' 
indicates summation restricted to those $n$ for which 
$n + a_j = \sts_{t + j}$, $j = 0,\ldots,r$, for some $t$.
Combining \eqref{Aeq:inc-excb2}, \eqref{Aeq:inc-excb3} and 
\eqref{Aeq:inc-excb5} we similarly find that 
\begin{align}
 \begin{split}
  \label{Aeq:inc-excb7}
 & 
  \sum_{k - r = 0}^{2\ell + 1} (-1)^{k - r}
   \sum_{i_1 + \cdots + i_r = k} 
    N_{\vbi,\vba}(x)
 \\
 & \hspace{15pt} 
  \le 
   \sums[n \le x][\text{consecutive}] 
    \ind{\SS}(n)\ind{\SS}(n + a_1)\cdots \ind{\SS}(n + a_r)
  \le 
   \sum_{k - r = 0}^{2\ell} (-1)^{k - r}
    \sum_{i_1 + \cdots + i_r = k} 
     N_{\vbi,\vba}(x).
 \end{split}
\end{align}
These are the claimed inequalities in \eqref{eq:inc-exc}, since 
\[
 \sums[n \le x][\text{consecutive}] 
  \ind{\SS}(n)\ind{\SS}(n + a_1)\cdots \ind{\SS}(n + a_r)
   =
    \sums[\sts_t \le x][\sts_{t + j} - \sts_t = a_j][j = 1,\ldots,r] 1.
\]

\end{proof}
\end{nixnix}
%
%

\begin{nix}
\begin{proof}[Proof of Proposition \ref{prop:S2S3S1}] 
(a)
Trivially, $0 \equiv \sots \bmod 2^{\nu}$ for all $\nu \ge 1$.
Let $\beta \ge 0$ and $m \equiv 1 \bmod 4$.
We have $2^{\beta} = \sots$, and we claim that 
$m \equiv \sots \bmod 2^{\nu}$ for all $\nu \ge 1$.
By Brahmagupta's identity, it follows that 
$2^{\beta}m \equiv \sots \bmod 2^{\nu}$ for all $\nu \ge 1$.
We trivially have $m \equiv \sots \bmod 2^{\nu}$ for $\nu = 1,2$.
For $\nu \ge 2$, if $m \equiv a^2 + b^2 \bmod 2^{\nu}$ then 
$a + b \equiv 1 \bmod 2$ and either 
$m \equiv a^2 + b^2 \bmod 2^{\nu + 1}$ or 
\[
 m \equiv a^2 + b^2 + 2^{\nu}
    \equiv (a + 2^{\nu - 1})^2 + (b + 2^{\nu - 1})^2
     \bmod 2^{\nu + 1}.
\]

Next, suppose $n \ne 0$.
Then $n = 2^{\beta}m$ for some $\beta \ge 0$ and 
$m \equiv \pm 1 \bmod 4$.
We claim that if $2^{\beta}m \equiv \sots \bmod 2^{\beta + 2}$ 
then $m \equiv 1 \bmod 4$.
This holds trivially for $\beta = 0$.
For $\beta \ge 0$, if 
$2^{\beta + 1}m \equiv a^2 + b^2 \bmod 2^{\beta + 3}$ then 
$a \equiv b \bmod 2$ and, letting $c = (a + b)/2$ and 
$d = (a - b)/2$, we see that  
$2^{\beta}m \equiv c^2 + d^2 \bmod 2^{\beta + 2}$.

(b)
Trivially, $0 \equiv \sots \bmod p^{\nu}$ for all $\nu \ge 1$.
Let $\beta \ge 0$ and $m \not\equiv 0 \bmod p$.
We have $p^{2\beta} = \sots$, and we claim that 
$m \equiv \sots \bmod p^{\nu}$ for all $\nu \ge 1$.
In view of Brahmagupta's identity, it follows that 
$p^{2\beta}m \equiv \sots \bmod p^{\nu}$ for all $\nu \ge 1$.
For $\nu = 1$ note that since, by Euclid's lemma, the sets 
\[
 \{m - a^2 \bmod p : a = 0,\ldots,p - 1\}
  \quad 
   \text{and} 
    \quad 
     \{b^2 \bmod p : b = 0,\ldots,p - 1\}
\]
both contain $(p + 1)/2$ congruence classes, their intersection 
must be nonempty.
Hence $m \equiv a^2 + b^2 \bmod p$ for some $a$ and $b$.
For $\nu \ge 1$, if $m \equiv a^2 + b^2 \bmod p^{\nu}$ then 
$m \equiv a^2 + b^2 + p^{\nu}r \bmod p^{\nu + 1}$ for some integer 
$r$ and, without loss of generality, $a \not\equiv 0 \bmod p$ 
(because  $m \not\equiv 0 \bmod p$).
In that case we have $2aa' \equiv 1 \bmod p^{\nu}$ for some 
integer $a'$, and so
$
  m \equiv a^2 + b^2 + p^{\nu}r
     \equiv (a + p^{\nu})^2a'r + b^2 
      \bmod p^{\nu + 1}
$.

Next, suppose $n \ne 0$.
Then $n = p^{\alpha}m$ for some $\alpha \ge 0$ and 
$m \not\equiv 0 \bmod p$.
We claim that if $p^{\alpha}m \equiv \sots \bmod p^{\alpha + 1}$ 
then $\alpha$ is even.
Suppose for a contradiction that 
$p^{\alpha}m \equiv a^2 + b^2 \bmod p^{\alpha + 1}$ but 
$\alpha$ is odd.
Then $a^2 \equiv -b^2 \bmod p$ and so, since $(p - 1)/2$ is even 
(as $p \equiv 3 \bmod 4$), $a^{p - 1} \equiv -b^{p - 1} \bmod p$.
In view of Fermat's little theorem we must have 
$a \equiv b \equiv 0 \bmod p$.
Letting $c = a/p$ and $d = b/p$, we see that 
$p^{\alpha}m \equiv p^2(c^2 + d^2) \bmod p^{\alpha + 1}$.
This gives a contradiction for $\alpha = 1$, and for 
$\alpha \ge 3$ implies that 
$p^{\alpha - 2}m \equiv c^2 + d^2 \bmod p^{\alpha - 1}$.

(c) 
We have $p^{\beta} = \sots$ for all $\beta \ge 0$ by Fermat's 
theorem on sums of two squares.
If $m \not\equiv 0 \bmod p$ then $m \equiv \sots \bmod p$ by 
the argument in the first paragraph of (b).
\end{proof}
\end{nix}

\begin{nix}
\begin{proposition}
 \label{prop:tripadmiss}
Let $h_1,h_2,h_3 \in \ZZ$.
The set $\bh = \{h_1,h_2,h_3\}$ is $\SS$-admissible.
\end{proposition}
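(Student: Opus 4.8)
The plan is to reduce, using the structure theory of Section~\ref{sec:prelims}, to the assertion that $\delta_{\bh}(p)>0$ for every prime $p\not\equiv1\bmod4$, and then to verify this for $k\defeq\card\bh\le3$ one prime at a time; all but the prime $2$ should be routine. By definition $\bh$ is $\SS$-admissible exactly when, for each $p\not\equiv1\bmod4$, there is an $n\in\ZZ$ with $n+\bh\subseteq S_p$, and by Propositions~\ref{prop:S2h}(b) and \ref{prop:Sp3h}(b) --- which express $\delta_{\bh}(p)$ through $\card\T_{\bh}$, $\card\V_{\bh}$, $\card\bh_2$ and $\card\bh_p$ --- the existence of such an $n$ is equivalent to $\delta_{\bh}(p)>0$. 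Primes $p\equiv1\bmod4$ need no attention since $S_p=\ZZ$ (Proposition~\ref{prop:S2S3S1}), so it remains to handle $p=2$ and the primes $p\equiv3\bmod4$.

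For $p\equiv3\bmod4$ I would simply invoke Proposition~\ref{prop:Sp3h}(c), which gives $\delta_{\bh}(p)\ge(1+1/p)^{-1}\bigl(1-\min\{k-1,p\}/p\bigr)$; since every such $p$ satisfies $p\ge3\ge k$, this lower bound equals $(1+1/p)^{-1}(1-(k-1)/p)>0$. (Alternatively one can exhibit $n$ by hand: unless $k=p=3$ the residues $-h_i\bmod p$ do not exhaust $\ZZ/p\ZZ$, so one picks $n\not\equiv-h_i\bmod p$ and each $n+h_i$ is prime to $p$, hence in $S_p$; when $k=p=3$ the $h_i$ occupy all classes mod $3$, and after relabelling so that $h_1\equiv0\bmod3$ the choice $n=-h_1$ makes $n+h_1=0\in S_3$ while $n+h_2,n+h_3$ are prime to $3$.) So $\delta_{\bh}(p)>0$ for all $p\equiv3\bmod4$.

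For $p=2$ the cases $k=1$ and $k=2$ are also easy: given $h_1\ne h_2$, write $h_1-h_2=2^{\beta}m$ with $m$ odd; if $m\equiv1\bmod4$ then $h_1-h_2\in S_2$ and $n=-h_2$ gives $n+h_2=0,\ n+h_1=h_1-h_2\in S_2$, whereas if $m\equiv3\bmod4$ then $h_2-h_1\in S_2$ and $n=-h_1$ works, so $\bh_2\ne\emptyset$ and $\delta_{\bh}(2)>0$. The substantive case is $k=3$, $p=2$. Here, by Proposition~\ref{prop:S2h}(b) with $\alpha\defeq2+\max_{i\ne j}\nu_2(h_i-h_j)$, it suffices to check that $\bh_2\ne\emptyset$ or $\T_{\bh}(2^{\alpha+1})\ne\emptyset$ --- a finite computation --- equivalently to produce a single $n$ with $n+h_i\in S_2$ for $i=1,2,3$. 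I would do this by a finite case analysis after translating so that $0\in\bh$, say $\bh=\{0,a,b\}$: if $a,b\in S_2$ take $n=0$, if $-a,b-a\in S_2$ take $n=-a$, if $-b,a-b\in S_2$ take $n=-b$ (these are the cases $\bh_2\ne\emptyset$); otherwise one splits according to $a,b\bmod4$ and, where that does not yet decide matters, modulo $8,16,\dots$ (only finitely many levels, bounded by $\alpha$). Using $S_2=\{0\}\cup\bigcup_{\beta\ge0}\bigl(2^{\beta}+2^{\beta+2}\ZZ\bigr)$ (a disjoint union), in each sub-case one looks for $n$ with $n\equiv2^{\beta_0}\bmod2^{\beta_0+2}$, $n+a\equiv2^{\beta_1}\bmod2^{\beta_1+2}$, $n+b\equiv2^{\beta_2}\bmod2^{\beta_2+2}$ for suitable exponents $\beta_0,\beta_1,\beta_2\ge0$; such an $n$ exists as soon as these three congruences agree on their common modulus, and the key point is that with only three shifts the exponents can always be adjusted so that they do. (For four shifts this fails: every translate of $\{0,1,2,3\}$ meets the class $3+4\ZZ$, so $\delta_{\{0,1,2,3\}}(2)=0$.)

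The main obstacle is precisely this last case, $p=2$ with $k=3$: it is the one place where avoiding the forbidden residues modulo $p$ does not suffice and one must exploit the self-similar ``$2^{\beta}(1+4\ZZ)$'' shape of $S_2$, tracking the $2$-adic valuations of the differences $h_i-h_j$; everything else is immediate from Propositions~\ref{prop:S2S3S1}, \ref{prop:S2h} and \ref{prop:Sp3h}. I expect nearly all the effort to go into organising that finite verification cleanly.
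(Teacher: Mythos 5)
Your reduction to showing $\delta_{\bh}(p)>0$ for every $p\not\equiv 1\bmod 4$ is the right frame, and your treatment of the odd primes is correct and in fact tidier than the paper's: the unconditional lower bound of Proposition \ref{prop:Sp3h} (c) gives, for $k=3$ and any $p\equiv 3\bmod 4$, $\delta_{\bh}(p)\ge (1+1/p)^{-1}(1-2/p)>0$ in one line, whereas the paper handles $p>3$ by avoiding residues and runs a separate computation modulo $27$ for $p=3$. The cases $k\le 2$ at $p=2$ are also fine.

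The gap is exactly where you locate it, and it is a real one: for $p=2$, $k=3$ the assertion that ``with only three shifts the exponents can always be adjusted'' so that the three congruences $n+h_i\equiv 2^{\beta_i}\bmod 2^{\beta_i+2}$ become simultaneously solvable \emph{is} the content of the proposition at the prime $2$, and your sketch does not establish it --- as your own example $\{0,1,2,3\}$ shows, no soft counting or ``enough room in three congruences'' argument can work, so the claim must be extracted from the arithmetic of the differences. What closes the case is the following. Assume $\bh_2=\emptyset$ (otherwise you are done by taking $n=-h'$ for $h'\in\bh_2$). Since for $n\ne 0$ one has $n\in S_2$ if and only if $-n\notin S_2$, this assumption forces, after relabelling, the cyclic pattern $h_2-h_1\in S_2$, $h_3-h_2\in S_2$, $h_3-h_1\notin S_2$. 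Writing $h_2-h_1=2^{\beta_{21}}m_{21}$ and $h_3-h_2=2^{\beta_{32}}m_{32}$ with $m_{21}\equiv m_{32}\equiv 1\bmod 4$ and adding, the identity $h_3-h_1=(h_3-h_2)+(h_2-h_1)$ together with $h_3-h_1\notin S_2$ pins the valuations down to exactly two configurations: either $\beta_{21}=\beta_{32}=\beta$ and $\nu_2(h_3-h_1)=\beta+1$, or the two valuations differ by exactly one. In each configuration an explicit congruence class for $n$ (for instance $n+h_1\equiv 2^{\beta+2}\bmod 2^{\beta+4}$ in the first) can be written down and checked to put all three of $n+h_1,n+h_2,n+h_3$ in $S_2$. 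Without this valuation analysis your ``finite case analysis'' carries no guarantee of terminating in success, so you must supply it to turn the plan into a proof.
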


\begin{proof}
If $p > 3$ then $\{-h_1,-h_2,-h_3\}$ is not a complete set of 
residues modulo $p$, and by Proposition \ref{prop:S2S3S1}, we have 
$n + \bh \subseteq S_p$ whenever $n + h_i \not\equiv 0 \bmod p$ 
for $i = 1,2,3$.
For $p = 3$, note that the least residues $b$ modulo $3^3$ for 
which $3 \emid b$ are $3,6,12,15,21$ and $24$, and consider the 
set 
\[
 \{-h_i,3 - h_i,6 - h_i,12 - h_i,15 - h_i,21 - h_i,24 - h_i : i = 1,2,3\}.
\]
This consists of representatives of at most 21 distinct 
congruence classes modulo $3^3$.
Thus, there exist integers $n$ such that, for $i = 1,2,3$, we have 
\[
 \text{$n + h_i \not\equiv 0,3,6,12,15,21$ or $24 \bmod 3^3$},
\]
meaning that $\nu_3(n + h_i) = 0$ or $\nu_3(n + h_i) = 2$.
By Proposition \ref{prop:S2S3S1}, for such $n$ we have 
$n + \bh \subseteq S_p$. 

Finally, consider $p = 2$.
By Proposition \ref{prop:S2S3S1}, $n \in S_2$ if and only if 
$n = 0$ or $n = 2^{\beta}m$ with $\beta \ge 0$ and 
$m \equiv 1 \bmod 4$.
Equivalently, $n \in S_2$ if and only if either $n = 0$ or there 
is some $\alpha \ge 0$ such that 
$n \equiv 2^{\beta}m \bmod 2^{\alpha + 2}$, where  
$0 \le \beta \le \alpha$ and $m \equiv 1 \bmod 4$.
Note that for $n \ne 0$, we have $n \in S_2$ if and only if 
$-n \not\in S_2$.

If there is some $h_i \in \bh$ such that 
$-h_i + \bh \subseteq S_2$, then there is nothing more to prove, 
so assume this is not the case.
This means that for each $h_i \in \bh$ there is some $h_j \in \bh$ 
such that $h_j - h_i = 2^{\beta_{ji}}m_{ji}$ with 
$\beta_{ji} \ge 0$ and $m_{ji} \equiv 3 \bmod 4$.
Without loss of generality, suppose $h_3 - h_1 \not\in S_2$.
Then $h_1 - h_3 \in S_2$, so it must be that 
$h_2 - h_3 \not\in S_2$.
Then $h_3 - h_2 \in S_2$, so it must be that 
$h_1 - h_2 \not\in S_2$.
Then $h_2 - h_1 \in S_2$.
We have: 
$h_2 - h_1 = 2^{\beta_{21}}m_{21}$,
$h_3 - h_2 = 2^{\beta_{32}}m_{32}$, and 
\[
 h_3 - h_1 = 2^{\beta_{32}}m_{32} + 2^{\beta_{21}}m_{21} 
           = 2^{\beta_{31}}m_{31},
\]
where $m_{21} \equiv m_{32} \equiv 1 \bmod 4$ and 
$m_{31} \equiv 3 \bmod 4$.
Suppose further, without loss of generality, that 
$\beta_{21} \le \beta_{32}$.

Consider the case $\beta_{21} = \beta_{32} = \beta$ (say).
We have   
$
  2^{\beta}(m_{32} + m_{21}) = 2^{\beta_{31}}m_{31} 
$,
and as $m_{32} + m_{21} \equiv 2 \bmod 4$, we must have 
$\beta_{31} = \beta + 1$. 
Thus, for any $n$ satisfying 
$n + h_1 \equiv 2^{\beta + 2} \bmod 2^{\beta + 4}$, we have  
\[
 n + h_2 \equiv 2^{\beta + 2} + 2^{\beta}m_{21}
          \equiv 2^{\beta}(4 + m_{21})
           \bmod 2^{\beta + 4},
\]
and 
\[
 n + h_3 \equiv 2^{\beta + 2} + 2^{\beta + 1}m_{31}
          \equiv 2^{\beta + 1}(2 + m_{31})
           \bmod 2^{\beta + 4}.
\]
It follows that $n + \bh \subseteq S_2$.

Now consider the case $\beta_{21} < \beta_{32}$.
We have 
$
 2^{\beta_{21}}(2^{\beta_{32} - \beta_{21}}m_{32} + m_{21}) 
  = 2^{\beta_{31}}m_{31} 
$.
Since 
$
 m_{31} \equiv 2^{\beta_{32} - \beta_{21}}m_{32} + m_{21} 
         \equiv 1 \bmod 2
$, 
we must have $\beta_{21} = \beta_{31} = \beta$ (say).
Then 
$
 2^{\beta_{32} - \beta}m_{32} 
  \equiv m_{31} - m_{21} 
   \equiv 3 - 1 
    \equiv 2 \bmod 4
$, 
so in fact $\beta_{32} = \beta + 1$.
In summary:  
$
 h_2 - h_1 = 2^{\beta}m_{21}
$, 
$
 h_3 - h_2 = 2^{\beta + 1}m_{32}
$,
and 
$
 h_3 - h_1 = 2^{\beta}(2m_{32} + m_{21})
$, 
where $m_{21} \equiv m_{32} \equiv 1 \bmod 4$.
If $\beta \ge 2$ take $n$ satisfying 
$n + h_1 \equiv 1 \bmod 2^{\beta + 3}$;
if $\beta = 1$ take $n$ satisfying 
$n + h_1 \equiv 2m_{21} \bmod 32$; 
if $\beta = 0$ take $n$ satisfying 
$n + h_1 \equiv m_{21} \bmod 16$.
In each case we have $n + \bh \subseteq S_2$.
\end{proof}
\end{nix}

\begin{nix}
\begin{proposition}
 \label{prop:S2hest}
Let $\bh = \{h_1,\ldots,h_k\}$ be a set of $k \ge 1$ integers, and 
assume that $0 \le h_1 < \cdots < h_k$.
Let $\alpha = 2 + \max_{i \ne j} \nu_2(h_i - h_j)$.
For $x \ge 1$ we have  
\begin{equation}
 \label{eq:S2hest}
  \frac{1}{x}
   \sums[n \le x][\forall i, n + h_i \in S_2] 1
  =
    \delta_{\bh}(2)
     \bigg\{
      1 + O\bigg(\frac{2^{\alpha}(1 + \card \bh_2\log(x + h_k))}{x}\bigg)
     \bigg\}.
\end{equation}
\end{proposition}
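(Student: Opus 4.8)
The plan is to estimate $N(x):=\#\{n:1\le n\le x,\ n+\bh\subseteq S_2\}$ by partitioning the range of $n$ according to whether $M_n:=\max_i\nu_2(n+h_i)$ is $<\alpha$ or $\ge\alpha$, identifying the ``generic'' part ($M_n<\alpha$) with a count of admissible residue classes modulo $2^{\alpha+1}$, and the ``exceptional'' part ($M_n\ge\alpha$) with a sum of dyadic-scale counts attached to the (at most one) element of $\bh_2$.

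For the generic part I would first observe that, since $n\ge1$ and $h_i\ge0$ force $n+h_i\ge1>0$ (so the value $0$ never occurs among the $n+h_i$), the condition ``$n+\bh\subseteq S_2$ and $M_n<\alpha$'' holds if and only if the least residue $a$ of $n$ modulo $2^{\alpha+1}$ lies in $\T_{\bh}(2^{\alpha+1})$: indeed, by Proposition~\ref{prop:S2S3S1} and \eqref{eq:defTh}, when $\nu_2(n+h_i)=\beta_i\le\alpha-1$ both $\beta_i$ and the residue of $(n+h_i)/2^{\beta_i}$ modulo $4$ are determined by $n\bmod 2^{\beta_i+2}$, and $\beta_i+2\le\alpha+1$. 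Hence this part of $N(x)$ equals $\sum_{a\in\T_{\bh}(2^{\alpha+1})}\#\{1\le n\le x:n\equiv a\bmod 2^{\alpha+1}\}=\card\T_{\bh}(2^{\alpha+1})\cdot x/2^{\alpha+1}+O(\card\T_{\bh}(2^{\alpha+1}))$, uniformly for $x\ge1$.

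For the exceptional part I would reproduce the Hensel-type analysis in the proof of Proposition~\ref{prop:S2h}(b). Because $\alpha\ge2+\max_{i\ne j}\nu_2(h_i-h_j)$, whenever $M_n\ge\alpha$ the maximum is attained at a unique index $j$, and for $i\ne j$ one has $\nu_2(n+h_i)=\nu_2(h_i-h_j)=:\gamma_i\le\alpha-2<M_n$ with $(n+h_i)/2^{\gamma_i}\equiv(h_i-h_j)/2^{\gamma_i}\bmod4$; thus $n+h_i\in S_2\iff h_i-h_j\in S_2$. Consequently, if $n+\bh\subseteq S_2$ then $h_j\in\bh_2$ (so $\bh_2=\{h_j\}$, as $\card\bh_2\le1$), and conversely, when $\bh_2=\{h_j\}$, the single condition $n+h_j\in S_2$ together with $\nu_2(n+h_j)\ge\alpha$ already yields $n+\bh\subseteq S_2$. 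If $\bh_2=\emptyset$ this part is empty; otherwise, writing $n+h_j=2^\beta m$ with $m$ odd (so $\nu_2(n+h_j)=\beta$), it equals $\sum_{\beta\ge\alpha}\#\{m\equiv1\bmod4:1+h_j\le2^\beta m\le x+h_j\}$. Each inner count is $\tfrac{x-1}{2^{\beta+2}}+O(1)$ and vanishes once $2^\beta>x+h_j$, so summing over $\beta\ge\alpha$ gives $\tfrac{x-1}{2^{\alpha+1}}+O(\log(x+h_k))$; multiplying by $\card\bh_2$ covers both cases at once.

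Adding the two parts and invoking \eqref{eq:delth2} (applicable since $\alpha$ lies in the range of Proposition~\ref{prop:S2h}(b)), namely $\delta_{\bh}(2)=(\card\T_{\bh}(2^{\alpha+1})+\card\bh_2)/2^{\alpha+1}$, gives $N(x)=\delta_{\bh}(2)\,x+O(\card\T_{\bh}(2^{\alpha+1}))+O(\card\bh_2\log(x+h_k))$. Since $\card\T_{\bh}(2^{\alpha+1})\le2^{\alpha+1}\delta_{\bh}(2)$, and since $\card\bh_2=1$ forces $2^{\alpha+1}\delta_{\bh}(2)=\card\T_{\bh}(2^{\alpha+1})+1\ge1$ hence $2^\alpha\delta_{\bh}(2)\ge\tfrac12$, both error terms are $O(\delta_{\bh}(2)\,2^\alpha(1+\card\bh_2\log(x+h_k)))$; dividing by $x$ yields \eqref{eq:S2hest}. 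I expect the main obstacle to be the exceptional part: correctly extracting from Proposition~\ref{prop:S2h}(b) the reduction ``$n+\bh\subseteq S_2\iff n+h_j\in S_2$'' on the high-valuation locus, and then controlling the dyadic tail — this is precisely where the $\log(x+h_k)$ factor is generated, since $O(1)$ is summed over $\asymp\log(x+h_k)$ scales. The generic part is routine once the $\T_{\bh}(2^{\alpha+1})$ characterization is in place.
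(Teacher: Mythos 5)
Your proof is correct and follows essentially the same route as the paper: split according to whether $\max_i\nu_2(n+h_i)$ is below $\alpha$ or not, identify the first part with residue classes in $\T_{\bh}(2^{\alpha+1})$, reduce the second to the unique element of $\bh_2$ and a dyadic sum generating the $\log(x+h_k)$, and conclude via Proposition \ref{prop:S2h}(b). The only (immaterial) difference is that you merge the cases $\nu_2(n+h_j)=\alpha$ and $\nu_2(n+h_j)>\alpha$, which the paper treats separately via residues modulo $2^{\alpha+2}$ and the dyadic tail respectively.
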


\begin{proof} 
Let $x \ge 1$.
We partition the sum in \eqref{eq:S2hest} as follows:
\begin{equation}
 \label{eq:S2hestpf1}
 \sums[n \le x][\forall i, n + h_i \in S_2] 1
  =
  \sums[n \le x][\forall i, n + h_i \in S_2][\forall i, \nu_2(n + h_i) < \alpha] 1
  \hspace{15pt} + 
    \sums[n \le x][\forall i, n + h_i \in S_2][\exists j, \nu_2(n + h_j) = \alpha] 1
    \hspace{15pt} +
      \sums[n \le x][\forall i, n + h_i \in S_2][\exists j, \nu_2(n + h_j) > \alpha] 1.
\end{equation}
(For a given $n$, if $\nu_2(n + h_i) = \beta$ and 
$\nu_2(n + h_j) > \beta$, then $\nu_2(h_i - h_j) = \beta$, 
implying 
$\beta \le \max_{i \ne j} \nu_2(h_i - h_j) \le \alpha - 2$, so 
there is certainly no overlap between the last two sums.)
By Proposition \ref{prop:S2S3S1}, $n + h_i \in S_2$ 
(for $n + h_i \ne 0$) if and only if there exists $\beta \ge 0$ 
and $m \equiv 1 \bmod 4$ such that $n + h_i = 2^{\beta}m$.
Using this, we verify that on the right-hand side of 
\eqref{eq:S2hestpf1}, the condition on $n$ in the first sum holds 
if and only if $n \equiv a \bmod 2^{\alpha + 1}$ for some $a$ in 
the (possibly empty) set 
\[
 \T_{\bh}(2^{\alpha + 1})
  =
   \{0 \le a < 2^{\alpha + 1} : \forall i, a + h_i \in S_2
      \,\, \hbox{and} \, \, \nu_2(a + h_i) < \alpha\}.
\]
Now,
\[
 \sum_{a \in \T_{\bh}(2^{\alpha + 1})}
  \sums[n \le x][n \equiv a \bmod 2^{\alpha + 1}] 1
   =
    x\frac{\card \T_{\bh}(2^{\alpha + 1})}{2^{\alpha + 1}} 
    + 
     O(\card \T_{\bh}(2^{\alpha + 1})),
\]
but the left-hand side here is also equal to 
\[
  \sum_{n \le x} 1
   -
    \sums[0 \le b < 2^{\alpha + 1}][b \not\in \T_{\bh}(2^{\alpha + 1})] 
     \sums[n \le x][n \equiv b \bmod 2^{\alpha + 1}] 1
   =
    x\frac{\card \T_{\bh}(2^{\alpha + 1})}{2^{\alpha + 1}} 
    + 
     O(2^{\alpha + 1} - \card \T_{\bh}(2^{\alpha + 1})).
\]
Hence 
\begin{equation}
 \label{eq:S2hestpf2}
 \sums[n \le x][\forall i, n + h_i \in S_2][\forall i, \nu_2(n + h_i) < \alpha] 1
  =
   x 
    \frac{\card \T_{\bh}(2^{\alpha + 1})}{2^{\alpha + 1}}
    + 
     O(\min\{\card \T_{\bh}(2^{\alpha + 1}),2^{\alpha + 1} - \T_{\bh}(2^{\alpha + 1})\}).
\end{equation}

In the second sum on the right-hand side of \eqref{eq:S2hestpf1}, 
we claim that the condition on $n$ holds if and only if the set 
$
 \bh_2 = \{h_j \in \bh : \forall i, h_j - h_i \in S_2\} 
$
is nonempty and 
$n + h_j \equiv 2^{\alpha} \bmod 2^{\alpha + 2}$ for the 
(necessarily unique) $h_j$ in $\bh_2$.
Thus,  
\begin{equation}
 \label{eq:S2hestpf3}
 \sums[n \le x][\forall i, n + h_i \in S_2][\exists j, \nu_2(n + h_j) = \alpha] 1
  =
   \card\bh_2
    \bigg(\frac{x}{2^{\alpha + 2}} + O(1)\bigg).
\end{equation}
To verify the claim, suppose that for some $j$ we have 
$n + h_j = 2^{\alpha}(1 + 2q)$.
By Proposition \ref{prop:S2S3S1}, this is in $S_2$ if and only if 
$2 \mid q$.
Now let $i \ne j$, so that there exists 
$\beta_{ij} \le \max_{i \ne j} \nu_2(h_i - h_j) \le \alpha - 2$ 
and $m_{ij} \equiv \pm 1 \bmod 4$ such that 
$h_i - h_j = 2^{\beta_i}m_i$.
Thus, 
$
 n + h_i 
  = 2^{\beta_{ij}}
   (m_{ij} + 2^{\alpha - \beta_{ij}}(1 + 2q)) 
$ 
is in $S_2$ if and only if $m_{ij} \equiv 1 \bmod 4$, 
equivalently, $h_i - h_j \in S_2$.
By definition of $\bh_2$, this holds for each $i \ne j$ if and 
only if $h_j \in \bh_2$.

By the same argument, the condition on $n$ in the third sum on the 
right-hand side of \eqref{eq:S2hestpf1} holds if and only if 
$n + h_j = 2^{\beta}m$ for some $\beta > \alpha$, 
$m \equiv 1 \bmod 4$, and $h_j$ in the (possibly empty) set 
$\bh_2$.
Thus,  
\[
 \sums[n \le x][\forall i, n + h_i \in S_2][\exists j, \nu_2(n + h_j) > \alpha] 1 
  =
   \sum_{h_j \in \bh_2}
    \sum_{\beta > \alpha}
     \sums[n \le x][n + h_j = 2^{\beta}m][m \equiv 1 \bmod 4] 1.
\] 
Assume $x + h_k \ge 2^{\alpha + 1}$, and let $\gamma$ be the 
integer such that $2^{\gamma} \le x + h_k < 2^{\gamma + 1}$.
For any $h_j \in \bh$ we have 
\[
 \sum_{\beta > \alpha}
  \sums[n \le x][n + h_j = 2^{\beta}m][m \equiv 1 \bmod 4] 1
   =
    \sum_{\beta = \alpha + 1}^{\gamma}
     \sums[h_j < 2^{\beta}m \le x + h_j][m \equiv 1 \bmod 4] 1
      =
      \frac{x}{4}
       \sum_{\beta = \alpha + 1}^{\gamma}
        \frac{1}{2^{\beta}} 
         + 
          O(\gamma).
\]
Since
\[
 \sum_{\beta = \alpha + 1}^{\gamma}
  \frac{1}{2^{\beta}}
   =
    \frac{1}{2^{\alpha}}
     -
      \frac{1}{2^{\gamma}}
      =
       \frac{1}{2^{\alpha}} + O\bigg(\frac{1}{x}\bigg), 
\]
and since $\gamma \le \log(x + h_k)/\log 2$, we see that 
\[
 \sum_{\beta > \alpha}
  \sums[n \le x][n + h_j = 2^{\beta}m][m \equiv 1 \bmod 4] 1 
   =
    \frac{x}{2^{\alpha + 2}} + O(\log(x + h_k)).
\]
This also holds trivially if $x + h_k < 2^{\alpha + 1}$, so in 
any case we have 
\begin{equation}
 \label{eq:S2hestpf4}
 \sums[n \le x][\forall i, n + h_i \in S_2][\exists j, \nu_2(n + h_j) > \alpha] 1 
  =
   \card\bh_2\frac{x}{2^{\alpha + 2}} + O(\card\bh_2\log x).
\end{equation}

Combining \eqref{eq:S2hestpf1} with \eqref{eq:S2hestpf2}, 
\eqref{eq:S2hestpf3} and \eqref{eq:S2hestpf4} gives 
\[
  \sums[n \le x][\forall i, n + h_i \in S_2] 1
  =
    \frac{x}{2^{\alpha + 1}}
     \big(\card \T_{\bh}(2^{\alpha + 1}) + \card \bh_2\big)
     + 
      O\big(
        \card \T_{\bh}(2^{\alpha + 1}) 
       + \card\bh_2 \log(x + h_k) 
       \big).
\]
(This estimate also holds with 
$2^{\alpha + 1} - \card \T_{\bh}(2^{\alpha + 1})$ in place of 
$\card \T_{\bh}(2^{\alpha + 1})$ in the $O$-term.)
By Proposition \ref{prop:S2h} (b), 
$
  \delta_{\bh}(2)
   =
    (\card \T_{\bh}(2^{\alpha + 1}) + \card \bh_2)/2^{\alpha + 1} 
$, 
giving the main term in \eqref{eq:S2hest}.
For the $O$-term, note that 
\[
 \card \T_{\bh}(2^{\alpha + 1}) + \card\bh_2 \log(x + h_k)
  \le 
   2^{\alpha + 1}\delta_{\bh}(2)\big(1 + \card \bh_2\log(x + h_k)\big). 
\]
\end{proof}

\begin{proposition}
 \label{prop:Sp3hest}
Let $\bh = \{h_1,\ldots,h_k\}$ be a set of $k \ge 1$ integers, and 
assume that $0 \le h_1 < \cdots < h_k$.
Let $p \equiv 3 \bmod 4$, and let 
$\alpha = 1 + \max_{i \ne j} \nu_p(h_i - h_j)$.
For $x \ge 1$ we have  
\begin{equation}
 \label{eq:Sp3hest}
  \frac{1}{x}
   \sums[n \le x][\forall i, n + h_i \in S_p] 1
  =
   \delta_{\bh}(p)
    \bigg\{ 
     1 + O\bigg(\frac{p^{\alpha}}{x} + \frac{p^{\alpha + \alpha \bmod 2} \log(x + h_k)}{x\log p}\bigg)
    \bigg\}.
\end{equation}
In the $O$-term, $\log(x + h_k)$ may be replaced by $0$ if 
$\card \bh_p = \emptyset$.
Also, if $\alpha = 1$ then, in the $O$-term, $p^{\alpha}$ and 
$p^{\alpha + \alpha \bmod 2}$ may be replaced by $k$ and $k^2$ 
respectively.
\end{proposition}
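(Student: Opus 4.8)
The plan is to mirror the proof of Proposition~\ref{prop:S2hest}, the argument being if anything simpler in the present case: by Proposition~\ref{prop:S2S3S1}, a nonzero integer lies in $S_p$ precisely when its $p$-adic valuation is even, with no extra condition on the unit part. Write $N(x) \defeq \#\{n \le x : \forall i,\ n + h_i \in S_p\}$, and for each such $n$ put $\mu(n) \defeq \max_i \nu_p(n + h_i)$. Since $\alpha = 1 + \max_{i \ne j}\nu_p(h_i - h_j)$, at most one index $j$ can have $\nu_p(n + h_j) \ge \alpha$. I would split $N(x) = N_{<}(x) + N_{\ge}(x)$ according to whether $\mu(n) < \alpha$ or $\mu(n) \ge \alpha$. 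For $N_{<}(x)$: if $\mu(n) < \alpha$ then $\nu_p(n + h_i) = \nu_p(a + h_i)$ for every $i$, where $a$ is the least residue of $n$ modulo $p^\alpha$, so by Proposition~\ref{prop:S2S3S1} these are exactly the $n$ with $a \in \V_{\bh}(p^\alpha)$; counting the $n \le x$ in those residue classes modulo $p^\alpha$ (or, for the sharper form, in the complementary ones) gives
\[
 N_{<}(x) = x\,\frac{\card\V_{\bh}(p^\alpha)}{p^\alpha} + O\bigl(\min\{\card\V_{\bh}(p^\alpha),\, p^\alpha - \card\V_{\bh}(p^\alpha)\}\bigr).
\]

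For $N_{\ge}(x)$: when $\mu(n) \ge \alpha$, let $j$ be the unique index with $\nu_p(n + h_j) \ge \alpha$. Then $\nu_p(n + h_i) = \nu_p(h_i - h_j) < \alpha$ for $i \ne j$, so, using Proposition~\ref{prop:S2S3S1} together with the description of $\bh_p$ for $p \equiv 3 \bmod 4$ that follows it, the conditions $n + h_i \in S_p$ for $i \ne j$ amount to $h_j \in \bh_p$, while $n + h_j \in S_p$ amounts to $\nu_p(n + h_j)$ being even. Hence
\[
 N_{\ge}(x) = \sum_{h_j \in \bh_p} \#\{n \le x : p^\alpha \mid n + h_j \text{ and } \nu_p(n + h_j) \text{ is even}\}.
\]
Writing $n + h_j = p^\alpha \ell$, the constraint on $n$ turns into $\nu_p(\ell) \equiv \alpha \pmod{2}$, with $\ell$ ranging over an interval of length $x/p^\alpha + O(1)$ inside $[0,(x + h_k)/p^\alpha]$; the elementary count of such $\ell$, namely $\sum_{s \ge 0,\ s \equiv \alpha \bmod 2}(\lfloor L/p^s\rfloor - \lfloor L/p^{s+1}\rfloor)$ --- geometric in $s$, with $O(\log L/\log p)$ nonzero terms each accurate to $O(1)$ --- then yields
\[
 N_{\ge}(x) = \card\bh_p\,\frac{x}{p^\alpha}\,\Bigl(1 + \frac{1}{p}\Bigr)^{-1}\frac{1}{p^{\alpha \bmod 2}} + O\!\left(\frac{\card\bh_p\,\log(x + h_k)}{\log p}\right).
\]

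Adding the two contributions, Proposition~\ref{prop:Sp3h}~(b) --- applicable because $\alpha \ge 1 + \max_{i \ne j}\nu_p(h_i - h_j)$ --- identifies the sum of the main terms with $x\,\delta_{\bh}(p)$, and also yields $\card\V_{\bh}(p^\alpha) \le \delta_{\bh}(p)\,p^\alpha$ and $\card\bh_p \ll \delta_{\bh}(p)\,p^{\alpha + \alpha \bmod 2}$; substituting these bounds into the two error terms gives \eqref{eq:Sp3hest}, the case $\delta_{\bh}(p) = 0$ being trivial since then $\card\V_{\bh}(p^\alpha) = \card\bh_p = N(x) = 0$. The refinements fall out of the same formulas: $\bh_p = \emptyset$ forces $N_{\ge}(x) = 0$ and hence removes the $\log(x + h_k)$ term; and if $\alpha = 1$ then $p \nmid \det(\bh)$, so $\card\bh_p = k$, $p - \card\V_{\bh}(p) = \kappa \le k$ (with $\kappa$ the number of distinct residues of $\bh$ modulo $p$), and $\delta_{\bh}(p) = (p - k + 1)/(p + 1)$ by Proposition~\ref{prop:Sp3h}~(c), after which a short comparison gives $\min\{\card\V_{\bh}(p),\, \kappa\} \ll \delta_{\bh}(p)\,k$ and $\card\bh_p \ll \delta_{\bh}(p)\,k^2$, producing the replacement of $p^\alpha$ and $p^{\alpha + \alpha \bmod 2}$ by $k$ and $k^2$.

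I expect the only genuinely delicate point to be the error bookkeeping at the end: verifying the two inequalities bounding $\card\V_{\bh}(p^\alpha)$ and $\card\bh_p$ by multiples of $\delta_{\bh}(p)\,p^\alpha$, and, in the $\alpha = 1$ case, the elementary estimates comparing $\min\{p - k,\, k\}$ and $k$ with $\delta_{\bh}(p)\,k$ and $\delta_{\bh}(p)\,k^2$. The valuation-counting in the $N_{\ge}(x)$ step is routine and essentially the same as the $p = 2$ computation in Proposition~\ref{prop:S2hest}.
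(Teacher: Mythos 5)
Your argument is correct and is essentially the paper's own proof: the same partition according to whether $\max_i \nu_p(n+h_i)$ is $<\alpha$ or $\ge\alpha$, the same counting of the first part via residues in $\V_{\bh}(p^{\alpha})$ and of the second via $\bh_p$ and a geometric sum over even valuations, the same appeal to Proposition \ref{prop:Sp3h} (b) to assemble the main term, and the same bounds $\card\V_{\bh}(p^{\alpha}) \le p^{\alpha}\delta_{\bh}(p)$, $\card\bh_p \ll p^{\alpha+\alpha\bmod 2}\delta_{\bh}(p)$ for the error. Your bookkeeping in the $\alpha=1$ case (using $p-\card\V_{\bh}(p)=\kappa=k$ and $\delta_{\bh}(p)=(p-k+1)/(p+1)$) and your explicit disposal of the degenerate case $\delta_{\bh}(p)=0$ are if anything slightly more careful than the sketch in the paper.
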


\begin{proof} 
By Proposition \ref{prop:S2S3S1}, $n + h_i \in S_p$ (for 
$n + h_i \ne 0$) if and only if $2 \mid \nu_p(n + h_i)$.
Let $x \ge 1$.
We partition the sum in \eqref{eq:Sp3hest} as follows:
\begin{equation}
 \label{eq:Sp3hestpf1}
 \sums[n \le x][\forall i, n + h_i \in S_p] 1
 =
  \sums[n \le x][\forall i, 2 \mid \nu_p(n + h_i)][\forall i, \nu_p(n + h_i) < \alpha] 1
   \hspace{15pt}
  + 
   \sums[n \le x][\forall i, 2 \mid \nu_p(n + h_i)][\exists j, \nu_p(n + h_j) \ge \alpha] 1.
\end{equation}
%
%
\begin{nixnix} 
On the right-hand side of \eqref{eq:Sp3hestpf1}, the condition on 
$n$ in the first sum holds if and only if 
$n \equiv a \bmod p^{\alpha}$ for some $a$ in the (possibly 
empty) set 
\[
 \V_{\bh}(p^{\alpha})
  =
   \{0 \le a < p^{\alpha} : \forall i, 2 \mid \nu_p(a + h_i) \,\, \hbox{and} \,\, \nu_p(a + h_i) < \alpha\}.
\]
Now,
\[
 \sum_{a \in \V_{\bh}(p^{\alpha})}
  \sums[n \le x][n \equiv a \bmod p^{\alpha}] 1
   =
    x\frac{\card \V_{\bh}(p^{\alpha})}{p^{\alpha}} 
    + 
     O(\card \V_{\bh}(p^{\alpha})),
\]
but the left-hand side here is also equal to 
\[
  \sum_{n \le x} 1
   -
    \sums[0 \le b < p^{\alpha}][b \not\in \V_{\bh}(p^{\alpha})] 
     \sums[n \le x][n \equiv b \bmod p^{\alpha}] 1
   =
    x\frac{\card \V_{\bh}(p^{\alpha})}{p^{\alpha}} 
    + 
     O(p^{\alpha} - \card \V_{\bh}(p^{\alpha})).
\]
Hence 
\begin{equation}
 \label{eq:Sp3hestpf2}
 \sums[n \le x][\forall i, 2 \mid \nu_p(n + h_i)][\forall i, \nu_p(n + h_i) < \alpha] 1
  =
   x\frac{\card \V_{\bh}(p^{\alpha})}{p^{\alpha}} 
    + 
     O(\min\{\card \V_{\bh}(p^{\alpha}),p^{\alpha} - \card \V_{\bh}(p^{\alpha})\}).
\end{equation}

Consider the second sum on the right-hand side of 
\eqref{eq:Sp3hestpf1}. 
For a given $n$, there is at most one $h_j$ in $\bh$ such that 
$\nu_p(n + h_j) \ge \alpha$. 
Therefore, the condition on $n$ in the second sum holds if and 
only if $2 \mid \nu_p(n + h_j)$ and $\nu_p(n + h_j) \ge \alpha$ 
for some $h_j$ in the (possibly empty) set 
$
 \bh_p 
  =
   \{h_j \in \bh : \forall i \ne j, 2 \mid \nu_p(h_i - h_j)\} 
$.
Thus,  
\[
 \sums[n \le x][\forall i, 2 \mid \nu_p(n + h_i)][\exists j, \nu_p(n + h_j) \ge \alpha] 1 
  =
   \sum_{h_j \in \bh_p}
    \sum_{\beta \ge \frac{\alpha}{2}}
     \sums[n \le x][p^{2\beta} \emid n + h_j] 1.
\]

Assume $x + h_k \ge p^{\alpha}$, and let $\gamma$ be the integer 
such that $p^{\gamma} \le x + h_k < p^{\gamma + 1}$.
For any $h_j \in \bh$ we have 
\begin{align*}
  \sum_{\beta \ge \frac{\alpha}{2}}
   \sums[n \le x][p^{2\beta} \emid n + h_j] 1
 & =
     \sum_{\frac{\alpha}{2} \le \beta \le \frac{\gamma}{2}} 
      \Big\{ 
            \sums[n \le x][p^{2\beta} \mid n + h_j] 1
           - 
             \sums[n \le x][p^{2\beta + 1} \nmid n + h_j] 1
      \Big\}
 \\
 & = 
   x\bigg(1 - \frac{1}{p}\bigg)
      \sum_{\frac{\alpha}{2} \le \beta \le \frac{\gamma}{2}} 
       \frac{1}{p^{2\beta}}
       + 
        O(\gamma).
\end{align*}
Since  
\begin{align*}
 \sum_{\frac{\alpha}{2} \le \beta \le \frac{\gamma}{2}} 
  \frac{1}{p^{2\beta}}
   & 
   =
    \bigg(1 - \frac{1}{p^2}\bigg)^{-1}
     \bigg(\frac{1}{p^{\alpha + (\alpha \bmod 2)}} - \frac{p^{\gamma \bmod 2}}{p^{\gamma + 2}}\bigg)
  \\
   & 
    =
       \bigg(1 - \frac{1}{p}\bigg)^{-1}
        \bigg(1 + \frac{1}{p}\bigg)^{-1}
         \frac{1}{p^{\alpha + (\alpha \bmod 2)}}
       + O\bigg(\frac{1}{x}\bigg),
\end{align*}
and since $\gamma \le \log(x + h_k)/\log p$, we see that 
\[
 \sum_{\beta \ge \frac{\alpha}{2}}
  \sums[n \le x][p^{2\beta} \emid n + h_j] 1
   =
    x\bigg(1 + \frac{1}{p}\bigg)^{-1}
      \frac{1}{p^{\alpha + (\alpha \bmod 2)}}
       +
       O\bigg(\frac{\log(x + h_k)}{\log p}\bigg).  
\]
This also holds trivially if $x + h_k < p^{\alpha}$, so in any 
case we have 
\begin{equation}
 \label{eq:Sp3hestpf3}
  \sums[n \le x][\forall i, 2 \mid \nu_p(n + h_i)][\exists j, \nu_p(n + h_j) \ge \alpha] 1
   =
    \card\bh_p \,
     x\bigg(1 + \frac{1}{p}\bigg)^{-1}
       \frac{1}{p^{\alpha + (\alpha \bmod 2)}}
        +
         O\bigg(\card \bh_p \frac{\log(x + h_k)}{\log p}\bigg).
\end{equation}

Combining \eqref{eq:Sp3hestpf1} with \eqref{eq:Sp3hestpf2} and 
\eqref{eq:Sp3hestpf3} gives
\begin{align*}
   \sums[n \le x][\forall i, n + h_i \in S_p] \hspace{-5pt} 1
  = 
  \textstyle 
   \frac{x}{p^{\alpha}}
    \Big(\card \V_{\bh}(p^{\alpha}) + \card \bh_p\big(1 + \frac{1}{p}\big)^{-1}\frac{1}{p^{\alpha \bmod 2}}\Big)
     + 
      O\Big(
        \card \V_{\bh}(p^{\alpha})+ \card\bh_p \frac{\log(x + h_k)}{\log p} 
       \Big).
\end{align*}
\end{nixnix}
%
%
We argue along the lines of the proof of 
Proposition \ref{prop:S2hest}, ending up with  
\begin{align*}
   \sums[n \le x][\forall i, n + h_i \in S_p] \hspace{-5pt} 1
  = 
  \textstyle 
   \frac{x}{p^{\alpha}}
    \Big(\card \V_{\bh}(p^{\alpha}) + \card \bh_p\big(1 + \frac{1}{p}\big)^{-1}\frac{1}{p^{\alpha \bmod 2}}\Big)
     + 
      O\Big(
        \card \V_{\bh}(p^{\alpha})+ \card\bh_p \frac{\log(x + h_k)}{\log p} 
       \Big),
\end{align*}
which also holds with $p^{\alpha} - \card \V_{\bh}(p^{\alpha})$ in 
place of $\card \V_{\bh}(p^{\alpha})$ in the $O$-term. 
Proposition \ref{prop:Sp3h} (b) then gives the main term in 
\eqref{eq:Sp3hest}.
For the $O$-term, note that 
$
 p^{\alpha}\delta_{\bh}(p)
  \ge 
   \card \V_{\bh}(p^{\alpha})  
$
and 
$
 p^{\alpha + \alpha \bmod 2}\delta_{\bh}(p) 
  \ge 
   \card \bh_p
$.
If $\bh_p = \emptyset$ then the term with $\log(x + h_k)$ may be 
omitted.

To deal with the special case where $\alpha = 1$, i.e.\ 
$p \nmid \det(\bh)$, we note that $p - \card \V_{\bh}(p) = 1$ and 
$\bh_p = \bh$.
Since we may take $p^{\alpha} - \card \V_{\bh}(p^{\alpha})$ in place 
of $\card \V_{\bh}(p^{\alpha})$ in the $O$-term above, 
\[
 \sums[n \le x][\forall i, n + h_i \in S_p] 1
  =
   x\delta_{\bh}(p)
     + 
      O\bigg(
        1 + k\frac{\log(x + h_k)}{\log p} 
       \bigg),
\]
where $\delta_{\bh}(p) = 1 - k/(p + 1) \ge 1/(k + 1)$ (see 
Proposition \ref{prop:Sp3h} (c)).
\end{proof}

\end{nix}

\begin{nix}
\begin{proposition}
 \label{Aprop:sssk=2}
Let $h$ be a nonzero integer.
We have 
\begin{equation}
 \label{Aeq:delthk=2}
  \delta_{\{0,h\}}(2) = \frac{1}{4}
   \quad 
    (\nu_2(h) = 0), 
     \quad 
      \delta_{\{0,h\}}(2)  
       = 
        \frac{2^{\nu_2(h) + 1} - 3}{2^{\nu_2(h) + 2}}
         \quad 
          (\nu_2(h) \ge 1), 
\end{equation}
and for $p \equiv 3 \bmod 4$ we have 
\begin{equation}
 \label{Aeq:delthp3k=2}
 \delta_{\{0,h\}}(p) 
  =
   \bigg(1 + \frac{1}{p}\bigg)^{-1}
    \bigg(1 - \frac{1}{p^{\nu_p(h) + 1}}\bigg).
\end{equation}
Consequently, if 
$h = 2^{\alpha}p_1^{\alpha_1}\cdots p_r^{\alpha_r}q$, 
where $\alpha \ge 0$, $p_i \equiv 3 \bmod 4$ and $\alpha_i \ge 1$ 
for $i = 1,\ldots,r$, and $q$ is composed only of primes congruent 
to $1$ modulo $4$, then  
\begin{equation}
 \label{Aeq:sssk=2}
 \mathfrak{S}_{\{0,h\}}
  =
   \frac{2\delta_{\{0,h\}}(2)}{C^2}
    \prod_{i = 1}^r
     \bigg(1 - \frac{1}{p_i}\bigg)^{-1}
      \bigg(1 - \frac{1}{p_i^{\alpha_i + 1}}\bigg),
\end{equation}
where $C$ is the Landau--Ramanujan constant 
\textup{(}see \eqref{eq:defLanRamconst}\textup{)}.
\end{proposition}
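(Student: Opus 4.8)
The plan is to establish the three displayed identities in sequence: the first two by reducing the local density $\delta_{\{0,h\}}(p)$ to a finite residue count via Propositions~\ref{prop:S2h} and~\ref{prop:Sp3h}, and the third by forming the product of the local factors $\delta_{\bz}(p)^{-2}\delta_{\{0,h\}}(p)$, whose convergence is furnished by Proposition~\ref{prop:sssc}. Two remarks will be used throughout: for the set $\{0,h\}$ one has $\max_{i\ne j}\nu_p(h_i-h_j)=\nu_p(h)$, and since $k=2$ one has $\card\bh_2=1$.

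First I would treat $p=2$. If $\nu_2(h)=0$ then $2\nmid\det(\{0,h\})=|h|$, and Proposition~\ref{prop:S2h}(c) gives $\delta_{\{0,h\}}(2)=(1/2)^2=1/4$. If $\nu_2(h)=\nu\ge 1$, I would apply Proposition~\ref{prop:S2h}(b) with $\alpha=\nu+2$; as $\card\bh_2=1$, it then suffices to prove $\card\T_{\{0,h\}}(2^{\nu+3})=2^{\nu+2}-7$. I would count $a\in\T_{\{0,h\}}(2^{\nu+3})$ by the value of $\beta\defeq\nu_2(a)$: for $\beta\le\nu-2$ the relation $(a+h)/2^\beta\equiv a/2^\beta\bmod 4$ makes $a+h\in S_2$ automatic, contributing $\sum_{\beta=0}^{\nu-2}2^{\nu+1-\beta}=2^{\nu+2}-8$; for $\beta=\nu-1$ one gets $(a+h)/2^{\nu-1}\equiv 3\bmod 4$, so there is no contribution; and the residues with $\beta\in\{\nu,\nu+1\}$ contribute exactly $1$ in all, a short analysis of $h/2^\nu\bmod 8$ showing that this $1$ sits at $\beta=\nu$ when $h/2^\nu\equiv 1\bmod 4$ and at $\beta=\nu+1$ when $h/2^\nu\equiv 3\bmod 4$. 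Adding $\card\bh_2=1$ yields $\delta_{\{0,h\}}(2)=(2^{\nu+2}-6)/2^{\nu+3}=(2^{\nu+1}-3)/2^{\nu+2}$. This trade-off between $\beta=\nu$ and $\beta=\nu+1$ is the one genuinely delicate point; it is also what forces $\delta_{\{0,h\}}(2)$ to depend on $h$ only through $\nu_2(h)$.

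Next I would handle $p\equiv 3\bmod 4$, writing $\mu\defeq\nu_p(h)$. By Proposition~\ref{prop:S2S3S1}, $a\in S_p$ holds exactly when $\nu_p(a)$ is even, so $\V_{\{0,h\}}(p^\alpha)$ is the set of $0\le a<p^\alpha$ with $\nu_p(a)$ and $\nu_p(a+h)$ both even and both ${}<\alpha$; moreover $\bh_p=\{0,h\}$ when $\mu$ is even and $\bh_p=\emptyset$ when $\mu$ is odd. Taking $\alpha=\mu+1$ in Proposition~\ref{prop:Sp3h}(b), I would count $\V_{\{0,h\}}(p^{\mu+1})$ by the value of $j\defeq\nu_p(a)$: for even $j<\mu$ one has $\nu_p(a+h)=j$, and there are $p^{\mu+1-j}-p^{\mu-j}$ admissible $a$; in the case $\mu$ even there is the additional contribution $j=\mu$, supplied by the $p-2$ residues $v\in\{1,\dots,p-1\}$ with $v\not\equiv-h/p^\mu\bmod p$. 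Summing these geometric progressions and substituting into \eqref{eq:delthp3} --- with $p^{\alpha\bmod 2}=p$ and $\card\bh_p=2$ for $\mu$ even, and $p^{\alpha\bmod 2}=1$ and $\card\bh_p=0$ for $\mu$ odd --- both cases collapse to $(1+1/p)^{-1}(1-p^{-(\mu+1)})$, which is \eqref{Aeq:delthp3k=2}; at $\mu=0$ this reads $(1+1/p)^{-1}(1-1/p)$, consistent with the equality case of Proposition~\ref{prop:Sp3h}(c).

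Finally I would assemble $\mathfrak{S}_{\{0,h\}}$. The Euler factor at $p=2$ is $\delta_{\bz}(2)^{-2}\delta_{\{0,h\}}(2)=4\,\delta_{\{0,h\}}(2)$, and for $p\equiv 3\bmod 4$ it is $\delta_{\bz}(p)^{-2}\delta_{\{0,h\}}(p)=(1+1/p)(1-p^{-(\nu_p(h)+1)})$, which equals $1-1/p^2$ whenever $p\nmid h$. Writing $h=2^\alpha p_1^{\alpha_1}\cdots p_r^{\alpha_r}q$ as in the statement, the product of the factors over $p\notin\{p_1,\dots,p_r\}$ equals $\prod_{p\equiv 3\bmod 4}(1-1/p^2)$ divided by $\prod_{i=1}^r(1-1/p_i^2)$, and $\prod_{p\equiv 3\bmod 4}(1-1/p^2)=1/(2C^2)$ by \eqref{eq:defLanRamconst}. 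After the cancellation $(1+1/p_i)(1-p_i^{-(\alpha_i+1)})/(1-1/p_i^2)=(1-1/p_i^{\alpha_i+1})/(1-1/p_i)$ one obtains $\mathfrak{S}_{\{0,h\}}=4\,\delta_{\{0,h\}}(2)\cdot\frac{1}{2C^2}\prod_{i=1}^r\frac{1-1/p_i^{\alpha_i+1}}{1-1/p_i}=\frac{2\,\delta_{\{0,h\}}(2)}{C^2}\prod_{i=1}^r(1-1/p_i)^{-1}(1-1/p_i^{\alpha_i+1})$, which is \eqref{Aeq:sssk=2}; convergence of the infinite product is exactly Proposition~\ref{prop:sssc}, since only finitely many factors differ from $1+O(1/p^2)$. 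The hard part will be the $2$-adic count in the second step; the rest is routine bookkeeping with geometric series.
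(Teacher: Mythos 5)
Your proposal is correct and follows essentially the same route as the paper: compute $\delta_{\{0,h\}}(p)$ from the finite truncations $\T_{\{0,h\}}(2^{\alpha+1})$ and $\V_{\{0,h\}}(p^{\alpha})$ via Propositions \ref{prop:S2h}(b)--(c) and \ref{prop:Sp3h}(b), stratifying the count by the $p$-adic valuation of $a$, and then assemble the Euler product using $\prod_{p\equiv 3 \bmod 4}(1-1/p^2) = 1/(2C^2)$. The only cosmetic differences are that you perform the $2$-adic count directly at level $2^{\nu+3}$ (where the paper first identifies $\T_{\{0,h\}}(2^{\alpha+2})$ with $\{a \in S_2 : \nu_2(a)\le \alpha-2\}$ and then doubles), and you invoke Proposition \ref{prop:S2h}(c) for $\nu_2(h)=0$ instead of listing $\T_{\{0,h\}}(8)$; both computations agree with the paper's in every detail.
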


\begin{proof}
Let $\bh = \{0,h\}$.
Let $\nu_2(h) = \alpha$, so that $h = 2^{\alpha}h'$, say.
If $h' \equiv 1 \bmod 4$ then $\bh_2 = \{0\}$, and if 
$h' \equiv 3 \bmod 4$ then $\bh_2 = \{h\}$, so in any case 
$\card \bh_2 = 1$.
First consider the case where $\alpha = 0$, i.e.\ 
$\bh = \{0,h'\}$.
We have 
\[
 \T_{\bh}(8)
  =
   \{0 \le a < 8 : a,a + h' \equiv 1,2 \, \hbox{or} \, 5 \bmod 8\}.
\]
Since $h' \equiv 1,3,5$ or $7 \bmod 8$, we verify that  
$\T_{\bh}(8) = \{1\},\{2\},\{5\}$ or $\{2\}$ respectively.
Thus, $\card \T_{\bh}(8) = 1$ and 
\[
 \delta_{\bh}(2) 
  =
   \frac{\card \T_{\bh}(8) + \card \bh_2}{8}
    =
     \frac{1 + 1}{8}
      = 
       \frac{1}{4}.
\]
\begin{nixnix}
Consider the case where $\alpha = 1$, i.e.\ 
$\bh = \{0,2h'\}$.
We have 
\[
 \T_{\bh}(16) 
  =
   \{0 \le a < 16 : a, a + 2h' \equiv 1,2,4,5,9,10 \, \hbox{or} \, 13 \bmod 16\}.
\]
As $2h' \equiv 2,6,10$ or $14 \bmod 16$, we verify that 
$\T_{\bh}(16) = \{2\}$, $\T_{\bh}(16) = \{4\}$, 
$\T_{\bh}(16) = \{10\}$ or $\T_{\bh}(16) = \{4\}$ respectively.
Thus, $\card \T_{\bh}(16) = 1$ and 
\[
 \delta_{\bh}(2) 
  =
   \frac{\card \T_{\bh}(16) + \card \bh_2}{16}
    =
     \frac{1 + 1}{16}
      = 
       \frac{1}{8}.
\]
\end{nixnix}

Next, consider the case where $\alpha \ge 1$.
Recall from Proposition \ref{prop:S2h} that in general, 
\[
 \T_{\bh}(2^{\alpha + 3}) 
  =
   \{a,a + 2^{\alpha + 2} : a \in \T_{\bh}(2^{\alpha + 2})\}
    \cup 
     \U_{\bh}(2^{\alpha + 3}),
\]
where $\card \U_{\bh}(2^{\alpha + 3}) = \card \bh_2$.
By definition,  
\[
 \T_{\bh}(2^{\alpha + 2}) 
  =
   \{0 \le a < 2^{\alpha + 2} : a,a + h \in S_2, \nu_2(a),\nu_2(a + h) \le \alpha\},
\]
We claim that if $a \in S_2$ and $\nu_2(a) = \alpha$, then either 
$a + h \not\in S_2$ or $\nu_2(a + h) > \alpha$.
For if $a = 2^{\alpha}m$ with $m \equiv 1 \bmod 4$, then   
$a + h = 2^{\alpha}(m + h')$ and $m + h' \equiv 2$ or 
$0 \bmod 4$ ($h' \equiv 1$ or $3 \bmod 4$).
Likewise, if $a + h \in S_2$ and $\nu_2(a + h) \le \alpha$, then 
either $a \not\in S_2$ or $\nu_2(a) > \alpha$.
Similarly, if $a \in S_2$ and $\nu_2(a) = \alpha - 1$ then 
$a + h \not\in S_2$, and likewise with $a$ and $a + h$ 
interchanged.
If $\nu_2(a) \le \alpha - 2$ or $\nu_2(a + h) \le \alpha - 2$, 
then $a,a + h \in S_2$ if and only if $a \in S_2$.

In view of all of this, 
$
 \T_{\bh}(2^{\alpha + 2})
  =
   \{0 \le a < 2^{\alpha + 2} : a \in S_2 \,\, \hbox{and} \,\, \nu_2(a) \le \alpha - 2\} 
$, 
and so 
\[
 \card \T_{\bh}(2^{\alpha + 2})
  =
   \sum_{0 \le \beta \le \alpha - 2}
    \hspace{5pt}
     \sums[0 \le m < 2^{\alpha + 2 - \beta} ][m \equiv 1 \bmod 4] 1
      =
       \sum_{0 \le \beta \le \alpha - 2}
         2^{\alpha - \beta}
        =
         2^{\alpha + 1} - 4.
\]
Since $\card \U_{\bh}(2^{\alpha + 3}) = \card \bh_2 = 1$, it 
follows that 
\[
 \card \T_{\bh}(2^{\alpha + 3}) 
  = 2(2^{\alpha + 1} - 4) + 1 = 2^{\alpha + 2} - 7, 
\]
and in turn that 
\[
 \delta_{\bh}(2)
  = 
   \frac{\card \T_{\bh}(2^{\alpha + 3}) + \card \bh_2}{2^{\alpha + 3}}
    =
     \frac{2^{\alpha + 2} - 7 + 1}{2^{\alpha + 3}}
      =
       \frac{2^{\alpha + 1} - 3}{2^{\alpha + 2}}.
\]

Let $p \equiv 3 \bmod 4$ and let $\gamma = \nu_p(h)$.
If $\nu_p(a) \le \gamma - 1$ then $\nu_p(a + h) = \nu_p(a)$, so 
$a,a + h \in S_p$ and $\nu_p(a),\nu_p(a + h) \le \gamma - 1$ if 
and only if $2 \mid \nu_p(a)$ and $\nu_p(a) \le \gamma - 1$.
Therefore,
\begin{align*}
 & \#\{0 \le a < p^{\gamma + 1} : a,a + h \in S_p, \nu_p(a),\nu_p(a + h) \le \gamma - 1\}
  \\
 & \hspace{30pt}  = 
    \sum_{0 \le \beta \le \frac{\gamma - 1}{2}}
     \sums[0 \le m < p^{\gamma + 1}][m \not\equiv 0 \bmod p] 1
  \\
  & \hspace{30pt} = 
     p^{\gamma + 1}
      \bigg(1 - \frac{1}{p}\bigg)
       \sum_{0 \le \beta \le \frac{\gamma - 1}{2}}
        \frac{1}{p^{2\beta}}
  \\
  & \hspace{30pt} = 
   p^{\gamma + 1}
    \bigg(1 + \frac{1}{p}\bigg)^{-1}
     \bigg(1 - \frac{1}{p^{\gamma + (\gamma \bmod 2)}}\bigg).
\end{align*}
If $\gamma$ is odd, this accounts for all of 
$\V_{\bh}(p^{\gamma + 1})$.
Also, $\bh_p = \emptyset$.
In that case we have  
\[
 \delta_{\bh}(p)
  =
   \frac{1}{p^{\gamma + 1}}
    \bigg(
     \card \V_{\bh}(p^{\gamma + 1}) + \card \bh_p\bigg(1 + \frac{1}{p}\bigg)^{-1} 
    \bigg)
     =
      p^{\gamma + 1}
    \bigg(1 + \frac{1}{p}\bigg)^{-1}
     \bigg(1 - \frac{1}{p^{\gamma + 1}}\bigg).
\]
If $\gamma$ is even then there may exist 
$a \in \V_{\bh}(p^{\gamma + 1})$ such that  
$\max\{\nu_p(a),\nu_p(a + h)\} = \gamma$.
This holds if and only if 
$\nu_p(a) = \nu_p(a + h) = \gamma$.
Writing $h = p^{\gamma}h''$ (so that $h'' \not\equiv 0 \bmod p$), 
we have $\nu_p(a) = \nu_p(a + h) = \gamma$ if and only if 
$a = p^{\gamma}m$, where 
$m \not\equiv 0 \bmod p$ and $m \not\equiv - h'' \bmod p$.

Therefore, if $\gamma$ is even then  
\[
 \#\{0 \le a < p^{\gamma + 1} : a, a + h \in S_p, \max\{\nu_p(a),\nu_p(a + h)\} = \gamma\}
  =
   p - 2.
\]
Also, $\bh_p = \bh$.
In that case we have 
\begin{align*}
 \delta_{\bh}(p) 
  & =
   \frac{1}{p^{\gamma + 1}}
    \bigg(
     \card \V_{\bh}(p^{\gamma + 1}) + \card \bh_p\bigg(1 + \frac{1}{p}\bigg)^{-1}\frac{1}{p}
    \bigg)
  \\
  & =
      \frac{1}{p^{\gamma + 1}}
       \bigg(
        p^{\gamma + 1}
         \bigg(1 + \frac{1}{p}\bigg)^{-1}
          \bigg(1 - \frac{1}{p^{\gamma}}\bigg)
           +
            p - 2
             + 
              \bigg(1 + \frac{1}{p}\bigg)^{-1}
               \frac{2}{p}
       \bigg)
  \\
  & = 
   \bigg(1 + \frac{1}{p}\bigg)^{-1}
    \bigg(1 - \frac{1}{p^{\gamma + 1}}\bigg),
\end{align*}
as before.

Writing $h = 2^{\alpha}p_1^{\alpha_1}\cdots p_r^{\alpha_r}q$ as in 
the statement of the proposition, we see that 
\begin{align*}
 \prod_{p \equiv 3 \bmod 4}
  \bigg(1 + \frac{1}{p}\bigg)^2
   \delta_{\bh}(p)
 & = 
  \prod_{i = 1}^{r}
   \bigg(1 + \frac{1}{p_i}\bigg)
    \bigg(1 - \frac{1}{p_i^{\alpha_i + 1}}\bigg)
     \prods[p \equiv 3 \bmod 4][p \nmid h]
      \bigg(1 + \frac{1}{p}\bigg)
       \bigg(1 - \frac{1}{p}\bigg)
 \\
 & = 
  \prod_{i = 1}^{r}
   \bigg(1 - \frac{1}{p_i}\bigg)^{-1}
    \bigg(1 - \frac{1}{p_i^{\alpha_i + 1}}\bigg)
     \prod_{p \equiv 3 \bmod 4}
      \bigg(1 - \frac{1}{p^2}\bigg).
\end{align*}
This last product is equal to $1/(2C^2)$ (see 
\eqref{eq:defLanRamconst}).
The left-hand side is $\mathfrak{S}_{\bh}$ without the factor 
of $2^2\delta_{\bh}(2)$.
\end{proof}
\end{nix}

\begin{nixnix}
\begin{proposition}
 \label{Sprop:sssc}
Let $\bh$ be a set of $k \ge 1$ distinct integers.
For $z \ge \max\{2,k\}$ we have 
\begin{equation}
 \label{Seq:sssc1}
  {\textstyle 
  \exp 
   \Big(
    \frac{-c_1(k - 1)^2 z}{(z^2 - (k - 1)^2)\log z}
   \Big)
  }
    \le 
     \prods[p \not\equiv 1 \bmod 4][p \ge z, \, p \nmid \det(\bh)] 
      \delta_{\bz}(p)^{-k}\delta_{\bh}(p)
     \le 
      1,
\end{equation}
where $c_1$ is an absolute positive constant.
In fact, the upper bound in \eqref{Seq:sssc1} holds for 
$z \ge 2$.
For $z \ge \min\{3,k\}$ we have 
\begin{equation}
 \label{Seq:sssc2}
 {\textstyle 
  \exp 
   \Big(
    \frac{-c_2(k - 1)^2\log |\det(\bh)|}{(z^2 - (k - 1)^2)\log\log 3|\det(\bh)|}
   \Big)
 }
    \le 
     \prods[p \not\equiv 1 \bmod 4][p \ge z, \, p \mid \det(\bh)]
      \delta_{\bz}(p)^{-k}\delta_{\bh}(p)
     \le  
       {\textstyle 
         \exp 
          \Big(
           \frac{c_3k\log |\det(\bh)|}{z\log\log 3|\det(\bh)|}
          \Big) 
 },
\end{equation}
where $c_2$ and $c_3$ are absolute positive constants.
\end{proposition}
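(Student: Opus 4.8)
The plan is to take logarithms and estimate the sum of the terms $\log\big(\delta_{\bz}(p)^{-k}\delta_{\bh}(p)\big)$ over the relevant primes, feeding in the exact local formulas from Propositions \ref{prop:S2h} and \ref{prop:Sp3h}. Throughout I would set $D \defeq |\det(\bh)|$; for $p \equiv 3 \bmod 4$ I would use $\delta_{\bz}(p) = (1+1/p)^{-1}$ together with the representation $\delta_{\bh}(p) = (1+1/p)^{-1}(1 - \kappa_{\bh}(p)/p)$ and the bounds $-1 \le \kappa_{\bh}(p) \le \min\{k-1,p\}$, with $\kappa_{\bh}(p) = k-1$ when $p \nmid D$, as recorded around \eqref{eq:defetakappa}; and for $p = 2$ I would use $\delta_{\bz}(2) = 1/2$ together with $\delta_{\bh}(2) \le \delta_{\bz}(2)$. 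Two elementary inputs are needed: $\sum_{p > z} 1/p^2 \ll 1/(z\log z)$, from $\pi(t) \ll t/\log t$ and partial summation, and $\omega(n) \ll (\log 3n)/\log\log 3n$.

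For \eqref{Seq:sssc1}, every prime in the product satisfies $p \ge z \ge k$, so \eqref{eq:pnmiddeth} gives $\delta_{\bz}(p)^{-k}\delta_{\bh}(p) = (1+1/p)^{k-1}(1-(k-1)/p)$ (and the factor equals $1$ at $p=2$), and the computation already carried out in the proof of Proposition \ref{prop:sssc} shows $1 - (k-1)^2/p^2 \le \delta_{\bz}(p)^{-k}\delta_{\bh}(p) \le 1$. Multiplying the upper bounds gives the right-hand inequality of \eqref{Seq:sssc1} for all $z \ge 2$. For the left-hand inequality I would apply $\log(1-t) \ge -t/(1-t)$ with $t = (k-1)^2/p^2$, obtaining $\log\big(\delta_{\bz}(p)^{-k}\delta_{\bh}(p)\big) \ge -(k-1)^2/(p^2-(k-1)^2)$, and then use that $p \ge z$ forces $p^2-(k-1)^2 \ge (p^2/z^2)(z^2-(k-1)^2)$, reducing matters to summing $\sum_{p>z}1/p^2 \ll 1/(z\log z)$; exponentiating produces the stated lower bound.

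For \eqref{Seq:sssc2}, the upper bound is immediate: $\delta_{\bh}(p) \le \delta_{\bz}(p)$ gives $\delta_{\bz}(p)^{-k}\delta_{\bh}(p) \le (1+1/p)^{k-1}$, hence $\log\big(\delta_{\bz}(p)^{-k}\delta_{\bh}(p)\big) \le (k-1)/p \le (k-1)/z$, and there are at most $\omega(D) \ll (\log 3D)/\log\log 3D$ primes dividing $D$ (the prime $p=2$ can occur only when $z=k=2$, where it contributes a bounded amount). For the lower bound I would split the primes $p \ge z$ dividing $D$ at $\max\{z,k\}$. On the range $p \ge \max\{z,k\}$ one has $\kappa_{\bh}(p) \le k-1$, so the estimate of \eqref{Seq:sssc1} applies verbatim, summed now over at most $\omega(D)$ primes with $1/p^2 \le 1/z^2$, giving a contribution $\gg -(k-1)^2\omega(D)/(z^2-(k-1)^2)$. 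On the range $z \le p < k$ (non-empty only when $z < k$), assuming $\delta_{\bh}(p) > 0$ there — i.e.\ that $\bh$ is $\SS$-admissible — Propositions \ref{prop:S2h} and \ref{prop:Sp3h} yield $\delta_{\bh}(p) \ge p^{-\alpha_p}$ with $\alpha_p \le 1 + \max_{i\ne j}\nu_p(h_i-h_j) \le 1+\nu_p(D)$, so, dropping the factor $\delta_{\bz}(p)^{-k}\ge 1$, $\log\big(\delta_{\bz}(p)^{-k}\delta_{\bh}(p)\big) \ge -(1+\nu_p(D))\log p$, whose sum over $z \le p < k$ is $\gg -(\psi(k)+\log D)$ with $\psi(k) \ll k$ by Chebyshev; since the $h_i$ are distinct, $D \ge 1!\,2!\cdots(k-1)!$, so $\log D \gg k^2\log k$ dominates $\psi(k)$, and combining the two ranges gives the lower bound.

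The main obstacle — and the only step that is not a routine bound on $\sum 1/p^2$ and on $\omega(D)$ — is the range $z \le p < k$ in the lower half of \eqref{Seq:sssc2}, which arises precisely when $z < k$. There the local factors can be well below $1$, so controlling them forces one to invoke $\SS$-admissibility of $\bh$ and the valuation bound $\alpha_p \le 1+\nu_p(D)$, and then to check that the resulting loss $-\sum_{z \le p < k}(1+\nu_p(D))\log p$ can be absorbed into $-c_2(k-1)^2\log D/\big((z^2-(k-1)^2)\log\log 3D\big)$; this is exactly where the hypothesis $z \ge \min\{3,k\}$, Chebyshev's bound, and the superfactorial lower bound on $\det(\bh)$ must all be used together, with extra care near $z = k-1$ where the factor $(z^2-(k-1)^2)^{-1}$ degenerates.
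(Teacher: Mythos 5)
Your treatment of \eqref{Seq:sssc1} and of the upper bound in \eqref{Seq:sssc2} is correct and coincides with the paper's argument: the exact identity $\delta_{\bz}(p)^{-k}\delta_{\bh}(p)=(1+1/p)^{k-1}(1-(k-1)/p)\le 1$ for $p\nmid\det(\bh)$, the per-prime lower bound $1-(k-1)^2/p^2$ coming from \eqref{eq:delthpropsp3}, the estimate $\sum_{p\ge z}p^{-2}\ll 1/(z\log z)$ by Chebyshev and partial summation, and $\omega(\det(\bh))\ll \log|\det(\bh)|/\log\log 3|\det(\bh)|$ for the primes dividing $\det(\bh)$.

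The problem is your final branch, the lower bound of \eqref{Seq:sssc2} in the regime $z<k$, and it is not a gap you can close, because there the inequality to be proved is vacuous or false. For $z=k-1$ the factor $(z^2-(k-1)^2)^{-1}$ is undefined, and for $z<k-1$ it is negative, so the claimed lower bound in \eqref{Seq:sssc2} has a positive exponent and exceeds $1$; but the product it is supposed to minorize is frequently below $1$ and can even vanish without an admissibility hypothesis (take $\bh=\{0,1,2,4,5,8,16,21\}$ and $z=3$: then $3\mid\det(\bh)$ and $\delta_{\bh}(3)=0$). Your plan to absorb the loss $-\sum_{z\le p<k}(1+\nu_p(\det(\bh)))\log p$ into a quantity that is positive therefore cannot succeed, and grafting on $\SS$-admissibility (which the statement does not assume) does not fix the sign. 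Note also that even your contribution from the range $p\ge\max\{z,k\}$, stated as $\gg -(k-1)^2\omega(\det(\bh))/(z^2-(k-1)^2)$, has the wrong sign once $z<k-1$. The resolution is that the hypothesis ``$z\ge\min\{3,k\}$'' must be read as ``$z\ge\max\{3,k\}$'' (compare the first part of the proposition, and note that the paper's own proof of this lower bound explicitly works under ``$p\ge z\ge k$''). Once $z\ge k$, every factor in the product satisfies $\delta_{\bz}(p)^{-k}\delta_{\bh}(p)\ge 1-(k-1)^2/p^2\ge 1-(k-1)^2/z^2>0$ by \eqref{Seq:ssscpf1}, and summing $-\log$ of this over the at most $\omega(\det(\bh))$ primes involved gives the stated bound directly; no small-prime analysis, valuation bounds, or superfactorial estimate on $\det(\bh)$ is needed.
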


\begin{proof}
If $k = 1$ then $\mathfrak{S}_{\bh} = 1$, so the estimates 
\eqref{Seq:sssc1} and  \eqref{Seq:sssc2} are 
trivial in this case.
Let us assume for the rest of the proof that $k \ge 2$.

If $2 \nmid \det(\bh)$, then $k = 2$ and 
$\delta_{\bz}(2)^{-2}\delta_{\bh}(2) = (1/2)^{-2}(1/4) = 1$ (see 
Proposition \ref{prop:S2h} (c)), so only the primes 
$p \equiv 3 \bmod 4$ have any bearing on the product in 
\eqref{Seq:sssc1}.
Let $p \equiv 3 \bmod 4$.
If $p \nmid \det(\bh)$ then $k \ge p$ and, by 
Proposition \ref{prop:Sp3h} (c),  
\[
  \delta_{\bz}(p)^{-k}\delta_{\bh}(p)
  =
   \bigg(
    1 + \frac{1}{p}
   \bigg)^{k - 1}
   \bigg(
    1 - \frac{k - 1}{p}
   \bigg).   
\]
Therefore, 
\[
  \delta_{\bz}(p)^{-k}\delta_{\bh}(p)
   =
    1 
    - 
     \sum_{j = 2}^k 
      \bigg\{
       (k - 1)\binom{k - 1}{j - 1} - \binom{k - 1}{j}
      \bigg\}
      p^{-j}
       \le 
        1,
\]
so we see that the upper bound in \eqref{Seq:sssc1} 
holds for any $z$.
Whether or not $p$ divides $\det(\bh)$ we have, by 
\eqref{eq:delthpropsp3}, 
\begin{equation}
 \label{Seq:ssscpf1}
 \delta_{\bz}(p)^{-k}\delta_{\bh}(p)
  \ge
   \bigg(1 + \frac{k - 1}{p}\bigg)
    \bigg(1 - \frac{\min\{k - 1,p\}}{p}\bigg).
\end{equation}
For $z \ge k$ we therefore have 
\begin{align}
 \begin{split}
  \label{Seq:ssscpf2} 
  -\log \prods[p \not\equiv 1 \bmod 4][p \ge z \, p \nmid \det(\bh)]
    \delta_{\bz}(p)^{-k}\delta_{\bh}(p) 
 &
      \le 
      -\sum_{p \ge z} \log\bigg(1 - \frac{(k - 1)^2}{p^2}\bigg)
 \\
 & 
     \le 
      (k - 1)^2\bigg(1 - \frac{(k - 1)^2}{z^2}\bigg)^{-1}
       \sum_{p \ge z}
        \frac{1}{p^2}
 \\
 & 
    \ll 
     (k - 1)^2\bigg(1 - \frac{(k - 1)^2}{z^2}\bigg)^{-1}  
      \frac{1}{z\log z}.
 \end{split}
\end{align}
(The last bound follows from the bound 
$\sum_{p \le x} 1 \ll x/\log x$, $x \ge 2$, via partial 
summation.)
Upon exponentiating, we obtain the lower bound in 
\eqref{Seq:sssc1}.

Next, for $p \ge z \ge 3$ we have $\delta_{\bz}(p)^{-1} = 1 + 1/p$ 
and $\delta_{\bh}(p) \le 1$, so 
\[
  \log \prods[p \not\equiv 1 \bmod 4][p \ge z, \, p \mid \det(\bh)] 
   \delta_{\bz}(p)^{-k}\delta_{\bh}(p) 
    \le 
     k
      \sums[p \ge z][p \mid \det(\bh)] \log\bigg(1 + \frac{1}{p}\bigg)
       \le
        \frac{k}{z}
         \sum_{p \mid \det(\bh)} 1. 
\]
The upper bound in \eqref{Seq:sssc2} follows upon exponentiating, 
after applying the bound 
$\sum_{p \mid n} 1 \ll \log n/\log\log n$, $n \ge 3$.
For $p \ge z \ge k$, we have, by \eqref{Seq:ssscpf1},  
\[
  \delta_{\bz}(p)^{-k}\delta_{\bh}(p) 
   \ge 1 - \frac{(k - 1)^2}{p^2}
    \ge 1 - \frac{(k - 1)^2}{z^2} 
    > 0.
\]
For $z \ge \min\{3,k\}$ we therefore have, similarly to 
\eqref{Seq:ssscpf2}, 
\[
 -\log \prods[p \not\equiv 1 \bmod 4][p \ge z, \, p \mid \det(\bh)]
   \delta_{\bz}(p)^{-k}\delta_{\bh}(p) 
    \le 
     \frac{(k - 1)^2}{z^2}
      \bigg(1 - \frac{(k - 1)^2}{z^2}\bigg)^{-1}
       \sum_{p \mid \det(\bh)} 1,
\]
giving the lower bound in \eqref{Seq:sssc2}.
\end{proof}
\end{nixnix}



\begin{jetsam}

\section{Jetsam}
 \label{sec:jetsam}
 
 
Sums of two squares are historically, perhaps, the most studied 
integers after the primes.
(See Volume II, Chapter VI of Dickson's {\em History of the Theory 
of Numbers} \cite{DIC:19}.)
The special case of Brahmagupta's identity, 
\[
 (a^2 + b^2)(c^2 + d^2) 
  =
   (ac + bd)^2 + (ad - bc)^2,
\]
is implicit in Diophantus' {\em Arithmetica}, and was also derived 
by Fibonacci (in his {\em Liber Quadratorum}), Bachet, and 
doubtless many others.
Fermat's theorem on sums of two squares, which was anticipated by 
Girard and first completely proved by Euler, states that all 
primes $p \equiv 1 \bmod 4$ are sums of two squares.
Because sums of two squares are not congruent to $3$ modulo $4$, 
we are led to the well-known characterization of nonzero sums of 
two squares as the integers whose canonical factorization is of 
the form  
\begin{equation}
 \label{eq:canfac}
   2^{\beta_2}
    \prod_{p \equiv 1 \bmod 4}p^{\beta_p}
     \prod_{p \equiv 3 \bmod 4}p^{2\beta_p}.
\end{equation}

This is a starting point for a proof of Landau's result 
\eqref{eq:sotsnt}.
A refinement of Landau's proof (see \cite[(4.6.4)]{HAR:40} or 
\cite[II.5, Theorem 3]{TEN:95})) shows that there exists a 
sequence of (explicitly given) numbers $D_1,D_2,\ldots$ such that, 
for all $N \ge 0$, we have
\begin{equation}
 \label{eq:sotsntii}
 \speccount(x)
  =
   \frac{Cx}{\sqrt{\log x}}
    \bigg\{
     1 + \frac{D_1}{\log x} + \cdots + \frac{D_N}{(\log x)^N} + 
       O_N\bigg(\frac{1}{(\log x)^{N + 1}}\bigg) 
    \bigg\}. 
\end{equation}
Unfortunately, $D_1,D_2,\ldots$ are difficult to evaluate 
numerically. 
As stated by Shanks \cite{SHA:64}, ``an unsolved problem of 
interest is to find [an approximation to $\speccount(x)$] that could 
be computed without undue difficulty by a {\em convergent} 
process, and which would be accurate to $O_N(x(\log x)^{-N})$ for 
[any given] $N$''.
Apropos of this, a rare ``mistake'' of Ramanujan is often 
recounted (see \cite[Chapter IV]{HAR:40}, \cite{STA:28, SHA:64}, 
or \cite[pp.\ xxiv--xxviii]{RAM:00} for instance). 

Unaware of Landau's work, Ramanujan claimed, in his very first   
correspondence with Hardy dated 1913 
(see \cite[pp.\ xxiv--xxviii]{RAM:00}), that 
\[
 \speccount(x)
  = 
   C\int_2^x \frac{\dd t}{\sqrt{\log t}} + \text{small error}, 
\]
where the ``small error'' is of order $\sqrt{x/\log x}$.
After partial integration, Ramanujan's claim would imply that 
$D_1 = 1/2$ in \eqref{eq:sotsntii}.
However, it turns out that $D_1 = 0.58194\ldots$, as was 
shown by Shanks \cite{SHA:64} (by correcting earlier work of  
Hardy's student Stanley \cite{STA:28}).
Even so, Ramanujan's incorrect estimate for $\speccount(x)$ is, as 
pointed out by Shanks \cite{SHA:64}, numerically a better 
approximation than $Cx/\sqrt{\log x}$ (because $D_1$ is close to 
$1/2$). 
We mention this lest the numerical evidence in support of 
Conjectures \ref{con:poisdist} and \ref{con:sotsktups} be not all 
that striking.
Also, we are circumspect in making more exact conjectures as to the 
nature of the error term implicit in \eqref{eq:sotsktups}.

In general, sequences like the sequence of sums of two squares, 
which are ``survivors'' of a sieving process of positive 
dimension, are expected to have many properties in common with the 
sequence of primes.
``Folklore'' conjectures entail analogs of the Hardy--Littlewood 
prime $k$-tuples conjecture for such sequences. 
Given the history outlined above, there is particular interest in 
the specific case of sums of two squares, yet it seems difficult 
to find an explicit formulation of a $k$-tuples conjecture for 
sums of two squares in the literature.


\begin{proposition}
 \label{prop:S2est}
For $x \ge 2$ we have 
\[
  \sums[n \le x][n \in S_2] 1 = \frac{x}{2} + O(\log x).
\]
\end{proposition}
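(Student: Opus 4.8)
The plan is to count elements of $S_2$ directly from the structural description in Proposition~\ref{prop:S2S3S1}. That proposition says a positive integer $n$ belongs to $S_2$ if and only if $n = 2^{\beta} m$ for some $\beta \ge 0$ and some $m \equiv 1 \bmod 4$. Since $\beta = \nu_2(n)$ and $m = n/2^{\nu_2(n)}$ are uniquely determined by $n$, counting the $n \le x$ with $n \in S_2$ is the same as counting the pairs $(\beta, m)$ with $\beta \ge 0$, $m \ge 1$, $m \equiv 1 \bmod 4$, and $2^{\beta} m \le x$. This uniqueness is the one point where a little care is needed so that no $n$ is double-counted.

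First I would fix $\beta$ with $0 \le \beta \le \log_2 x$ (for $2^{\beta} > x$ there is no admissible $m$) and count the $m$ with $1 \le m \le x/2^{\beta}$ and $m \equiv 1 \bmod 4$; an interval of length $x/2^{\beta}$ contains $\tfrac14\, x/2^{\beta} + O(1)$ such residues. Next I would sum over $\beta$. The main terms give
\[
 \frac{x}{4}\sum_{0 \le \beta \le \log_2 x} \frac{1}{2^{\beta}}
  = \frac{x}{4}\Bigl(2 + O\bigl(2^{-\lfloor \log_2 x\rfloor}\bigr)\Bigr)
  = \frac{x}{2} + O(1),
\]
since the omitted geometric tail is $O(1/x)$; and the $O(1)$ error terms, of which there are $O(\log x)$, contribute $O(\log x)$ altogether. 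Adding these yields $\sum_{n \le x,\, n \in S_2} 1 = x/2 + O(\log x)$, with the contribution of $n = 0$ (if it is counted at all) absorbed into the error term. As a sanity check, the leading constant $1/2$ matches $\delta_{\{0\}}(2) = 1/2$ recorded in Proposition~\ref{prop:S2h}~(c).

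There is no serious obstacle here: every step is elementary. The only things to watch are the uniqueness of the factorization $n = 2^{\beta} m$ with $m$ odd (immediate from $\beta = \nu_2(n)$) and the bookkeeping that the number of relevant values of $\beta$ is $O(\log x)$, so that the accumulated trivial errors are exactly of the stated size rather than larger. (One could alternatively quote the special case $\bh = \{0\}$ of the more general counting estimate for $S_2$ given elsewhere, but the direct argument above is shorter and self-contained.)
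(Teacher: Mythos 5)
Your argument is correct and is essentially the paper's own proof: both decompose $n = 2^{\beta}m$ with $m \equiv 1 \bmod 4$, count $\tfrac{x}{4\cdot 2^{\beta}} + O(1)$ admissible $m$ for each of the $O(\log x)$ values of $\beta$, and sum the geometric series to get $x/2 + O(\log x)$. Your extra remark on the uniqueness of the factorization (via $\beta = \nu_2(n)$) is a sensible precaution but adds nothing beyond what the paper does implicitly.
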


\begin{proof}
Let $x \ge 2$, and let $\alpha$ be the integer such that 
$2^{\alpha} \le x < 2^{\alpha + 1}$.
By Proposition \ref{prop:S2S3S1}, the positive integers $n$ in 
$S_2$ are precisely those of the form $2^{\beta}m$, where 
$\beta \ge 0$ and $m \equiv 1 \bmod 4$.
Thus, 
\begin{align*}
  \sums[n \le x][n \in S_2] 1
  & = 
   \sum_{\beta = 0}^{\alpha}
    \sums[m \le x/2^{\beta}][m \equiv 1 \bmod 4] 1
     =
      \frac{x}{4}\sum_{\beta = 0}^{\alpha} \frac{1}{2^{\beta}}
       + O(\alpha)
     =
      \frac{x}{2} - \frac{x}{2^{\alpha + 2}} + O(\alpha).
\end{align*}
Since $x/2^{\alpha + 2} < 1/2$ and $\alpha \le \log x/\log 2$, the 
result follows.
\end{proof}

\begin{proposition}
 \label{prop:Sp3est}
Let $p \equiv 3 \bmod 4$.
For $x \ge 1$ we have 
\[
   \sums[n \le x][n \in S_p] 1
    =
     x\bigg(1 + \frac{1}{p}\bigg)^{-1}
      + O\bigg(1 + \frac{\log x}{\log p}\bigg).
\]
\end{proposition}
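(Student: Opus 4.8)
The plan is to count the elements of $S_p$ up to $x$ directly, using the explicit description of $S_p$ supplied by Proposition~\ref{prop:S2S3S1}. Since $p \equiv 3 \bmod 4$, that proposition says a positive integer $n$ lies in $S_p$ exactly when $n = p^{2\beta}m$ for some $\beta \ge 0$ and some $m$ with $p \nmid m$. Writing $\gamma$ for the nonnegative integer determined by $p^{2\gamma} \le x < p^{2\gamma + 2}$, this yields
\[
 \sum_{\substack{n \le x\\ n \in S_p}} 1
  =
   \sum_{\beta = 0}^{\gamma}
    \sum_{\substack{m \le x/p^{2\beta}\\ p \,\nmid\, m}} 1,
\]
with $n = 0$ contributing an additional $O(1)$ if it is to be counted. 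The argument runs exactly parallel to the proof of Proposition~\ref{prop:S2est}, which handles the case $p = 2$.

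First I would estimate the inner sum: for any $y \ge 0$,
\[
 \#\{m \le y : p \nmid m\}
  = \lfloor y\rfloor - \lfloor y/p\rfloor
   = \Big(1 - \frac1p\Big) y + O(1).
\]
Since $\gamma \le \log x/(2\log p)$, the outer sum ranges over $O(1 + \log x/\log p)$ values of $\beta$, so the accumulated $O(1)$ errors contribute $O(1 + \log x/\log p)$ in total --- already matching the error term claimed in the proposition.

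It remains to extract the main term from $\big(1 - \frac1p\big)\sum_{\beta = 0}^{\gamma} x\,p^{-2\beta}$. Summing the geometric series,
\[
 \sum_{\beta = 0}^{\gamma} \frac{1}{p^{2\beta}}
  =
   \frac{1}{1 - p^{-2}} - \frac{p^{-2\gamma - 2}}{1 - p^{-2}},
\]
so the contribution of the tail term is $\big(1 + \frac1p\big)^{-1} x\, p^{-2\gamma - 2} = O(1)$ by the defining inequality $p^{2\gamma + 2} > x$, while the main contribution is
\[
 \Big(1 - \frac1p\Big)\frac{x}{1 - p^{-2}}
  = \frac{x}{1 + 1/p}
   = x\Big(1 + \frac1p\Big)^{-1}.
\]
Assembling the three displays gives the proposition. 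There is no genuine obstacle here: the only points calling for a little care are keeping the geometric tail in the shape $p^{-2\gamma - 2}$ (so that the condition $p^{2\gamma + 2} > x$ absorbs the factor $x$) and checking that the number of relevant $\beta$ is $O(1 + \log x/\log p)$ rather than merely $O(\log x)$.
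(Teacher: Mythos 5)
Your proof is correct and follows essentially the same route as the paper's: both decompose the count according to the (necessarily even) exact power of $p$ dividing $n$, estimate each inner count as $(1-1/p)x/p^{2\beta}+O(1)$, sum the geometric series, and bound the number of terms by $O(1+\log x/\log p)$. The only cosmetic difference is that the paper treats $1\le x<p$ as a separate trivial case and phrases the inner count via $\#\{n\le x: p^{2\beta}\mid n\}-\#\{n\le x: p^{2\beta+1}\mid n\}$ rather than via $m\le x/p^{2\beta}$ with $p\nmid m$, which is the same quantity.
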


\begin{proof}
First, suppose $1 \le x < p$.
By Proposition \ref{prop:S2S3S1}, 
$\{1,\ldots,p - 1\} \subseteq S_p$, so 
\[
  \sums[n \le x][n \in S_p] 1
  = 
   x + O(1)
    =
     x\bigg(1 + \frac{1}{p}\bigg)^{-1} + O(1).
\]
Now let $x \ge p$, and let $\alpha$ be the integer such that 
$p^{\alpha} \le x < p^{\alpha + 1}$.
By Proposition \ref{prop:S2S3S1}, the positive integers $n$ in 
$S_p$ are precisely those for which $p^{2\beta} \emid n$ for some 
$\beta \ge 0$.
Thus, 
\begin{align*}
 \sums[n \le x][n \in S_p] 1
  =
   \sum_{0 \le \beta \le \frac{\alpha}{2}}
    \Big\{
     \sums[n \le x][p^{2\beta} \mid n] 1
      -
       \sums[n \le x][p^{2\beta + 1} \nmid n] 1
    \Big\}
     =
      x\bigg(1 - \frac{1}{p}\bigg)
        \sum_{0 \le \beta \le \frac{\alpha}{2}}
         \frac{1}{p^{2\beta}}
         +
          O(\alpha).
\end{align*}
Since 
\[
 \sum_{0 \le \beta \le \frac{\alpha}{2}} \frac{1}{p^{2\beta}}
  =
   \bigg(1 - \frac{1}{p^2}\bigg)^{-1}
    \bigg(1 - \frac{p^{\alpha \bmod 2}}{p^{\alpha + 2}}\bigg)
   =
    \bigg(1 - \frac{1}{p}\bigg)^{-1}
     \bigg(1 + \frac{1}{p}\bigg)^{-1} 
   + O\bigg(\frac{1}{x}\bigg), 
\]
and since $\alpha \le \log x/\log p$, we see that  
\[
 \sums[n \le x][n \in S_p] 1
  = 
   x\bigg(1 + \frac{1}{p}\bigg)^{-1} 
   + O\bigg(\frac{\log x}{\log p}\bigg).
\]
The result follows by combining the estimates for $1 \le x < p$ 
and $x \ge p$.
\end{proof}


\begin{lemma}
 \label{lem:sssak=2aux1}
Let $\bh = \{h_1,\ldots,h_k\}$ be a set of integers with 
$0 \le h_1 < \cdots < h_k \le y$.
If $y \ge \min\{\e^3,\e^k\}$ then 
\begin{equation}
 \label{eq:sotssingserconv3}
  \mathfrak{S}_{\bh}
   =
    \bigg(1 + O\bigg(\frac{k^3}{\log\log y}\bigg)\bigg)
     \prods[p \not\equiv 1 \bmod 4][p \le \log y] 
      \delta_{\bz}(p)^{-k}\delta_{\bh}(p).
\end{equation}
\end{lemma}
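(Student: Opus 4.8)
The plan is to compare $\mathfrak{S}_{\bh}$ with its truncation at $z := \log y$ \emph{multiplicatively} and to bound the resulting tail by taking logarithms. We may assume $k \ge 2$, since for $k = 1$ one has $\mathfrak{S}_{\bh} = 1$ and the truncated product is also $1$, so the identity is trivial; we also assume $\mathfrak{S}_{\bh} > 0$, which is the case of interest. Because $k \ge 2$, the hypothesis $y \ge \min\{\e^3,\e^k\}$ forces $z \ge 2$, so the prime $2$ already lies in the truncated product; writing $f_p := \delta_{\bz}(p)^{-k}\delta_{\bh}(p) = (1+1/p)^k\delta_{\bh}(p)$ for $p \equiv 3 \bmod 4$ and using that the full product converges (Proposition \ref{prop:sssc}), I would factor
\[
  \mathfrak{S}_{\bh}
   =
    \Bigg(\prod_{\substack{p \le z \\ p \not\equiv 1 \bmod 4}} f_p\Bigg)\,R,
     \qquad
      R := \prod_{\substack{p > z \\ p \equiv 3 \bmod 4}} f_p .
\]
Since the truncated product is positive, the statement reduces to $R = 1 + O(k^3/\log\log y)$, and for this it suffices to prove $|\log R| \ll k^3/\log z$; this yields the claim whenever $k^3/\log z$ is bounded, which is precisely the range in which the statement carries information (in particular for fixed $k$).

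The estimate of $\log R = \sum_{p > z,\, p \equiv 3(4)} \log f_p$ rests on three bounds for the local factors, all consequences of Section \ref{sec:prelims}. First, by translation invariance and monotonicity of the $p$-adic densities, $\delta_{\bh}(p) \le \delta_{\{h_1\}}(p) = \delta_{\bz}(p) = (1+1/p)^{-1}$, hence $f_p \le (1+1/p)^{k-1}$ and $\log f_p \le (k-1)/p$. Second, when $p \ge k$, inequality \eqref{eq:delthpropsp3} gives $f_p \ge (1+1/p)^{k-1}(1-(k-1)/p) \ge 1 - (k-1)^2/p^2 > 0$, so $-\log f_p \le (k-1)^2/(p^2-(k-1)^2)$, which is $\ll k^2/p^2$ for $p \ge 2k$ and $\ll k/(p-k+1)$ for $k \le p < 2k$. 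Third --- the only place $\mathfrak{S}_{\bh} > 0$ is used --- Proposition \ref{prop:Sp3h}(b) gives $\delta_{\bh}(p) \ge p^{-\alpha_p-2}$ with $\alpha_p = 1 + \max_{i \ne j}\nu_p(h_i - h_j)$, and since $p^{\max_{i \ne j}\nu_p(h_i-h_j)} \le h_k - h_1 \le y$ this yields the (weak) bound $-\log f_p \le -\log\delta_{\bh}(p) \le 3\log p + \log y$, valid for every $p$.

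The arithmetic inputs are: (a) $\det(\bh) = \prod_{i<j}(h_j - h_i) \le y^{\binom{k}{2}}$, so $\#\{p \mid \det(\bh): p > z\} \le \binom{k}{2}(\log y)/\log z$, since each such prime contributes at least $\log z$ to $\log\det(\bh)$; and (b) by pigeonhole (cf.\ the remark following \eqref{eq:defdethDh}) every prime $p \le k-1$ divides $\det(\bh)$, and there are at most $\#\{p \le k\} \ll k/\log k$ primes below $k$. I would then split the tail primes into the ranges $p \ge 2k$, $k \le p < 2k$, $z < p < k$, and split the first range according to whether $p \mid \det(\bh)$. For $p \ge 2k$ with $p \nmid \det(\bh)$ one has $f_p \le 1$ and $-\log f_p \ll k^2/p^2$, so the contribution is $\ll k^2\sum_{p > \max(z,2k)}p^{-2} \ll k^2/\bigl(\max(z,2k)\log\max(z,2k)\bigr) \ll k^3/\log z$. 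For $p \ge 2k$ with $p \mid \det(\bh)$, combining the first two bounds with (a) shows that $\sum (k-1)/p$ and $\sum k^2/p^2$ over such $p$ are both $\ll k^3/\log z$, since there are at most $\binom{k}{2}(\log y)/\log z$ of them and each exceeds $\max(z,2k)$. In the range $k \le p < 2k$ (non-empty only when $z < 2k$) there are $\ll k/\log k$ primes, each contributing $\ll k/(p-k+1) + (k-1)/p$, whence a total $\ll k\log k \ll k^3/\log z$ (using $\log z < \log 2k$). Finally, in the range $z < p < k$ every prime divides $\det(\bh)$; this range is non-empty only when $z < k$, so that $\log\log y = \log z < \log k$, and then the first bound gives $\sum (k-1)/p \ll k\log\log k$ and the third gives $\sum (3\log p + \log y) \ll \sum_{p<k}\log p + z\cdot\#\{p<k\} \ll k + zk/\log k \ll k^2/\log k$, all of which are $\ll k^3/\log z$. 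Adding the four contributions yields $|\log R| \ll k^3/\log z = k^3/\log\log y$, completing the proof.

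The hard part is the last range --- primes $p \mid \det(\bh)$ with $\log y < p < k$ --- where $\delta_{\bh}(p)$ has no good lower bound beyond the crude $\delta_{\bh}(p) \gg p^{-O(\log y/\log p)}$ of Proposition \ref{prop:Sp3h}(b). What saves it is that such primes are doubly scarce (at most $\#\{p < k\}$ of them by pigeonhole, at most $\binom{k}{2}(\log y)/\log\log y$ by the size of $\det(\bh)$) and that their mere existence below $k$ forces $\log\log y < \log k$, so the weak individual bounds are absorbed by the target error $k^3/\log\log y$. For $k$ fixed this range is empty once $y$ is large, and the argument collapses to the elementary tail estimate $\prod_{p \nmid \det(\bh),\, p > z}\bigl(1 + O_k(p^{-2})\bigr) = 1 + O_k\bigl(1/(z\log z)\bigr)$ together with finitely many exceptional factors.
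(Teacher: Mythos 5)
Your proposal is correct and follows the same overall strategy as the paper's argument: truncate the Euler product at $z=\log y$, take logarithms of the tail, and control the primes $p>z$ dividing $\det(\bh)$ via $\log|\det(\bh)|\le k^2\log y$, with the remaining primes contributing $1+O_k(1/p^2)$ each. The difference is in how the delicate range is handled. The paper simply invokes a packaged tail estimate (the two-sided bound on $\prod_{p\ge z}\delta_{\{0\}}(p)^{-k}\delta_{\bh}(p)$, split according to whether $p\mid\det(\bh)$), whose lower bound for the primes dividing $\det(\bh)$ is really only justified when $z\ge k$, since it rests on $\delta_{\{0\}}(p)^{-k}\delta_{\bh}(p)\ge 1-(k-1)^2/p^2$, which is vacuous for $p<k$; the statement $y\ge\min\{\e^3,\e^k\}$ permits $\log y<k$, so this range is genuinely present. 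You treat it head-on with the crude but uniform bound $\delta_{\bh}(p)\ge p^{-\alpha_p-2}$ from Proposition \ref{prop:Sp3h}(b) (valid once $\delta_{\bh}(p)>0$), and observe that the double scarcity of such primes, together with the fact that their existence forces $\log\log y<\log k$, lets the weak individual bounds be absorbed into $k^3/\log\log y$. This makes your write-up self-contained and actually more careful than the source in that range. Two caveats, both shared with the paper's own argument and both harmless: the passage from $|\log R|\ll k^3/\log\log y$ to $R=1+O(k^3/\log\log y)$ needs $k^3/\log\log y\ll 1$ (you flag this), and the case $\mathfrak{S}_{\bh}=0$ with the vanishing local factor at a prime $p>\log y$ is only covered because such a $p$ forces $p\le k$, hence $k^3/\log\log y\gg 1$, so the asserted relation holds with an error term of size $\ge 1$; it would be worth one sentence to dispose of that case explicitly rather than setting it aside.
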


\begin{proof}
Let $z \ge \min\{3,k\}$ and suppose $y \le \e^z$.
Since $|\det(\bh)| \le y^{k^2} \le \e^{zk^2}$, we have 
$\log |\det(\bh)|/\log\log 3|\det(\bh)| \ll k^2z/\log z$.
Combining with \eqref{Seq:sssc1} and 
\eqref{Seq:sssc2} of 
Proposition \ref{Sprop:sssc}, we obtain 
\[
 \prods[p \equiv 3 \bmod 4][p > z]
  \delta_{\bz}(p)^{-k}\delta_{\bh}(p)
   =
    1 + O\bigg(\frac{k^3}{\log z}\bigg),
\]
from which \eqref{eq:sotssingserconv3} follows.
\end{proof}
 
Given a number $t \ge 1$ and a set of primes $\mathscr{P}$, let 
\begin{equation}
 \label{eq:defPsi}
  \Psi(t,\mathscr{P})
   \defeq 
     \#\{n \le t : p \mid n \implies p \in \mathscr{P}\}.
\end{equation}
If $\mathscr{P}$ is the set of all primes less than or equal to 
some number $z$, then $\Psi(t,\mathscr{P})$ is the number of 
$z$-smooth numbers less than or equal to $t$, denoted $\Psi(t,z)$.

\begin{proposition}
 \label{prop:ssak=2pre}
Let $\mathscr{P}$ be any nonempty, finite subset of the primes 
that are not congruent to $1$ modulo $4$.
For $t \ge 1$, we have 
\[
  \sum_{g \le t}
   {\textstyle \prod_{p \in \mathscr{P}} } \delta_{\bz}(p)^{-2}\delta_{\{0,g\}}(p)
    =
     t 
      + O\Big(
              2^{\ecard\mathscr{P}}
               \Psi(t,\mathscr{P})  
                {\textstyle \prod_{p \in \mathscr{P}}\big(1 + \frac{1}{p}\big)}
         \Big).
\]
\end{proposition}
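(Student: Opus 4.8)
The plan is to rewrite the sum as an average of explicit local factors, peel off the part of $g$ supported on $\mathscr{P}$, and finish with an elementary sieve. Throughout, put $Q:=\prod_{p\in\mathscr{P}}p$ and call a positive integer $\mathscr{P}$-smooth if each of its prime factors lies in $\mathscr{P}$ (so $\Psi(t,\mathscr{P})$ counts the $\mathscr{P}$-smooth integers $\le t$). First I would use the $k=2$ cases of Propositions~\ref{prop:S2h} and~\ref{prop:Sp3h} to record that $\delta_{\bz}(p)=(1+1/p)^{-1}$ for $p\equiv 3\bmod 4$, that $\delta_{\bz}(2)=1/2$, and that $\delta_{\{0,g\}}(p)$ depends on $g$ only through $a:=\nu_p(g)$: for $p\equiv 3\bmod 4$ one has $\delta_{\{0,g\}}(p)=(1+1/p)^{-1}\big(1-p^{-a-1}\big)$, with an analogous closed form at $p=2$. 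Consequently $f_p(g):=\delta_{\bz}(p)^{-2}\delta_{\{0,g\}}(p)$ is a function of $a$ alone; denote its value by $f_p(p^a)$. A direct check gives $f_p(1)=1-1/p^2$ for odd $p$, $f_2(1)=1$, and $0\le f_p(p^a)\le 1+1/p$ for odd $p$ while $0\le f_2(2^a)\le 2$; in particular $0\le F(d)\le\tfrac43\prod_{p\in\mathscr{P}}(1+1/p)$ for the quantity $F(d)$ defined below.

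Next, each $g$ factors uniquely as $g=dg'$ with $d$ $\mathscr{P}$-smooth and $(g',Q)=1$, and $\prod_{p\in\mathscr{P}}f_p(g)$ depends only on $(\nu_p(d))_{p\in\mathscr{P}}$, hence equals $F(d):=\prod_{p\in\mathscr{P}}f_p(p^{\nu_p(d)})$. Therefore
\[
 \sum_{g\le t}\ \prod_{p\in\mathscr{P}}\delta_{\bz}(p)^{-2}\delta_{\{0,g\}}(p)
  \ =\ \sum_{\substack{d\,\mathscr{P}\text{-smooth}\\ d\le t}} F(d)\,N_Q(t/d),
   \qquad N_Q(X):=\#\{m\le X:(m,Q)=1\}.
\]
By inclusion--exclusion, $N_Q(X)=X\prod_{p\in\mathscr{P}}(1-1/p)+O(2^{\ecard\mathscr{P}})$ for $X\ge 1$, while $N_Q(X)=0$ for $X<1$. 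Substituting, and writing $\sum_{d\le t}=\sum_{d\,\mathscr{P}\text{-smooth}}-\sum_{d>t}$, the putative main term becomes $t\prod_{p\in\mathscr{P}}(1-1/p)\sum_{d\,\mathscr{P}\text{-smooth}}F(d)/d$, and the latter series factors as $\prod_{p\in\mathscr{P}}\big(\sum_{a\ge0}f_p(p^a)/p^a\big)$. A short computation, carried out separately for $p=2$ and for $p\equiv 3\bmod 4$, shows $\sum_{a\ge0}f_p(p^a)/p^a=(1-1/p)^{-1}$ in each case, so the main term collapses to exactly $t$.

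It remains to bound the two error contributions. The sieve error is $\ll 2^{\ecard\mathscr{P}}\sum_{d\le t,\,d\,\mathscr{P}\text{-smooth}}|F(d)|\ll 2^{\ecard\mathscr{P}}\,\Psi(t,\mathscr{P})\prod_{p\in\mathscr{P}}(1+1/p)$, using the bound on $|F(d)|$ from the first paragraph and the fact that there are exactly $\Psi(t,\mathscr{P})$ $\mathscr{P}$-smooth $d\le t$. The remaining contribution, $-\,t\prod_{p\in\mathscr{P}}(1-1/p)\sum_{d>t,\,d\,\mathscr{P}\text{-smooth}}F(d)/d$, is $O\big(\prod_{p\in\mathscr{P}}(1-1/p)\cdot\max_d|F(d)|\cdot t\sum_{d>t,\,d\,\mathscr{P}\text{-smooth}}1/d\big)$, and here the crucial input is the uniform tail bound
\[
 \sum_{\substack{d>t\\ d\,\mathscr{P}\text{-smooth}}}\frac1d\ \le\ \frac{\Psi(t,\mathscr{P})}{t}\,\prod_{p\in\mathscr{P}}\Big(1-\frac1p\Big)^{-1}.
\]
Granting this, the factors $\prod(1-1/p)^{\pm1}$ cancel and this contribution is $\ll\Psi(t,\mathscr{P})\prod_{p\in\mathscr{P}}(1+1/p)$, comfortably inside the claimed error, which gives the Proposition. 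To prove the tail bound, for a $\mathscr{P}$-smooth $d>t$ let $\psi(d)$ be the largest divisor of $d$ with $\psi(d)\le t$; then $\psi(d)$ is $\mathscr{P}$-smooth, $\psi(d)>t/\max\mathscr{P}$, and every prime dividing $d/\psi(d)$ exceeds $t/\psi(d)$ (else $p\cdot\psi(d)$ would be a larger divisor of $d$ still $\le t$), whence $\sum_{d:\,\psi(d)=e}1/d\le t^{-1}\prod_{p\in\mathscr{P}}(1-1/p)^{-1}$ for each admissible value $e$; summing over the at most $\Psi(t,\mathscr{P})$ such $e$ finishes it.

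I expect this last tail estimate to be the main obstacle. The trivial bound $\sum_{d>t,\,d\,\mathscr{P}\text{-smooth}}1/d\le\prod_{p\in\mathscr{P}}(1-1/p)^{-1}$ is useless (it is not even $o(1)$ as $t\to\infty$), and a naive dyadic decomposition loses a spurious factor of $\max\mathscr{P}$, which does not appear in the target error term; the ``largest divisor $\le t$'' bookkeeping is precisely what yields the correct \emph{uniform} dependence on $\mathscr{P}$. By contrast, the remaining ingredients — the explicit evaluation of the local factors, the factorization $g=dg'$, and the Euler-product identity $\sum_{a\ge0}f_p(p^a)/p^a=(1-1/p)^{-1}$ — are routine calculations.
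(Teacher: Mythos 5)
Your route is essentially the paper's: factor each $g$ into its $\mathscr{P}$-smooth part $d$ times a cofactor coprime to $Q=\prod_{p\in\mathscr{P}}p$, count the cofactors by the Eratosthenes--Legendre sieve (error $O(2^{\ecard\mathscr{P}})$ per smooth part, hence $O(2^{\ecard\mathscr{P}}\Psi(t,\mathscr{P}))$ in total once the bounded local factors are taken out), and collapse the main term to $t$ via the Euler-product identity $\sum_{a\ge0}f_p(p^a)p^{-a}=(1-1/p)^{-1}$, which matches the identities the paper records for $p=2$ and $p\equiv3\bmod4$. All of that part of your argument checks out. The only structural difference is the treatment of the truncation $d\le t$: the paper completes the geometric sums one prime at a time, picking up a truncation error at each of the $\ecard\mathscr{P}$ stages and then absorbing the resulting factor $\ecard\mathscr{P}$ into $2^{\ecard\mathscr{P}}$, whereas you complete them all at once and bound the global tail $\sum_{d>t}1/d$ over $\mathscr{P}$-smooth $d$.

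That global tail bound is false as stated. Take $t=N$ and let $\mathscr{P}$ consist of $s$ primes $\equiv3\bmod4$ lying in $(N,2N]$. Then $\Psi(t,\mathscr{P})=1$ and $\prod_{p\in\mathscr{P}}(1-1/p)^{-1}\le\exp(2s/N)=O(1)$, so your right-hand side is $O(1/N)$, while the left-hand side is at least $\sum_{p\in\mathscr{P}}1/p\ge s/(2N)$, which exceeds it for $s$ large. The slip is in the ``whence'': with $e=\psi(d)$ fixed, the quotients $m=d/e$ range over integers $m>1$ all of whose prime factors lie in $\mathscr{P}'=\{p\in\mathscr{P}:p>t/e\}$, and
\[
\sum_m\frac1m\ \le\ \prod_{p\in\mathscr{P}'}\Big(1-\frac1p\Big)^{-1}-1\ \le\ \Big(\sum_{p\in\mathscr{P}'}\frac{1}{p-1}\Big)\prod_{p\in\mathscr{P}'}\Big(1-\frac1p\Big)^{-1};
\]
each term of the inner sum is $\le 2e/t$, but there are up to $\ecard\mathscr{P}$ of them, so the correct per-$e$ bound is $\ll\ecard\mathscr{P}\cdot t^{-1}\prod_{p\in\mathscr{P}}(1-1/p)^{-1}$, not $t^{-1}\prod_{p\in\mathscr{P}}(1-1/p)^{-1}$. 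This costs an extra factor $\ecard\mathscr{P}$ in the tail contribution, which is harmless here --- $\ecard\mathscr{P}\le2^{\ecard\mathscr{P}}$, which is precisely the slack the paper itself exploits --- so the proposition still follows from your argument; but the inequality you single out as ``the crucial input'' must be corrected by that factor, and the one-line justification you give for it does not prove it.
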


\begin{proof}
We begin with two observations, which we will use in the course of 
the proof.
First, let $g_1$ and $g_2$ be integers, and set $g = g_1g_2$.
Let $p$ be any prime and suppose $p \nmid g_1$, so that 
$\alpha \defeq \nu_p(g) = \nu_p(g_2)$.
Recall from Proposition \ref{Aprop:sssk=2} that in the case 
where $p \equiv 3 \bmod 4$, we have
\begin{equation*}
  \delta_{\bz}(p)^{-1}\delta_{\{0,g\}}(p)
   = 
    \delta_{\bz}(p)^{-1}\delta_{\{0,g_2\}}(p)
     =
       \bigg(1 - \frac{1}{p^{\alpha + 1}}\bigg),
\end{equation*}
while in the case where $p = 2$, we have 
\begin{equation*}
  \delta_{\bz}(2)^{-2}\delta_{\{0,g\}}(2)
   = 
    \delta_{\bz}(2)^{-2}\delta_{\{0,g_2\}}(2)
     =
        \begin{cases}
         1                                     & \alpha = 0    \\
         \frac{2^{\alpha + 1} - 3}{2^{\alpha}} & \alpha \ge 1.
        \end{cases}
\end{equation*}
Second, for $\beta \ge \alpha \ge 0$ (and any prime $p$) we have 
\begin{equation}
 \label{eq:ssak=2prepf1}
  \sum_{\alpha = 0}^{\beta}
   \frac{1}{p^{\alpha}}
    \bigg(1 - \frac{1}{p^{\alpha + 1}}\bigg)
     =
      \bigg(1 - \frac{1}{p^2}\bigg)^{-1}
      +
       O\bigg(\frac{1}{p^{\beta + 1}}\bigg);
\end{equation}
\begin{nixnix}
\begin{align*}
   \sum_{\alpha = 0}^{\beta}
    \frac{1}{p^{\alpha}}
     \bigg(1 - \frac{1}{p^{\alpha + 1}}\bigg)
 & = 
  \sum_{\alpha = 0}^{\infty} \frac{1}{p^{\alpha}}
   -
    \frac{1}{p}
     \sum_{\alpha = 0}^{\infty} \frac{1}{p^{2\alpha}}
      +
       O\bigg(\sum_{\alpha > \beta} \frac{1}{p^{\alpha}}\bigg)
 \\
 & = 
  \bigg(1 - \frac{1}{p}\bigg)^{-1}
   -
    \frac{1}{p}
     \bigg(1 - \frac{1}{p^2}\bigg)^{-1}
     +
      O\bigg(\frac{1}{p^{\beta + 1}}\bigg)
 \\
 & = 
  \bigg(1 - \frac{1}{p^2}\bigg)^{-1}
     +
      O\bigg(\frac{1}{p^{\beta + 1}}\bigg) 
\end{align*}
\end{nixnix}
we also have 
\begin{equation}
 \label{eq:ssak=2prepf2}
  \sum_{\alpha = 1}^{\beta}
   \frac{1}{2^{\alpha}}
    \cdot 
     \frac{2^{\alpha + 1} - 3}{2^{\alpha}}
    =
      1 + O\bigg(\frac{1}{2^{\beta}}\bigg).
\end{equation}
\begin{nixnix}
\begin{align*}
 \sum_{\alpha = 1}^{\beta}
  \frac{1}{2^{\alpha}}
   \cdot \frac{2^{\alpha + 1} - 3}{2^{\alpha}}
 & =
    \sum_{\alpha = 1}^{\beta}
     \bigg(\frac{2}{2^{\alpha}} - \frac{3}{2^{2\alpha}}\bigg)
 \\
 & = 
    \sum_{\alpha - 1 = 0}^{\infty} \frac{1}{2^{\alpha - 1}}
     -
     \frac{3}{4}
      \sum_{\alpha - 1 = 0}^{\infty} \frac{1}{4^{\alpha - 1}}
       +
        O\bigg(\sum_{\alpha > \beta} \frac{1}{2^{\alpha}}\bigg)
 \\
 & = 
    2 - \frac{3}{4}\bigg(\frac{3}{4}\bigg)^{-1} + O\bigg(\frac{1}{2^{\beta + 1}}\bigg) 
 \\
 & = 
     1 + O\bigg(\frac{1}{2^{\beta}}\bigg)
\end{align*}
\end{nixnix}

Suppose $\mathscr{P} = \{p_0,p_1,\ldots,p_r\}$, where $r \ge 0$ 
and $p_0,p_1,\ldots,p_r$ are distinct primes.
Suppose further that 
$p_1 \equiv \cdots \equiv p_r \equiv 3 \bmod 4$, while either 
$p_0 = 2$ or $p_0 \equiv 3 \bmod 4$ (to be specified).
If $g_1$ is any integer coprime with $p_0\cdots p_r$ then, by our 
first observation, 
\begin{equation*}
  \prod_{i = 0}^r
   \delta(p_i)^{-2}\delta_{\{0,g_1p_0^{\alpha_0}\cdots p_r^{\alpha_r}\}}(p_i)
    =
     \delta(p_0)^{-2}\delta_{\{0,p_0^{\alpha_0}\}}(p_0)
      \prod_{i = 1}^r\delta(p_i)^{-1}\bigg(1 - \frac{1}{p_i^{\alpha_i + 1}}\bigg).
\end{equation*}
\begin{nixnix}
\begin{equation*}
  \prod_{i = 0}^r
   \delta_{\{0,g_1p_0^{\alpha_0}\cdots p_r^{\alpha_r}\}}(p_i)
   =
     \prod_{i = 0}^r\delta_{\{0,p_i^{\alpha_i}\}}(p_i)
   =
        \delta_{\{0,p_0^{\alpha_0}\}}(p_0)
         \prod_{i = 1}^r\delta(p_i)\bigg(1 - \frac{1}{p_i^{\alpha_i + 1}}\bigg) 
\end{equation*}
\end{nixnix}
Now let $t \ge 1$, and for $0 \le j \le r$ let 
\[
 A_{r - j}
  \defeq 
   \{(\alpha_0,\ldots,\alpha_{r - j}) : p_0^{\alpha_0}\cdots p_{r - j}^{\alpha_{r - j}} \le t\}.
\]
We have 
\begin{align*}
  &
    \sum_{g \le t}
    {\textstyle \prod_{i = 0}^r } \delta(p_i)^{-2}\delta_{\{0,g\}}(p_i) 
  \\
  & = 
    \sum_{(\alpha_0,\ldots,\alpha_r) \in A_r}
     \sums[g_1 \le t/p_0^{\alpha_0}\cdots p_r^{\alpha_r}]
      {\textstyle \prod_{i = 0}^r } \delta(p_i)^{-2}\delta_{\{0,g_1p_0^{\alpha_0}\cdots p_r^{\alpha_r}\}}(p_i)
 \\
  & = 
   \bigg(\prod_{i = 1}^r \delta(p_i)^{-1}\bigg)
    \sum_{(\alpha_0,\ldots,\alpha_r) \in A_r}
     \delta(p_0)^{-2}\delta_{\{0,p_0^{\alpha_0}\}}(p_0)
      {\textstyle \prod_{i = 1}^r\big(1 - \frac{1}{p_i^{\alpha_i + 1}}\big)}
       \sums[g_1 \le t/p_0^{\alpha_0}\cdots p_r^{\alpha_r}] 1.
\end{align*}
To estimate this last sum we apply the sieve of 
Eratosthenes--Legendre: 
\begin{equation*}
  \sums[g_1 \le t/p_0^{\alpha_0}\cdots p_r^{\alpha_r}][(g_1,p_0\cdots p_r) = 1] 1
 =
   \frac{t}{p_0^{\alpha_0}\cdots p_r^{\alpha_r}}
   \prod_{i = 0}^r\bigg(1 - \frac{1}{p_i}\bigg)
    + 
     O(2^r).
\end{equation*}
\begin{nixnix}
\begin{align*}
 \sums[g_1 \le t/p_0^{\alpha_0}\cdots p_r^{\alpha_r}][(g_1,p_0\cdots p_r) = 1] 1
  & =
   \sum_{g_1 \le t/p_0^{\alpha_0}\cdots p_r^{\alpha_r}}
    \sums[d \mid g_1][d \mid p_0\cdots p_r] \mu(d)
 \\
  & = 
   \sum_{d \mid p_0\cdots p_r}
    \mu(d)
     \sums[g_1 \le t/p_0^{\alpha_0}\cdots p_r^{\alpha_r}][g_1 \equiv 0 \bmod d] 1
 \\
 & = 
   \sum_{d \mid p_0\cdots p_r}
    \mu(d)
     \bigg(\frac{t}{dp_0^{\alpha_0}\cdots p_r^{\alpha_r}} + O(1)\bigg) 
 \\
  & = 
   \frac{t}{p_0^{\alpha_0}\cdots p_r^{\alpha_r}}
    \sum_{d \mid p_0\cdots p_r} \frac{\mu(d)}{d}
    +
     O\bigg(\sum_{d \mid p_0\cdots p_r} |\mu(d)|\bigg)
 \\
 & = 
  \frac{t}{p_0^{\alpha_0}\cdots p_r^{\alpha_r}}
   \prod_{i = 0}^r\bigg(1 - \frac{1}{p_i}\bigg)
    + 
     O(2^r) 
\end{align*}
\end{nixnix}
Combining and recalling that 
$\delta(p_i)^{-1} = 1 + 1/p_i$ for $i = 1,\ldots,r$, we obtain 
\begin{align}
 \begin{split}
  \label{eq:ssak=2prepf3} 
 & 
   \sum_{g \le t}
   {\textstyle \prod_{i = 0}^r } \delta(p_i)^{-2}\delta_{\{0,g\}}(p_i) 
 \\
 & = 
  t\bigg(1 - \frac{1}{p_0}\bigg)
   \prod_{i = 1}^r \bigg(1 - \frac{1}{p_i^2}\bigg) 
    \sum_{(\alpha_0,\ldots,\alpha_r) \in A_r}
     \frac{\delta(p_0)^{-2}\delta_{\{0,p_0^{\alpha_0}\}}(p_0)}{p_0^{\alpha_0}}
      \prod_{i = 1}^r\frac{1}{p_i^{\alpha_i}}\bigg(1 - \frac{1}{p_i^{\alpha_i + 1}}\bigg)
 \\
 & \hspace{120pt} + 
     O\bigg(
       2^r(\card A_r)\prod_{i = 1}^r\bigg(1 + \frac{1}{p_i}\bigg) 
      \bigg).
  \end{split}
\end{align}
By repeating ($r$ times) an argument that uses 
\eqref{eq:ssak=2prepf1}, we verify that 
\begin{align}
 \begin{split}
  \label{eq:ssak=2prepf4}
 & 
  \prod_{i = 1}^r\bigg(1 - \frac{1}{p_i^2}\bigg)
   \sum_{(\alpha_0,\ldots,\alpha_r) \in A_r}
    \frac{\delta(p_0)^{-2}\delta_{\{0,p_0^{\alpha_0}\}}(p_0)}{p_0^{\alpha_0}}
     \prod_{i = 1}^r 
      \frac{1}{p_i^{\alpha_i}}
       \bigg(1 - \frac{1}{p_i^{\alpha_i + 1}}\bigg) 
  \\
 & \hspace{30pt} =
   \sum_{\alpha_0 \in A_0}
    \frac{\delta(p_0)^{-2}\delta_{\{0,p_0^{\alpha_0}\}}(p_0)}{p_0^{\alpha_0}}
     +
      O\bigg(\frac{1}{t}\sum_{j = 1}^{r - 1} \card A_{r - j}\bigg).
 \end{split}
\end{align}
\begin{nixnix}
Given $(\alpha_0,\ldots,\alpha_{r - 1}) \in A_{r - 1}$, let 
$\beta_r$ be the integer such that 
\[
 p^{\beta_r} 
  \le t/(p_0^{\alpha_0}\cdots p_{r - 1}^{\alpha_{r-1}}) 
   < p_r^{\beta_r + 1}.
\]
Below, the second equality follows from 
\eqref{eq:ssak=2prepf1}:  
\begin{align*}
 & 
 \sum_{(\alpha_0,\ldots,\alpha_r) \in A_r}
  \frac{\delta(p_0)^{-2}\delta_{\{0,p_0^{\alpha_0}\}}(p_0)}{p_0^{\alpha_0}}
   \prod_{i = 1}^r 
    \frac{1}{p_i^{\alpha_i}}
     \bigg(1 - \frac{1}{p_i^{\alpha_i + 1}}\bigg)
 \\
 & \hspace{10pt} = 
  \sum_{(\alpha_0,\ldots,\alpha_{r-1}) \in A_{r-1}}
   \frac{\delta(p_0)^{-2}\delta_{\{0,p_0^{\alpha_0}\}}(p_0)}{p_0^{\alpha_0}}
    \prod_{i = 1}^{r - 1} 
     \frac{1}{p_i^{\alpha_i}}
      \bigg(1 - \frac{1}{p_i^{\alpha_i + 1}}\bigg)
       \sum_{\alpha_r = 0}^{\beta_r}
        \frac{1}{p_r^{\alpha_r}}
         \bigg(1 - \frac{1}{p_r^{\alpha_r}}\bigg)
 \\
 & \hspace{10pt} = 
  \sum_{(\alpha_0,\ldots,\alpha_{r-1}) \in A_{r-1}}
   \frac{\delta(p_0)^{-2}\delta_{\{0,p_0^{\alpha_0}\}}(p_0)}{p_0^{\alpha_0}}
    \prod_{i = 1}^{r - 1} 
     \frac{1}{p_i^{\alpha_i}}
      \bigg(1 - \frac{1}{p_i^{\alpha_i + 1}}\bigg)
 \\
 & \hspace{180pt} \times 
       \bigg\{
        \bigg(1 - \frac{1}{p_r^2}\bigg)^{-1}
          + O\bigg(\frac{p_0^{\alpha_0}\cdots p_{r-1}^{\alpha_{r-1}}}{t} \bigg)
       \bigg\}
 \\
 & \hspace{10pt} =
    \bigg(1 - \frac{1}{p_r^2}\bigg)^{-1}
     \sum_{(\alpha_0,\ldots,\alpha_{r-1}) \in A_{r-1}}
      \frac{\delta(p_0)^{-2}\delta_{\{0,p_0^{\alpha_0}\}}(p_0)}{p_0^{\alpha_0}}
       \prod_{i = 1}^{r - 1} 
        \frac{1}{p_i^{\alpha_i}}
         \bigg(1 - \frac{1}{p_i^{\alpha_i + 1}}\bigg)
 \\
 & \hspace{210pt} + 
           O\bigg(\frac{\card A_{r-1}}{t}\bigg).
\end{align*}
Repeating this argument $r - 1$ more times, we obtain
\begin{align*}
 & 
  \prod_{i = 1}^r\bigg(1 - \frac{1}{p_i^2}\bigg)
   \sum_{(\alpha_0,\ldots,\alpha_r) \in A_r}
    \frac{\delta(p_0)^{-2}\delta_{\{0,p_0^{\alpha_0}\}}(p_0)}{p_0^{\alpha_0}}
     \prod_{i = 1}^r 
      \frac{1}{p_i^{\alpha_i}}
       \bigg(1 - \frac{1}{p_i^{\alpha_i + 1}}\bigg) 
  \\
 & \hspace{30pt} =
   \sum_{\alpha_0 \in A_0}
    \frac{\delta(p_0)^{-2}\delta_{\{0,p_0^{\alpha_0}\}}(p_0)}{p_0^{\alpha_0}}
     +
      O\bigg(\frac{1}{t}\sum_{j = 1}^{r - 1} \card A_{r - j}\bigg).
\end{align*}
\end{nixnix}
If $p_0 \equiv 3 \bmod 4$ then repeating the argument one more 
time leads to 
\begin{align}
 \begin{split}
  \label{eq:ssak=2prepf5}
 & 
  \bigg(1 - \frac{1}{p_0}\bigg)
   \prod_{i = 1}^r\bigg(1 - \frac{1}{p_i^2}\bigg)
    \sum_{(\alpha_0,\ldots,\alpha_r) \in A_r}
     \frac{\delta(p_0)^{-2}\delta_{\{0,p_0^{\alpha_0}\}}(p_0)}{p_0^{\alpha_0}}
      \prod_{i = 1}^r 
       \frac{1}{p_i^{\alpha_i}}
        \bigg(1 - \frac{1}{p_i^{\alpha_i + 1}}\bigg) 
  \\
 & \hspace{30pt} =
   1
     +
      O\bigg(\frac{1}{t}\sum_{j = 1}^{r} \card A_{r - j}\bigg).
 \end{split}
\end{align}
If $p_0 = 2$, then by a similar argument that uses 
\eqref{eq:ssak=2prepf2}, we obtain 
\begin{equation*}
   \bigg(1 - \frac{1}{p_0}\bigg)
    \sum_{\alpha_0 \in A_0}
     \frac{\delta(p_0)^{-2}\delta_{\{0,p_0^{\alpha_0}\}}(p_0)}{p_0^{\alpha_0}}
   = 
    1 + O\bigg(\frac{1}{2^{\beta_0}}\bigg).
\end{equation*}
\begin{nixnix}
Assume now that $p_0 = 2$.
Let $\beta_0$ be the integer such that 
$2^{\beta_0} \le t < 2^{\beta_0 + 1}$.
By \eqref{eq:ssak=2prepf2} we have 
\begin{equation*}
   \bigg(1 - \frac{1}{p_0}\bigg)
    \sum_{\alpha_0 \in A_0}
     \frac{\delta(p_0)^{-2}\delta_{\{0,p_0^{\alpha_0}\}}(p_0)}{p_0^{\alpha_0}}
   = 
   \frac{1}{2}
    \bigg\{ 
     1
      +
     \sum_{\alpha_0 = 1}^{\beta_0}
      \frac{1}{2^{\alpha_0}}
       \cdot 
        \frac{2^{\alpha_0 + 1} - 3}{2^{\alpha_0}}
    \bigg\}
   = 
    1 + O\bigg(\frac{1}{2^{\beta_0}}\bigg).
\end{equation*}
\end{nixnix}

Putting this into \eqref{eq:ssak=2prepf4}, we again obtain 
\eqref{eq:ssak=2prepf5}.
In any case, putting \eqref{eq:ssak=2prepf5} into 
\eqref{eq:ssak=2prepf3}, we finally obtain 
\[ 
  \sum_{1 \le g \le t}
  {\textstyle \prod_{i = 0}^r } \delta(p_i)^{-2}\delta_{\{0,g\}}(p_i) 
   =
    t 
     + 
      O\bigg(
        2^r(\card A_r)\prod_{i = 1}^r\bigg(1 + \frac{1}{p_i}\bigg) 
        + \sum_{j = 1}^r \card A_{r - j} 
       \bigg).
\]
As 
$
 \sum_{j = 1}^r \card A_{r - j} 
  \le r(\card A_r) 
   \le 2^r(\card A_r)
$, 
we can disregard this sum in the above $O$-term.
We complete the proof by noting that $A_r$ is in one-to-one 
correspondence with the integers $n \le t$ whose prime divisors 
all lie in $\mathscr{P} = \{p_0,\ldots,p_r\}$, and so  
$\card A_{r} = \Psi(t,\mathscr{P})$.
\end{proof}

\begin{proposition}
 \label{prop:sssak=2}
For $y \ge \e^3$ we have 
\[
  \sum_{0 \le h_1 < h_2 \le y}
   \mathfrak{S}_{\{h_1,h_2\}}
    =
     \frac{y^2}{2} 
      \bigg(1 + O\bigg(\frac{1}{\log\log y}\bigg)\bigg).
\]
\end{proposition}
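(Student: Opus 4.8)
The plan is to use the fact that the singular series for a pair is translation invariant --- equivalently, by Proposition \ref{Aprop:sssk=2}, that $\mathfrak{S}_{\{h_1,h_2\}}$ depends only on $g=h_2-h_1$ --- to collapse the double sum over pairs into a sum over differences. Writing $N\defeq\lfloor y\rfloor$, the number of pairs $0\le h_1<h_2\le y$ with $h_2-h_1=g$ is $N+1-g$ for $1\le g\le N$, so the quantity to be estimated equals $\sum_{g=1}^{N}(N+1-g)\,\mathfrak{S}_{\{0,g\}}$, which I would rewrite as $\sum_{t=1}^{N}\sum_{g=1}^{t}\mathfrak{S}_{\{0,g\}}$ (since $N+1-g=\#\{t:g\le t\le N\}$). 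The whole proposition then reduces to the single \emph{uniform} estimate
\[
 \sum_{g=1}^{t}\mathfrak{S}_{\{0,g\}} = t + O\bigg(\frac{y}{\log\log y}\bigg) \qquad (1\le t\le N),
\]
after which summing over $t\le N$ gives $\tfrac12 N(N+1)+O\!\big(Ny/\log\log y\big)=\tfrac12 y^2\big(1+O(1/\log\log y)\big)$, as required.

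To prove the displayed estimate I would first apply Lemma \ref{lem:sssak=2aux1} with $k=2$: for every $g$ with $0\le g\le y$ it replaces $\mathfrak{S}_{\{0,g\}}$ by the truncated local product $\prod_{p\not\equiv 1\bmod 4,\,p\le\log y}\delta_{\bz}(p)^{-2}\delta_{\{0,g\}}(p)$ at the cost of a multiplicative factor $1+O(1/\log\log y)$ whose implied constant is absolute (inherited from the absolute constants of Proposition \ref{Sprop:sssc}), hence uniform in $g$; this uniformity is exactly what makes that factor harmless after summation. Setting $\mathscr{P}\defeq\{p\not\equiv 1\bmod 4: p\le\log y\}$, which is nonempty finite since $y\ge\e^3$, I would then invoke Proposition \ref{prop:ssak=2pre} to evaluate $\sum_{g\le t}\prod_{p\in\mathscr{P}}\delta_{\bz}(p)^{-2}\delta_{\{0,g\}}(p) = t + O\big(2^{\#\mathscr{P}}\,\Psi(t,\mathscr{P})\prod_{p\in\mathscr{P}}(1+1/p)\big)$, and combine the two displays to get $\sum_{g\le t}\mathfrak{S}_{\{0,g\}}=(1+O(1/\log\log y))(t+(\text{error}))$.

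The main obstacle is controlling that error term, which carries the a priori large factor $2^{\#\mathscr{P}}$. Here $\#\mathscr{P}\le\pi(\log y)\ll\log y/\log\log y$, so $2^{\#\mathscr{P}}=y^{O(1/\log\log y)}$; by Mertens' theorem (cf.\ \eqref{eq:mert}) $\prod_{p\le\log y}(1+1/p)\ll\log\log y$; and, bounding $\Psi(t,\mathscr{P})\le\Psi(y,\log y)$, the smooth-number estimate \eqref{eq:smth} taken with $z=\log y$ (so that the exponent $(\log y/\log z)\,g(z/\log y)$ there is $\ll\log y/\log\log y$) gives $\Psi(y,\log y)=y^{O(1/\log\log y)}$. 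The delicate check is that these exponents genuinely combine to $o(1)$ --- that is, that the $\log y$-smooth integers below $y$ are sparse enough to absorb the loss from $2^{\#\mathscr{P}}$; granting this, the error term is $y^{o(1)}$, which is $O(y/\log\log y)$ once $y$ is large (the proposition being trivial for bounded $y$), and feeding this back through the previous paragraph and then summing over $t\le N$ completes the argument.
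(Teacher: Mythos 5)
Your argument is correct and follows essentially the same route as the paper's: truncate the singular series via Lemma \ref{lem:sssak=2aux1}, use translation invariance to reduce to the difference $g = h_2 - h_1$, evaluate $\sum_{g \le t}$ by Proposition \ref{prop:ssak=2pre}, and control the error with smooth-number estimates. The only (harmless) deviations are that you sum discretely over $t \le N$ where the paper integrates $\int_1^y (\cdot)\,\dd t$ after partial summation, and that you bound $\Psi(t,\mathscr{P})$ by $y^{O(1/\log\log y)}$ directly from \eqref{eq:smth}, whereas the paper uses the cruder $\Psi(x,z) \ll x^{1 - 1/(2\log z)}$ and therefore must invoke the prime number theorem in progressions to get $\#\mathscr{P} \sim \log y/(2\log\log y)$ (the factor $\tfrac12$ being essential there since $\log 2 > \tfrac12$) --- your version avoids that delicacy.
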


\begin{proof}
It suffices to establish the estimate for integral $y$, so we 
assume for convenience that $y$ is an integer with $y > \e^3$.
By Lemma \ref{lem:sssak=2aux1}, we have 
\begin{equation}
 \label{eq:sssak=2pf1}
 \sum_{0 \le h_1 < h_2 \le y}
  \mathfrak{S}_{\{h_1,h_2\}}
   =
    \bigg(1 + O\bigg(\frac{1}{\log\log y}\bigg)\bigg)
     \sum_{0 \le h_1 < h_2 \le y}
      {\textstyle \prod_{p \in \mathscr{P}} }
       \delta_{\bz}(p)^{-2}\delta_{\{h_1,h_2\}}(p).
\end{equation}
For any prime $p$ and integers $h_1$ and $h_2$, 
$\delta_{\{h_1,h_2\}}(p)$ depends only on $\nu_p(h_2 - h_1)$, 
i.e.\ $\delta_{\{h_1,h_2\}}(p) = \delta_{\{0,h_2 - h_1\}}(p)$.
Using this, our assumption that $y$ is an integer, and partial 
summation, we verify that  
\[
  \sum_{0 \le h_1 < h_2 \le y}
   {\textstyle \prod_{p \in \mathscr{P}} }
     \delta_{\bz}(p)^{-2}\delta_{\{h_1,h_2\}}(p)
  =
   \int_1^y 
    \Big(\sum_{g \le t} 
    {\textstyle \prod_{p \in \mathscr{P}} }
      \delta_{\bz}(p)^{-2}\delta_{\{0,g\}}(p)\Big) \dd t.
\]
(In fact this holds for any integer $y \ge 1$ and any set of 
primes $\mathscr{P}$.)
\begin{nixnix}
Let $y$ be any number with $y \ge 1$, and let $\mathscr{P}$ be any 
set of primes.
In what follows, to ease notation we we set 
\[
 \Pi_{\mathscr{P},h_1,h_2}
  = 
   {\textstyle \prod_{p \in \mathscr{P}} } \delta_{\bz}(p)^{-2}\delta_{\{h_1,h_2\}}(p)
    \quad 
     \text{and}
      \quad 
       \Pi_{\mathscr{P},0,g}
      = 
        {\textstyle \prod_{p \in \mathscr{P}} } \delta_{\bz}(p)^{-2}\delta_{\{0,g\}}(p).
\]
We have 
\begin{align*}
 \sum_{0 \le h_1 < h_2 \le y}
  \Pi_{\mathscr{P},h_1,h_2}
 & =
    \sum_{g \le y}
     \sums[0 \le h_1 < h_2 \le y][h_2 - h_1 = g] 
      \Pi_{\mathscr{P},h_1,h_2}
 \\
 & = 
        \sum_{g \le y} 
         \Pi_{\mathscr{P},0,g}
          \sums[0 \le h_1 < h_2 \le y][h_2 - h_1 = g] 1
 \\
 & =  
            \sum_{g \le y} 
             \Pi_{\mathscr{P},0,g}
              \big(y - g + O(1)\big)
 \\
 & = 
                y\sum_{g \le y}
                  \Pi_{\mathscr{P},0,g}
                 -
                   \sum_{g \le y} 
                   g\Pi_{\mathscr{P},0,g}
                  +
                     O(y{\textstyle \prod_{p \in \mathscr{P}} } \delta_{\bz}(p)^{-2})
 \\
 & = 
  \int_1^y 
   \Big(\sum_{g \le t} \Pi_{\mathscr{P},0,g}\Big) \dd t
    + 
     O(y{\textstyle \prod_{p \in \mathscr{P}} } \delta_{\bz}(p)^{-2}),
\end{align*}
where the last equality is obtained by partial summation.
Note that if $y$ is an integer, then in the third equality we can 
replace $y - g + O(1)$ by $y - g$, eliminating the two subsequent 
$O$-terms.
\end{nixnix}

Applying Proposition \ref{prop:ssak=2pre} to the integrand, we 
see that 
\begin{equation}
 \label{eq:sssak=2pf2}
   \sum_{0 \le h_1 < h_2 \le y}
   {\textstyle \prod_{p \in \mathscr{P} }} \delta_{\bz}(p)^{-2}\delta_{\{h_1,h_2\}}(p)
 =
    \frac{y^2}{2} 
     +
      O\Big(
            2^{\ecard \mathscr{P}}
            {\textstyle \prod_{p \in \mathscr{P}}  \big(1 + \frac{1}{p}\big) }
              \int_1^y \Psi(t,\mathscr{P}) \dd t 
       \Big).
\end{equation}
\begin{nixnix}
{\small 
\begin{align*}
   \sum_{0 \le h_1 < h_2 \le y}
   {\textstyle \prod_{p \in \mathscr{P} }} \delta_{\bz}(p)^{-2}\delta_{\{h_1,h_2\}}(p)
 & =
     \int_1^y 
      \Big(\sum_{g \le t} {\textstyle \prod_{p \in \mathscr{P} }} \delta_{\bz}(p)^{-2}\delta_{\{0,g\}}(p)\Big) \dd t
 \\
 & =
    \int_1^y t \dd t  
     +
      O\Big(
          2^{\ecard \mathscr{P}}
         {\textstyle \prod_{p \in \mathscr{P}}  \big(1 + \frac{1}{p}\big) }
          \int_1^t \Psi(t,\mathscr{P}) \dd t 
       \Big)
 \\
 & =
    \frac{y^2}{2} 
     +
      O\Big(
          2^{\ecard \mathscr{P}}
         {\textstyle \prod_{p \in \mathscr{P}}  \big(1 + \frac{1}{p}\big) }
          \int_1^y \Psi(t,\mathscr{P}) \dd t 
       \Big)
\end{align*} 
}
\end{nixnix}
As $2^{\ecard \mathscr{P}}$ is far greater than 
$\prod_{p \in \mathscr{P}}\big(1 + \frac{1}{p}\big)$, the bound 
$
 \prod_{p \in \mathscr{P}}\big(1 + \frac{1}{p}\big)
  \le
   \log y
$
will suffice for this product.
By the prime number theorem for arithmetic progressions, we have 
\[
 \card \mathscr{P}
  =
   \frac{\log y}{2\log\log y}
    +
     O\bigg(\frac{\log y}{(\log\log y)^2}\bigg),
\]
so it follows that 
$
  \textstyle 
 2^{\ecard \mathscr{P}}
     \prod_{p \in \mathscr{P}}  \big(1 + \frac{1}{p}\big)  
  \ll
   y^{2/(5\log\log y)}  
$.
\begin{nixnix}
\[
  \textstyle 
 2^{\ecard \mathscr{P}}
     \prod_{p \in \mathscr{P}}  \big(1 + \frac{1}{p}\big)  
  \ll
   \exp\Big\{\frac{\log y}{\log\log y}\Big(\frac{\log 2}{2} + O\Big(\frac{1}{\log\log y}\Big)\Big)\Big\}.
\]
\end{nixnix}
Since $\Psi(t,\mathscr{P}) \le \Psi(t,\log y)$, and since the 
bound $\Psi(x,z) \ll x^{1 - 1/(2\log z)}$ holds uniformly for 
$x \ge z \ge 2$ (see \cite[III.5, Theorem 1]{TEN:95}), we have  
\[
 \int_{1}^y \Psi(t,\mathscr{P}) \dd t
  \ll
   \int_1^{y} t^{1 - 1/(2\log\log y)} \dd t
    \ll
     y^{2 - 1/(2\log\log y)}.
\]
\begin{nixnix}
For a constant $c > 0$ we have
\[
 \int t^{1 - \frac{1}{2c}} \dd t
  =
   \frac{2ct^{2 - \frac{1}{2c}}}{4c - 1} + \text{constant}.
\]
\end{nixnix}
Combining the last two bounds, we see that  
\begin{equation}
 \label{eq:sssak=2pf3}
 2^{\ecard \mathscr{P}}
    {\textstyle \prod_{p \in \mathscr{P}}  \big(1 + \frac{1}{p}\big) }
      \int_1^y \Psi(t,\mathscr{P}) \dd t
       \ll
        y^{2 - 1/(10\log\log y)}.
\end{equation}
\begin{nixnix}
Indeed, 
\[
 2^{\ecard \mathscr{P}}
    {\textstyle \prod_{p \in \mathscr{P}}  \big(1 + \frac{1}{p}\big) }
     \int_1^y \Psi(t,\mathscr{P}) \dd t
  \ll
   y^2
   \textstyle 
    \exp\Big\{-\frac{\log y}{\log\log y}\Big(\frac{1 - \log 2}{2} + O\Big(\frac{1}{\log\log y}\Big)\Big)\Big\},
\]
and $(1 - \log 2)/2 = 0.1534\ldots$.
\end{nixnix}

Combining \eqref{eq:sssak=2pf1}, \eqref{eq:sssak=2pf2} and 
\eqref{eq:sssak=2pf3} gives the result.
\end{proof}


\section{Proposition \ref{prop:sssa} with a weaker error term, and a proof}
 \label{Jsec:keyprop}

\begin{proposition}
 \label{Jprop:sssa}
Fix an integer $k \ge 1$ and a bounded convex set 
$\sC \subseteq \Delta^k$.
Set $\bo \defeq \emptyset$ or set $\bo \defeq \{0\}$.
For $y \ge 1$ we have 
\begin{align}
 \label{Jeq:sssa}
  \sum_{(h_1,\ldots,h_k) \in y\sC \cap \, \ZZ^k} 
   \mathfrak{S}_{\bo \cup \bh} 
   & =
    y^k \Big( \vol(\sC) + O_{k,\sC}\big(\e^{-\sqrt{\log y}}\big)\Big), 
\end{align}
where $\bh = \{h_1,\ldots,h_k\}$ in the summand and $\vol$ stands 
for volume in $\RR^k$.
\end{proposition}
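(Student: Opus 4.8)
The plan is to follow the proof of Proposition~\ref{prop:sssa} essentially verbatim, only truncating the singular-series expansion at a slightly super-polylogarithmic parameter instead of at $y$; this costs the weaker error term stated but is otherwise the same argument. (Since $y^{-2/3+o(1)}=O_{k,\sC}(\e^{-\sqrt{\log y}})$, the proposition is in any case an immediate consequence of Proposition~\ref{prop:sssa}; the point here is a from-scratch proof.) Fix $k$, $\sC$, and $\bo$, put $\cH\defeq y\sC\cap\ZZ^k$, and write $\vbh=(h_1,\ldots,h_k)$, $\bh=\{h_1,\ldots,h_k\}$, so that $0<h_1<\cdots<h_k\ll_{\sC}y$ on $\cH$. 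First I would insert the series expansion \eqref{eq:ssassum} (applied to $\bo\cup\bh$) and evaluate the leading term $\sum_{\vbh\in\cH}1$ by Lemma~\ref{lem:lip}, reducing to
\[
 \sum_{\vbh\in\cH}\mathfrak{S}_{\bo\cup\bh}
  = y^k\vol(\sC)+O_{k,\sC}(y^{k-1})+\sum_{d\in\cD}\ \sum_{\vbh\in\cH}\epsilon_{\bo\cup\bh}(d),
\]
and then split the double sum at a threshold $D$ with $3\le D\le y$, to be fixed at the end.

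For the range $d\le D$ I would extract cancellation exactly via the cube decomposition in the proof of Proposition~\ref{prop:sssa}. Partitioning $\RR^k$ into cubes of side $d$: because $\epsilon_{\bo\cup\bh}(d)$ depends, for $d\in\cD$, only on $\bh\bmod d$ (this is the mechanism behind Lemma~\ref{lem:cancel}), the $d$-interior points of $\cH$ contribute $0$ by Lemma~\ref{lem:cancel} (which is phrased precisely so as to accommodate the coinciding residues that arise), leaving only the $d$-boundary points, concentrated in $\ll_{k,\sC}(y/d)^{k-1}$ cubes. Bounding $\lvert\epsilon_{\bo\cup\bh}(d)\rvert$ by \eqref{eq:iotad} and, per cube, $\sum_{\vbh\in C\cap\ZZ^k}(\det(\bo\cup\bh),\rad(d))\ll_k d^kB_k^{\omega(d)}$ --- via the elementary count that at most $\ll_k p^{k-1}\min(p,k^2)$ of the residue vectors $\vbh\bmod p$ make $p\mid\det(\bo\cup\bh)$, for each $p\mid d$, together with multiplicativity --- this range contributes at most $\ll_{k,\sC}y^{k-1}\sum_{d\in\cD,\,d\le D}B_k^{\omega(d)}/\sqfr(d)\ll_{k,\sC}y^{k-1}D^{1/3+o(1)}$ by \eqref{eq:realbnd2}, where $B_k$ denotes a suitable constant.

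For the range $d>D$ I would discard the cancellation and use only the pointwise bound \eqref{eq:iotad}, so that this range contributes at most $\sum_{d\in\cD,\,d>D}\tfrac{A_k^{\omega(d)}}{d\sqfr(d)}\sum_{\vbh\in\cH}(\det(\bo\cup\bh),\rad(d))$. Writing $(\det(\bo\cup\bh),\rad(d))\le\sum_{c\mid\rad(d),\,c\mid\det(\bo\cup\bh)}c$ and invoking the lattice-point count \eqref{eq:lemdethappf1} exactly as in the proof of Lemma~\ref{lem:dethap} splits this into a ``$y^k$-piece'' $\ll_k y^k\sum_{d\in\cD,\,d>D}B_k^{\omega(d)}/(d\sqfr(d))\ll_k y^kD^{-2/3+o(1)}$ (by \eqref{eq:realbnd} with modulus $1$) and a ``$y^{k-1}$-piece'' $\ll_{k,\sC}y^{k-1}(\log y)^{O_k(1)}$. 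Collecting the three ranges, the error is $\ll_{k,\sC}y^{k-1}D^{1/3+o(1)}+y^kD^{-2/3+o(1)}+y^{k-1}(\log y)^{O_k(1)}$; choosing, say, $D=\exp\big((\log y)^{2/3}\big)$ makes the relative error $O_{k,\sC}(\e^{-\sqrt{\log y}})$ for $y$ large, which is \eqref{Jeq:sssa} (for bounded $y$ the bound is trivial upon enlarging the implied constant).

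I expect the one genuine obstacle, just as in Lemma~\ref{lem:dethap}, to be the coupling of $d$ and $\vbh$ through the factor $(\det(\bo\cup\bh),\rad(d))$ in \eqref{eq:iotad}: a careless estimate of the $d>D$ tail loses a divergent $\sum_p 1/p$ coming from primes $d=p\equiv3\bmod4$ lying in $\cD$, and one genuinely needs the observation that on $\cH$ this factor is supported on divisors $c\mid\rad(d)$ all of whose prime factors are $\ll_{\sC}y$, combined with the rearrangement $c/a^2=1/(a_1^2c)$ after writing the generic element $a$ of the relevant set as $a_1c$.
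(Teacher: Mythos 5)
Your argument is correct, but it is worth recording that it is \emph{not} the route the paper takes for this particular statement. The paper's own proof of Proposition \ref{Jprop:sssa} expands $\mathfrak{S}_{\bo\cup\bh}$ over \emph{squarefree} moduli using the full local factors $\delta_{\bz}(p)^{-k}\delta_{\bo\cup\bh}(p)-1$; since these depend on $\bh$ modulo arbitrarily high powers of $p$, it must introduce the truncations $\upsilon_{\bo\cup\bh}(p^{\alpha_p};\cdot)$ with $\alpha_p\approx(\log z)/\log p$, pay an approximation error of size $y^k z^{-1/2}$, and run the cube decomposition at the inflated modulus $\prod_{p\mid d}p^{\alpha_p}\le\e^{O((\log z)^2/\log\log z)}$; balancing forces $z=\e^{2\sqrt{\log y}}$ and that is precisely where the exponent $\e^{-\sqrt{\log y}}$ comes from. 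You instead reuse the prime-power expansion \eqref{eq:ssassum} of the main Proposition \ref{prop:sssa}, where $\epsilon_{\bo\cup\bh}(d)$ genuinely depends only on $\bh\bmod d$ and Lemma \ref{lem:cancel} gives exact cancellation, merely lowering the crossover from $d\le y$ to $d\le D=\exp((\log y)^{2/3})$; the three error terms $y^{k-1}D^{1/3+o(1)}$, $y^kD^{-2/3+o(1)}$ and $y^{k-1}(\log y)^{O_k(1)}$ are then all $O_{k,\sC}(y^k\e^{-\sqrt{\log y}})$ as you check, and rerunning Lemma \ref{lem:dethap} with threshold $D$ in place of $y$ is routine since \eqref{eq:lemdetpf3} is threshold-independent. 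Your approach buys a simpler proof (no $\upsilon$-machinery) and in fact, as you note at the outset, the cleanest derivation of all is to quote Proposition \ref{prop:sssa} verbatim, since $y^{-2/3+o(1)}=O(\e^{-\sqrt{\log y}})$; the paper's version of the argument is of interest only as the historically prior, weaker method that the $\sqfr(d)$-weighted expansion was designed to supersede. Your closing remark about the coupling of $d$ and $\vbh$ through $(\det(\bo\cup\bh),\rad(d))$ correctly identifies the one place where care is needed, and it is exactly the point handled in the paper's Lemma \ref{lem:dethap}.
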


The proof of Proposition \ref{Jprop:sssa} involves a 
basic lattice point counting argument.
Lemma \ref{Jlem:lip} is a special case of 
\cite[pp.\ 128--129]{LAN:94}.

\begin{lemma}
 \label{Jlem:lip}
Fix an integer $k \ge 1$ and a bounded convex set 
$\sC \subseteq \RR^k$.
For $y \ge 1$ we have 
$
 \#(y\sC \cap \ZZ^k)
  =
   y^k\vol(\sC) + O_{k,\sC}(y^{k - 1}).
$
\end{lemma}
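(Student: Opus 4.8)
The plan is to give the standard self-contained argument (the statement is, as the paper notes, a special case of \cite[pp.\ 128--129]{LAN:94}): compare lattice points to unit cubes and reduce everything to bounding the number of unit cubes that straddle the boundary of $y\sC$. First I would dispose of degenerate cases. If $\vol(\sC)=0$ then, $\sC$ being convex, it lies in an affine hyperplane, so $y\sC$ is contained in a bounded piece of a hyperplane of diameter $\ll_{\sC}y$; the number of lattice points in such a set is $\ll_{k,\sC}y^{k-1}$, which together with $\vol(y\sC)=0$ gives the claim. Similarly, for $1\le y\le\sqrt k$ both $\#(y\sC\cap\ZZ^k)$ and $\vol(y\sC)=y^k\vol(\sC)$ are $O_{k,\sC}(1)=O_{k,\sC}(y^{k-1})$. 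Hence I may assume $\sC$ has nonempty interior and $y\ge\sqrt k$.

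For $v\in\ZZ^k$ write $Q_v\defeq v+[0,1)^k$, so that the cubes $Q_v$ tile $\RR^k$. Call $v$ a \emph{boundary index} if $Q_v$ meets $y\sC$ but is not contained in $y\sC$, and let $\mathcal B$ be the set of boundary indices. A short point-set argument --- partitioning $\{v:v\in y\sC\}$ and $\{v:Q_v\cap y\sC\neq\emptyset\}$ according to whether $Q_v\subseteq y\sC$, and using $v\in Q_v$ together with $\sum_{Q_v\subseteq y\sC}\vol(Q_v)\le\vol(y\sC)\le\sum_{Q_v\cap y\sC\neq\emptyset}\vol(Q_v)$ --- yields $\bigl|\,\#(y\sC\cap\ZZ^k)-\vol(y\sC)\,\bigr|\le\#\mathcal B$. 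So it remains to show $\#\mathcal B\ll_{k,\sC}y^{k-1}$.

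If $v\in\mathcal B$ then $\overline{Q_v}$ is a connected compact set meeting both $\overline{y\sC}$ and the complement of $\mathrm{int}(y\sC)$, hence it meets $\partial(y\sC)=y\,\partial\sC$; consequently every such $Q_v$ lies within distance $\sqrt k$ of $\partial(y\sC)$. Since $\partial\sC$ is a compact convex hypersurface it has finite surface area $\sigma=\sigma(\sC)<\infty$, so $\partial(y\sC)$ has surface area $y^{k-1}\sigma$ and can be covered by $\ll_{k}\,(y^{k-1}\sigma+1)$ balls of radius $\sqrt k$; each such ball meets $O_k(1)$ of the pairwise disjoint unit cubes $Q_v$, so $\#\mathcal B\ll_{k,\sC}y^{k-1}$. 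Combined with $\vol(y\sC)=y^k\vol(\sC)$, this proves the lemma.

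The main obstacle is the one non-formal ingredient, namely the passage ``surface area of $y\sC$ $\Rightarrow$ number of straddling unit cubes is $O(y^{k-1})$''. I expect to handle it either by the covering argument above (using that a convex hypersurface is Lipschitz, hence $(k-1)$-rectifiable with finite $\mathcal H^{k-1}$-measure) or, perhaps more cleanly, by invoking the Steiner formula $\vol\bigl((y\sC)^{+r}\bigr)=y^{k}\sum_{j=0}^{k}\binom kj W_j(\sC)(r/y)^{j}$, which bounds the outer shell of width $\sqrt k$ by $O_{k,\sC}(y^{k-1})$, and the monotonicity $W_j\bigl((y\sC)^{-\sqrt k}\bigr)\le W_j(y\sC)$ of quermassintegrals, which bounds the inner shell by the same; the boundary shell $\#\mathcal B$ is contained in their union.
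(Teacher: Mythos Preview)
The paper gives no proof of its own; it simply cites \cite[pp.~128--129]{LAN:94}. Your argument is the standard one and is essentially Lang's: tile by unit cubes, observe that $\bigl|\#(y\sC\cap\ZZ^k)-\vol(y\sC)\bigr|\le\#\mathcal B$, and then bound the number $\#\mathcal B$ of boundary cubes by $O_{k,\sC}(y^{k-1})$.

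On the step you flagged: your first formulation is not quite right as written. Finite $\mathcal H^{k-1}$-measure does \emph{not} in general yield a covering by $O\bigl(\mathcal H^{k-1}/r^{k-1}\bigr)$ balls of radius $r$ --- that runs the Hausdorff definition backwards --- and invoking rectifiability alone does not repair it. Your Steiner-formula alternative, however, is clean and correct: with $K=\overline{y\sC}$ and $r=\sqrt k$, each boundary cube lies in the tube $(\partial K)^{+r}\subseteq K^{+r}\setminus\mathrm{int}(K^{-r})$; Steiner gives $\vol(K^{+r})-\vol(K)\ll_{k,\sC}y^{k-1}$, and $K\subseteq(K^{-r})^{+r}$ together with the monotonicity $W_j(K^{-r})\le W_j(K)$ gives $\vol(K)-\vol(K^{-r})\ll_{k,\sC}y^{k-1}$, whence $\#\mathcal B\le\vol(K^{+r})-\vol(K^{-r})\ll_{k,\sC}y^{k-1}$. (Lang's own argument is phrased a little differently: he uses that $\partial\sC$, being the boundary of a bounded convex body, is covered by finitely many Lipschitz images of $[0,1]^{k-1}$, and counts lattice points near each such patch directly.)
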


To prove Proposition \ref{Jprop:sssa}, we 
express $\mathfrak{S}_{\bh}$ as a series.
To this end, for a nonempty, finite set $\bh \subseteq \ZZ$ with 
$\card \bh = k$, let 
\[
 \epsilon_{\bh}(p) 
  \defeq \delta_{\bz}(p)^{-k}\delta_{\bh}(p) - 1
  \quad 
   \text{and}
    \quad 
     \epsilon_{\bh}(d)
      \defeq 
       \textstyle
        \prod_{p \mid d} \epsilon_{\bh}(p)
\]
for integers $d$ in the set
\[ 
  \cD 
   \defeq 
    \{\text{$n \in \cD$ : $n$ squarefree and $p \mid n \implies p \not\equiv 1 \bmod 4$}\}.
\]
Note that $1 \in \cD$ and $\epsilon_{\bh}(1) \defeq 1$ by 
convention.
For $p \equiv 3 \bmod 4$ we have,
since
$\delta_{\bz}(p) = (1 + 1/p)^{-1}$ and $0 \le \delta_{\bh}(p) \le 1$,  
that $-1 \le \epsilon_{\bh}(p) \le (1 + 1/p)^k - 1 < 2^k/p$.
Since $p \le k - 1$ implies $p \mid \det(\bh)$, i.e.\ 
$1/p = (\det(\bh),p)/p^2$, we have 
\begin{equation}
 \label{Jeq:epshpbnd}
  |\epsilon_{\bh}(p)|
   \le 
    A_k\frac{(\det(\bh),p)}{p^2}
\end{equation}
for such $p$, where $A_k$ denotes, here and throughout Section 
\ref{Jsec:keyprop}, a sufficiently large (not necessarily optimal) 
number depending on $k$ --- possibly a different number each time.
Inequality \eqref{Jeq:epshpbnd} also holds for $p \ge k$, for in 
that case Proposition \ref{prop:Sp3h} (c) gives
$
 \delta_{\bh}(p) 
  \ge 
   \big(1 + \frac{1}{p}\big)^{-1}
    \big(1 - \frac{k - 1}{p}\big)
     > 
      0
$, 
and hence 
\[
 \epsilon_{\bh}(p)
  \ge 
   \bigg(1 + \frac{1}{p}\bigg)^{k-1}
    \bigg(1 - \frac{k - 1}{p}\bigg)
     -
      1
       \ge 
       -\frac{(k - 1)^2}{p^2}.
\]
Note that \eqref{Jeq:epshpbnd} trivially holds for $p = 2$.
Thus, 
\begin{equation}
 \label{Jeq:epsbndd}
  |\epsilon_{\bh}(d)|
   \le 
    A_k^{\omega(d)}\frac{(\det(\bh),d)}{d^2}
\end{equation}
for all $d \in \cD$.
Finally, notice (recall Definition \ref{def:Sss}) that 
\begin{equation}
 \label{Jeq:defsssS}
 \mathfrak{S}_{\bh}
  =
     \prod_{p \not\equiv 1 \bmod 4}
      \big(1 + \epsilon_{\bh}(p)\big)
  =
      \sum_{d \in \cD}
       \epsilon_{\bh}(d).
\end{equation}
The sum converges absolutely in view of \eqref{Jeq:epsbndd} and the 
following elementary bound, which we will also use in the proof 
of Proposition \ref{Jprop:sssa}.
Recall that for $n \in \NN$, 
$\omega(n) \defeq \#\{\text{$p$ prime : $p \mid n$}\}$ and  
$\rad(n) \defeq \prod_{p \mid n} p$. 

\begin{lemma} 
 \label{Jlem:omegabnd}
Fix any number $A$ satisfying $A \ge 1$.
For $x \ge 1$ we have, uniformly for nonzero integers $D$, the 
bound 
\[
 \sumss[\flat][n > x]
  A^{\omega(n)}
   \frac{(D,n)}{n^2}
    \ll_A
     \frac{(\log 3x)^{A - 1}}{x}
      \sumss[\flat][d \mid D] A^{\omega(D)},
\]
where $\sumsstxt[\flat]$ denotes summation restricted to 
squarefree integers.
\end{lemma}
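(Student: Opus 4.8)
The plan is to reproduce the argument already carried out for \eqref{eq:omegasfbnd} in the proof of Lemma \ref{lem:omegabnd}, of which the present statement is essentially a restatement. First I would reduce to the case $D=1$, namely the bound $\sum_{n>x,\ \mathrm{sqfree}} A^{\omega(n)}/n^2 \ll_A (\log 3x)^{A-1}/x$, and then recover general $D$ by elementary divisor manipulations, keeping the implied constants uniform in $D$ throughout.

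For the case $D=1$: since $n$ ranges over squarefree integers, $A^{\omega(n)} = \sum_{n_1\mid n}(A-1)^{\omega(n_1)}$, so writing $n = n_1 m$ with $(n_1,m)=1$ gives $\sum_{n>x,\ \mathrm{sqfree}} A^{\omega(n)}/n^2 \le \sum_{n_1\ge 1,\ \mathrm{sqfree}} (A-1)^{\omega(n_1)}/n_1^2 \cdot \sum_{m>x/n_1,\ \mathrm{sqfree}} 1/m^2$, and the inner sum is $O(n_1/x)$ for $n_1\le x$ and $O(1)$ otherwise. The $n_1\le x$ contribution I would handle with Mertens' theorem exactly as in \eqref{eq:mert}, using $\sum_{n_1\le x,\ \mathrm{sqfree}} (A-1)^{\omega(n_1)}/n_1 \le \prod_{p\le x}(1+(A-1)/p)\le\prod_{p\le x}(1+1/p)^{A-1}\ll_A (\log 3x)^{A-1}$; the $n_1>x$ contribution leaves a sum of the same shape with base $A-1$. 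When $A\le 2$ that residual sum is trivially $O(1/x)$, since $(A-1)^{\omega(n_1)}\le 1$; otherwise I would iterate the decomposition $\lceil A\rceil$ times, each pass lowering the base by one, until it drops to at most $2$, at which point the estimate closes. This iteration is precisely what produces the exponent $A-1$ in $(\log 3x)^{A-1}$, and it is the only genuinely delicate point: the trivial bound for the tail $\sum_{n_1>x}(A-1)^{\omega(n_1)}/n_1^2$ is available only once the base has been brought down to at most $1$.

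To pass from $D=1$ to a fixed divisor $d\mid n$: for squarefree $n$ with $d\mid n$, write $n=dn'$ with $(d,n')=1$, so $A^{\omega(n)}=A^{\omega(d)}A^{\omega(n')}$ and $\sum_{n>x,\ d\mid n,\ \mathrm{sqfree}} A^{\omega(n)}/n^2 \le (A^{\omega(d)}/d^2)\sum_{n'>x/d,\ \mathrm{sqfree}} A^{\omega(n')}/n'^2 \ll_A (A^{\omega(d)}/d)(\log 3x)^{A-1}/x$, applying the $D=1$ bound at $x/d$ (when $x/d<1$ one instead uses convergence of $\sum_{n'\ge 1} A^{\omega(n')}/n'^2\ll_A 1$ together with $d/x>1$). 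Then the elementary inequality $(D,n)\le\sum_{d\mid D,\ d\mid n}d$, together with the fact that $(D,n)$ is squarefree whenever $n$ is, gives $\sum_{n>x,\ \mathrm{sqfree}} A^{\omega(n)}(D,n)/n^2 \le \sum_{d\mid D,\ \mathrm{sqfree}} d\sum_{n>x,\ d\mid n,\ \mathrm{sqfree}} A^{\omega(n)}/n^2 \ll_A \frac{(\log 3x)^{A-1}}{x}\sum_{d\mid D,\ \mathrm{sqfree}} A^{\omega(d)}$, which is the asserted bound (and $\sum_{d\mid D,\ \mathrm{sqfree}} A^{\omega(d)}=(1+A)^{\omega(D)}$). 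The remaining work is just bookkeeping of these divisor sums and coprimality splittings, with no further analytic input beyond Mertens' theorem.
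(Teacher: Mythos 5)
Your proposal is correct and follows essentially the same route as the paper's own proof: the identity $A^{\omega(n)}=\sum_{n_1\mid n}(A-1)^{\omega(n_1)}$ for squarefree $n$, the split at $n_1\le x$ handled by the Mertens-type product bound, iteration to reduce the base until the tail is trivially $O(1/x)$, and then the passage to general $D$ via $d\mid n$ and $(D,n)\le\sum_{d\mid D,\,d\mid n}d$. Your extra remarks (handling $x/d<1$, and noting $(D,n)$ is squarefree when $n$ is) are correct refinements of details the paper leaves implicit.
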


\begin{proof}
Let $x \ge 1$.
We first consider the case $D = 1$.
Note that   
\begin{equation}
 \label{Jeq:mert}
 \sumss[\flat][n_1 \le x] 
  \frac{(A - 1)^{\omega(n_1)}}{n_1}
   \le 
    \prod_{p \le x}
     \bigg(1 + \frac{A - 1}{p}\bigg)
      \le 
       \prod_{p \le x} 
        \bigg(1 + \frac{1}{p}\bigg)^{A - 1}
         \ll_A
          (\log 3x)^{A - 1},
\end{equation}
because 
$1 + 1/p < \e^{1/p}$ and 
$\sum_{p \le x} 1/p = \log\log 3x + O(1)$ by one of Mertens' 
theorems.
Now, 
\[
 \sumss[\flat][n > x] \frac{A^{\omega(n)}}{n^2}
  =
   \sumss[\flat][n > x] 
    \frac{1}{n^2}
     \sum_{n_1 \mid n} (A - 1)^{\omega(n_1)}
      \le 
       \sumss[\flat][n_1 \ge 1]
        \frac{(A - 1)^{\omega(n_1)}}{n_1^2}
         \sumss[\flat][m > x/n_1]
          \frac{1}{m^2},
\]
the inner sum being $O(n_1/x)$ for $n_1 \le x$ and $O(1)$ for 
$n_1 > x$. 
Thus, 
\[
 \sumss[\flat][n > x] \frac{A^{\omega(n)}}{n^2}
  \ll_A
   \frac{(\log 3x)^{A - 1}}{x}
    +
     \sumss[\flat][n_1 > x]
      \frac{(A - 1)^{\omega(n_1)}}{n_1^2}.
\]
If $A - 1 \le 1$ then this last sum is $O(1/x)$; otherwise, 
repeating the argument as many times as necessary gives
\[
 \sumss[\flat][n > x] \frac{A^{\omega(n)}}{n^2}
  \ll_A
   \frac{(\log 3x)^{A - 1}}{x}.
\]
It is straightforward to deduce from this that for any nonzero 
integer $d$, 
\[
 \sumss[\flat][n > x][d \mid n] \frac{A^{\omega(n)}}{n^2}
  \ll_A
   \frac{A^{\omega(d)}}{d}\cdot 
    \frac{(\log 3x)^{A - 1}}{x}.
\]
Letting $D$ be any nonzero integer, we trivially have 
$(D,n) \le \sum_{d \mid D, \, d \mid n} d$, hence
\[
 \sumss[\flat][n > x]
  A^{\omega(n)} \frac{(D,n)}{n^2}
   \le 
    \sumss[\flat][d \mid D] d 
     \sumss[\flat][n > x][d \mid n] \frac{A^{\omega(n)}}{n^2}
      \ll_A 
       \frac{(\log 3x)^{A - 1}}{x}
        \sumss[\flat][d \mid D] A^{\omega(D)}.
\]
\end{proof}

\begin{lemma}
 \label{Jlem:dethap}
Fix an integer $k \ge 1$, and a squarefree integer $d \ge 1$.
For $y \ge 1$ we have
\[
 \underset{d \mid \det(\{0,h_1,\ldots,h_k\})}
  { 
   \sum_{0 < h_1 < \cdots < h_k \le y}
  } 1
  \le 
   k^{2\omega(d)}
    \bigg(\frac{y^k}{d} + O_k(y^{k - 1})\bigg).
\]
\end{lemma}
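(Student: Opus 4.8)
The plan is to run the divisor‑splitting argument that already appears in the proof of Lemma~\ref{lem:dethap} as \eqref{eq:lemdethappf1}, here in the special case of dilation factor $1$. Write $h_0 \defeq 0$, so that $\det(\{0,h_1,\ldots,h_k\}) = \prod_{0 \le i < j \le k}(h_j - h_i)$. First I would note that, because $d$ is squarefree, every tuple $0 < h_1 < \cdots < h_k \le y$ counted on the left admits a factorization $d = \prod_{0 \le i < j \le k} d_{ij}$ into pairwise coprime positive integers with $d_{ij} \mid h_j - h_i$ for each pair: for a prime $p \mid d$, primality forces $p \mid h_j - h_i$ for at least one pair $(i,j)$, and one assigns $p$ to a single such $d_{ij}$; squarefreeness of $d$ is precisely what makes this assignment consistent. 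Hence the left-hand side is at most
\[
 \sum_{d = d_{01}\cdots d_{(k-1)k}}
  \underset{0 \le i < j \le k-1 \implies d_{ij} \mid h_j - h_i}{\sum_{h_1 \in I_y}\cdots\sum_{h_{k-1} \in I_y}}
   \ \sums[h_k \in I_y][0 \le i \le k-1 \implies d_{ik} \mid h_k - h_i] 1,
\]
where $I_y \defeq (0,y]$ and the outer sum runs over all ordered factorizations of $d$ into $\binom{k+1}{2}$ positive integers.

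Next, for a fixed such factorization I would set $d_j \defeq \prod_{i=0}^{j-1} d_{ij}$ for $j = 1,\ldots,k$, so that $\prod_{j=1}^{k} d_j = d$, and evaluate the inner multiple sum from the innermost variable outward. By the Chinese remainder theorem (the $d_{ij}$ being pairwise coprime), once $h_0,\ldots,h_{j-1}$ are fixed the conditions on $h_j$ confine it to a single residue class modulo $d_j$, so $\#\{h_j \in I_y : \text{conditions}\} = y/d_j + O(1)$. Iterating over $j = k, k-1, \ldots, 1$ gives, for each factorization,
\[
 \prod_{j=1}^{k}\Big(\frac{y}{d_j} + O(1)\Big) = \frac{y^k}{d} + O_k\big(y^{k-1}\big),
\]
the implied constant depending only on $k$, since each $O(1)$ term is bounded by an absolute constant and the expansion has $2^k$ terms. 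Finally, as $d$ is squarefree, the number of ordered factorizations of $d$ into $\binom{k+1}{2}$ positive factors is $\binom{k+1}{2}^{\omega(d)}$, and $\binom{k+1}{2} = \tfrac{k(k+1)}{2} \le k^2$ for $k \ge 1$, so this count is at most $k^{2\omega(d)}$. Multiplying it into the previous display yields
\[
 \sums[0 < h_1 < \cdots < h_k \le y][d \mid \det(\{0,h_1,\ldots,h_k\})] 1
  \le k^{2\omega(d)}\Big(\frac{y^k}{d} + O_k\big(y^{k-1}\big)\Big),
\]
which is the claimed bound.

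There is no genuine obstacle here; the only step that will need a moment's care is the construction of the pairwise‑coprime factorization $d = \prod_{i<j} d_{ij}$ with $d_{ij} \mid h_j - h_i$, which uses squarefreeness of $d$ essentially, together with the bookkeeping that the per‑factorization lattice‑point count telescopes to $y^k/d + O_k(y^{k-1})$ with a constant uniform over all factorizations — so that replacing the true count by the sum over all factorizations costs only the harmless factor $k^{2\omega(d)}$.
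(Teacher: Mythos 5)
Your proposal is correct and follows essentially the same route as the paper's own proof: the pairwise-coprime factorization $d = \prod_{i<j} d_{ij}$ with $d_{ij} \mid h_i - h_j$ (using squarefreeness), the Chinese remainder theorem to pin $h_j$ to one residue class modulo $d_j \defeq \prod_{i<j} d_{ij}$ giving $y/d_j + O(1)$ per variable, and the count $\binom{k+1}{2}^{\omega(d)} \le k^{2\omega(d)}$ of ordered factorizations. No gaps.
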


\begin{proof}
Let $h_0 = 0,h_1,\ldots,h_k$ be pairwise distinct integers and  
suppose $d$ divides $\prod_{0 \le i < j \le k}(h_i - h_j)$.
Then, since $d$ is squarefree, there exist pairwise coprime 
positive integers $d_{ij}$ such that 
$d = \prod_{0 \le i < j \le k} d_{ij}$ and 
$d_{ij} \mid h_i - h_j$, $0 \le i < j \le k$.
Therefore, 
\[
 \underset{d \mid \det(\{h_0,h_1,\ldots,h_k\})}
  { 
   \sum_{0 < h_1 < \cdots < h_k \le y}
  } 1
   \le 
    \sums[d = d_{01}\cdots d_{(k-1)k}] 
     \hspace{5pt}
     \underset{0 \le i < j \le k - 1 \implies d_{ij} \mid h_i - h_j}
    { 
       \sum_{h_1 \in I_y}
        \sum_{h_2 \in I_y}
         \cdots 
          \sum_{h_{k - 1} \in I_y}
    } 
     \hspace{5pt}
      \sums[h_k \in I_y][0 \le i \le k - 1 \implies d_{ik} \mid h_i - h_k] 1,
\]
where on the right-hand side, the outermost sum is over all 
decompositions of $d$ as a product of $\binom{k + 1}{2}$ positive 
integers, and $I_y \defeq (0,y]$.

Consider the decomposition $d = d_{01}\cdots d_{(k - 1)k}$.
Let us define $d_{j} \defeq \prod_{i = 0}^{j - 1} d_{ij}$ for 
$j = 1,\ldots,k$.
Notice that $d = \prod_{j = 1}^k d_j$.
By the Chinese remainder theorem, the condition on $h_k$ in the 
innermost sum above is equivalent to $h_k$ being in some 
congruence class modulo $d_k$, uniquely determined by 
$h_0,h_1,\ldots,h_{k - 1}$.
The sum is therefore equal to $y/d_k + O(1)$.
Iterating this argument $k$ times we see that the inner sum over 
$h_1,\ldots,h_k$ is equal to  
\[
  \prod_{j = 1}^k
   \bigg(\frac{y}{d_j} + O(1)\bigg)
    =
     \frac{y^k}{d} + O_k(y^{k - 1}).
\]
The result follows by combining and noting that, since $d$ is 
squarefree, the number of ways of writing $d$ as a product of 
$\binom{k + 1}{2}$ positive integers is 
$\binom{k + 1}{2}^{\omega(d)}$, and that 
$\binom{k + 1}{2} \le k^2$.
\end{proof}

Before the final lemma, let us introduce one more piece of 
notation.
For $\alpha \ge 1$, $\T_{\bz}(2^{\alpha + 1})$ and 
$\V_{\bz}(p^{\alpha})$ are nonempty.
For integers $j \ge 1$, let  
\[
 \upsilon_{\bh}(2^{\alpha + 1};j) 
  \defeq 
   \bigg(\frac{\card \T_{\bz}(2^{\alpha + 1})}{2^{\alpha + 1}}\bigg)^{-j}
    \bigg(\frac{\card \T_{\bh}(2^{\alpha + 1})}{2^{\alpha + 1}}\bigg) - 1, 
\]
and for $p \equiv 3 \bmod 4$, define
\[
 \upsilon_{\bh}(p^{\alpha};j) 
  \defeq 
   \bigg(\frac{\card \V_{\bz}(p^{\alpha})}{p^{\alpha}}\bigg)^{-j}
    \bigg(\frac{\card \V_{\bh}(p^{\alpha})}{p^{\alpha}}\bigg) - 1.  
\]
Note that in the exponent we have the parameter $j$, not $k$.
As one might expect, $\upsilon_{\bh}(p^{\alpha};k)$ is a good 
approximation to $\epsilon_{\bh}(p)$ when $\alpha$ is large, and 
we can take advantage of quite strong cancellation in summing 
$\upsilon_{\bh}(p^{\alpha};k)$ rather than $\epsilon_{\bh}(p)$.

\begin{lemma}
 \label{Jlem:cancel}
\textup{(}a\textup{)}
Set $\bo \defeq \emptyset$ or set $\bo \defeq \{0\}$.
Let $(\alpha_p)_{p \not\equiv 1 \bmod 4}$ be a sequence of 
positive integers with $\alpha_2 \ge 2$.
Let $d > 1$ be an integer, none of whose prime divisors 
are congruent to $1$ modulo $4$, and let $R_1,\ldots,R_k$ be 
complete residue systems modulo $\prod_{p \mid d} p^{\alpha_p}$.

\textup{(}a\textup{)}
We have 
\[
 \sum_{h_1 \in R_1} 
  \cdots 
   \sum_{h_k \in R_k}
    \prod_{p \mid d} \upsilon_{\bo \cup \bh}(p^{\alpha_p};\ocard \bo + k)
      =
       0,
\]
where $\bh = \{h_1,\ldots,h_k\}$ in the summand. 
\textup{(}Note that we may have $\card \bh < k$ here.\textup{)}

\textup{(}b\textup{)}
Let $\bh = \{h_1,\ldots,h_k\}$ be a set of integers that are not 
necessarily distinct.
For $p \equiv 3 \bmod 4$ and $\alpha \ge 1$, we have
\begin{equation}
 \label{Jeq:upsbnd1}
 |\upsilon_{\bh}(p^{\alpha};k)|
  \le 
   A_k
    \bigg(
     \frac{k - \card \bh}{p}
      + \frac{(\det(\bh),p)}{p^2}
    \bigg) 
\end{equation}
and 
\begin{equation}
 \label{Jeq:upsbnd2}
 |\epsilon_{\bh}(p) - \upsilon_{\bh}(p^{\alpha};k)|
  \le 
   A_k\bigg(\frac{k - \card \bh}{p} + \frac{1}{p^{\alpha + \alpha \bmod 2}}\bigg),
\end{equation}
where $A_k$ is a \textup{(}sufficiently large\textup{)} quantity 
depending on $k$.
The bounds \eqref{Jeq:upsbnd1} and \eqref{Jeq:upsbnd2} also hold for 
$p = 2$, provided $\alpha \ge 2$.
\end{lemma}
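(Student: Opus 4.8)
The plan for part (a) is to factor the sum through the Chinese remainder theorem and reduce to a cancellation already contained in the proof of Lemma \ref{lem:cancel}. First I would observe that $\upsilon_{\bo \cup \bh}(p^{\alpha_p};\ocard \bo + k)$ depends on $\bh$ only through $\bh \bmod p^{\alpha_p}$: for $p \equiv 3 \bmod 4$ it is built solely from $\card \V_{\bo \cup \bh}(p^{\alpha_p})$, and for $p = 2$ solely from $\card \T_{\bo \cup \bh}(2^{\alpha_2})$, the hypothesis $\alpha_2 \ge 2$ ensuring that $\T_{\bz}(2^{\alpha_2})$ is the genuine truncation rather than the stipulated value $\T_{\bh}(2) = \{1,2\}$. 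Hence, as $(h_1,\ldots,h_k)$ ranges over $R_1 \times \cdots \times R_k$, the residue tuples $(\bh \bmod p^{\alpha_p})_{p \mid d}$ range independently over $\prod_{p \mid d}(\ZZ/p^{\alpha_p}\ZZ)^k$, and the sum factors as $\prod_{p \mid d}\big(\sum_{(h_1,\ldots,h_k) \in (\ZZ/p^{\alpha_p}\ZZ)^k} \upsilon_{\bo \cup \bh}(p^{\alpha_p};\ocard \bo + k)\big)$; it therefore suffices to show each inner factor vanishes. Since $\upsilon_{\bo \cup \bh}(p^{\alpha};\ocard \bo + k) + 1 = \big(\card \V_{\bz}(p^{\alpha})/p^{\alpha}\big)^{-(\ocard \bo + k)}\big(\card \V_{\bo \cup \bh}(p^{\alpha})/p^{\alpha}\big)$, the translation-invariance identity used in the proof of Lemma \ref{lem:cancel} --- namely that $\sum_{(h_1,\ldots,h_k) \in (\ZZ/p^{\alpha}\ZZ)^k} \card \V_{\bo \cup \bh}(p^{\alpha})$ equals $p^{\alpha}\big(\card \V_{\bz}(p^{\alpha})\big)^{\ocard \bo + k}$ when $\bo = \emptyset$ and $\big(\card \V_{\bz}(p^{\alpha})\big)^{\ocard \bo + k}$ when $\bo = \{0\}$, with the analogue for $\T$ when $p = 2$ --- shows that $\sum_{(h_1,\ldots,h_k)}\big(\upsilon_{\bo \cup \bh}(p^{\alpha};\ocard \bo + k) + 1\big) = p^{k\alpha} = \sum_{(h_1,\ldots,h_k)} 1$, so the inner factor is $0$. (Equivalently, one telescopes $\upsilon$ into a sum of the increments $\epsilon_{\bo \cup \bh}(p^{\beta};\ocard \bo + k)$ of Lemma \ref{lem:cancel} and applies that lemma termwise.)

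For part (b) I would start from the exact shape of the normalizing factor and then quantify the two sources of error. By \eqref{eq:iotalempf1} and Proposition \ref{prop:Sp3h} (c), for $p \equiv 3 \bmod 4$ one has $\card \V_{\bz}(p^{\alpha})/p^{\alpha} = \delta_{\bz}(p)(1 - t)$ with $t \defeq 1/p^{\alpha + \alpha \bmod 2} \le 1/p^{2}$, and by \eqref{eq:Vhdeltp3bnd}, $\card \V_{\bh}(p^{\alpha})/p^{\alpha} = \delta_{\bh}(p) + O_k(t)$. Substituting these into $\upsilon_{\bh}(p^{\alpha};k) + 1 = \big(\card \V_{\bz}(p^{\alpha})/p^{\alpha}\big)^{-k}\big(\card \V_{\bh}(p^{\alpha})/p^{\alpha}\big)$ and using $(1 - t)^{-k} = 1 + O_k(t)$ together with $\delta_{\bz}(p)^{-k} = (1 + 1/p)^{k} \le 2^{k}$ and $0 \le \delta_{\bh}(p) \le 1$ yields
\[
 \upsilon_{\bh}(p^{\alpha};k) + 1 = \delta_{\bz}(p)^{-k}\delta_{\bh}(p) + O_k\big(1/p^{\alpha + \alpha \bmod 2}\big).
\]
Next I would reconcile the exponent $k$ with the true cardinality: writing $\epsilon_{\bh}(p) = \delta_{\bz}(p)^{-\card \bh}\delta_{\bh}(p) - 1$ for the genuine local factor (whose exponent is $\card \bh \le k$), one has $\delta_{\bz}(p)^{-k}\delta_{\bh}(p) = (1 + 1/p)^{k - \card \bh}\big(\epsilon_{\bh}(p) + 1\big)$, and since $(1 + 1/p)^{k - \card \bh} - 1 \ll_k (k - \card \bh)/p$ and $|\epsilon_{\bh}(p) + 1| \le 2^{k}$, this equals $\epsilon_{\bh}(p) + 1 + O_k\big((k - \card \bh)/p\big)$. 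Combining the two displayed facts gives \eqref{Jeq:upsbnd2}. For \eqref{Jeq:upsbnd1} I would add the bound $|\epsilon_{\bh}(p)| \le A_k(\det(\bh),p)/p^{2}$ --- this is \eqref{Jeq:epshpbnd}, applied to the underlying set of $\bh$ --- and absorb the truncation error via $1/p^{\alpha + \alpha \bmod 2} \le 1/p^{2} \le (\det(\bh),p)/p^{2}$. The case $p = 2$ runs in parallel, with $\T$-sets replacing $\V$-sets, Proposition \ref{prop:S2h} (a)--(c) and \eqref{eq:Thdelth2bnd} replacing \eqref{eq:iotalempf1} and \eqref{eq:Vhdeltp3bnd}, and the standing hypothesis $\alpha \ge 2$ guaranteeing that the truncation error is again at most $1/p^{2}$.

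I expect the crux to lie not in (a), which is essentially Lemma \ref{lem:cancel} with truncation parameters inserted, but in keeping \eqref{Jeq:upsbnd1} free of any $\alpha$-dependence. This hinges on two observations that must be carried out carefully: first, the mismatch between the exponent $k$ appearing in $\upsilon_{\bh}(p^{\alpha};k)$ and the true cardinality $\card \bh$ appearing in $\epsilon_{\bh}(p)$ contributes exactly a term of size $(k - \card \bh)/p$ (via $(1 + 1/p)^{k - \card \bh} - 1 \ll_k (k - \card \bh)/p$, which is of the correct order); second, the $p$-adic truncation error $1/p^{\alpha + \alpha \bmod 2}$ is, for every admissible $\alpha \ge 1$ (respectively $\alpha \ge 2$ when $p = 2$), bounded by $1/p^{2}$ and hence dominated by the term $(\det(\bh),p)/p^{2}$ already present in \eqref{Jeq:upsbnd1}. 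A final bookkeeping point is that for the finitely many small primes $p$ the implied $O_k$-constants are not themselves small, but there $\upsilon_{\bh}(p^{\alpha};k)$ is trivially bounded in terms of $k$ --- because $\card \V_{\bz}(p^{\alpha})/p^{\alpha}$ is bounded away from $0$ while $\card \V_{\bh}(p^{\alpha})/p^{\alpha} \le 1$ --- so those cases are disposed of by enlarging $A_k$.
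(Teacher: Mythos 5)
Your argument is correct and essentially coincides with the paper's: part (a) is the same Chinese-remainder factorization combined with the translation-invariance identity for $\sum_{\bh}\card \V_{\bo \cup \bh}(p^{\alpha})$ (and its $\T$-analogue at $p = 2$), and part (b) rests on the same expansions of $\card \V_{\bz}(p^{\alpha})/p^{\alpha}$ and $\card \V_{\bh}(p^{\alpha})/p^{\alpha}$ drawn from Proposition \ref{prop:Sp3h}. The only (harmless) reorganization is that you obtain \eqref{Jeq:upsbnd1} by combining \eqref{Jeq:upsbnd2} with the previously established bound \eqref{Jeq:epshpbnd} and the observation $1/p^{\alpha + \alpha \bmod 2} \le (\det(\bh),p)/p^{2}$, whereas the paper proves \eqref{Jeq:upsbnd1} directly by writing $\kappa_{\bh}(p) = \card \bh - 1 - \xi_{\bh}(p)$ and noting that $\xi_{\bh}(p)/p = \xi_{\bh}(p)(\det(\bh),p)/p^{2}$ in all cases.
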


\begin{proof}
(a)
Let $\bh = \{h_1,\ldots,h_k\}$ and $\bh' = \{h_1',\ldots,h_k'\}$  
satisfy $h_i \equiv h_i' \bmod p^{\alpha}$ for $i = 1,\ldots,k$. 
If $p \equiv 3 \bmod 4$, it is clear from \eqref{eq:defVh} that 
$
 \card \V_{\bo \cup \bh}(p^{\alpha}) 
  = 
   \card \V_{\bo \cup \bh'}(p^{\alpha})
$, 
and hence 
$
 \upsilon_{\bo \cup \bh}(p^{\alpha};\ocard \bo + k)
 = 
  \upsilon_{\bo \cup \bh'}(p^{\alpha};\ocard \bo + k)
$.
Similarly, we have (from \eqref{eq:defTh}) that 
$
 \upsilon_{\bo \cup \bh}(2^{\alpha + 1};\ocard \bo + k) 
  = 
   \upsilon_{\bo \cup \bh'}(2^{\alpha + 1};\ocard \bo + k)
$.
Thus, by the Chinese remainder theorem, 
\[
 \sum_{h_1 \in R_1} 
  \cdots 
   \sum_{h_k \in R_k}
    \prod_{p \mid d} \upsilon_{\bo \cup \bh}(p^{\alpha_p};\ocard \bo + k)
      =
       \prod_{p \mid d} 
        \bigg(
         \sum_{h_1 \in \ZZ_{p^{\alpha_p}}}
          \cdots 
           \sum_{h_k \in \ZZ_{p^{\alpha_p}}}
            \upsilon_{\bo \cup \bh}(p^{\alpha_p};\ocard \bo + k) 
         \bigg),
\]
where $\bh = \{h_1,\ldots,h_k\}$ in both summands, and 
$\ZZ_{p^{\alpha_p}} \defeq \{0,\ldots,p^{\alpha_p} - 1\}$.
For $p \equiv 3 \bmod 4$ we have 
\[
 \sum_{h_1 \in \ZZ_{p^{\alpha_p}}}
  \cdots 
   \sum_{h_k \in \ZZ_{p^{\alpha_p}}}
    \card \V_{\bh}(p^{\alpha_p})
   =
   \sum_{a \in \ZZ_{p^{\alpha_p}}}
    \sums[h_1 \in \ZZ_{p^{\alpha_p}}][a + h_1 \in S_p][\nu_p(a + h_1) < \alpha_p]
     \cdots 
      \sums[h_k \in \ZZ_{p^{\alpha_p}}][a + h_k \in S_p][\nu_p(a + h_k) < \alpha_p] 1,
\]
as can be seen by applying the definition \eqref{eq:defVh} of 
$\V_{\bh}(p^{\alpha})$ and changing the order of summation.
For $i = 1,\ldots,k$, each sum over $h_i$ on the right-hand side 
enumerates a translation of $\V_{\bz}(p^{\alpha_p})$, so the 
entire sum (i.e.\ the left-hand side) is equal to 
$p^{\alpha_p}(\card \V_{\bz}(p^{\alpha_p}))^k$.
%
%
\begin{nixnix}
\begin{align*}
  \sum_{h_1 \in \ZZ_{p^{\alpha}}}
   \cdots 
    \sum_{h_k \in \ZZ_{p^{\alpha}}}
     \card \V_{\bh}(p^{\alpha})
  & = 
      \sum_{h_1 \in \ZZ_{p^{\alpha}}}
       \cdots 
        \sum_{h_k \in \ZZ_{p^{\alpha}}}
         \sums[a \in \ZZ_{p^{\alpha}}][\forall i, a + h_i \in S_p][\forall i, \nu_p(a + h_i) < \alpha] 1
 \\ 
  & = 
   \sum_{a \in \ZZ_{p^{\alpha}}}
    \sums[h_1 \in \ZZ_{p^{\alpha}}][a + h_1 \in S_p][\nu_p(a + h_1) < \alpha]
     \cdots 
      \sums[h_k \in \ZZ_{p^{\alpha}}][a + h_k \in S_p][\nu_p(a + h_k) < \alpha] 1
 \\
 & = 
   \sum_{a \in \ZZ_{p^{\alpha}}}
    \sums[a + h_1 \in \ZZ_{p^{\alpha}}][a + h_1 \in S_p][\nu_p(a + h_1) < \alpha]
     \cdots 
      \sums[a + h_k \in \ZZ_{p^{\alpha}}][a + h_k \in S_p][\nu_p(a + h_k) < \alpha] 1   
 \\
 & = 
    \sum_{a \in \ZZ_{p^{\alpha}}}
     (\card \V_{\bz}(p^{\alpha}))
      \cdots 
       (\card \V_{\bz}(p^{\alpha}))
 \\
 & = 
    p^{\alpha}(\card \V_{\bz}(p^{\alpha}))^k
\end{align*}
\end{nixnix}
%
%
Since 
\[
 \upsilon_{\bh}(p^{\alpha_p};k) 
  \defeq 
   \big[
    \big(\card \V_{\bz}(p^{\alpha_p})/p^{\alpha_p}\big)^{-k}
     \big(\card \V_{\bh}(p^{\alpha_p})/p^{\alpha_p}\big) 
   \big] - 1,
\]
it follows that  
\[
 \sum_{h_1 \in \ZZ_{p^{\alpha_p}}}
  \cdots 
   \sum_{h_k \in \ZZ_{p^{\alpha_p}}}
    \upsilon_{\bh}(p^{\alpha_p};k)
   =
   0.
\]
In a similar fashion we obtain
\[
 \sum_{h_1 \in \ZZ_{p^{\alpha_p}}}
  \cdots 
   \sum_{h_k \in \ZZ_{p^{\alpha_p}}}
    \upsilon_{\{0\} \cup \bh}(p^{\alpha_p};1 + k)
   =
   0. 
\]
%
%
\begin{nixnix}
Similarly, 
\[
 \sum_{h_1 \in \ZZ_{p^{\alpha_p}}}
  \cdots 
   \sum_{h_k \in \ZZ_{p^{\alpha_p}}}
    \card \V_{\{0\} \cup \bh}(p^{\alpha_p})
   =
   \sums[a \in \ZZ_{p^{\alpha_p}}][a \in S_p][\nu_p(a) < \alpha_p]
    \sums[h_1 \in \ZZ_{p^{\alpha_p}}][a + h_1 \in S_p][\nu_p(a + h_1) < \alpha_p]
     \cdots 
      \sums[h_k \in \ZZ_{p^{\alpha_p}}][a + h_k \in S_p][\nu_p(a + h_k) < \alpha_p] 1,
\]
which is equal to $\big(\card \V_{\bz}(p^{\alpha_p})\big)^{1 + k}$, 
and since 
\[
  \upsilon_{\{0\} \cup \bh}(p^{\alpha_p};1 + k) 
  \defeq 
   \big[
    \big(\card \V_{\bz}(p^{\alpha_p})/p^{\alpha_p}\big)^{-(1 + k)}
     \big(\card \V_{\{0\} \cup \bh}(p^{\alpha_p})/p^{\alpha_p}\big) 
   \big] - 1,
\]
it follows that 
\[
 \sum_{h_1 \in \ZZ_{p^{\alpha_p}}}
  \cdots 
   \sum_{h_k \in \ZZ_{p^{\alpha_p}}}
    \upsilon_{\{0\} \cup \bh}(p^{\alpha_p};1 + k)
   =
   0 
\]
as well.
\end{nixnix}
%
%
An analogous argument gives the same results for $p = 2$ 
($\alpha_2 \ge 2$).
%
%
\begin{nixnix}
Applying the definition \eqref{eq:defTh} of 
$\T_{\bh}(2^{\alpha + 1})$ and changing the order of summation 
yields 
\begin{align*}
  \sum_{h_1 \in \ZZ_{2^{\alpha + 1}}}
   \cdots 
    \sum_{h_k \in \ZZ_{2^{\alpha + 1}}}
     \card \T_{\bh}(2^{\alpha + 1})
  & = 
      \sum_{h_1 \in \ZZ_{2^{\alpha + 1}}}
       \cdots 
        \sum_{h_k \in \ZZ_{2^{\alpha + 1}}}
         \sums[a \in \ZZ_{2^{\alpha + 1}}][\forall i, a + h_i \in S_2][\forall i, \nu_2(a + h_i) < \alpha] 1
 \\ 
  & = 
   \sum_{a \in \ZZ_{2^{\alpha + 1}}}
    \sums[h_1 \in \ZZ_{2^{\alpha + 1}}][a + h_1 \in S_2][\nu_2(a + h_1) < \alpha]
     \cdots 
      \sums[h_k \in \ZZ_{2^{\alpha + 1}}][a + h_k \in S_2][\nu_2(a + h_k) < \alpha] 1
 \\
 & = 
   \sum_{a \in \ZZ_{2^{\alpha + 1}}}
    \sums[a + h_1 \in \ZZ_{2^{\alpha + 1}}][a + h_1 \in S_2][\nu_2(a + h_1) < \alpha]
     \cdots 
      \sums[a + h_k \in \ZZ_{2^{\alpha + 1}}][a + h_k \in S_2][\nu_2(a + h_k) < \alpha] 1   
 \\
 & = 
    \sum_{a \in \ZZ_{2^{\alpha + 1}}}
     (\card \T_{\bz}(2^{\alpha + 1}))
      \cdots 
       (\card \T_{\bz}(2^{\alpha + 1}))
 \\
 & = 
    2^{\alpha + 1}(\card \T_{\bz}(2^{\alpha + 1}))^k.
\end{align*}
Since 
$
 \upsilon_{\bh}(2^{\alpha + 1}) 
  \defeq 
   \big[
    \big(\card \T_{\bz}(2^{\alpha + 1})/2^{\alpha + 1}\big)^{-k}
     \big(\card \T_{\bh}(2^{\alpha + 1})/2^{\alpha + 1}\big) 
   \big] - 1
$, 
it follows that 
\[
 \sum_{h_1 \in \ZZ_{2^{\alpha + 1}}}
  \cdots 
   \sum_{h_k \in \ZZ_{2^{\alpha + 1}}}
    \upsilon_{\bh}(2^{\alpha + 1})
   =
   0.
\]
\end{nixnix}
%

(b)
Let $\alpha \ge 1$ and $p \equiv 3 \bmod 4$.
Define $\eta_{\bh}(p^{\alpha})$ and $\kappa_{\bh}(p)$ as the 
numbers given by the relations
\[
 \frac{\card \V_{\bh}(p^{\alpha})}{p^{\alpha}}
  \eqdef 
   \delta_{\bh}(p) + \eta_{\bh}(p^{\alpha}) 
    \quad 
     \text{and}
      \quad 
 \delta_{\bh}(p) 
  \eqdef
   \bigg(1 + \frac{1}{p}\bigg)^{-1}
    \bigg(1 - \frac{\kappa_{\bh}(p)}{p}\bigg).
\]
Note that by Proposition \ref{prop:Sp3h}, \eqref{eq:Vhdeltp3bnd} 
and part (c),  
$
 |\eta_{\bh}(p^{\alpha})| 
  < 
   2(\card \bh)/p^{\alpha + (\alpha \bmod 2)}
$ and 
$\kappa_{\bh}(p) \le \min\{\card \bh - 1,p\}$, with 
$\kappa_{\bh}(p) = \card \bh - 1$ if $p \nmid \det(\bh)$.
Also, $\kappa_{\bh}(p) \ge -1$ (because $\delta_{\bh}(p) \le 1$).
Since $\card \bh \le k$ and $\alpha + (\alpha \bmod 2) \ge 2$, we 
have 
\[
 \frac{\card \V_{\bh}(p^{\alpha})}{p^{\alpha}}
  =
   \bigg(1 + \frac{1}{p}\bigg)^{-1}
    \bigg(1 - \frac{\kappa_{\bh}(p)}{p} + O\bigg(\frac{k}{p^2}\bigg)\bigg).
\]
In the special case $\bh = \{0\}$ we can take 
$\kappa_{\bh}(p) = 0$.
We therefore have  
\begin{align*}
 \bigg(\frac{\card \V_{\bz}(p^{\alpha})}{p^{\alpha}}\bigg)^{-k}
  \frac{\card \V_{\bh}(p^{\alpha})}{p^{\alpha}}
  &
   =
    \bigg(1 + \frac{1}{p}\bigg)^{k - 1}
     \bigg[
       \bigg(1 - \frac{\kappa_{\bh}(p)}{p} + O_k\bigg(\frac{1}{p^2}\bigg)\bigg)
     \bigg]
   \\
  &
   =
     \bigg(1 + \frac{k - 1}{p} +O_k\bigg(\frac{1}{p^2}\bigg)\bigg)
      \bigg[1 - \frac{\kappa_{\bh}(p)}{p} + O_k\bigg(\frac{1}{p^2}\bigg)\bigg]
 \\
   & 
    = 
      1 + \frac{k - 1 - \kappa_{\bh}(p)}{p} + O_k\bigg(\frac{1}{p^2}\bigg).
\end{align*}

\noindent 
Thus, writing 
$\kappa_{\bh}(p) \eqdef \card \bh - 1 - \xi_{\bh}(p)$, say, we have 
\[
 |\upsilon_{\bh}(p^{\alpha};k)|
  \le 
   \frac{k - 1 - \kappa_{\bh}(p)}{p} + \frac{A_k}{p^2}
    =
     \frac{k - \card \bh}{p} + \frac{\xi_{\bh}(p)}{p} + \frac{A_k}{p^2}.
\]
(Recall that we use $A_k$ to denote a sufficiently large number 
depending on $k$.)
If $p \mid \det(\bh)$ then 
$\xi_{\bh}(p)/p = \xi_{\bh}(p)(\det(\bh),p)/p^2$, and if 
$p \nmid \det(\bh)$ then, as already noted, 
$\kappa_{\bh}(p) = \card \bh - 1$, i.e.\ $\xi_{\bh}(p) = 0$, so 
$\xi_{\bh}(p)/p = \xi_{\bh}(p)(\det(\bh),p)/p^2$ in any case.
Since, as already noted, 
$-1 \le \kappa_{\bh}(p) \le \card \bh - 1$, 
we have $0 \le \xi_{\bh}(p) \le \card \bh \le k$, so we may write 
$|\xi_{\bh}(p)| \le A_k$.
Hence
\[
 |\upsilon_{\bh}(p^{\alpha};k)|
  \le 
   \frac{k - \card \bh}{p} + A_k\frac{(\det(\bh),p)}{p^2} + \frac{A_k}{p^2}
    \le
     \frac{k - \card \bh}{p} + A_k\frac{(\det(\bh),p)}{p^2},
\]
giving \eqref{Jeq:upsbnd1}.

As can be seen from Proposition \ref{prop:Sp3h}, 
\eqref{eq:delthp3} and part (c), we in fact have
\[
 \frac{\card \V_{\bz}(p^{\alpha})}{p^{\alpha}}
  =
   \bigg(1 + \frac{1}{p}\bigg)^{-1}
    \bigg(1 - \frac{1}{p^{\alpha + \alpha \bmod 2}}\bigg).
\]
Letting $j = \card \bh$ so that 
$
 \epsilon_{\bh}(p) 
  = \delta_{\bz}(p)^{-j}\delta_{\bh}(p)
  = (1 + 1/p)^{j}\delta_{\bh}(p)
$, 
we see that
{\small 
\begin{align*}
 & 
 \epsilon_{\bh}(p) - \upsilon_{\bh}(p^{\alpha};k)
  \\ 
 & \hspace{15pt}
  =   
    \bigg(1 + \frac{1}{p}\bigg)^{j}
     \bigg(\frac{\card \V_{\bh}(p^{\alpha})}{p^{\alpha}} - \eta_{\bh}(p^{\alpha})\bigg)
      -
       \bigg(1 + \frac{1}{p}\bigg)^{k}
        \bigg(1 - \frac{1}{p^{\alpha + \alpha \bmod 2}}\bigg)^{-k}
         \frac{\card \V_{\bh}(p^{\alpha})}{p^{\alpha}}
  \\
 & \hspace{15pt}
  =   
   \bigg(1 + \frac{1}{p}\bigg)^{j}
    \frac{\card \V_{\bh}(p^{\alpha})}{p^{\alpha}}
     \bigg\{1 - \bigg(1 + \frac{1}{p}\bigg)^{k - j}\bigg(1 - \frac{1}{p^{\alpha + \alpha \bmod 2}}\bigg)^{-k}\bigg\}
      - \bigg(1 + \frac{1}{p}\bigg)^{j}\eta_{\bh}(p^{\alpha}). 
\end{align*}
}

\noindent 
Now, $(1 + 1/p)^{j} \le A_k$, 
$\card \V_{\bh}(p^{\alpha})/p^{\alpha} \le 1$, 
$\eta_{\bh}(p^{\alpha}) \le A_k/p^{\alpha + \alpha \bmod 2}$ (see 
above), and the term $\{\cdots\}$ in brackets is at most 
$(k - j)/p + A_k/p^2$, but of course if $k = j$ 
then it is at most $A_k/p^{\alpha + \alpha \bmod 2}$.
Hence \eqref{Jeq:upsbnd2}.

The case for \eqref{Jeq:upsbnd1} and \eqref{Jeq:upsbnd2} with 
$p = 2$ ($\alpha \ge 2)$ is similar.
%
%
\begin{nixnix}
To do...
\end{nixnix}
%
%
\end{proof}

\begin{proof}[Proof of Proposition \ref{Jprop:sssa}]
Fix an integer $k \ge 1$ and a bounded convex set 
$\sC \subseteq \Delta^k$, where 
$
 \Delta^k \defeq \{(x_1,\ldots,x_k) \in \RR^k : 0 < x_1 < \cdots < x_k\}
$ 
(see \eqref{eq:defsimplex}). 
Set $\bo \defeq \emptyset$ or set $\bo \defeq \{0\}$.
Let $y \ge 1$.
To ease notation throughout, let $\cH \defeq y\sC \cap \ZZ^k$, 
$\vbh = (h_1,\ldots,h_k)$, and $\bh = \{h_1,\ldots,h_k\}$.
Note that $0 < h_1 < \cdots < h_k \ll_{\sC} y$ for $\vbh \in \cH$.
Now set $z \defeq \e^{2\sqrt{\log y}}$.
The estimate \eqref{Jeq:sssa} is trivial if $y \ll_{k,\sC} 1$, so 
we may assume that $\cH \ne \emptyset$, and also that 
$\log\log z \ge 1$.

In view of \eqref{Jeq:defsssS} we see, upon partitioning the sum 
over $d$ and changing order of summation, that  
\begin{equation}
 \label{Jeq:lemsssapf1}
 \sum_{\vbh \in \cH} \mathfrak{S}_{\bo \cup \bh}
  =
   \sum_{\vbh \in \cH} \epsilon_{\bo \cup \bh}(1)
    +
      \sums[d \in \cD][1 < d \le z]
       \sum_{\vbh \in \cH} \epsilon_{\bo \cup \bh}(d)
      +
       \sums[d \in \cD][d > z]
        \sum_{\vbh \in \cH} \epsilon_{\bo \cup \bh}(d).
\end{equation}
Since $\epsilon_{\bo \cup \bh}(1) = 1$ we have, by 
Lemma \ref{Jlem:lip}, that 
\begin{equation}
 \label{Jeq:volH}
  \sum_{\vbh \in \cH} \epsilon_{\bo \cup \bh}(1)
   =
    y^k \vol(\sC) + O_{k,\sC}(y^{k - 1}).
\end{equation}
We show that the second and third sums make a negligible 
contribution to the right-hand side of \eqref{Jeq:lemsssapf1}.

Recall that $A_k$ conveniently stands for a sufficiently large 
(not necessarily optimal) number depending on $k$ (or more 
precisely, in this proof, on $\ocard \bo + k$), possibly a 
different number in any two occurrences.
By \eqref{Jeq:epsbndd} and the trivial bound 
$
 (\det(\bo \cup \bh),d) 
   \le 
    \sum_{c \mid d, \, c \mid \det(\bo \cup \bh)} c
$, 
we have 
\[
 \sums[d \in \cD][d > z]
  \sum_{\vbh \in \cH} |\epsilon_{\bo \cup \bh}(d)|
   \le 
    \sums[d \in \cD][d > z]
     \sum_{\vbh \in \cH} A_k^{\omega(d)}\frac{(\det(\bo \cup \bh),d)}{d^2}
      \le 
       \sums[d \in \cD][d > z]
        \frac{A_k^{\omega(d)}}{d^2}
         \sum_{c \mid d}  
          \sums[\vbh \in \cH][c \mid \det(\bo \cup \bh)] c.
\]
If $\vbh \in \cH$ then $0 < h_1 < \cdots < h_k \ll_{\sC} y$, and 
if $c \mid \det(\bo \cup \bh)$ then $c \mid \det(\{0\} \cup \bh)$, 
even in the case where $\bo$ is empty, so Lemma \ref{Jlem:dethap} 
yields, for squarefree $d$, 
\[
 \sum_{c \mid d}  
  \sums[\vbh \in \cH][c \mid \det(\bo \cup \bh)] c
   \le 
    \sum_{c \mid d} c 
     \sums[0 < h_1 < \cdots < h_k \ll_{\sC} y][c \mid \det(\{0,h_1,\ldots,h_k\})] 1
      \ll_{k,\sC}
    A_k^{\omega(d)}
     \bigg(
      y^k 
      +
       y^{k - 1}
        \sums[c \mid d, \, p \mid c \implies p \ll_{\sC} y] c
     \bigg).
\]
(If $0 < h_1 < \cdots < h_k \ll_{\sC} y$ and 
$c \mid \det(\{0,h_1,\ldots,h_k\})$ then all prime divisors of $c$ 
are $\ll_{\sC} y$;  
if $c \mid d$ then $\omega(c) \le \omega(d)$; 
if $d$ is squarefree then $\sum_{c \mid d} 1 = 2^{\omega(d)}$.)
Recalling that $\sumsstxt[\flat]$ denotes summation restricted to 
squarefree integers, we have 
\[
 \sumss[\flat][d \ge 1] 
  \frac{A_k^{\omega(d)}}{d^2}
   \sums[c \mid d, \, p \mid c \implies p \ll_{\sC} y] c
    \le 
     \sumss[\flat][c \ge 1, \, p \mid c \implies p \ll_{\sC} y] \frac{A_k^{\omega(c)}}{c}
      \sumss[\flat][b \ge 1] \frac{A_k^{\omega(b)}}{b^2}
       \ll_k
        \sumss[\flat][c \ge 1, \, p \mid c \implies p \ll_{\sC} y] \frac{A_k^{\omega(c)}}{c},
\]
as can be seen by writing $d = bc$ and changing order of 
summation, and    
\[
 \sumss[\flat][c \ge 1, \, p \mid c \implies p \ll_{\sC} y] \frac{A_k^{\omega(c)}}{c}
  \le 
   \prod_{p \ll_{\sC} y}
    \bigg(1 + \frac{A_k}{p}\bigg)
        \ll_{k,\sC} (\log y)^{A_k}  
\]
(see \eqref{Jeq:mert}).
Combining and applying Lemma \ref{Jlem:omegabnd} (with $D = 1$), we 
see that 
{\small 
\begin{equation}
 \label{Jeq:lemsssapf2}
 \sums[d \in \cD][d > z]
  \sum_{\vbh \in \cH} |\epsilon_{\bo \cup \bh}(d)|
   \ll_{k,\sC}
    y^k\sums[d \in \cD][d > z] \frac{A_k^{\omega(d)}}{d^2}
     + y^{k - 1}(\log y)^{A_k} 
     \ll_{k,\sC}
       y^k\frac{(\log z)^{A_k}}{z} + y^{k - 1}(\log y)^{A_k}.
\end{equation}
}

\noindent 
(Recall that $\cD$ by definition contains only squarefree 
integers.)

For the sum in \eqref{Jeq:lemsssapf1} over $1 < d \le z$, let 
$(\alpha_p)_{p \not\equiv 1 \bmod 4}$ be the sequence of integers 
satisfying 
\[
 1 + \frac{\log z}{\log p}
  < 
   \alpha_p
    \le 
     2 + \frac{\log z}{\log p}
\]
for all $p \not\equiv 1 \bmod 4$.
We claim that for $d \in \cD$ with $1 < d \le z$, we have 
\begin{equation}
 \label{Jeq:dbnds}
  \textstyle 
   d^2z^{1/2}
   <
    \prod_{p \mid d} p^{\alpha_p}
     <
      \e^{c_1(\log z)^2/\log\log z} 
\end{equation}
for a suitable absolute constant $c_1 > 0$.
To see this, let $\theta_p = (\log z)/\log p$, so that 
$p^{\theta_p} = z$ and $1 + \theta_p < \alpha_p \le 2 + \theta_p$.
For $p \le z$, we have $\theta_p \ge 1$.
For $1 \le \theta_p < 2$ (i.e.\ $z^{1/2} < p \le z$), 
$\alpha_p = 3$ and $1/\theta_p > 1/2$, hence 
$
 p^{\alpha_p} 
  = 
   p^2p
    =
     p^2z^{1/\theta_p}
      >
       p^2z^{1/2}
$.
For $\theta_p \ge 2$ (i.e.\ $p \le z^{1/2}$), $\theta_p/2 \ge 1$ 
and 
$
 \alpha_p - \theta_p/2 
  = 
   \alpha_p - \theta_p + \theta_p/2
    > 1 + \theta_p/2
     \ge 
      2
$,
hence 
$
 p^{\alpha_p}
  =
   p^{\alpha_p - \theta_p/2}p^{\theta_p/2}
     =
      p^{\alpha_p - \theta_p/2}z^{1/2}
       > p^2z^{1/2}
$.
Thus, if $d$ is squarefree and $d \le z$, then 
$
 \prod_{p \mid d} p^{\alpha_p}
  > 
   \prod_{p \mid d} p^2z^{1/2}
    = 
     d^2z^{\omega(d)/2}
$.
If $d > 1$ then $\omega(d) \ge 1$.
For squarefree $d \le z$ we have  
$
 \prod_{p \mid d} p^{\alpha_p}
  \le 
   \prod_{p \mid d} (p^2z)
    = 
     d^2z^{\omega(d)}
      <
       z^{2 + \omega(d)}
$.
By the elementary bound $\omega(d) \ll (\log d)/\log\log d$ we 
have 
$
 2 + \omega(d) \le c_1(\log z)/\log\log z
$
for $d \le z$, where $c_1 > 0$ is a suitable absolute constant.
Hence 
$
 z^{2 + \omega(d)} 
  \le 
   \e^{c_1(\log z)^2/\log\log z}.
$

Let us set   
$
 \upsilon_{\bo \cup \bh}(p^{\alpha_p}) 
  \defeq \upsilon_{\bo \cup \bh}(p^{\alpha_p}; \ocard \bo + k)
$
to ease notation in what follows.
Writing
$
 \epsilon_{\bo \cup \bh}(d)
  =
   \prod_{p \mid d}\big(\upsilon_{\bo \cup \bh}(p^{\alpha_p}) + \epsilon_{\bo \cup \bh}(p) - \upsilon_{\bo \cup \bh}(p^{\alpha_p})\big)
$
and developing the product, the sum in \eqref{Jeq:lemsssapf1} over 
$1 < d \le z$ becomes 
\[
 \sums[d \in \cD][1 < d \le z] 
  \sum_{\vbh \in \cH}
   \bigg\{ 
    \prod_{p \mid d} \upsilon_{\bo \cup \bh}(p^{\alpha_p})
    + 
      \sums[bc = d][c > 1] 
       \prod_{p \mid b} \upsilon_{\bo \cup \bh}(p^{\alpha_p})
        \prod_{p \mid c} \big(\epsilon_{\bo \cup \bh}(p) - \upsilon_{\bo \cup \bh}(p^{\alpha_p})\big)
  \bigg\}.
\]
%
%
\begin{nixnix}
We have 
\begin{align*}
 \epsilon_{\bo \cup \bh}(d)
 & =
   \prod_{p \mid d}\big(\upsilon_{\bo \cup \bh}(p^{\alpha_p}) + \epsilon_{\bo \cup \bh}(p) - \upsilon_{\bo \cup \bh}(p^{\alpha_p})\big)
 \\
 & =
     \prod_{p \mid d} \upsilon_{\bo \cup \bh}(p^{\alpha_p})
 \\
 & \hspace{30pt} + 
       \sums[bc = d][c > 1] 
        \prod_{p \mid b} \upsilon_{\bo \cup \bh}(p^{\alpha_p})
         \prod_{p \mid c} \big(\epsilon_{\bo \cup \bh}(p) - \upsilon_{\bo \cup \bh}(p^{\alpha_p})\big), 
\end{align*}
hence 
\begin{align*}
 & 
 \sums[d \in \cD][1 < d \le z] 
  \sum_{\vbh \in \cH}
   \epsilon_{\bo \cup \bh}(d)
 \\
 & \hspace{15pt} 
   =
     \sums[d \in \cD][1 < d \le z] 
      \sum_{\vbh \in \cH}
       \bigg\{ 
        \prod_{p \mid d} \upsilon_{\bo \cup \bh}(p^{\alpha_p})
 \\
 & \hspace{30pt} + 
          \sums[bc = d][c > 1] 
           \prod_{p \mid b} \upsilon_{\bo \cup \bh}(p^{\alpha_p})
            \prod_{p \mid c} \big(\epsilon_{\bo \cup \bh}(p) - \upsilon_{\bo \cup \bh}(p^{\alpha_p})\big)
      \bigg\}.
\end{align*}
\end{nixnix}
%
%
By Lemma \ref{Jlem:cancel} (b), \eqref{Jeq:upsbnd1}, for squarefree 
$b$ and $\vbh \in \cH$, we see, upon noting that 
$\ocard \bo + k - \card(\bo \cup \bh) = 0$, that  
$
 \prod_{p \mid b} |\upsilon_{\bo \cup \bh}(p^{\alpha_p})|
    \le 
     A_k^{\omega(b)}\det(\bo \cup \bh,b)/b^2
$.
Similarly, applying Lemma \ref{Jlem:cancel} (b), 
\eqref{Jeq:upsbnd2}, we obtain, for 
squarefree $c$ and $\vbh \in \cH$, the bound 
$
 \prod_{p \mid c} |\epsilon_{\bo \cup \bh}(p) - \upsilon_{\bo \cup \bh}(p^{\alpha_p})| 
    \le 
     A_k^{\omega(c)}\prod_{p \mid c} 1/p^{\alpha_p + \alpha_p \bmod 2}
$.
Furthermore, by the lower bound in \eqref{Jeq:dbnds}, we have 
$\prod_{p \mid c} 1/p^{\alpha_p} < 1/(c^2z^{1/2})$.
Combining, we see that 
{\small
\begin{align*} 
 & 
 \sums[d \in \cD][1 < d \le z] 
  \sum_{\vbh \in \cH}
   \sums[bc = d][c > 1] 
    \prod_{p \mid b} |\upsilon_{\bo \cup \bh}(p^{\alpha_p})|
     \prod_{p \mid c} |\epsilon_{\bo \cup \bh}(p) - \upsilon_{\bo \cup \bh}(p^{\alpha_p})|
 \\
 & \hspace{5pt}
   \le 
    z^{-1/2}
     \sums[d \in \cD][1 < d \le z] 
      \sum_{\vbh \in \cH}
       \sums[bc = d][c > 1] 
        \frac{A_k^{\omega(b)}(\det(\bo \cup \bh),b)}{b^2}
         \cdot 
          \frac{A_k^{\omega(c)}}{c^2}
   \le 
    z^{-1/2}
     \sums[d \in \cD][1 < d \le z] 
      \frac{A_k^{\omega(d)}}{d^2}
       \sum_{a \mid d}
        \sums[\vbh \in \cH][a \mid \det(\bo \cup \bh)] a.
\end{align*}
}

\noindent 
(For the last inequality note that in the innermost sum, 
$A_k^{\omega(b)}A_k^{\omega(c)} = A_k^{\omega(d)}$, 
$(\det(\bo \cup \bh),b) \le (\det(\bo \cup \bh),d)$, 
$\sum_{bc = d} 1 = 2^{\omega(d)}$, then use the trivial bound 
$
 (\det(\bo \cup \bh),d) 
   \le 
    \sum_{a \mid d, \, a \mid \det(\bo \cup \bh)} a
$.)
We invoke Lemma \ref{Jlem:dethap} again, this time noting that 
if $a \mid d$ and $d \le z \ll y$, then 
$y^k/a + O_k(y^{k - 1}) \ll_k y^k/a$; also,  
$\sum_{a \mid d} A_k^{\omega(a)} \le A_k^{\omega(d)}$ (recall that 
$d$ is squarefree and the convention $A_{k}$ might denote 
different constants.)
We find that 
{\small 
\begin{align*}
 \sums[d \in \cD][1 < d \le z] 
  \frac{A_k^{\omega(d)}}{d^2}
   \sum_{a \mid d}
    \sums[\vbh \in \cH][a \mid \det(\bo \cup \bh)] \hspace{-6pt} a
 & 
  \le 
   \sums[d \in \cD][1 < d \le z] 
    \frac{A_k^{\omega(d)}}{d^2}
     \sum_{a \mid d}
      \sums[0 < h_1 < \cdots < h_k \ll_{\sC} y][a \mid \det(\{0,h_1,\ldots,h_k\})] a
 & \hspace{-6pt} 
   \ll_{k,\sC}
    y^k 
    \sumss[\flat][d \ge 1]  
     \frac{A_k^{\omega(d)}}{d^2}
      \ll_{k,\sC}
       y^k,
\end{align*}
} 
so combining yields 
\begin{equation}
 \label{Jeq:penult}
  \sums[d \in \cD][1 < d \le z] 
   \sum_{\vbh \in \cH}
    \epsilon_{\bo \cup \bh}(d)
     =
      \bigg( \,
       \sums[d \in \cD][1 < d \le z] 
        \sum_{\vbh \in \cH}
         \prod_{p \mid d} \upsilon_{\bo \cup \bh}(p^{\alpha_p})
      \bigg)
       +
         O_{k,\sC}\big(y^kz^{-1/2}\big).
\end{equation}

Consider an arbitrary $d \in \cD$ with $1 < d \le z$.
We set $d_{\alpha} \defeq \prod_{p \mid d} p^{\alpha_p}$, and 
partition $\RR^k$ into cubes 
\[
 C_{d_{\alpha},\vbt} 
  \defeq 
   \{(x_1,\ldots,x_k) \in \RR^k : t_id_{\alpha} \le x_i < (t_i + 1)d_{\alpha}, i = 1,\ldots,k\},
\]
with $\vbt \defeq (t_1,\ldots,t_k)$ running over $\ZZ^k$.
Each $\vbh \in \cH$ is a point in a unique cube of this form: we 
call $\vbh$ a {\em $d_{\alpha}$-interior} point if this cube is 
entirely contained in $y\sC$, and $\vbh$ a 
{\em $d_{\alpha}$-boundary} point if this cube has a nonempty 
intersection with the boundary of $y\sC$.
We partition $\cH$ into $d_{\alpha}$-interior points and 
$d_{\alpha}$-boundary points.
As $\vbh$ runs over all $d_{\alpha}$-interior points of $\cH$, 
$h_i$ ($i = 1,\ldots,k$) runs over a pairwise disjoint union of 
complete residue systems modulo $d_{\alpha}$.
By Lemma \ref{Jlem:cancel} (a), it follows that 
\begin{equation}
 \label{Jeq:2ndlast}
  \sums[d \in \cD][1 < d \le z] 
   \sum_{\vbh \in \cH}
    \prod_{p \mid d} \upsilon_{\bo \cup \bh}(p^{\alpha_p})
  =
   \sums[d \in \cD][1 < d \le z]
    \sums[\vbh \in \cH][\text{$d_{\alpha}$-boundary}]
     \prod_{p \mid d} \upsilon_{\bo \cup \bh}(p^{\alpha_p}).
\end{equation}

For each $d \in \cD$, $1 < d \le z$ we have, by \eqref{Jeq:dbnds}, 
that $d_{\alpha} < \e^{c_1(\log z)^2/\log \log z}$.
Since $z = \e^{2\sqrt{\log y}}$, this means that 
$y/d_{\alpha} \ge y^{1 - O(1/\log\log y)}$.
From the proof of Lemma \ref{Jlem:lip} (see 
\cite[pp.\ 128--129]{LAN:94}), one can see that there are 
$\ll_{k,\sC} (y/d_{\alpha})^{k - 1}$ cubes $C_{d_{\alpha},\vbt}$ 
that have a nonempty intersection with the boundary of $y\sC$,  
and there are at most $d_{\alpha}^k$ points $\vbh$ of $\cH$ in 
each such cube.
Hence, in all, there are 
$
 \ll_{k,\sC} 
  y^{k - 1}d_{\alpha} < y^{k - 1}\e^{c_1(\log z)^2/\log \log z}
$ 
$d_{\alpha}$-boundary points $\vbh \in \cH$.
In view of this and Lemma \ref{Jlem:cancel} (b), 
\eqref{Jeq:upsbnd1}, we see that 
\begin{align}
 \label{Jeq:last}
  \begin{split}
 & 
  \sums[d \in \cD][1 < d \le z]
   \sums[\vbh \in \cH][\text{$d_{\alpha}$-boundary}]
    \prod_{p \mid d} |\upsilon_{\bo \cup \bh}(p^{\alpha_p})|
 \\
 & \hspace{30pt}
     \le 
      \sums[d \in \cD][1 < d \le z] 
       \sums[\vbh \in \cH][\text{$d_{\alpha}$-boundary}]
        \frac{A_k^{\omega(d)}}{d}
         \ll_{k,\sC}
          y^{k - 1}\e^{c_1(\log z)^2/\log\log z}(\log z)^{A_k}.
  \end{split}
\end{align}
(Here we have again used the elementary bound \eqref{Jeq:mert}.)

Combining \eqref{Jeq:penult}, \eqref{Jeq:2ndlast} and 
\eqref{Jeq:last}, we obtain 
\begin{equation}
 \label{Jeq:ult}
  \sums[d \in \cD][1 < d \le z] 
   \sum_{\vbh \in \cH}
    \epsilon_{\bo \cup \bh}(d) 
     \ll_{k,\sC}
      y^{k - 1}\e^{2c_1(\log z)^2/\log\log z} 
       + 
        y^kz^{-1/2}. 
\end{equation}
Putting \eqref{Jeq:volH}, \eqref{Jeq:lemsssapf2} and \eqref{Jeq:ult} 
into \eqref{Jeq:lemsssapf1}, and recalling that 
$z = \e^{2\sqrt{\log y}}$, we obtain \eqref{Jeq:sssa}.
\end{proof}

\end{jetsam}

\end{document}